\newif\iflong
\newif\ifshort
\newcommand{\mytitle}{Hedonic Games With Friends, Enemies, and Neutrals:\\ Resolving Open Questions and Fine-Grained Complexity}
\newcommand{\appendixtitle}{Supplementary Material for the Paper ``\mytitle''}
\title{\mytitle}
\author{Jiehua Chen}
\affiliation{
  \institution{TU Wien}
  \city{Vienna}
  \country{Austria}}
\email{jiehua.chen@tuwien.ac.at}
\author{Gergely Cs{\'a}ji}
\affiliation{
  \institution{KRTK KTI}
  \city{Budapest}
  \country{Hungary}}
\email{csaji.gergely@krtk.hu}
\author{Sanjukta Roy}
\affiliation{
  \institution{Pennsylvania State University}
  \city{}
  \country{United States}}
\email{sanjukta@psu.edu}
\author{Sofia Simola}
\affiliation{
  \institution{TU Wien}
  \city{Vienna}
  \country{Austria}}
\email{sofia.simola@tuwien.ac.at}
\newcommand{\ARW}[4][]{%
  \coordinate (p1) at (#4);
  \foreach \ang in {#2}{%
    \draw[#1] (p1) arc (\ang:#3) ;
    \coordinate (p2) at (p1);
    \path (p2) arc (\ang:#3) coordinate (p1);
  }
}
\tikzstyle{agent} = [draw=black, circle, fill=black,  inner sep=1.2pt]
\tikzstyle{agentU} = [draw=black, circle, fill=black,  inner sep=1.2pt, line width=0.8pt]
\tikzstyle{agentW} = [draw=orange, rectangle, fill=orange, inner sep=1.5pt, line width=0.8pt]
\tikzstyle{nn} = [draw=blue, circle, inner sep=1.2pt,fill=blue]
\tikzstyle{nnd} = [draw=blue, circle, inner sep=1.2pt,fill=gray]
\tikzstyle{pn} = [draw=gray, circle, inner sep=1pt,fill=gray]
\tikzstyle{pnn} = [draw=none, circle, inner sep=1pt]
\tikzstyle{sn} = [draw, rectangle, inner sep=1.8pt]
\tikzstyle{ssn} = [draw=green!50!black, rectangle, inner sep=1.8pt,fill=green!50!black]
\tikzstyle{nnn} = [draw=green!60!black, circle, inner sep=1.2pt,fill=green!60!black]
\tikzstyle{sett} = [draw, thick, purple]
\tikzstyle{ele} = [draw, thick, green!70!black]
\tikzstyle{gl} = [draw, gray]
\tikzstyle{fc} = [blue!60!black]
\tikzstyle{ec} = [red!60!black, dashed]
\tikzstyle{ip} = [draw=none, fill = red!20!white]
\tikzstyle{hiddenfc} = [fc, draw opacity=0.3]
\tikzstyle{hiddenec} = [ec, draw opacity=0.3]
\tikzstyle{privaten} = [draw=gray, inner sep=1.2pt, draw=green!60!black, text=green!80!black, opacity=0.8]
\tikzstyle{privatee} = [draw=gray, dashed, green!60!black, draw opacity=0.5]
\tikzstyle{grouplabel} = [inner sep=2pt, fill=white,inner sep=0.5pt]
\newtheorem{lemma}{Lemma}
\newtheorem{proposition}{Proposition}
\newtheorem{example}{Example}
\newtheorem{theorem}{Theorem}
\newtheorem{claim}{Claim}[theorem]
\newtheorem{clm}{Claim}[subsection]
\newtheorem{observation}{Observation}
\theoremstyle{definition}
\newtheorem{definition}{Definition}
\crefname{table}{Table}{Tables}
\crefname{figure}{Figure}{Figures}
\crefname{theorem}{Theorem}{Theorems}
\crefname{corollary}{Corollary}{Corollaries}
\crefname{observation}{Observation}{Observations}
\crefname{lemma}{Lemma}{Lemmas}
\crefname{example}{Example}{Examples}
\crefname{reduction}{Reduction}{Reductions}
\crefname{construction}{Construction}{Constructions}
\crefname{subsection}{Subsection}{Subsections}
\crefname{section}{Section}{Sections}
\crefname{claim}{Claim}{Claims}
\crefname{clm}{Claim}{Claims}
\crefname{algorithm}{Algorithm}{Algorithm}
\crefname{definition}{Definition}{Definitions}
\newcommand{\decprob}[3]{
   \begin{center}%
     \begin{itemize}[d]
       \item[\textsc{#1}]
       \item[\textbf{Input:}]  #2\\[0.2ex]
       \item[\textbf{Question:}]  #3
     \end{itemize}
  \end{center}
}
\newcommand{\verif}{\textsc{CoreV}\xspace}
\newcommand{\sverif}{\textsc{StrictCoreV}\xspace}
\newcommand{\togethersverif}{\textsc{(S)CoreV}\xspace}
\newcommand{\xctg}{\textsc{E-X3C}\xspace}
\newcommand{\xct}{\textsc{X3C}\xspace}
\newcommand{\clique}{\textsc{Clique}\xspace}
\newcommand{\pxct}{\textsc{Planar-X3C}\xspace}
\newcommand{\NS}{\textsc{NashEx}\xspace}
\newcommand{\IndS}{\textsc{IndividEx}\xspace}
\newcommand{\appsymb}{$\star$}
\newcommand{\hx}{\ensuremath{\hat{x}}}
\newcommand{\hy}{\ensuremath{\hat{y}}}
\newcommand{\enn}{\ensuremath{\hat{n}}}
\newcommand{\emm}{\ensuremath{\hat{m}}}
\newcommand{\seqq}[1]{\ensuremath{\langle #1 \rangle}}
\newcommand{\setind}{\ensuremath{p}}
\newcommand{\sstar}{\ensuremath{\triangledown}}
\newcommand{\dummys}{\ensuremath{{\color{blue}d^{\sstar}\!(i)}}}
\newcommand{\dummyin}{\ensuremath{{\color{blue}d^{\textsf{in}}(i)}}}
\newcommand{\dummyinT}{\ensuremath{{\color{gray!60!black}d^{\textsf{in}}(i)_1}}}
\newcommand{\dummyout}{\ensuremath{{\color{blue}d^{\textsf{out}}(i)}}}
\newcommand{\Dummys}{\ensuremath{D}}
\newcommand{\myemph}[1]{{\color{green!25!black}\emph{#1}}}
\newcommand{\fas}{\ensuremath{\mathsf{f}}}
\newcommand{\maxdeg}{\ensuremath{\Delta}}
\newcommand{\maxfdeg}{\ensuremath{\Delta_{F}}}
\newcommand{\maxcoal}{\ensuremath{\kappa}}
\definecolor{darkgreen}{rgb}{0.01,0.6,0.1}
\definecolor{darkblue}{rgb}{0,0,0.4}
\definecolor{winered}{rgb}{0.6,0.1,0.1}
\definecolor{doncolor}{RGB}{78,154,0}
\definecolor{falsecolor}{RGB}{0,55,255}
\definecolor{truecolor}{RGB}{164,0,0}
\definecolor{lightblue}{rgb}{0.527,0.805,0.977}
\newcommand{\np}{\ensuremath{\mathsf{NP}}}
\newcommand{\npc}{$\np$-{\normalfont\text{c}}}
\newcommand{\nph}{$\np$-{\normalfont\text{h}}}
\newcommand{\conp}{\ensuremath{\mathsf{coNP}}}
\newcommand{\conph}{$\conp$-{\normalfont\text{h}}}
\newcommand{\conpc}{$\conp$-{\normalfont\text{c}}}
\newcommand{\cowone}{\ensuremath{\mathsf{coW}[1]}}
\newcommand{\cowoneh}{$\cowone$-h}
\newcommand{\pp}{$\mathsf{P}$}
\newcommand{\xp}{$\mathsf{XP}$}
\newcommand{\fpt}{$\mathsf{FPT}$}
\newcommand{\hardplanarsymmsymb}{$^\clubsuit$}
\newcommand{\hardplanarsymb}{$^\dagger$}
\newcommand{\harddagsymb}{$^\blacklozenge$}
\newcommand{\hardsymsymb}{$^\spadesuit$}
\newcommand{\hardsymballpara}{$^\ddagger$}
\newcommand{\easynash}{{\Large $^{\diamondsuit}$}}
\newcommand{\easyfennash}{$^{\circ}$}
\newcommand{\good}{\ensuremath{\mathsf{\color{blue!70!black}g}}}
\newcommand{\bad}{\ensuremath{\mathsf{\color{red!60!black}b}}}
\newcommand{\goodG}{\ensuremath{G^{\good}}}
\newcommand{\badG}{\ensuremath{G^{\bad}}}
\newcommand{\hG}{\hat{G}}
\newcommand{\hE}{\hat{E}}
\newcommand{\hV}{\hat{V}}
\newcommand{\hgoodG}{\ensuremath{\hat{G}^{\good}}}
\newcommand{\singles}{\ensuremath{V_{\mathsf{S}}}}
\newcommand{\nonsingles}{\ensuremath{V_{\mathsf{NS}}}}
\newcommand{\TB}{\ensuremath{T_B}}
\newcommand{\TBs}{\ensuremath{T_{B^*}}}
\newcommand{\Bs}{\ensuremath{B_{\mathsf{S}}}}
\newcommand{\Bns}{\ensuremath{B_{\mathsf{NS}}}}
\newcommand{\frPi}{\ensuremath{n_i(\Pi)}}
\newcommand{\frBns}{\ensuremath{n_i(\Bns)}}
\newcommand{\hBns}{\ensuremath{\hat{B}_{\mathsf{NS}}}}
\newcommand{\coloring}{\ensuremath{\chi}}
\newcommand{\sing}{\ensuremath{S}}
\newcommand{\Colors}{\ensuremath{C}}
\newcommand{\FEN}{\textsc{FEN}\textsuperscript{\textsc{s}}}
\newcommand{\FE}{\textsc{FE}\textsuperscript{\textsc{s}}}
\newcommand{\needtowrite}[1]{#1}
\newcommand{\needtoread}[1]{#1}
\tikzset{ttrue/.style={color=truecolor!50!white}}
\tikzset{ffalse/.style={color=falsecolor!50!white}}
\tikzset{dont/.style={color=doncolor!50!white}}
\tikzset{trueline/.style =   {line width= 3pt, ttrue}}
\tikzset{falseline/.style =   {line width= 3pt, ffalse}}
\tikzset{dontline/.style =   {line width= 3pt, dont}}
\def \xss {8ex}
\newcommand{\elementpic}{
  \node[] at (-1,0.3) (uim) {};
  \node[nn] at (0,0) (ui) {};
  \node[nn] at (.8,0) (uip) {};
  \node[] at (1.5,0.3) (uipp) {};
  \node[above left = 0pt and -1pt of ui] {$u_i$};
  \node[above = 0pt of uip] {$u_{i+1}$};
  \node[sn] at (-.5,-.6) (s1) {};
  \node[sn] at (.5,-.6) (s3) {};
  \node[sn] at (0,.6) (s2) {};

  \node[below=0pt of s1] {$w_j$};
  \node[above=0pt of s2] {$w_{q}$};
  \node[below=0pt of s3] {$w_{\setind}$};

  \path[draw] (ui) -- (s1);
  \path[draw] (ui) -- (s2);
  \path[draw] (ui) -- (s3);

  \draw[gl,dashed] (uim) to[bend right=10]  (ui);
  \draw[gl,dashed] (uip) to[bend left=4] (ui);
  \draw[gl,dashed] (uip) to[bend right=10] (uipp);

}
\newcommand{\setpic}{
  \node[sn] at (0,0) (wj) {};

  \node[nn] at (.5,1) (u1) {};
  \node[nn] at (1,0) (u2) {};
  \node[nn] at (-1,0) (u3) {};

  \foreach \i in {1,2,3} {
    \draw[] (u\i) -- (wj);
  }

  \foreach \i / \p / \a / \r / \n in {wj/below left/-1pt/-2pt/{w_j},
    u1/above left/-1pt/-2pt/{u_i}, u2/above right/-1pt/-3pt/{u_k}, u3/above left/-1pt/-3pt/{u_r}}{   
    \node[\p = \a and \r of \i] {$\n$};
  }
}
\newcommand{\trianglegadget}[1]{
   \foreach  \deg / \n / \rr in {-60/a/\rad, 180/b/\rad, 60/c/\rad}
   {
    \node[pn] at (\deg:\rr) (#1\n) {};
  }

    \foreach \n / \nn / \p / \l / \r  / \c in
    {a/{0}/below left/1/-5/black, c/{2}/right/0/1/black, b/{1}/below/0/0/black}{ 
    \node[\p = \l pt and \r pt of #1\n, text=\c, inner sep=0pt, fill=white] {$#1^{\nn}_j$};
  }
}
\newcommand{\fescoreverifdeltalong}{
  \conpc ($\maxdeg=4$)\hardplanarsymmsymb
}
\newcommand{\fescoreverifdelta}{
  \conpc\hardplanarsymmsymb
}
\newcommand{\fescoreverifdeltacite}{
 [T\ref{thm:verif-non-symmetric-four}]
}
\newcommand{\fenscoreverifdelta}{
  \needtoread{\conpc\hardsymsymb}
}
\newcommand{\fenscoreverifdeltacite}{
 [T\ref{thm:core_verify_neutals}]
}
\newcommand{\fescoreverifcoallong}{
  \needtoread{\cowoneh\hardsymsymb}, \xp
}
\newcommand{\fescoreverifcoalcite}{
 [T\ref{thm:W1h+XP_largest_coalition}]
}
\newcommand{\fenscoreverifcoal}{
  \needtoread{\conpc\hardsymsymb}
}
\newcommand{\fenscoreverifcoalcite}{
  [T\ref{thm:core_verify_neutals}]
}
\newcommand{\fescoreveriffas}{
 \needtoread{\conpc}
}
\newcommand{\fescoreveriffascite}{
 [T\ref{thm:fas_number1}]
}
\newcommand{\fenscoreveriffas}{
   \needtoread{\conpc}
}
\newcommand{\fenscoreveriffascite}{
  [T\ref{thm:verif-fas+deg+k}]
}
\newcommand{\fescoreverifcoaldelta}{
 {\fpt}
}
\newcommand{\fescoreverifcoaldeltacite}{
 [T\ref{thm:fpt_coalition+deg}]
}
\newcommand{\fenscoreverifcoaldelta}{
  \needtoread{\conpc\hardsymsymb\hardsymballpara}
}
\newcommand{\fenscoreverifcoaldeltacite}{
 [T\ref{thm:core_verify_neutals}]
}
\newcommand{\fescoreverifcoalfas}{
 \needtowrite{\fpt}
}
\newcommand{\fescoreverifcoalfascite}{
 [T\ref{thm:fe-core-fpt-k-delta}]
}
\newcommand{\fenscoreverifcoalfas}{
 \needtoread{\conpc\hardsymballpara}
}
\newcommand{\fenscoreverifcoalfascite}{
 [T\ref{thm:verif-fas+deg+k}]
}
\newcommand{\fescoreverifdeltafas}{
  \needtoread{\conpc}
}
\newcommand{\fescoreverifdeltafascite}{
  [T\ref{thm:sverif_cont_fas+delta}]
}
\newcommand{\fenscoreverifdeltafas}{
  \needtoread{\conpc\hardsymballpara}
}
\newcommand{\fenscoreverifdeltafascite}{
   [T\ref{thm:verif-fas+deg+k}]
 }
\newcommand{\appendixsymb}{$\star$}
\newcommand{\tocommentout}[1]{%
}
\newcommand{\appendixproofwithstatement}[3]{%
  \gappto{\appendixtext}{
    \subsection{Proof of \cref{#1}}\label{proof:#1}
    \noindent {\normalfont\emph{#2}}
    #3
  }
}
\newcommand{\appendixsection}[1]{%
  \gappto{\appendixtext}{
    \section{Additional Material for Section~\ref{#1}}
    \label{appsec:#1}
  }
}
\newcommand{\appendixalg}[4]{%
  #2
  \gappto{\appendixtext}{
    \subsection{Continuation of \cref{#1}}\label{alg:#1}%
    \noindent{\normalfont\emph{#3}}

    {#4}
    }
}
\begin{abstract}
  We investigate verification and existence problems for prominent stability concepts in hedonic games with friends, enemies, and optionally with neutrals~\cite{dimitrov2006simple,ohta2017core}.
  We resolve several (long-standing) open questions~\cite{woeginger2013core,ReyRotheSchSch2016wonderful,ohta2017core,BOSY19FEN-IS} and
  show that for friend-oriented preferences,
  under the \emph{friends and enemies} model, it is coNP-complete to verify whether a given agent partition is (strictly) core stable, %
  while under the \emph{friends, enemies, and neutrals} model, it is NP-complete to determine whether an individual stable partition exists. 
  We further look into natural restricted cases from the literature, such as
  when the friends and enemies relationships are symmetric, 
  when the initial coalitions have bounded size, 
  when the vertex degree in the friendship graph (resp.\ the union of friendship and enemy graph) is bounded,
  or when such graph is acyclic or close to being acyclic.
  We obtain a complete (parameterized) complexity picture regarding these cases.
\end{abstract}
\keywords{Hedonic games; Friends and enemies; Core stable; Nash stable; Individually stable; Parameterized complexity}
\begin{document}

\maketitle

\section{Introduction}\label{sec:intro}

Hedonic games, introduced by %
Dr{\'e}ze and Greenberg~\cite{DG80hedonic}, are coalition formation games where each agent's preferences over possible coalitions (i.e., subsets of agents) depend only on the members in the respective coalitions.
The goal is to partition the agents into disjoint coalitions which are ``stable''. 
Typical stability concepts include \emph{(strict) core stability}, \emph{Nash stability}, and \emph{individual stability}~\cite{BaKoSo2001Hedonic,bogomolnaia2002stability,Ballester2004,gairing2010computing,SungDimAdditiveHG2010,peters2015simple,peters2016graphical,kerkmann2020hedonic}.
Briefly put, a partition is \myemph{core} stable if no subset~$S$ of agents can strictly improve by joining~$S$, and it is \myemph{strictly core} stable if no subset~$S$ of agents can weakly improve by joining~$S$ whereas at least one agent can strictly improve. 
The partition is \myemph{Nash} stable if it is \myemph{individually rational} (i.e., no agent prefers to be alone), and no agent envies another coalition (i.e., prefers to be in this coalition rather than her own).
It is \myemph{individually} stable if it is individually rational, and no agent
envies another coalition and this coalition is fine with accepting her.

The existence of a stable partition and the computational complexity of determining whether such partition exists depends on the representation of the preferences of each agent~\cite{woeginger2013core}.
To simplify the representation of the preferences, Dimitrov et al.~\cite{dimitrov2006simple} introduce the so-called hedonic games with friends and enemies, where there is a directed graph on the agents (the so-called \myemph{friendship graph}) such that an agent~$x$ considers another agent~$y$ a friend if there is an arc from~$x$ to $y$; otherwise, $x$ considers $y$ an enemy. 
Depending on whether more friends or fewer enemies are preferred, Dimitrov et al.\ distinguish between \myemph{friend-oriented} and \myemph{enemy-oriented} preferences. 
Under \myemph{friend-oriented} preferences, when comparing two coalitions, an agent prefers the one with more friends, and for the same number of friends, she prefers fewer enemies,
while under \myemph{enemy-oriented} preferences, an agent prefers the coalition with fewer enemies, and for the same number of enemies, she prefers more friends.  %
Recently, Brandt et al.~\cite{Brandt_Bullinger_Tappe_2022} show that it is NP-complete to determine the existence of a Nash stable partition under friend-oriented preferences.
Dimitrov et al.~\cite{dimitrov2006simple} show that 
under friend-oriented preferences, there is always a strictly core stable partition (which is hence core stable and individually stable),
and under the enemy-oriented preferences, a core stable partition always exists. 
However, the computational effort to find these partitions is different:
Under the friend-oriented preferences, the strongly connected components in the friendship graph form a strictly core stable partition and can be found in linear time,
whereas under the enemy-oriented preferences, it is NP-hard to find a core stable partition~\cite{SungDimitrov2007enemy,woeginger2013core} and beyond NP to find a strictly core stable partition~\cite{ReyRotheSchSch2016wonderful}.
One question that has remained open for a decade asks what the complexity of the core verification in the friend-oriented case is~\cite{woeginger2013core,ReyRotheSchSch2016wonderful,ohta2017core,BOSY19FEN-IS}; it was conjectured to be polynomial-time solvable by Woeginger~\cite{woeginger2013core}.

\looseness=-1
\begin{table*}[t]
  \renewcommand{\aboverulesep}{0pt} 
    \renewcommand{\belowrulesep}{0pt}
    \renewcommand{\arraystretch}{1.1}
    
  \caption{
   \FE-{\togethersverif} refers to the problems~\verif and \sverif in \FE. %
   See Section~\ref{sec:defi} for the definitions of $\maxdeg$, $\maxcoal$, and $\fas$.  %
   ``\hardplanarsymmsymb'' means hardness holds even for planar graphs (and for symmetric preferences, albeit with a larger, constant max degree~[T\ref{thm:verif-symmetric-eight}]).
   ``\hardplanarsymb'' (resp.\ ``\hardsymsymb'') means hardness holds even for planar graphs (resp.\ symmetric preferences) . 
   ``\harddagsymb'' (resp.\ ``\hardsymballpara'') means hardness holds even when the enemy graph is acyclic (resp.\ $\maxcoal+\maxdeg+\fas$ is a constant).
   ``\easynash'' (resp.\ ``\easyfennash'') means polynomial even if $\fas=2$ (resp.\ only the friendship graph is acyclic). $^\ast$~Brandt et al.~\cite{Brandt_Bullinger_Tappe_2022} recently showed it to be \nph. However, their reduction does not bound the maximum degree or the feedback arc set and it is not planar.}
 \begin{tabular}{@{}c@{\;}|@{\;} c@{\,}c@{\;} c  c@{\,}c@{\,} c  c@{\,}c@{\,} @{}c@{} | c@{\,}c@{} c c@{}c @{\,} c c@{}c@{}}
   \toprule
   & \multicolumn{2}{c}{\FE-{\togethersverif}} &&  \multicolumn{2}{c}{\FEN-{\verif}} && \multicolumn{2}{c}{\FEN-{\sverif}} && \multicolumn{2}{c}{\FE-{\NS}}&&  \multicolumn{2}{c}{\FEN-{\NS}}&&  \multicolumn{2}{c}{\FEN-{\IndS}}\\\midrule
   \diagbox[width=2.4cm, height=.6cm]{\raisebox{-2pt}{\scriptsize restrictions~~~~~~~~~~~~~~~~~~~~~~~~\hspace*{2cm}}}{\raisebox{2pt}{\scriptsize ~~~always exists?}}
   & \multicolumn{2}{c}{yes} & & \multicolumn{2}{c}{no} & & \multicolumn{2}{c}{no} & & \multicolumn{2}{c}{no} & & \multicolumn{2}{c}{no} & & \multicolumn{2}{c}{no}\\\cline{1-1}\cline{2-3}\cline{5-6}\cline{8-9}\cline{11-12}\cline{14-15}\cline{17-18} \\[-2ex]
   $\maxdeg$%
   & \fescoreverifdelta &  \fescoreverifdeltacite  && \fenscoreverifdelta  & \fenscoreverifdeltacite&& \fenscoreverifdelta  & \fenscoreverifdeltacite && {\npc\hardplanarsymb$^\ast$} & [T\ref{thm:fe-ns-hard}] && \needtoread{\npc\harddagsymb}  & [T\ref{thm:ns-deg_fas}] && {\npc\harddagsymb}  & [T\ref{thm:FEN-is-fas+delta}] 
   \\
   $\fas$%
  & \fescoreveriffas & \fescoreveriffascite  && \fenscoreveriffas &\fenscoreveriffascite && \fenscoreveriffas &\fenscoreveriffascite && \needtoread{\npc} & [T\ref{thm:fe-nash-fas-delta-nph}] && \needtoread{\npc\harddagsymb} & [T\ref{thm:ns-deg_fas}] && \needtoread{\npc\harddagsymb}  & [T\ref{thm:FEN-is-fas+delta}] \\
$\maxdeg+ \fas$ 
   &  \fescoreverifdeltafas & \fescoreverifdeltafascite &&\fenscoreverifdeltafas& \fenscoreverifdeltafascite&&\fenscoreverifdeltafas& \fenscoreverifdeltafascite && \needtoread{\npc} & [T\ref{thm:fe-nash-fas-delta-nph}] && \needtoread{\npc\harddagsymb}  & [T\ref{thm:ns-deg_fas}] && \needtoread{\npc\harddagsymb}  & [T\ref{thm:FEN-is-fas+delta}]\\
   $\maxcoal$%
  & \fescoreverifcoallong & \fescoreverifcoalcite && \fenscoreverifcoal  & \fenscoreverifcoalcite&& \fenscoreverifcoal  & \fenscoreverifcoalcite && -- & -- && -- & -- && -- & -- \\
$\fas+ \maxcoal$ 
& \fescoreverifcoalfas & \fescoreverifcoalfascite &&\fenscoreverifcoalfas & \fenscoreverifcoalfascite &&\fenscoreverifcoalfas & \fenscoreverifcoalfascite && -- & -- && -- & -- && -- & --\\  
$\maxdeg+\maxcoal$ 
& \fescoreverifcoaldelta& \fescoreverifcoaldeltacite &&  \fenscoreverifcoaldelta &\fenscoreverifcoaldeltacite &&  \fenscoreverifcoaldelta &\fenscoreverifcoaldeltacite  &&  -- & -- && -- & -- && -- & --\\

   symm.\ & \fescoreverifdelta & \fescoreverifdeltacite & & \fenscoreverifdelta  & \fenscoreverifdeltacite & &\fenscoreverifdelta  & \fenscoreverifdeltacite & &
           \needtoread{\pp} & [O\ref{thm:sym-ns-is}] && \needtoread{\pp} & [O\ref{thm:sym-ns-is}] && \needtoread{\pp} & [O\ref{thm:sym-ns-is}]\\
   DAG & \needtoread{\pp} & [P\ref{obs:FE-acyclic}] && \needtoread{\pp} &  [P\ref{prop:dag-friends-core-P}] & & \needtoread{\conpc} &  [T\ref{thm:verif-fas+deg+k}] & & \needtoread{\pp\easynash} & [T\ref{thm:fe-nash-fas-delta-nph}] && \needtoread{\pp} & [T\ref{thm:ns_is_acyclic}] && \needtoread{\pp\easyfennash} & [T\ref{thm:ns_is_acyclic}]\\\bottomrule
   
 \end{tabular}
\label{results}
\end{table*}
Ota et al.~\cite{ohta2017core} extend the model of Dimitrov et al.\ by also allowing agents to be \emph{neutral} to other agents who do not impact the preferences, and show that the same approach of Dimitrov et al.\ gives rise to a core stable partition under friend-oriented preferences.
They leave open the complexity of verifying core stable partitions.
Barrot et al.~\cite{BOSY19FEN-IS} show that this model may not admit individually stable partitions and leave open the complexity of determining whether one exists.
As far as we know, Nash stability has not been studied in the context with neutrals.

Both models, with or without neutrals, are a restriction of \emph{hedonic games with additive preferences} where it is NP-complete to decide whether a Nash stable or individually stable partition exists~\cite{SungDimAdditiveHG2010}, and it is $\Sigma^{\text{p}}_2$-complete to decide whether a core stable or strictly core stable partition exists~\cite{Woeginger2013AHGcore,Peters2017AHG_SC}.

In this paper, we focus on the friend-oriented model and resolve long-standing open questions by showing that all mentioned problems whose complexity was unknown are in fact intractable (either \conp- or \np-complete). In particular, we refute Woeginger's conjecture~\cite{woeginger2013core,ReyRotheSchSch2016wonderful} and show that verifying core stable partitions is \emph{not} polynomial-time solvable unless \pp{}$=${}\np.
To understand the true causes of the intractability results and to explore the line between easy and hard cases, we further look into interesting restricted scenarios such as planar or acyclic graphs, and natural parameters such as maximum degree~$\maxdeg$ and feedback arc set number~$\fas$ of the input graph, and also the size~$\maxcoal$ of the largest coalition in a given partition.
We analyze and obtain a complete picture of fine-grained complexity of both the verification and existence problems with respect to the four stability concepts and under friend-oriented preferences.
Our results are given in \cref{results}. We summarize our main contributions as follows.
\begin{compactitem}[--]
  \item First and foremost, we establish that it is \conp-complete to decide whether a given partition is core stable or strictly core stable, even in the case without neutrals (see \cref{thm:verif-non-symmetric-four}),  
  and it is \np-complete to decide whether an individually stable partition exists in the case with neutrals (see \cref{thm:FEN-is-fas+delta}).

   The first result has both theoretical and practical significance:
  (1) The reduction is based on a novel friendship gadget, which may be of independent interest for other hardness reductions for hedonic games; 
  (2) It not only showcases a rare complexity situation where verification is much harder than searching, but it can also be served as a complexity barrier against manipulation; e.g., when an agent or a subset of agents want to know if it is beneficial to maintain the status quo rather than to deviate, they essentially need to solve the \conp-hard verification question.

\item Second, we show that assuming the friends and enemies relationship graph to be acyclic (DAG) almost always ensures polynomial-time solvability.
The strict core verification problem with neutrals is the only exception.
Moreover, we obtain complexity dichotomies with regards to the distance to being a DAG, the so-called feedback arc set number~$\fas$.
We note that DAGs or relationship graphs with small~$\fas$ occurs, for instance, for authors when the friendships are based on the popularity of authors.  A prominent author is followed by many other authors, whereas, this relation is often asymmetric and ordering the authors according to their popularity can yield a small $\fas$.

\item Third, strengthening known and own results, we show that assuming the relationship graph to be planar (e.g., when the agents are located on the plane) or sparse (i.e., the maximum degree~$\maxdeg$ in the relationship graph is bounded since each agent typically only knows a few other agents) does not lower the complexity. %
\item Finally, for the verification problem where a partition is given,
we show that under the friends and enemies model, if the initial coalitions have small constant sizes, then the problem can be solved in polynomial time, i.e., an \xp\ algorithm wrt.~$\maxcoal$, but this parameter alone cannot yield fixed-parameter (FPT) algorithms under standard complexity theoretic assumptions.
Combining with $\fas$ or $\maxdeg$, we obtain FPT algorithms. %
Note that the algorithm for the combined parameter~$(\maxcoal,\fas)$ is based on a reduction to \textsc{Directed Subgraph Isomorphism} where the pattern graph is of size $O(\maxcoal^2)$. Our crucial observation further reduces it to the case where the pattern graph is indeed a directed in-tree, enabling us to design an algorithm with desired running time.
The problem is much harder when neutrals are present; both core verification problems remain \conp-hard even if $\maxcoal+\maxdeg+\fas$ is a constant.
\end{compactitem}

\smallskip
\noindent \textbf{Paper structure.} In \cref{sec:defi}, we define the model and relevant concepts, the central problems, and parameters. In \cref{sec:fe,sec:fen}, we consider the model without neutrals and with neutrals, respectively. We conclude in \cref{sec:conclude}.
\ifshort
Due to space constraints, proofs of the results and additional materials marked with (\appendixsymb) are deferred to the full version of the paper \cite{full_version}.
\else
Proofs of the results and additional materials marked with (\appendixsymb) are deferred to the appendix.
\fi

\tocommentout{
\begin{table*}[t]
  \renewcommand{\aboverulesep}{0pt} 
    \renewcommand{\belowrulesep}{0pt}
    \renewcommand{\arraystretch}{1.1}
 \begin{tabular}{@{}c@{\;}|@{\;} c@{\,} c @{\,}c@{\,} c@{\,} c @{\,}c@{\,} c @{\,}c@{}}
   \toprule
   & \multicolumn{2}{c}{Friends+Enemies} &~~~&  \multicolumn{5}{c}{Friends+Enemies+Neutrals}\\\cline{2-3}\cline{5-9}
  \small Always exists?  & \multicolumn{2}{c}{(Strict) Core: yes} &&  \multicolumn{2}{c}{Core: no} & ~~& \multicolumn{2}{c}{Strict Core: no}\\\midrule %
   \diagbox[width=2.4cm, height=.6cm]{\raisebox{-2pt}{\scriptsize Graph structure~~~~~~~~~~~~~~~~~~~~~~~~\hspace*{2cm}}}{\raisebox{2pt}{\scriptsize ~~~Problem}} & \multicolumn{2}{c}{\togethersverif} & & \multicolumn{2}{c}{\verif} & & \multicolumn{2}{c}{\sverif} \\\cline{1-1}\cline{2-3}\cline{5-6}\cline{8-9}\\[-2ex]
   {max deg ($\maxdeg$)}
   & \fescoreverifdeltalong & \fescoreverifdeltacite %
   && \needtoread{\conph\ ($\maxdeg=9$)}  & [T\ref{thm:core_verify_neutals}] && \needtoread{\conph($\maxdeg=28$)} & [T\ref{thm:core_verify_neutals}] \\
  Feedback arc ($\fas$)
  &  \needtoread{\conph (\fas=1, \maxfdeg = 3)}& [T\ref{thm:fas_number1}]  &&  \needtoread{\conph(\fas =1, \maxdeg = 12,\maxcoal =3)}& [T\ref{thm:verif-fas+deg+k}] && \needtoread{\conph(DAG, \maxdeg = 12,\maxcoal =3)}& [T\ref{thm:verif-fas+deg+k}] \\ 
   max coal.\ size ($\maxcoal$)
  & \needtoread{\cowoneh}, \xp & [T\ref{thm:W1h+XP_largest_coalition}] && \needtoread{\conph\ ($\maxcoal=3$)}  & [T\ref{thm:core_verify_neutals}] & & \needtoread{\conph($\maxcoal=4$)} & [T\ref{thm:score_verify_neutals}] \\   
$\maxdeg+ \fas$ 
   & \needtoread{coNP-h}(\fas =2,\maxdeg =5) & [T\ref{thm:sverif_cont_fas+delta}] && \needtoread{coNP-h(\fas =1,\maxdeg =12, \maxcoal =3)}& [T \ref{thm:verif-fas+deg+k}] && \needtoread{coNP-h(DAG,\maxdeg =12, \maxcoal =3)}& [T\ref{thm:verif-fas+deg+k}] \\
   \fas +\maxdeg +\maxcoal & \fpt & [T\ref{thm:fpt_coalition+deg}] && \needtoread{\conph (\fas =1,\maxdeg =12, \maxcoal =3)} & [T\ref{thm:verif-fas+deg+k}] && \needtoread{\conph (DAG, \maxdeg =12,\maxcoal = 3)} & [T\ref{thm:verif-fas+deg+k}]\\
$\maxcoal+ \maxdeg$ 
& {\fpt} & [T\ref{thm:fpt_coalition+deg}] &&  \needtoread{\conph\ ($\maxdeg=9, \maxcoal=3$)}  & [T\ref{thm:core_verify_neutals}]& & \needtoread{\conph($\maxdeg=28$, $\maxcoal=4$)} & [T \ref{thm:score_verify_neutals}] \\ 
$\maxcoal+ \fas$ 
& \needtowrite{\fpt} & [T\ref{thm:fe-core-fpt-k-delta}] && \needtoread{\conph(\fas =1,\maxdeg =12, \maxcoal =3}) & [T\ref{thm:verif-fas+deg+k}] && \needtoread{coNP-h(DAG,\maxdeg =12, \maxcoal =3)} & [T\ref{thm:verif-fas+deg+k}] \\ 
   
   DAG & \needtoread{\pp} & [O\ref{obs:FE-acyclic}] && \needtoread{\pp} &  [T\ref{thm:strict-core-verif-acyclic}] & &  \needtoread{\conph} & [T\ref{thm:verif-fas+deg+k}]\\\bottomrule 

 \end{tabular}
 \caption{ \maxfdeg\ denotes the maximum number of friends an agent has. ``DAG'' means that the corresponding friendship graph (resp.\ friendship and enemy graph) is acyclic. ``\hardplanarsymb'' means hardness holds even for planar graphs (and for symmetric preferences, albeit with a larger constant max degree).}
\end{table*}
}
\section{Basic definitions and fundamentals} %
\label{sec:defi}

Given an integer~$t$, let \myemph{$[t]$} denote the set~$\{1,2,\ldots,t\}$. Given a directed graph~$G$ and a vertex~$v$, the sets~$N^+_G(v)$ and $N^-_G(v)$ denote the out- and in-neighborhood of~$v$.
An instance of \myemph{\textsc{Hedonic Games}} consists of a set~$V$ of agents and for each agent a preference order (with possibly ties) over non-empty agent subsets, called \myemph{coalitions}, which contains her.
In this paper, we focus on a natural and simple variant of \textsc{Hedonic Games} where each agent regards every other agent either good (i.e., a friend), or bad (i.e., an enemy), or neutral such that agents' preferences are friends oriented.
Formally, we are given a set of agents~$V$ and two directed graphs on~$V$, called \myemph{friendship} graph~$\goodG$ and \myemph{enemy} graph~$\badG$ with \emph{disjoint} arc sets, such that an agent~$i$ regards another agent~$j$ as friend (resp.\ enemy) whenever $\goodG$ (resp.\ $\badG$) contains the arc~$(i,j)$; $i$ considers $j$ neutral if neither $\goodG$ nor $\badG$ contains $(i, j)$.

For each agent~$i\in V$, the preference order~$\succeq_i$ of $i$ is derived as follows:
For two coalitions~$S$ and $T$ containing~$i$, agent $i$ \myemph{(strictly) prefers}~$S$ to~$T$, written as \myemph{$S \succ_i T$}, if 
\begin{inparaenum}[(i)]
  \item either $|N^+_{\goodG}(i)\cap S| > |N^+_{\goodG}(i)\cap T|$, 
  \item or $|N^+_{\goodG}(i)\cap S| = |N^+_{\goodG}(i)\cap T|$ and $|N^+_{\badG}(i)\cap S| < |N^+_{\badG}(i)\cap T|$. 
\end{inparaenum}
Agent~$i$ is \myemph{indifferent} between $S$ and $T$, written as \myemph{$S \sim_i T$},
if  $|N^+_{\goodG}(i)\cap S| = |N^+_{\goodG}(i)\cap T|$ and $|N^+_{\badG}(i)\cap S| = |N^+_{\badG}(i)\cap T|$. 
Agent~$i$ \myemph{weakly prefers}~$S$ to~$T$ if $S\succ_i T$ or $S\sim_i T$. Note that the number of neutral agents in the coalition does not affect agent's preferences regarding that coalition.%

We call $V$ the \myemph{grand} coalition. A \myemph{coalition structure}~$\Pi$ of $V$ is a partition of $V$ into disjoint coalitions, i.e., the coalitions $V'$ in~$\Pi$ are pairwise disjoint and $\bigcup_{V'\in \Pi} V' = V$.
We will use coalition structure and partition interchangeably. 
Given a coalition structure~$\Pi$ of $V$ and an agent~$i\in V$, let \myemph{$\Pi(i)$} denote the coalition which contains~$i$.
A coalition~$W$ is \myemph{strictly blocking} a coalition structure~$\Pi$ if every agent~$i \in W$ strictly prefers~$W$ to $\Pi(i)$, and it is \myemph{weakly blocking}~$\Pi$ if every agent~$i \in W$ weakly prefers~$W$ to $\Pi(i)$ and at least one agent~$i\in W$ strictly prefers~$W$ to $\Pi(i)$.

We use \myemph{\FEN}\ to denote the \textsc{Hedonic Games} variant with friends, enemies, and neutrals, and use \myemph{\FE}\ to denote the restricted variant of \FEN\ where no agent is neutral to any other agent, i.e., $(i,j)\in E(\goodG)\cup E(\badG)$ holds for all distinct agents $i$ and $j$.
Note that the superscript \textsuperscript{\textsc{s}} refers to simple and is added to distinguish from the abbreviation~\textsc{FEN} used in the literature~\cite{kerkmann2020hedonic}.
For \FE, we follow the convention in the literature and only specify the friendship relation.
Due to this, in the remainder of the paper, we assume that an \FE\ instance consists of the friendship graph only.

\smallskip
\noindent \textbf{(Strictly) core stable coalition structures, Nash and individual stability.}
Let $\Pi$ be a coalition structure.
We say that $\Pi$ is \myemph{core stable} (resp.\ \myemph{strictly core stable}) if no coalition is strictly (resp.\ weakly) blocking~$\Pi$.
Clearly, by definition, a strictly core stable partition is also a core stable one.
We call $\Pi$ \myemph{Nash stable} if it is \myemph{individually rational} (i.e., no agent prefers to be alone)
and no agent \myemph{envies} another coalition (i.e., no agent~$x$ and coalition~$C$ in $\Pi$ exist such that $x$ prefers $C\cup \{x\}$ to her own).
It is \myemph{individually stable} if it is individually rational,
and no agent~$x$ and coalition $C\in  \Pi$ form a \myemph{blocking tuple}
(i.e., $x$ envies $C$ and each agent~$j\in C$ weakly prefers $C\cup \{x\}$ to $C$).

\begin{example}\label{ex:prelim}
  The graph on the left (with blue arcs only) is an instance of \FE, where each arc specifies the friendship relation.  
  The coalition structure, derived from the strongly connected components, $\Pi_1=\{\{1,2\},\{3\}, \{4\}\}$ is strictly core stable, but not Nash stable since $3$ wants to join~$\{4\}$.
  Indeed, there is no Nash stable solution.
  The graph on the right (blue arcs indicating friends while red arcs enemies) is an instance of \FEN.
  The coalition structure $\Pi_2=\{\{1,2\},\{3,4\}\}$ is strictly core stable and Nash stable.

  \def \xx {1.7}
  {\centering
    \begin{tikzpicture}[>=stealth',shorten <= 1pt, shorten >= 1pt]
    \foreach \i / \j / \n in {-.2/-0.3/1, -.2/.3/2, 0.5/0/3, 1/0/4} {
      \node[pn] at (\i*\xx, \j) (\n) {};
    }
    \foreach \n / \p / \l in {1/left/1, 2/left/1, 3/below/1, 4/below/1} {
      \node[\p = \l pt of \n] {$\n$}; 
    }
    \foreach \s / \t in {2/3, 1/3, 3/4} {
      \draw[->, fc] (\s) -- (\t);
    }
    \foreach \s / \t in {1/2, 2/1} {
      \draw[->, fc] (\s) edge[bend left=25] (\t);
    }
    \node at (-0.7*\xx, 0) {\FE:};
  \end{tikzpicture}~~\;\qquad\;~~
  \begin{tikzpicture}[>=stealth',shorten <= 1pt, shorten >= 1pt]
      \foreach \i / \j / \n in {-.2/-0.3/1, -.2/.3/2, 0.5/0/3, 1/0/4} {
      \node[pn] at (\i*\xx, \j) (\n) {};
    }
    \foreach \n / \p / \l in {1/left/1, 2/left/1, 3/below/1, 4/below/1} {
      \node[\p = \l pt of \n] {$\n$}; 
    }

    \foreach \s / \t in {2/3, 1/3, 3/4} {
      \draw[->, fc] (\s) -- (\t);
    }
    \foreach \s / \t in {1/2, 2/1} {
      \draw[->, fc] (\s) edge[bend left=25] (\t);
    }

    \foreach \s / \t / \aa in {3/2/40, 3/1/-40} {
      \draw[->, ec] (\s) edge[bend right=\aa] (\t);
    }
    \node at (-0.7*\xx, 0) {\FEN:};
  \end{tikzpicture}
 \par}
\end{example}

\noindent The following relation is known from the literature~\cite{bogomolnaia2002stability,dimitrov2006simple,ohta2017core}. %
\begin{proposition}
  \begin{compactenum}[(i)]
    \item Every strictly core stable coalition structure is individually stable and core stable.
    \item Nash stability implies individual stability.
    \item For \FE, a strictly core stable coalition structure always exists and it can be found in linear time.
    \item For \FEN, a core stable coalition structure always exists and it can be found in linear time. 
  \end{compactenum}
\end{proposition}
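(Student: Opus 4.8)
The plan is to treat the four claims in turn, getting (i) and (ii) directly from the definitions and establishing (iii) and (iv) by exhibiting the partition into strongly connected components and analysing blocking coalitions. For (i) I would first note that any strictly blocking coalition is in particular weakly blocking: it is non‑empty, every member strictly prefers it, so every member weakly prefers it and at least one strictly does. Hence a strictly core stable $\Pi$ (no weakly blocking coalition) has no strictly blocking coalition either, so it is core stable. For individual stability of such a $\Pi$: if some agent $x$ preferred $\{x\}$ to $\Pi(x)$, then $\{x\}$ would be a (strictly, hence weakly) blocking coalition, a contradiction, so $\Pi$ is individually rational; and if $(x,C)$ with $C\in\Pi$ were a blocking tuple, then $x$ envies $C$, i.e.\ $x$ strictly prefers $C\cup\{x\}$ to $\Pi(x)$, while every $j\in C$ weakly prefers $C\cup\{x\}$ to $C=\Pi(j)$, so $W:=C\cup\{x\}$ weakly blocks $\Pi$ --- again impossible. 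Claim (ii) is even shorter: a Nash stable $\Pi$ is individually rational by definition, and since forming a blocking tuple $(x,C)$ requires $x$ to envy $C$, the Nash condition that no agent envies any coalition rules out blocking tuples, so $\Pi$ is individually stable.

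For (iii) let $\Pi^{\star}$ be the partition of $V$ into the strongly connected components (SCCs) of $\goodG$; it is computed in linear time by Tarjan's algorithm. To see it is strictly core stable, suppose $W$ weakly blocks it, let $\mathcal{C}$ be the set of SCCs that meet $W$, and pick $C^{\star}\in\mathcal{C}$ maximal under the partial order ``$C\preceq C'$ iff $C'$ is reachable from $C$ in the condensation of $\goodG$'' (finite non‑empty poset, so such $C^{\star}$ exists). For any $i\in C^{\star}\cap W$, every friend of $i$ lies in an SCC reachable from $C^{\star}$, and the only element of $\mathcal{C}$ reachable from $C^{\star}$ is $C^{\star}$ itself, so $N^+_{\goodG}(i)\cap W\subseteq C^{\star}$, giving $|N^+_{\goodG}(i)\cap W|\le |N^+_{\goodG}(i)\cap C^{\star}|$. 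Since $i$ weakly prefers $W$ to $C^{\star}=\Pi^{\star}(i)$, friend‑orientedness forces equality, hence $N^+_{\goodG}(i)\cap C^{\star}\subseteq W$, and also $|N^+_{\badG}(i)\cap W|\le |N^+_{\badG}(i)\cap C^{\star}|$, which in $\FE$ (every non‑friend is an enemy) reduces to $|W|\le |C^{\star}|$. Because $N^+_{\goodG}(i)\cap C^{\star}\subseteq W$ holds for \emph{every} $i\in C^{\star}\cap W$, the set $C^{\star}\cap W$ is closed under out‑neighbours inside the strongly connected graph $\goodG[C^{\star}]$, so $C^{\star}\cap W=C^{\star}$; with $|W|\le |C^{\star}|$ this yields $W=C^{\star}=\Pi^{\star}(i)$, so no agent of $W$ strictly prefers $W$, contradicting that $W$ is weakly blocking.

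For (iv) I would reuse $\Pi^{\star}$: neutral agents never affect anyone's comparison of two coalitions, so each agent's preference is exactly as in the $\FE$ game with friendship graph $\goodG$. Suppose $W$ strictly blocks $\Pi^{\star}$; take $C^{\star}$ as above and $i\in C^{\star}\cap W$. As before $|N^+_{\goodG}(i)\cap W|\le |N^+_{\goodG}(i)\cap C^{\star}|$, so $i$ cannot strictly prefer $W$ by having more friends; hence $i$ strictly prefers $W$ with equally many friends but strictly fewer enemies, and the friend equality together with $N^+_{\goodG}(i)\cap W\subseteq C^{\star}$ gives $N^+_{\goodG}(i)\cap C^{\star}\subseteq W$. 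Running this over all $i\in C^{\star}\cap W$ and using strong connectivity of $\goodG[C^{\star}]$ again gives $C^{\star}\subseteq W$; but then $N^+_{\badG}(i)\cap C^{\star}\subseteq N^+_{\badG}(i)\cap W$, so $i$ does \emph{not} have fewer enemies in $W$ than in $C^{\star}$ --- a contradiction. Thus no strictly blocking coalition exists and $\Pi^{\star}$ is core stable; it need not be strictly core stable, since a newcomer neutral to all members of a component can weakly (never strictly) block together with that component. The bookkeeping in (i)--(ii) and the linear‑time SCC computation are routine; the step I expect to require genuine care is the counting argument for $\Pi^{\star}$ --- choosing the agent in $C^{\star}$ and balancing friends against enemies so that (weak, resp.\ strict) improvement becomes impossible --- which is exactly the content of the classical arguments of Dimitrov et al.~\cite{dimitrov2006simple} and Ota et al.~\cite{ohta2017core} that I would follow.
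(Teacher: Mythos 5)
Your proof is correct; the paper itself gives no proof of this proposition but cites it as known from the literature (Bogomolnaia and Jackson, Dimitrov et al., Ota et al.), and your treatment --- (i) and (ii) straight from the definitions, (iii) and (iv) via the strongly-connected-component partition with the maximal-component counting argument --- is exactly the classical argument those references (and the paper's own introduction) rely on. No gaps.
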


\smallskip

\noindent \textbf{Central problems.}
We are interested in the following core verification problems.
\decprob{\FEN-\verif~({\normalfont \text{resp.}} \FE-\verif)}{
 An \FEN\ instance~$(V, \goodG, \badG)$ (resp.\ \FE\ instance $(V, \goodG)$), and a coalition structure~$\Pi$ on~$V$.
}{
  Is $\Pi$ core stable?
}
We define \FEN-\sverif\ and \FE-\sverif\ accordingly when we instead ask whether $\Pi$ is strictly core stable. 
All four problems are contained in \conp\ since checking whether a coalition is blocking a coalition structure can be done in polynomial time. 

By definition, it is straightforward that verifying Nash stability or individual stability is polynomially solvable. Hence, we look into the existence questions, which are contained in NP. 
\decprob{\FEN-\NS\ ({\normalfont \text{resp.}} \FEN-\IndS)}{
 An \FEN\ instance~$I=(V, \goodG, \badG)$. %
}{
Does $I$ admits a Nash stable (resp.\ individually stable) coalition structure?
}
We define \FE-\NS\ accordingly for the \FE\ case. %
We assume basic knowledge of parameterized complexity and refer to the following textbooks~\cite{Nie06,CyFoKoLoMaPiPiSa2015} for more details.
\smallskip

\noindent \textbf{Graph structures and parameters.}
We investigate the (parameterized) complexity of the above problems and focus on restricted instances.
Given an instance~$I=(V, \goodG, \badG)$, we define the following parameters:
\begin{compactitem}[--]
  \item Max degree~$\maxdeg$: For \FEN, it is defined as $\max_{i\in V} |N^+_{\goodG+\badG}(i)\cup N^-_{\goodG+\badG}(i)|$, while for \FE, it is defined as $\max_{i\in V} |N^+_{\goodG}(i)\cup N^-_{\goodG}(i)|$ since $\goodG+\badG$ is a complete digraph.
  \item Max coalition size~$\maxcoal$: It is defined as the size of the largest coalition in the coalition structure from the input.
  \item Feedback arc set number~$\fas$: For \FEN, $\fas$ is the smallest number of arcs deleting which makes $\goodG+\badG$ acyclic, while for \FE, $\fas$ is the smallest number of arcs deleting which makes $\goodG$ acyclic,
\end{compactitem}
We say that $I$ has \myemph{symmetric preferences} if each arc in $\goodG$ and $\badG$ is bi-directional (see the arcs $(1,2)$ and $(2,1)$ in \cref{ex:prelim}).
It contains \myemph{acyclic} graph (\myemph{DAG}) if  the union $\goodG+\badG$ is acyclic for the \FEN\ model and $\goodG$ is acyclic for the \FE\ model, respectively.

\section{The Friends and Enemies Model}\label{sec:fe}
\appendixsection{sec:fe}
In this section, we focus on the \FE\ model~\cite{dimitrov2006simple}.
First of all, we settle the complexity of problems regarding (strict) core verification and Nash existence, and show that they are intractable and remain so even for very restricted cases such as sparse graphs and symmetric preferences.
For the hardness reductions, we use the following NP-complete problem: %
\decprob{\pxct}
{A $3\enn$-element set~$\mathcal{X}=[3\enn]$ and a collection~$\mathcal{C}=(C_1,\ldots,$ $C_{\emm})$ of $3$-element subsets of~$X$ such that each element~$i\in X$ appears in either two or three members in~$\mathcal{C}$ and that the associated \emph{element-linked graph} is planar.}{Does~$\mathcal{C}$ contain an \myemph{exact cover} for~$X$, i.e., a subcollection~$\mathcal{K} \subseteq \mathcal{C}$ such that each element of~$X$ occurs in exactly one member of~$\mathcal{K}$?}
Herein, given a \pxct{} instance~$I=(X,\mathcal{C})$, the associated \myemph{element-linked graph} of~$I$ is a graph~\myemph{$G(I)=(U \uplus W, E)$} on two partite vertex sets~$U=\{u_i\mid i \in X\}$ and $W=\{w_j \mid C_j\in \mathcal{C}\}$ such that
$E=\{\{u_i,w_j\}\mid i \in C_j\}\cup \{\{u_i,u_{i+1}\} \mid i\in [3\enn-1]\}$ $\cup \{\{u_1,u_{3\enn}\}\}$ is planar.
We call the vertices in~$U$ and $W$ the \myemph{element-vertices} and the \myemph{set-vertices}, respectively.
We also call the cycle induced by the element-vertices the \myemph{element-cycle}.
For notational convenience, for each element~$i\in [3\enn]$, let \myemph{$\mathcal{C}(i)$} $\coloneqq \{C_j\in \mathcal{C} \mid i\in C_j\}$ denote the sets which contain element~$i$.

Dyer and Frieze~\cite{DF86} show that the NP-completeness of \pxct\ remains even if the planar embedding of the associated element-linked graph satisfies the following: %
\begin{align*}
&  \text{for all } u_i\in U \text{ there are at most two vertices } w_j \text{ such that }
 i\in C_j\\
 & \text{ and both lie \emph{inside} or \emph{outside} of the element-cycle}.\tag{$\heartsuit$}\label{cond:at-most-two-3sets}
\end{align*}
Hence, for notational convenience and based on this planar embedding, we partition~$\mathcal{C}$ into two disjoint subcollections: 
\myemph{$\mathcal{C}^{\textsf{out}}$} $\coloneqq \{C_j\in \mathcal{C}\mid w_j$  lies outside of the element-cycle  $\}$ and
\myemph{$\mathcal{C}^{\textsf{in}}$} $\coloneqq \mathcal{C}\setminus \mathcal{C}^{\textsf{out}}$.
In the hardness proofs, we will utilize this fact to construct appropriate gadgets which do not exceed the desired maximum vertex degree.

\newcommand{\clmcoreplanardegfourforward}{%
  If $I$ admits an exact cover, then $\Pi$ is not core stable (and hence not strictly core stable).%
}
   
\newcommand{\clmcoreplanardegfourbackward}{%
  If $\Pi$ admits a weakly blocking coalition, then $I$ admits an exact cover.
}

\newcommand{\thmcorefour}{%
  \FE-\verif\ and \FE-\sverif\ are \conp-complete even for planar friendship graphs and $\maxdeg=4$.
}
\begin{theorem}\label{thm:verif-non-symmetric-four}
  \thmcorefour
\end{theorem}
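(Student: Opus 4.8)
The plan is to reduce from \pxct, using the special planar embedding satisfying~\eqref{cond:at-most-two-3sets}, so that the friendship graph we build stays planar with maximum degree~$4$. Given an instance $I = (\mathcal{X}, \mathcal{C})$, I would construct an \FE{} instance together with a candidate partition~$\Pi$ such that $\Pi$ fails to be (strictly) core stable precisely when $I$ has an exact cover. Since both \FE-\verif{} and \FE-\sverif{} lie in \conp{} (checking a single blocking coalition is polynomial), it suffices to establish \conph{} via this reduction; and because the blocking coalition I will exhibit in the forward direction will have \emph{every} member strictly improving, the same construction handles both the core and the strict core simultaneously.

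The heart of the construction is a \emph{friendship gadget}: for each set-vertex $w_j$ (i.e.\ each $C_j \in \mathcal{C}$) and each element-vertex $u_i$ (i.e.\ each $i \in \mathcal{X}$) I introduce agents, with friendship arcs encoding ``$i \in C_j$'' via the element-linked graph $G(I)$. The element-cycle arcs $\{u_i, u_{i+1}\}$ are used to ``chain'' the element-agents so that a blocking coalition, if it exists, is forced to pick up agents corresponding to a set of $C_j$'s that together cover every element exactly once — any element covered zero times or $\ge 2$ times should break the friend-count inequality that a blocking coalition needs to satisfy. Condition~\eqref{cond:at-most-two-3sets} is what lets me orient/place the gadget arcs so that no agent accumulates degree more than~$4$: each $u_i$ has its two element-cycle neighbors plus at most two set-neighbors on a given side. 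In $\Pi$ I would put the agents into small coalitions (essentially the connected pieces of the gadget) so that, in $\Pi$, nobody is already getting the friends they could get by regrouping along a cover.

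For correctness I would prove the two named claims. Forward (\texttt{\textbackslash clmcoreplanardegfourforward}): given an exact cover $\mathcal{K}$, assemble the coalition $W$ consisting of the gadget-agents for the sets in $\mathcal{K}$ together with all element-agents; verify each member of $W$ sees strictly more friends in $W$ than in its $\Pi$-coalition (this uses that the cover touches each element exactly once, so the counts come out cleanly and strictly, which also gives strict-core blocking). Backward (\texttt{\textbackslash clmcoreplanardegfourbackward}): given a weakly blocking coalition $W$, argue from the friend-count constraints that $W$ must contain a ``consistent'' family of set-gadgets whose sets partition~$\mathcal{X}$ — here one shows that if $W$ grabbed two set-gadgets sharing an element, or missed an element, some agent in $W$ would actually be weakly worse off (fewer friends, or equal friends but more enemies), contradicting weak blocking. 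Extracting the exact cover from $W$ then finishes the reduction.

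The main obstacle I anticipate is getting the gadget to simultaneously (a) enforce the exact-cover semantics through pure friend-counting — since \FE{} preferences are coarse (lexicographic: friends first, then enemies), there is little room to "tune" incentives, so the gadget must be designed so that the relevant agents sit exactly on a knife-edge where one extra or one missing shared element flips a strict preference into a non-strict one; and (b) keep planarity and $\maxdeg = 4$, which severely limits how many auxiliary agents and arcs each gadget vertex can carry. Balancing these two constraints — in particular routing the element-cycle "chaining" arcs and the set-membership arcs without blowing the degree past~$4$, while still making the backward direction's counting argument airtight — is where the real work lies; the \emph{inside/outside} split $\mathcal{C}^{\textsf{out}} \uplus \mathcal{C}^{\textsf{in}}$ from~\eqref{cond:at-most-two-3sets} is the key structural lever I would lean on to make it fit.
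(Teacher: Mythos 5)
Your overall strategy coincides with the paper's (a reduction from \pxct{} using the embedding guaranteed by \eqref{cond:at-most-two-3sets}, chained element gadgets, set gadgets attached via membership arcs, and a forward/backward pair of claims that handles \verif{} and \sverif{} simultaneously), but the proposal is missing the one idea that makes the backward direction work, and you have in fact planned around it incorrectly. Under friend-oriented preferences the only way to penalize a blocking coalition for being too large is through an agent who gets the \emph{same} number of friends as in $\Pi$ and whose \emph{initial} coalition already contains a calibrated number of enemies; an agent who gains even one friend tolerates arbitrarily many enemies. Your plan puts everyone into ``small coalitions (essentially the connected pieces of the gadget)'', so no agent can play this calibrating role: either every agent in the intended blocking coalition strictly gains a friend, in which case nothing bounds the coalition's size and a weakly blocking coalition only certifies a \emph{cover} (not an exact cover), so the backward direction fails; or some agent gains no friend, in which case her small initial coalition lets her tolerate only a handful of enemies and the intended blocking coalition of size $\Theta(\enn)$ does not block, so the forward direction fails. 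The ``knife-edge'' you hope for cannot be realized with small initial coalitions.

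The paper resolves exactly this with its novel friendship gadget: each element agent $x_i$ is placed in $\Pi$ together with $L+1=27\enn$ private friendship agents arranged in a directed cycle (with a two-cycle to $x_i^0$), so that $x_i$ starts with one friend and $L=27\enn-1$ enemies. These private agents can never join any blocking coalition (each already has all her friends), so in any weakly blocking coalition $x_i$ again has exactly one friend and hence at most $L$ enemies, capping the coalition at $27\enn$ agents. Combined with the chaining argument (which forces all $3\enn$ element gadgets plus one selector side each, contributing $7$ agents per element, and at least $\enn$ set gadgets of $6$ agents each), this pins the number of selected sets to exactly $\enn$ and yields exactness; the same count shows the cover-induced coalition of size exactly $L+1$ strictly blocks in the forward direction. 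Without this size-calibration mechanism (or an equivalent one), your reduction as sketched does not distinguish exact covers from arbitrary covers, so the proposal has a genuine gap precisely at the step you flagged as ``where the real work lies''.
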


\begin{proof}[Proof sketch]%
  We show hardness for both problems via the same reduction. %
  Let $I=([3\enn], \mathcal{C})$ denote an instance of \pxct\ with
  $\mathcal{C}=\{C_1,\ldots,C_{\emm}\}$, 
  and let $G(I)=(U\cup W, E)$ denote the associated element-linked planar graph.
  Recall that there exists a planar embedding of $G(I)$ which satisfies \eqref{cond:at-most-two-3sets} such that $\mathcal{C}^{\textsf{out}}$ and $\mathcal{C}^{\textsf{in}}$ partition the set family~$\mathcal{C}$ into two disjoint subfamilies according to this embedding.
  
  For brevity's sake, define \myemph{$L=27\enn-1$}; we note that the desired blocking coalition will be of size $L+1$.
  For each element~$i\in [3\enn]$, create three \myemph{element agents}~$x_i, s_i, t_i$,
  and a set of~$L+1$ private \myemph{friendship} agents~$x_i^z$, $z\in \{0,\ldots,L\}$; these $L+1$ agents comprise the friendship gadget of element~$i$.
  The number of the friendship agents will ensure that we indeed have an exact cover.
  For each set~$C_j\in \mathcal{C}$ and each element~$i\in C_j$, create two \myemph{set agents}~$c_j^i$ and $d_j^i$.
  To connect the elements with the sets, for each element~$i\in [3\enn]$, create two groups of connection gadgets (one for each side of the element-cycle) with a total of eight agents called~$a^z_i, b_i^z$, $z\in \{0,1,2,3\}$,
  which serve as selector agents.
  We remark that agents~$a_i^3$ and $b_i^3$ serve as connectors and will be friends with the agents corresponding to the sets which contain~$i$.
  This completes the construction of the agents.
  In total, we have
  agent set~$V\coloneqq  \{x_i,x_i^0,\ldots,x_i^{L},s_i,t_i,a_i^z,b_i^z \mid i\in [3\enn], z\in \{0,1,2,3\}\} \cup \{c^i_j,d^i_j,c^k_j,d^k_j,c^{r}_j,d^{r}_j\mid C_j\in \mathcal{C} \text{ with } C_j=\{i,k,r\} \}$.
  Next, we describe the friendship graph~$\goodG$; its planar embedding is based on the planar embedding of~$G(I)$.
 
  \begin{figure}[t!]
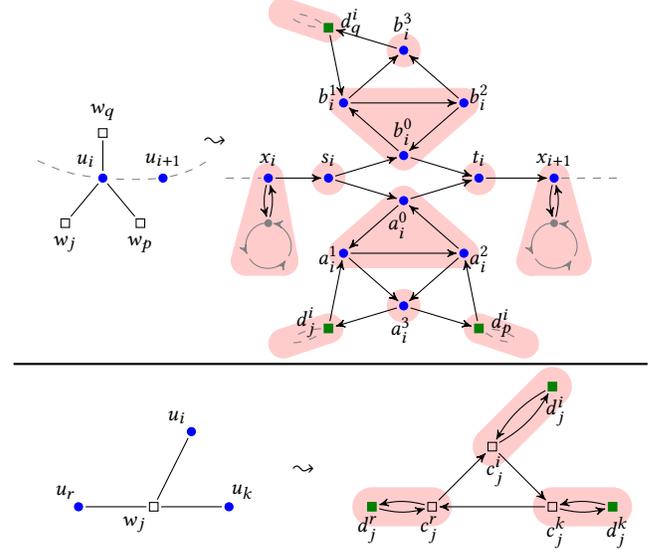

    \centering
  \begin{tikzpicture}[scale=1,every node/.style={scale=0.9},>=stealth', shorten <= 1pt, shorten >= 1pt]
  \input{tikz-hypergraph}%
  \elementpic
  \node at (1.5,0.5) {$\leadsto$};
  \begin{scope}[shift={(2,0)}]
    \node[nn] at (0.2,0) (xi) {};
    \node[pn] at (0.2,-.6) (pxi) {};
    \node[nn] at (1,0) (si) {};
    \node[nn] at (3,0) (ti) {};
    \node[nn] at (4,0) (xip) {};
    \node[pn] at (4,-0.6) (pxip) {};

    \node[] at (-.5,0) (uim) {};
    \node[] at (5,0) (uipp) {};

    \node[] at (-.1,-1.1) (uxi1) {};
    \node[] at (.4,-1.1) (uxi2) {};
    \node[] at (3.7,-1.1) (uxi1p) {};
    \node[] at (4.3,-1.1) (uxi2p) {};

    \node[] at (0.4,-2.2) (uji) {};
    \node[] at (3.6,-2.2) (uri) {};
    \node[] at (0.4,2.2) (uki) {};

    \foreach \x / \y / \n / \typ in {
      2/-.3/ai0/nn,
      1.2/-1/ai1/nn, 2.8/-1/ai2/nn,
      2/.3/bi0/nn,
      2/-1.7/ai3/nn,
      1/-2/dji/ssn,
      3/-2/dri/ssn,
      1.2/1/bi1/nn, 2.8/1/bi2/nn,
      2/1.7/bi3/nn,
      1/2/dki/ssn%
    } {
      \node[\typ] at (\x, \y) (\n) {}; 
    }
    
    \draw[gl,dashed] (uim) -- (xi);
    \draw[gl,dashed] (xip) -- (uipp);
    \draw[gl,dashed] (dji)  to[bend right=15] (uji);
    \draw[gl,dashed] (uji)  to[bend right=15] (dji);
    \draw[gl,dashed] (dki)  to[bend right=15] (uki);
    \draw[gl,dashed] (uki)  to[bend right=15] (dki);
    \draw[gl,dashed] (dri)  to[bend right=15] (uri);
    \draw[gl,dashed] (uri)  to[bend right=15] (dri);
    
    \node[above = 0pt of xi] {$x_i$};
    \node[above = 0pt of xip] {$x_{i+1}$};
    \node[above = 0pt of si] {$s_{i}$};
    \node[above = 0pt of ti] {$t_{i}$};

    \foreach \s / \t in {xi/si, si/ai0, si/bi0, ai0/ti, bi0/ti, ai1/ai2, ti/xip, ai1/ai3,ai2/ai3,ai0/ai1,ai2/ai0,ai3/dji,ai3/dri,dji/ai1,dri/ai2,
      bi0/bi1,bi1/bi2,bi2/bi0,bi1/bi3,bi2/bi3,bi3/dki,dki/bi1} {
       \draw[->] (\s) -- (\t);
     }

     \foreach \s / \t / \d in {pxi/xi/15,xip/pxip/15} {
       \draw[->] (\s) to[bend right=\d] (\t);
       \draw[->] (\t) to[bend right=\d] (\s);
     }

     \foreach \i in {pxi,pxip} {
       \ARW[gray,->]{90:208,208:326,326:444}{.3}{\i}
     }
   
     \foreach \i / \p / \a / \r / \n in {ai0/below right/0pt/-10pt/{a_i^0},
       ai1/below left/-7pt/-2pt/{a_i^1},
       ai2/below right/-7pt/-2pt/{a_i^2},
       ai3/below right/0pt/-9pt/{a_i^3},
       dji/above left/-8pt/0pt/{d_j^i},
       dri/above right/-8pt/0pt/{d_{\setind}^i},
       bi1/above left/-7pt/-2pt/{b_i^1},
       bi2/above right/-7pt/-2pt/{b_i^2},
       bi0/above right/0pt/-8pt/{b_i^0},  bi3/above right/0pt/-8pt/{b_i^3},
       dki/above right/-8pt/0pt/{d_{q}^i},} {
       \node[\p = \a and \r of \i] {$\n$};
     }

     \begin{pgfonlayer}{bg}
      \node[ip, shape=circle, minimum size=5mm] at (si) {};
      \node[ip, shape=circle, minimum size=5mm] at (ti) {};
      \node[ip, shape=circle, minimum size=5mm] at (ai3) {};
      \node[ip, shape=circle, minimum size=5mm] at (bi3) {};
      \draw[ip] \hedgeiii{ai1}{ai0}{ai2}{2mm};
      \draw[ip] \hedgeiii{bi2}{bi0}{bi1}{2mm};
      \draw[ip] \hedgem{xi}{pxi}{uxi2, uxi1}{2mm};
      \draw[ip] \hedgem{uxi1p}{pxip}{xip, uxi2p}{2mm};
      \draw[ip] \hedgeii{uxi2}{xi}{2mm};
      \draw[ip] \hedgeii{uxi1p}{xip}{2mm};
      \draw[ip] \hedgeii{uji}{dji}{2mm};
      \draw[ip] \hedgeii{uri}{dri}{2mm};
      \draw[ip] \hedgeii{uki}{dki}{2mm};
     \end{pgfonlayer}
  \end{scope}
\end{tikzpicture}
\begin{tikzpicture}[scale=1,every node/.style={scale=0.9}, >=stealth', shorten <= 1pt, shorten >= 1pt]
  \input{tikz-hypergraph}
  \draw (-1.9,1.9) edge[thick] (6.6,1.9);

  \setpic

   \node at (2,0.5) {$\leadsto$};
 \begin{scope}[shift={(4.5,0)}]

   \foreach \x / \y / \n / \typ in {
      0/.8/cji/sn, .8/0/cjk/sn, -.8/0/cjr/sn, .8/1.6/dji/ssn, 1.6/0/djk/ssn,-1.6/0/djr/ssn%
    } {
      \node[\typ] at (\x,\y) (\n) {};
    }
    \foreach \i / \p / \a / \r / \n in {
      cji/below right/-1pt/-7pt/{c_j^i},
      dji/below right/-1pt/-7pt/{d_j^i},
      cjk/below right/-1pt/-7pt/{c_j^k},
      djk/below right/-1pt/-7pt/{d_j^k},
      cjr/below left/-1pt/-7pt/{c_j^r},
      djr/below left/-1pt/-7pt/{d_j^r}%
    }{
      \node[\p = \a and \r of \i] {$\n$};
    }
    \foreach \s in {i,k,r} {
      \draw[->] (cj\s) to[bend right=15] (dj\s);
      \draw[->] (dj\s) to[bend right=15] (cj\s);
    }

    \foreach \s/\t in {i/k,k/r,r/i} { 
      \draw[->] (cj\s) -- (cj\t);
    }

     \begin{pgfonlayer}{bg}
      \draw[ip] \hedgeii{cji}{dji}{2.6mm};
      \draw[ip] \hedgeii{cjk}{djk}{2.6mm};
      \draw[ip] \hedgeii{cjr}{djr}{2.6mm};
     \end{pgfonlayer}
     
  \end{scope}
\end{tikzpicture}
\caption{Gadgets for  \cref{thm:verif-non-symmetric-four}. The red areas indicate the coalitions in the initial partition~$\Pi$.
  Upper part: An element-gadget, where $\mathcal{C}(i)=\{C_j,C_{\setind}, C_q\}$ such that $C_j,C_p\in \mathcal{C}^{\textsf{out}}$ (indicated by the dashed line).
  The private friendship gadget for~$x_i$ (resp.\ $x_{i+1}$) is depicted as a gray directed cycle. 
  Lower part: A set-gadget, where $C_j=\{i,k,r\}$. The red areas indicate the coalitions in the initial partition~$\Pi$.}\label{fig:deg4-element-set}
\end{figure}
\begin{itemize}[--]
    \item %
    Starting from the planar embedding of $G(I)$, we replace each element vertex~$u_i\in U$ with the corresponding element agent~$x_i$ and their friendship agents~$x_i^z$, $z\in \{0,\ldots,L\}$, with bidirectional arcs $(x_i,x_i^0)$, $(x_i^0,x_i)$ and directed cycle~$(x_i^z,x_i^{z+1})$ ($z\in \{0,\ldots,L\}$, $z+1$ taken modulo $L+1$).    
    For each edge~$\{u_i,u_{i+1}\}$  on the element-cycle,
    we replace it with the arcs~$(x_i,s_i)$, $(s_i,a_i^0)$, $(s_i,b_i^0)$, $(a_i^0,t_i)$, $(b_i^0, t_i)$, $(t_i,x_{i+1})$; let ${i+1}=1$ if $i=3\enn$. 
    Further, for each $(i,z)\in [3\enn]\times \{0,1,2\}$,
    we add the arcs $(a_i^z,a_i^{z+1})$, $(b_i^z,b_i^{z+1})$ ($z+1$ taken modulo $3$),
    $(a_i^1,a_i^3)$, $(a_i^2,a_i^3)$, and
    $(b_i^1,b_i^3)$, $(b_i^2,b_i^3)$.

    \item For each set~$C_j \in \mathcal{C}$ with $C_j=\{i,k,r\}$ and $i<k<r$, we replace the corresponding set vertex in the planar embedding with a directed subgraph consisting of the following arcs:
    $(c_j^i,d_j^i)$, 
    $(d_j^i,c_j^i)$, 
    $(c_j^k,d_j^k)$, 
    $(d_j^k,c_j^k)$, 
    $(c_j^{r},d_j^{r})$, 
    $(d_j^{{r}},c_j^{r})$,
    $(c_j^{i}, c_j^{k})$,
    $(c_j^{k}, c_j^{{r}})$,
    $(c_j^{{r}}, c_j^{i})$.

    \item For each edge~$\{u_i,w_j\}\in E(G(I))$, we replace this edge according to where the corresponding set vertex lies in the planar embedding.
    If $C_j \in \mathcal{C}^{\textsf{out}}$, then we add the arcs~$(a_i^3, d_j^i)$, $(d_j^i,a_i^{z})$, where $z$ is deterministically fixed to either $1$ or $2$ so as to maintain the planarity.
    Analogously, if $C_j\in \mathcal{C}^{\textsf{in}}$, then we add arcs~$(b_i^3, d_j^i)$, $(d_j^i,b_i^{z})$; again $z$ is deterministically fixed to either $1$ or $2$ so as to maintain the planarity.
    Note that by \eqref{cond:at-most-two-3sets},
    adding arcs $(d_j^i,v_i^z)$ ($v\in \{a,b\}$, $z\in [2]$) preserves planarity.
  \end{itemize}

 \noindent See \cref{fig:deg4-element-set} for an illustration.
  To complete the construction, define the initial coalition structure
    $\Pi\coloneqq
     \{\{x_i,x_i^0,x_i^1,\ldots,x^{L}\}, \{s_i\}, \{t_i\}\mid i\in [3\enn]\}\cup
      \{\{v_i^0,v_i^1,v_i^2\}, \{v_i^3\} \mid (i,v)\in [3\enn]\times \{a,b\}\}\cup
      \{\{c_j^i,d_j^i\}\mid \{u_i,w_j\}\in E(G(I))\}$.
The general idea for the reduction is as follows: For each element~$i\in [3\enn]$ we created an element-gadget (consisting of several element agents). The element-gadgets are ``connected'' via appropriate selector gadgets so they correspond to the element-cycle in the associated element-linked graph~$G(I)$.
This ensures that a weakly blocking coalition will need to contain dedicated agents which correspond to all elements. 
For each set~$C_j\in \mathcal{C}$ we created a set-gadget (again consisting of several set agents).
We connected an element-gadget to a set-gadget via a communication gadget if and only if the corresponding element is contained in the corresponding set. 
This ensures that an agent in a set-gadget is in a blocking coalition if and only if the element agents ``contained'' in the set are in the blocking coalition as well.
Note that this already gives us a covering for the elements.
To have an exact cover, we introduced a novel friendship gadget which can never participate in any blocking coalition but shall ensure that its associated agent will never join a blocking coalition that is too large. This sets an upper limit on the size of a blocking coalition.

      \appendixalg{thm:verif-non-symmetric-four}{One can verify that the graph is planar.
        \ifshort
        The proofs for having maximum degree~$4$ and the correctness are deferred to the full version.
        \else
        The proofs for having maximum degree~$4$ and the correctness are deferred to the appendix.        
        \fi}{\thmcorefour}{ %
    \smallskip

    \noindent To see why $\maxdeg=4$, we observe the following.
  \begin{observation}\label{obs:prefs}
    The agents have the following in- and out-neighbors in~$\goodG$. 
    \begin{compactenum}[(i)]
      \item\label{obs:Prefs-elem1} For each~$i\in [3\enn]$, it holds that
      $N^+(x_i)=\{x_i, t_{i-1}\}$ (let $t_{i-1}=t_{3\enn}$ if $i=1$),
      $N^-(x_i)=\{x_i^0, s_i\}$,
      $N^+(x^0_i)=\{x_i, x_i^1\}$,
      $N^-(x^0_i)=\{x_i, x_i^{L}\}$,
      and for each $z\in [L]$ it holds that $N^+(x^z_i)=\{x_i^{z+1}\}$ and
      $N^-(x^z_i) = \{x_i^{z-1}\}$ (where $z+1$ and $z-1$ are taken modulo $L+1$).
      \item\label{obs:Prefs-elem2} For each~$i\in [3\enn]$, it holds that
      $N^+(s_i)=N^-(t_i)=\{a_i^0,b_i^0\}$,
      $N^-(s_i)=\{x_i\}$,
      $N^+(t_i)=\{x_{i+1}\}$ (let $x_{i+1}=x_1$ if $i=3\enn$).
      \item\label{obs:Prefs-connect}
      For each $(i,v,z)\in [3\enn]\times \{a,b\}\times \{1,2\}$, the following holds.
      $N^+(v^0_i) =\{t_i, v_i^1\}$, 
      $N^-(v^0_i)=\{s_i, v_i^2\}$
      $N^+(v^z_i)=\{v_i^{z+1}, v_i^{z+2}\}$ (where $z+2$ is taken modulo $4$),
      $N^-(v^3_i)=\{v_i^1,v_i^2\}$.
      $N^-(v^z_i)$ consists of agent~$v^{z-1}_i$ and at most one agent from $\{d_j^i\mid C_j\in \mathcal{C}\wedge i\in C_j\}$.
      $N^+(v_i^3)$ consists of at most two friends from  $\{d_j^i\mid C_j\in \mathcal{C}\wedge i\in C_j\}$.
      \item\label{obs:Prefs-set} For each $C_j\in \mathcal{C}$ and each $i\in C_j$, the following holds.
      $N^+(c_j^i)$ consists of agent~$d_j^i$ and an agent from $\{c_j^{k}\mid k\in C_j\wedge k\neq i\}$, and 
      $N^-(c_j^i)$ consists of agent~$d_j^i$ and another agent from $\{c_j^{k}\mid k\in C_j\wedge k\neq i\}$.
      $N^+(d_j^i)$ consists of agent~$c_j^i$ and an agent from $\{a_i^1, a_i^2, b_i^1, b_i^2\}$,
      and $N^-(d_j^i)$ consists of agent~$c_j^i$ and an agent from $\{a_i^3,b_i^3\}$.
    \end{compactenum}
  \end{observation}

  Before we continue with the correctness proof, we observe the following for~$\Pi$.
  \begin{observation}\label{obs:Pi}
    \begin{compactenum}[(i)]
      \item\label{obs:Pi-elem} For each~$i\in [3\enn]$, element agent~$x_i$ has one friend~$x_i^0$ and $L$ enemies~$x_i^z$, $z\in [L]$, in $\Pi(x_i)$,
      while both $s_i$ and $t_i$ have no friends and no enemies in their respective initial coalitions.
      \item\label{obs:Pi-private} For each~$i\in [3\enn]$, agent~$x_i^0$ has two friends $x_i$ and $x_i^1$, and $L-1$ enemies in~$\Pi(x_i^0)$, while each agent~$x_i^z$, $z\in [L]$, has
      exactly one friend and $L$ enemies in~$\Pi(x_i^z)$.
      \item\label{obs:Pi-connect} For each $(i,v,z)\in [3\enn]\times \{a,b\}\times \{0,1,2\}$, agent~$v^z_i$ has one enemy and one friend in $\Pi(v^z_i)$,
      while agent~$v^3_i$ has no friends and no enemies in~$\Pi(v_i^3)$.
      \item\label{obs:Pi-set} For each $i\in [3\enn]$ and $j\in [\emm]$ with $i\in C_j$, both~$c_j^i$ and $d_j^z$ have each one friend and no enemies in their respective initial coalitions.
    \end{compactenum}
  \end{observation}

  It remains to show the correctness of the construction.
  We prove this via the following two claims.

  \begin{clm}\label{claim:planar-deg4-forward}
    \clmcoreplanardegfourforward
  \end{clm}

  {
    \begin{proof}[Proof of \cref{claim:planar-deg4-forward}]
    \renewcommand{\qedsymbol}{$\diamond$}
    Let $\mathcal{K}$ be an exact cover of~$I$.
    Partition $\mathcal{K}$ into two subsets as follows:
     \begin{align*}
       \mathcal{K^{\textsf{out}}}\coloneqq \mathcal{K} \cap \mathcal{C}^{\textsf{out}} \text{ and }
       \mathcal{K^{\textsf{in}}}\coloneqq \mathcal{K}\setminus \mathcal{K}^{\textsf{in}}.
     \end{align*}
    We claim that the following coalition
    \begin{align*}
      V'  \coloneqq
      &\{x_i,s_i,t_i,\mid i\in [3\enn]\}\cup \{c_j^i,d_j^i \mid i\in [3\enn] \wedge C_j\in \mathcal{K}\cap \mathcal{C}(i) \} \cup\\
      & \{a_i^z\mid i \in \bigcup_{C_j\in \mathcal{K}^{\textsf{out}}}{C_j}, z\in \{0,1,2, 3\}\}\cup\\
      & \{b_i^z\mid
        i \in \bigcup_{C_j\in \mathcal{K}^{\textsf{in}}}{C_j},z\in \{0,1,2, 3\}\}.
    \end{align*}
    is blocking~$\Pi$. Note that $|V'|=27\enn=L+1$.
    \begin{itemize}[--]
      \item For each $i\in [3\enn]$, 
      element agent~$x_i$ has exactly one friend and $|V'|-2=L-1$ enemies in~$V'$,
      agents~$s_i$ and $t_i$ each have one friend in $V'$.
      Hence, by \cref{obs:Pi}\eqref{obs:Pi-elem}, we know that $x_i,s_i,t_i$ all prefer $V'$ to their coalition under~$\Pi$.
      \item For each $i \in\bigcup_{C_j\in \mathcal{K}^{\textsf{out}}}{C_j}$,
      agent~$a_i^0$ has two friends in~$V'$, namely $a_i^1$ and $t_i$,
      agent~$a_i^1$ has two friends in~$V'$, namely $a_i^2$ and $a_i^3$,
      agent~$a_i^2$ has two friends in~$V'$, namely $a_i^0$ and $a_i^3$,
      while agent~$a_i^3$ has at least one friend in~$V'$, namely $d_j^i$ with $C_j\in \mathcal{K}^{\textsf{out}}$.
      By \cref{obs:Pi}\eqref{obs:Pi-connect}, each agent~$a_i^z$, $i \in \bigcup_{C_j\in \mathcal{K}^{\textsf{out}}}{C_j}$ and $z\in \{0,1,2,3\}$ strictly prefers $V'$ to her initial coalition~$\Pi(a_i^z)$.
      \item Analogously, we infer the same for $b_i^z$, $i \in \bigcup_{C_j\in \mathcal{K}^{\textsf{in}}}{C_j}$ and $z\in \{0,1,2,3\}$.
      \item For each~$C_j\in \mathcal{K}$ and each $i\in C_j$, agent~$c_j^i$ has two friends in~$V'$: $d_j^i$ and one agent from $\{c_j^{i'}\mid i'\in C_j\setminus \{i\}\}$,
      and agent~$d_j^i$ also has two agents in~$V'$, where one is $c_j^i$ and the other is either $a_i^z$ or $b_i^z$ ($z$ is fixed to either $1$ or $2$), depending on whether $C_j\in \mathcal{K}^{\textsf{out}}$.
      By \cref{obs:Pi}\eqref{obs:Pi-set}, we obtain that both $c_j^i$ and $d_j^i$, $C_j\in \mathcal{K}$ strictly prefer $V'$ to their respective coalition under~$\Pi$.
    \end{itemize}
     Summarizing, we show that $V'$ indeed strictly blocks~$\Pi$.
  \end{proof}
  }

  \begin{clm}%
    \label{claim:planar-deg4-backward}
    \clmcoreplanardegfourbackward
  \end{clm}
  {\begin{proof}[Proof of \cref{claim:planar-deg4-backward}]
    \renewcommand{\qedsymbol}{$\diamond$}
    Let $V'$ be a weakly blocking coalition of~$\Pi$.
    We claim that $\mathcal{K}\coloneqq \{C_j\mid c_j^i \in V'\text{ for some }i \in C_j\}$ is an exact cover for~$I$.
    We prove this in three steps.

    \smallskip
    \noindent \textbf{Step 1.} 
    We aim to show that if $V'$ contains an element agent, then $\mathcal{K}$ is a set cover.
    First, we observe that each agent~$x_i^z$, $(i,z)\in [3\enn]\times \{0,\ldots,L\}$,
    has the maximum number of friends that she can get from her initial coalition.
    If it would hold that $x_i^z\in V'$ for some $(i,z)\in [3\enn]\times \{0,\ldots,L\}$,
    then by \cref{obs:Pi}\eqref{obs:Pi-private} and \cref{obs:prefs}\eqref{obs:Prefs-elem1},
    all agents from $\{x_i,x_i^0,\ldots,x_i^L \}$ must also be in~$V'$, implying
    that $\Pi(x_i^0)\subseteq V'$.
    However, since $x_i^0$ has in total only two friends and $V'\neq \Pi(x_i^0)$, she will be worse off in a larger coalition for which she has the same number of friends.
    Hence,  the following holds:
    \begin{align}
      \{x_i^z\mid (i,z)\in [3\enn]\times \{0,\ldots,L\}\} \cap V'=\emptyset.
      \label{eq:no-xi^0}
    \end{align}
    Next, consider an arbitrary~$i\in [3\enn]$.
    By the above and by \cref{obs:prefs}\eqref{obs:Prefs-elem1},
    it holds that
    \begin{align}
      \text{if } x_i \in V', \text{ then } s_i\in V', \label{eq:xi->si}
    \end{align}
    since $x_i$ has one friend in $\Pi(x_i)$  (see \cref{obs:Pi}\eqref{obs:Pi-elem}).
    Further, since $\Pi(t_i)$ is a singleton, by \cref{obs:prefs}\eqref{obs:Prefs-elem2}, it holds that  
     \begin{align}
      \text{if } t_i \in V', \text{ then } x_{i+1}\in V', \label{eq:ti->xi+1}
    \end{align}
    where we let $x_{i+1}=x_1$ if $i=3\enn$.
    Similarly, it holds that  
     \begin{align}
      \text{if } s_i \in V', \text{ then } \{a_{i}^0,b^0_i\}\cap V' \neq \emptyset, \label{eq:si->aiorbi}
    \end{align}
    Since $a_i^0$ (resp.\ $b_i^0$) initially has one friend and one enemy but there is not two-cycle which contains $a_i^0$, it follows that if $a_i^0\in V'$
    (resp.\ $b_i^0\in V'$), then $V'$ must contain two friends for her.
    By \cref{obs:prefs}\eqref{obs:Prefs-connect}, we infer that for each $v\in \{a,b\}$
     \begin{align}
       \text{if } v_i^0 \in V', \text{ then } \{t_i,v^1_i\}\subseteq V'.
       \label{eq:aiorbi->tiaibi}
     \end{align}
     Similarly, we obtain that
     \begin{multline}\label{eq:all-y}
       \text{ for all } (i,v)\in [3\enn]\times \{a,b\},\\
       \text{ if } \{v_i^0,v_i^1,v_i^2\}\cap V'\neq \emptyset, 
      \text{ then } \{t_i,v_i^0,v_i^1,v_i^2,v_i^3\}\subseteq V'.
    \end{multline}
    By applying \eqref{eq:xi->si}--\eqref{eq:all-y} repeatedly,
    we also obtain that 
    \begin{align}\label{eq:all-elem}
      \nonumber & \text{if } \{x_i,s_i,t_i \mid i\in [3\enn]\}\cap V' \neq \emptyset, \text{ then for all } i \in [3\enn] \text{ it holds that }\\
      &~~ \{x_i,s_i,t_i,a_i^0, a_i^1,a_i^2,a_i^3\} \subseteq V'
       \text{ or } \{x_i,s_i,t_i,b_i^0,b_i^1,b_i^2,b_i^3\}\subseteq V'.
    \end{align}    
    By the above and by the preferences of $a^3_i$ and $b^3_i$,
    we have that if $a^3_i\in V'$ (resp.\ $b^3_i\in V'$), then there must be a set~$C_j\in \mathcal{C}^{\textsf{out}}$ (resp.\ $C_j\in \mathcal{C}^{\textsf{in}}$) with $i\in C_j$ such that
    $d_j^i\in V'$.
    Let $d_j^i$ be such an agent in $V'$.
    Since $d_j^i$ initially has exactly one friend and no enemy (see \cref{obs:Pi}\eqref{obs:Pi-set}),
    for $d_j^i$ to be in a blocking coalition,
    she must have at least two friends, since there are no other two-cycle that
    contains $d_j^i$ and she has at least one enemy in~$V'$, namely, $a_i^3$ (resp.\ $b_i^3$).
    This shows that
    \begin{multline}
      \text{for all } (i,v)\in [3\enn]\times \{a,b\},
      \text{if } v_i^3 \in V',\\
       \text{~~ then there exists a } C_j \text{ with } i \in C_j \text{ such that }
        \{c_j^i,d_j^i\} \subseteq V'. %
        \label{eq:set-cover}
    \end{multline}
  Together with \eqref{eq:all-elem}, this implies that if $V'$ contains at least one element agent, then $\mathcal{K}$ is a set cover.

    \smallskip
    \noindent\textbf{Step 2.}
    Next, we show that if $V'$ contains at least one element agent, then $\mathcal{K}$ is an exact cover.
    To this end, assume that $x_i\in V'$ for some $i\in [\enn]$.
    Since $|C_j|=3$ for all $j\in [\emm]$ and there are $3\enn$ elements, we first infer by Step 1 that $|\mathcal{K}|\ge \enn$.
    Let $p_j^i\in V'$ be an arbitrary set agent for some $p\in \{c,d\}$.
    Observe that agent $p_j^i$ initially has exactly one friend but no enemy, so she needs at least two friends in $V'$ to be blocking since she has at least one enemy in $V'$, namely the element agent.
    By construction (see \cref{obs:prefs}\eqref{obs:Prefs-set}), we infer that all friends in~$N^+(p_j^i)$ have to be in~$V'$, including~$d_j^i$.
    Similarly, $N^+(d_j^i)\subseteq V'$, i.e., $\{c_j^i,a_i^1\}\subseteq V'$ if $C_j\in \mathcal{C}^{\textsf{out}}$, and $\{c_j^i,a_i^1\}\subseteq V'$ otherwise.
    Consequently, $c_j^k\in V'$ holds for all $k\in C_j$.
    Using the above reasoning repeatedly, we obtain that 
    \begin{align}
      \nonumber
      &\text{if } \{c_j^i,d_j^i\} \cap V' \neq \emptyset \text{ for some } i \in C_j, 
         \text{ then for all } k \in C_j,\\
       & \text{it holds that } \{c_j^{k}, d_j^{k}\}\subseteq V' \text{ and }
          \{a_k^1,a_k^2,b_k^1,b_k^2\}\cap V'\neq \emptyset.
        \label{eq:all-in-or-all-out}
    \end{align}
    Now, suppose, for the sake of contradiction, that $V'$ contains some element agent~$x_i$ and $|\mathcal{K}| > \enn$.
    Then, by \eqref{eq:all-elem}, \eqref{eq:set-cover}, and \eqref{eq:all-in-or-all-out},
    we infer that $|V'| \ge (3+4)\cdot 3\enn + 2 \cdot 3 \cdot (\enn +1) = 27\enn +6$.
    Then, $x_i$ will have at least $27\enn +6-2=L+3$ enemies in $V'$ and will not weakly prefer $V'$ over her initial coalition~$\Pi(x_i)$ (see \cref{obs:Pi}\eqref{obs:Pi-elem}), a contradiction.

    \smallskip
    \noindent\textbf{Step 3.} Finally, to show that $\mathcal{K}$ is an exact cover, it suffices to show that $V'$ contains an element agent. %
    Suppose, for the sake of contradiction, that $V'$ does not contain any element agent.
    Then, by the contra-positive of \eqref{eq:all-elem},
    $V'\cap \{x_i,s_i,t_i\mid i \in [3\enn]\}=\emptyset$.
    By the contra-positive of \eqref{eq:all-y},
    we infer that $\{v^0_i,v^1_i,v^2_i\mid (i,v)\in [3\enn]\times\{a,b\}\}\cap V'=\emptyset$.
    By the contra-positive of \eqref{eq:all-in-or-all-out},
    we infer that $\{c_j^i,d_j^i\mid j\in [\emm], i\in C_j\}\cap V' = \emptyset$.
    Finally, by the contra-positive of \eqref{eq:set-cover}, we finally infer that
    $\{v_i^3\mid (v,i)\in \{a,b\}\times [3\enn]\}\cap V'=\emptyset$.
    Together with \eqref{eq:no-xi^0}, this implies that $V'=\emptyset$ and will not block~$\Pi$, a contradiction. %
  \end{proof}
  }
  \noindent By \cref{claim:planar-deg4-forward}, if $I$ admits an exact cover, then $\Pi$ is not core stable and hence not strictly core stable.
  If $\Pi$ is not core stable, then it admits a strictly blocking coalition, which is also weakly blocking.
  Hence, by \cref{claim:planar-deg4-backward}, $I$ does not admit an exact cover.
  Together, %
  this proves \cref{thm:verif-non-symmetric-four}.}
\end{proof}

Next, we show that restricting the preferences to be symmetric does not lower the complexity. The ideas of the reduction are fairly similar to that for \cref{thm:verif-non-symmetric-four}.

\newcommand{\thmcoresymmetric}{%
  \FE-\verif\ and \FE-\sverif\ are \conp-com\-plete even if the preferences are symmetric and the friendship graph is planar and $\maxdeg=8$.
}
\begin{theorem}[\appendixsymb]\label{thm:verif-symmetric-eight}
  \thmcoresymmetric
\end{theorem}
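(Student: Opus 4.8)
Membership in \conp\ is immediate, since symmetric instances form a subclass of \FE\ and checking whether a fixed coalition weakly (or strictly) blocks a given partition is polynomial. For hardness, the plan is to reduce again from \pxct\ restricted to planar element-linked graphs whose embedding satisfies~\eqref{cond:at-most-two-3sets}, reusing the global architecture of the proof of \cref{thm:verif-non-symmetric-four}: one element-gadget per element $i\in[3\enn]$, one set-gadget per $C_j\in\mathcal{C}$, selector/connection gadgets linking an element-gadget to a set-gadget exactly when $i\in C_j$, and a private \emph{friendship-gadget} attached to each $x_i$ of size $L+1$ (for an appropriate $L=\Theta(\enn)$) that pins the size of any blocking coalition to exactly $L+1$ and thereby forces the cover to be \emph{exact}. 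The initial partition $\Pi$ would, as before, place each $x_i$ together with its friendship-gadget, each selector group, each selector connector, and each set-gadget pair into their own coalitions, with $s_i,t_i$ essentially singletons.

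The friendship-gadget itself already symmetrizes painlessly: take an undirected cycle on the $L+1$ private agents of $x_i$ plus the edge joining $x_i$ to one of them, and keep all of these agents in a single $\Pi$-coalition; then every private agent already enjoys its maximum possible number of friends inside its $\Pi$-coalition, so it can only join a blocking coalition if its whole cycle (and $x_i$) is dragged in, which merely adds enemies and hence never pays off. For the remaining gadgets, each directed arc $(u,v)$ of the original construction that carries a forcing implication ``$u\in V'\Rightarrow$ designated successors of $u$ are in $V'$'' is replaced by the undirected edge $\{u,v\}$ together with a small \emph{local} gadget of private agents (placed appropriately in $\Pi$) that re-calibrates how many friends $u$ currently has relative to its friendship-graph degree, so that $u$ can weakly improve only by pulling in precisely the intended neighbor(s), recovering the one-directional implication. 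Planarity is preserved because every such local gadget is drawn inside a small disk around the vertex it decorates, and, exactly as in \cref{thm:verif-non-symmetric-four}, \eqref{cond:at-most-two-3sets} ensures that at each element-vertex at most two set-gadgets attach on each side of the element-cycle. A gadget-by-gadget neighbor count then yields $\maxdeg\le 8$: the symmetrization roughly doubles the local degree of the degree-$4$ construction, and one checks that no vertex acquires more than eight friends.

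Correctness would be shown via two claims mirroring \cref{claim:planar-deg4-forward,claim:planar-deg4-backward}. In the forward direction, from an exact cover $\mathcal{K}$ one builds the natural analogue $V'$ of the blocking coalition of \cref{claim:planar-deg4-forward} --- all $x_i,s_i,t_i$, the selector agents on the chosen side of each $C_j\in\mathcal{K}$, the set-agents of $\mathcal{K}$, and whatever auxiliary private agents the re-calibration forces along --- and verifies that $|V'|=L+1$ and that every member strictly prefers $V'$ to $\Pi$, or has equally many friends and no additional enemies; hence $\Pi$ is not core stable and a fortiori not strictly core stable. In the backward direction one repeats the three-step argument: (1) the (re-calibrated) forcing implications show that a weakly blocking coalition containing any element agent must, for each $i$, contain one entire ``side'' of its element- and selector-gadgets and therefore a set-gadget covering $i$, so $\mathcal{K}\coloneqq\{C_j\mid c_j^i\in V'\text{ for some }i\in C_j\}$ is a cover; (2) if $\mathcal{K}$ were not exact, the forced agents alone would push $|V'|$ above $L+1$ by a fixed additive constant, so each element agent $x_i$ would have too many enemies in $V'$ to weakly prefer it to its one-friend $\Pi$-coalition, a contradiction; (3) a weakly blocking coalition cannot avoid all element agents, since the contrapositives of the forcing implications together with the inertness of the friendship-gadgets force $V'=\emptyset$. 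Since a non-core-stable $\Pi$ admits a strictly (hence weakly) blocking coalition, the two claims together give the theorem.

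\textbf{The main obstacle.} The crux is re-implementing the directed forcing implications using only symmetric edges while simultaneously keeping $\maxdeg\le 8$, preserving planarity, and leaving the friendship-gadgets completely inert so that they still cap blocking-coalition size at exactly $L+1$. Getting the private-agent bookkeeping of the re-calibration right --- so that each functional agent can weakly improve by pulling in exactly the intended successors, never more and never fewer, and so that the auxiliary agents themselves never destabilize $\Pi$ on their own --- is where essentially all of the work lies; the high-level reduction and the three-step correctness proof then carry over from \cref{thm:verif-non-symmetric-four}.
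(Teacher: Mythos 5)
There is a genuine gap: your sketch defers exactly the part that constitutes the proof. The whole content of the theorem is the explicit symmetric construction, and your plan reduces it to the phrase ``each directed arc carrying a forcing implication is replaced by the undirected edge together with a small local gadget of private agents that re-calibrates how many friends $u$ currently has,'' without ever exhibiting such a gadget or verifying that it (a) restores the one-directional implication, (b) does not itself create small local weakly blocking coalitions (in the symmetric setting a pair or triangle of mutual friends is a natural candidate deviation, which the directed construction never had to worry about), (c) keeps every vertex at degree at most $8$, and (d) leaves the size-capping count intact. You yourself flag this as ``where essentially all of the work lies,'' which is an accurate self-assessment: as written, the argument is a plan, not a proof.

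It is also worth noting that the paper's actual solution is \emph{not} an arc-by-arc local symmetrization of the degree-$4$ construction, which suggests that your proposed route would run into trouble if pushed through. The paper changes the global design: each element gets \emph{two} parallel pairs of element agents $x_i,\hx_i$ and $y_i,\hy_i$, each with its own private friendship cycle (so $L$ becomes $39\enn-2$ and the blocking coalition has size $L+2$, not $L+1$); the selector gadgets grow from four to five agents per side and are built from triangles so that symmetric mutual-friend pairs cannot deviate on their own; connector agents $a_i^4,b_i^4$ are padded with dummy agents so that they have exactly four friends regardless of how many sets attach on that side (this is what makes condition \eqref{cond:at-most-two-3sets} usable and the degree bound $8$ tight); and the initial partition groups $\{s_i,t_i,a_i^0,b_i^0\}$ together rather than leaving $s_i,t_i$ near-singletons, which is what makes the ``needs at least three friends'' propagation steps in the backward direction go through. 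The exactness of the cover is then obtained from a different counting inequality ($39\enn+1\ge|V'|\ge 33\enn+6|\mathcal K|$, hence $|\mathcal K|\le\enn$), not from the additive-constant overshoot you describe. None of these design decisions, nor the case analyses verifying inertness of dummies, private agents, and triangles, appear in your proposal, so the claimed degree bound, planarity, and correctness remain unsubstantiated.
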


\appendixproofwithstatement{thm:verif-symmetric-eight}{\thmcoresymmetric}{
  \begin{proof}%
  Again, we show both hardness via the same reduction.
  Let $I=([3\enn], \mathcal{C})$ denote an instance of \pxct\ with
  $\mathcal{C}=\{C_1,\ldots,C_{\emm}\}$
  and let $G(I)=(U\cup W, E)$ denote the associated element-linked planar graph.
  Recall that there exists an planar embedding of $G(I)$ which satisfies \eqref{cond:at-most-two-3sets} such that $\mathcal{C}^{\textsf{out}}$ and $\mathcal{C}^{\textsf{in}}$ partition~$\mathcal{C}$ into two disjoint subfamilies according to this embedding.

  Again, for brevity's sake, define \myemph{$L=39\enn-2$} and note that the size of the desired blocking coalition will be $L+2$.
  Similarly to the reduction for \cref{thm:verif-non-symmetric-four}, we derive a planar embedding of the preference graph from the planar embedding of~$G(I)$.
  First of all, we introduce the agents.
  For each element~$i\in [3\enn]$, we create six \myemph{element agents}~$x_i, \hx_i,y_i,\hy_i, s_i, t_i$,
  and $2(L+1)$ private \myemph{friendship} agents~$x_i^z$, $y^z_i$, $z\in \{0,\ldots,L\}$.
  For each set~$C_j\in \mathcal{C}$ and each element~$i\in C_j$, we create two \myemph{set agents}~$c_j^i$ and $d_j^i$.
  Further, for each element~$i\in [3\enn]$, we create two groups of connection gadgets (one for each side of the element-cycle) with a total of ten agents called~$a^z_i, b_i^z$, $z\in \{0,\ldots,4\}$,
  which serve as the selector agents.
  We remark that $a_i^4$ and $b_i^4$ serve as connectors and will be friends with the set agents.
  To avoid local and undesired blocking coalitions,
  for each element and each side of the element-cycle,
  we introduce up to two dummy agents on a side if there are less than two sets on that side containing the element.
  Formally, for each element~$i\in [3\enn]$ and each side~$\sstar \in \{{\textsf{out}}, {\textsf{in}}\}$,
  we introduce $2-|\mathcal{C}^{\sstar}\cap \mathcal{C}(i)|$ agents~$\dummys_z$, $z\in [2-|\mathcal{C}^{\sstar}\cap \mathcal{C}(i)|]$.

  \noindent This completes the construction of the agents.
  In total, we have
  \begin{align*}
    V\coloneqq
    & \{x_i,\hx_i,y_i,\hy_i,x_i^z,y_i^z,s_i,t_i\mid
      (i,z) \in [3\enn]\times \{0,\ldots,L\}\} \cup \\
 &    \{a_i^z,b_i^z\mid (i,z)\in [3\enn]\times \{0,\ldots,4\}\} \cup \\
    &  \{c^i_j,d^i_j,c^k_j,d^k_j,c^{r}_j,d^{r}_j\mid C_j \in \mathcal{C} \text{ with } C_j=\{i,k,r\} \} \cup  \myemph{\Dummys},  \text{ where }
  \end{align*}
   \myemph{$\Dummys$} $\coloneqq \{\dummys_z\mid i \in [3\enn], \sstar\in \{\textsf{in},\textsf{out}\}, z\in [2-|\mathcal{C}^{\sstar}\cap \mathcal{C}(i)|]\}$.
  Next, we describe the friendship preferences from the planar embedding of~$G(I)$.
  Since the friendship preferences are symmetric, in the following, we assume that the corresponding friendship graph~$\goodG$ is undirected.

    \begin{figure*}[t!]
    \centering
  \begin{tikzpicture}
  \input{tikz-hypergraph}%
[scale=1,every node/.style={scale=0.9},>=stealth', shorten <= 1pt, shorten >= 1pt]
 \elementpic
 \node at (3,0) {$\leadsto$};
  \begin{scope}[shift={(8,0)}]
    \node[nn] at (0,0.5) (xi) {};
    \node[nn] at (-1,0.5) (hxi) {};
    \node[nn] at (-0.5,1) (xi0) {};
    \node[nn] at (4,0.5) (xip) {};

    \node[nn] at (0,-0.5) (yi) {};
    \node[nn] at (-1,-0.5) (hyi) {};
    \node[nn] at (-0.5,-1) (yi0) {};
    \node[nn] at (4,-0.5) (yip) {};
    
    \node[nn] at (1,0) (si) {};
    \node[nn] at (3,0) (ti) {};
    \node[nn] at (-2,0) (tip) {};

    \node[] at (-3,0.3) (ub) {};
    \node[] at (-3,-0.3) (ua) {};
    \node[] at (5,0.5) (uhx) {};
    \node[] at (4.5,1) (ux0) {};
    \node[] at (5,-0.5) (uhy) {};
    \node[] at (4.5,-1) (uy0) {};
    
    \node[] at (-1,2.1) (uxF1) {};
    \node[] at (0,2.1) (uxF2) {};
    
    \node[] at (-1,-2.1) (uyF1) {};
    \node[] at (0,-2.1) (uyF2) {};

    \foreach \x / \y / \n / \typ in {
      2/-.3/ai0/nn,
      2/-0.9/ai1/nn,
      2.8/-1.5/ai2/nn,
      2/.3/bi0/nn,
      1.2/-1.5/ai3/nn,
      2/-2.1/ai4/nn,
      1/-2.8/dji/ssn,
      3/-2.8/dpi/ssn,
      3/2.8/dum/nnd,
      2/0.9/bi1/nn,
      2.8/1.5/bi2/nn,
      1.2/1.5/bi3/nn,
      2/2.1/bi4/nn,
      1/2.8/dqi/ssn%
    } {
      \node[\typ] at (\x, \y) (\n) {}; 
    }
    
    \draw[gl,dashed] (ua) -- (tip);
    \draw[gl,dashed] (ub) -- (tip);
    \draw[gl,dashed] (uhx) -- (xip);
    \draw[gl,dashed] (ux0) -- (xip);
    \draw[gl,dashed] (uhy) -- (yip);
    \draw[gl,dashed] (uy0) -- (yip);
    
    \node[above = -2pt of xi] {$\hx_i$};
    \node[above = 0pt of hxi] {$x_i$};
    \node[above = 0pt of xi0] {$x^0_i$};
    \node[above left= -4pt of xip] {$x_{i+1}$};
    \node[below = 0pt of yi] {$\hy_i$};
    \node[below = 0pt of hyi] {$y_i$};
    \node[below left = -4pt of yip] {$y_{i+1}$};
    \node[below = -2pt of yi0] {$y^0_i$};
    \node[above = 0pt of si] {$s_{i}$};
    \node[above = 0pt of ti] {$t_{i}$};
    \node[above = 0pt of tip] {$t_{i-1}$};

    \foreach \s / \t in {xi/si, yi/si, si/ai0, si/bi0, ai0/ti, bi0/ti, ai1/ai2, ti/xip, ti/yip, tip/hxi, tip/hyi, ai1/ai3,ai2/ai3,ai0/ai1,ai2/ai4,ai3/ai4,ai4/dji,ai4/dpi,
      bi0/bi1,bi1/bi2,bi2/bi4,bi1/bi3,bi2/bi3,bi4/bi3,bi4/dqi,bi4/dum,
      xi/xi0,xi/hxi,xi0/hxi,yi/yi0,yi/hyi,yi0/hyi} {
       \draw[-] (\s) -- (\t);
     }

     \foreach \i / \d in {xi0/-0.6,yi0/0.6} {
       \ARW[gray,-]{90:208,208:326,326:444}{\d}{\i}
     }
   
     \foreach \i / \p / \a / \r / \n in {ai0/below right/-7pt/0pt/{a_i^0},
       ai1/above left/-9pt/0pt/{a_i^1},
       ai2/below right/-7pt/-2pt/{a_i^3},
       ai3/below left/-7pt/-2pt/{a_i^2},
       ai4/below right/0pt/-9pt/{a_i^4},
       dji/above left/-8pt/0pt/{d_j^i},
       dpi/above right/-8pt/0pt/{d_{p}^i},
       bi1/below left/-9pt/0pt/{b_i^1},
       bi2/above right/-7pt/-2pt/{b_i^3},
       bi0/above right/-7pt/0pt/{b_i^0},
       bi3/above left/-7pt/-2pt/{b_i^2},
       bi4/above right/0pt/-8pt/{b_i^4},
       dum/above right/-8pt/0pt/{\dummyinT},
       dqi/above left/-8pt/0pt/{d_{q}^i},} {
       \node[\p = \a and \r of \i] {$\n$};
     }

      \begin{pgfonlayer}{bg}
      \draw[ip] \hedgem{hxi}{uxF1}{uxF2,xi}{3mm};
      \draw[ip] \hedgem{yi}{uyF2}{uyF1,hyi}{3mm};
      \draw[ip] \hedgem{si}{bi0}{ti,ai0}{2mm};
      \draw[ip] \hedgeiii{bi1}{bi3}{bi2}{2mm};
      \draw[ip] \hedgeiii{ai1}{ai2}{ai3}{2mm};
      \draw[ip] \hedgeiii{dji}{ai4}{dpi}{2mm};
      \draw[ip] \hedgeiii{dum}{bi4}{dqi}{2mm};
    \end{pgfonlayer}
  \end{scope}
\end{tikzpicture}
\caption{Illustration of an element gadget used in the proof for \cref{thm:verif-symmetric-eight},
  where $\mathcal{C}(i)=\{C_j,C_{\setind}, C_q\}$ such that in the planar embedding, $C_j,C_p\in \mathcal{C}^{\textsf{out}}$ (as indicated by the dashed line).
  For brevity's sake, the private friendship gadget for $x^0_i$ (resp.\ $y^0_{i}$) is depicted as a gray circle. The red areas indicate the individual coalitions in the initial partition~$\Pi$.}\label{fig:deg8-elem}
\end{figure*}  \begin{itemize}[--]
     \item  %
      Starting from the planar embedding~$G(I)$, we replace each element vertex~$u_i\in U$ with the corresponding element agents~$x_i$, $\hx_i$, $y_i$, $\hy_i$ and their private friendship agents~$x_i^z$ and $y_i^z$, $z\in \{0,\ldots,L\}$.
      Agents $x_i$, $\hx_i$, and $x_i^0$ (resp.\ $y_i$, $\hy_i$, and $y_i^0$) are mutual friends.
      That is, they form a triangle in $\goodG$.
      The $L+1$ friendship agents~$x_i^z$ (resp.\ $y_i^z$), $z\in \{0,\ldots,L\}$, form a cycle in~$G$.
      
      For each edge~$\{u_i,u_{i+1}\}$ on the element-cycle,
      we replace it with the edges~$\{\hx_i,s_i\}$, $\{\hy_i,s_i\}$, $\{s_i,a_i^0\}$, $\{s_i,b_i^0\}$, $\{a_i^0,t_i\}$, $\{b_i^0, t_i\}$, $\{t_i,x_{i+1}\}$, and $\{t_i,y_{i+1}\}$, where let $i+1=1$ if $i=3\enn$. 
      Further, for each $i\in [3\enn]$ and each $v\in \{a,b\}$, we add edges
      so that $v_i^0$ and $v_i^1$ are mutual friends,
      agents~$v_i^1$, $v_i^2$, and $v_i^3$ form a triangle,
      and agents~$v_i^2$, $v_i^3$, and $v_i^4$ form another triangle.

      Clearly, since $G(I)$ has a planar embedding, the replacement and creation of the new edges so far maintain planarity.
      \item The gadget for the sets are almost the same as the one given in the proof for \cref{thm:verif-non-symmetric-four}, except that we have edges instead of arcs.
      More precisely, for each set~$C_j \in \mathcal{C}$ with $C_j=\{i,k,r\}$,
      we add edges so that $c^i_j,c_j^k,c_j^r$ form a triangle,
      while $c_j^i$ and $d_j^i$ (resp.\ $c_j^k$ and $d_j^k$, and  $c_j^r$ and $d_j^r$) induce an edge.
      \item  For each edge~$\{u_i,w_j\}\in E(G(I))$, we replace this edge according to where the corresponding set vertex lies in the planar embedding.
      If $C_j \in \mathcal{C}^{\textsf{out}}$, then add an edge~$\{d_j^i, a_i^4\}$ to~$\goodG$.
      Analogously, if $C_j\in \mathcal{C}^{\textsf{in}}$, then add an edge~$\{d_j^i, b_i^4\}$. %
      \item Finally, for each~$i\in [3\enn]$, add the remaining edges~$\{\dummyout_z,a_i^4\}$, $z\in [2-|\mathcal{C}^{\textsf{out}}\cap \mathcal{C}(i)|]$,
      and edges~$\{\dummyin_z,a_i^4\}$, $z\in [2-|\mathcal{C}^{\textsf{in}}\cap \mathcal{C}(i)|]$.
      Note that since at most two sets containing the same element are embedded on the same side of the element-cycle, we ensure that after this final step both $a_i^4$ and $b_i^4$ have exactly four mutual friends.
  \end{itemize}

  This completes the construction of the friendship graph~$\goodG$. See \cref{fig:deg8-elem,fig:deg8-set} for an illustration. 
  It is straightforward to verify that the graph is planar.

\begin{figure}[t!]
  \centering
\begin{tikzpicture}[scale=1,every node/.style={scale=0.9}, >=stealth', shorten <= 1pt, shorten >= 1pt]
  \def \xsc {0.6}
  \def \xss {0.8}
\input{tikz-hypergraph}
  \setpic
   
  \node at (2,0.5) {$\leadsto$};
  \begin{scope}[shift={(4.6, 0)}]
    \foreach \x / \y / \n / \typ in {
      0/1/cji/sn, 1/0/cjk/sn, -1/0/cjr/sn, 1/2/dji/ssn, 2/0/djk/ssn,-2/0/djr/ssn%
    } {
      \node[\typ] at (\x*0.6,\y*0.6) (\n) {};
    }
    \node at (1.5*\xss,2.5*\xsc) (uji) {};
    \node at (2.5*\xss,0*\xsc) (ujk) {};
    \node at (-2.5*\xss,0*\xsc) (ujr) {}; 
    
    \foreach \i / \p / \a / \r / \n in {
      cji/below right/-1pt/-7pt/{c_j^i},
      dji/below right/-1pt/-7pt/{d_j^i},
      cjk/below right/-1pt/-7pt/{c_j^k},
      djk/below right/-1pt/-7pt/{d_j^k},
      cjr/below left/-1pt/-7pt/{c_j^r},
      djr/below left/-1pt/-7pt/{d_j^r}%
    }{
      \node[\p = \a and \r of \i] {$\n$};
    }
    \foreach \s in {i,k,r} {
      \draw[-] (cj\s) to (dj\s);
    }

    \foreach \s/\t in {i/k,k/r,r/i} {
      \draw[-] (cj\s) -- (cj\t);
    }
    
    \draw[gl,dashed] (uji) -- (dji);    
    \draw[gl,dashed] (ujk) -- (djk);    
    \draw[gl,dashed] (ujr) -- (djr);    

   \begin{pgfonlayer}{bg}
      \draw[ip] \hedgeiii{cji}{cjk}{cjr}{3mm};
      \draw[ip] \hedgeii{uji}{dji}{3mm};
      \draw[ip] \hedgeii{ujk}{djk}{3mm};
      \draw[ip] \hedgeii{ujr}{djr}{3mm};
    \end{pgfonlayer}

  \end{scope}
  
\end{tikzpicture}
\caption{Illustration of a set gadget used in the proof for \cref{thm:verif-symmetric-eight}, where $C_j=\{i,k,r\}$. The red areas indicate the individual coalitions in the initial partition~$\Pi$.}\label{fig:deg8-set}
\end{figure}
  To see why the graph has maximum vertex degree eight, we observe the following:
  \begin{observation}\label{obs:sym-prefs}
    The agents have the following friends in~$\goodG$.
    \begin{compactenum}[(i)]
      \item\label{obs:sym-Prefs-elem1} For each~$(i,v,z)\in [3\enn] \times \{x,y\}\times [L]$, 
      agent~$v_i$ has three friends, $\hat{v}_i$ and $v_i^0$, and $t_{i-1}$ (let $t_{i-1}=t_{3\enn}$ if $i=1$),
      agent~$\hat{v}_i$ has three friends, $v_i$ and $v_i^0$, and $s_i$,
      agent~$v_i^0$ has four friends, $v_i$, $\hat{v}_i$, $v_i^1$, and $v_i^{L-2}$,
      and agent~$v_i^z$ has two friends, $v_i^{z-1}$ and $v_i^{z+1}$ ($z+1$ taken modulo $L-1$).
      \item \label{obs:sym-Prefs-elem2} For each~$(i,v)\in [3\enn] \times \{a,b\}$, 
      agent~$v_i^0$ has three friends, $s_i$, $t_i$, and $v_i^1$,
      agent~$v_i^1$ has three friends, $v_i^0$, $v_i^2$, and $v_i^3$,
      agent~$v_i^2$ has three friends, $v_i^1$, $v_i^3$, and $v_i^4$,
      agent~$v_i^3$ has three friends, $v_i^1$, $v_i^2$, and $v_i^4$,
      while agent $v_i^4$ has exactly four friends, agents~$v_i^2$ and $v_i^3$, the set agents~$d_j^i$ which are on the same side of the element-cycle,
      and agents~$\dummys_z$, $z\in [2-|\mathcal{C}^{\star}\cap \mathcal{C}(i)|]$,
      and $\star=\textsf{out}$ if $v=a$; $\star=\textsf{in}$ otherwise.
      Each dummy agent from~$D$ has exactly one friend.

      \item\label{obs:sym-Prefs-set} For each $\{u_i,w_j\}\in E(G(I))$, %
      agent $c_j^i$ has three friends, $d_j^i$ and the two other set agents in $\{c_j^{i'}\mid i'\in C_j\setminus \{i\}\}$,
      while agent~$d_j^i$ has two friends, $c_j^i$ and $v_i^4$ (with $v$ being either $a$ or $b$).
    \end{compactenum}
  \end{observation}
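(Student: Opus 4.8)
The statement is a bookkeeping lemma about the construction, so the plan is a direct edge-by-edge verification. Since the preferences are symmetric, $\goodG$ is undirected and $N^+_{\goodG}(i)=N^-_{\goodG}(i)$ is exactly the friend set of $i$; hence for every agent it suffices to collect the undirected edges incident to it across the finitely many construction steps and check that precisely the listed neighbours appear. I would organise the verification by which construction step can possibly contribute an edge to a given agent, and then argue that no other step touches that agent. This completeness claim (``these are \emph{all} the friends'') is the only nontrivial aspect; reading off the listed neighbours from the incident edges is immediate. The figures \cref{fig:deg8-elem,fig:deg8-set} serve as the checklist.

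For group~(i) the only relevant steps are the element-vertex replacement (the triangle $\{v_i,\hat v_i,v_i^0\}$ and the friendship cycle on $v_i^0,\dots,v_i^L$) together with the replacements of the two incident element-cycle edges $\{u_{i-1},u_i\}$ and $\{u_i,u_{i+1}\}$. I would read off: $v_i$ gains $\hat v_i,v_i^0$ from the triangle and $t_{i-1}$ from the edge $\{t_{i-1},v_i\}$ produced by the $\{u_{i-1},u_i\}$ replacement; $\hat v_i$ gains $v_i,v_i^0$ from the triangle and $s_i$ from $\{\hat v_i,s_i\}$; $v_i^0$ gains $v_i,\hat v_i$ from the triangle plus its two cycle-neighbours; and every remaining cycle agent $v_i^z$ ($z\in[L]$) has precisely its two neighbours on the friendship cycle. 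The point to confirm is that the set-, selector-, and dummy-steps never attach an edge to any $x$- or $y$-agent, which is clear since those steps only add edges among the $c,d,a,b$ agents and $\Dummys$ and to $s_i,t_i$.

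For group~(ii) the contributing steps are the selector construction (the edge $v_i^0\text{--}v_i^1$ and the triangles $\{v_i^1,v_i^2,v_i^3\}$ and $\{v_i^2,v_i^3,v_i^4\}$), the element-cycle edge replacement (which supplies $v_i^0$ with $s_i$ and $t_i$), and the steps attaching set agents $d_j^i$ and the dummy agents to $v_i^4$. Everything except $v_i^4$ is read off directly. The one quantitative point is the degree of $v_i^4$: beyond $v_i^2,v_i^3$ from the second triangle, it receives $|\mathcal{C}^{\star}\cap\mathcal{C}(i)|$ edges to set agents $d_j^i$ on the relevant side ($\star=\textsf{out}$ if $v=a$, $\star=\textsf{in}$ if $v=b$) and exactly $2-|\mathcal{C}^{\star}\cap\mathcal{C}(i)|$ edges to the dummies $\dummys_z$ added in the final step; by the definition of $\Dummys$ and the embedding condition~\eqref{cond:at-most-two-3sets} (at most two such sets per side) these contribute $2+2=4$ in total, so $v_i^4$ has exactly four friends. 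This is precisely the role of the dummy agents, and it is where \eqref{cond:at-most-two-3sets} is used. Each dummy agent has its single incident edge by construction.

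Group~(iii) is shortest: a set $C_j=\{i,k,r\}$ contributes the triangle $\{c_j^i,c_j^k,c_j^r\}$ and the edges $\{c_j^i,d_j^i\}$, $\{c_j^k,d_j^k\}$, $\{c_j^r,d_j^r\}$, while the element--set edge step attaches each $d_j^i$ to a single $v_i^4$; hence $c_j^i$ has the two other $c$-agents together with $d_j^i$, and $d_j^i$ has $c_j^i$ together with the unique $v_i^4$. I would conclude by taking the maximum over all agents: no agent has more than four friends (the value four being attained, e.g., at $v_i^0$ for $v\in\{x,y\}$, at $v_i^4$ for $v\in\{a,b\}$, and also at the uncovered agents $s_i,t_i$), which establishes the maximum-degree bound stated in \cref{thm:verif-symmetric-eight}. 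The main obstacle is purely organisational, namely being exhaustive about which of the construction steps can add an edge to each agent type; once those steps are listed, each friend set and count follows mechanically.
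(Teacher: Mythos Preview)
Your approach is correct and matches the paper's treatment: the paper states this observation without proof, treating it as immediate from the construction (the line ``To see why the graph has maximum vertex degree eight, we observe the following'' is the entire justification). Your plan of going step-by-step through the construction, listing which steps can possibly contribute an edge to each agent type and confirming exhaustiveness, is exactly how one fills in the details, and your case analysis is accurate.

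One remark on your closing sentence: you derive that the maximum number of friends is four, while the theorem statement records $\maxdeg=8$. Your count is consistent with the observation as written (every agent listed has at most four friends, and $s_i,t_i$---which the observation omits---also have four), so the bound you obtain is actually tighter than the one asserted in \cref{thm:verif-symmetric-eight}. This is not a defect in your argument; it simply means the theorem's degree bound is slack relative to the construction described.
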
  
  To complete the construction, we define the initial coalition structure as follows.
  \begin{align*}
    \Pi\coloneqq
    & \{\{x_i,\hx_i,x_i^0,\ldots,x^{L}\}, \{y_i,\hy_i,y_i^0,\ldots,y^{L}\},
     \{s_i,t_i,a_i^0,b_i^0\} \mid i\in [3\enn]\}\cup\\
    &    \{\{v_i^1,v_i^2,v_i^3\} \mid (i,v)\in [3\enn]\times \{a,b\}\} \cup\\
    & \{\{a_i^4,d_j^i,\dummyout_z\mid
      i\in [3\enn] \wedge C_j\in \mathcal{C}^{\textsf{out}}\cap \mathcal{C}(i) \wedge\\
    & ~~~~~~~~\quad\qquad\qquad\qquad z\in [2-|\mathcal{C}^{\textsf{out}}\cap \mathcal{C}(i)|] \} \cup\\
    & \{\{b_i^4,d_j^i,\dummyin_z\mid   i\in [3\enn] \wedge C_j\in \mathcal{C}^{\textsf{in}}\cap \mathcal{C}(i) \wedge\\
    & ~~\quad\qquad\qquad\qquad z\in [2-|\mathcal{C}^{\textsf{in}}\cap \mathcal{C}(i)|] \} \cup\\
    &  \{\{c_j^i,c_j^k,c_j^r \mid C_j\in \mathcal{C} \text{ with } C_j=\{i,k,r\} \}.
  \end{align*}
  Note that the coalition of each agent~$a_i^4$ (resp.\ $b_i^4$), $i\in [3\enn]$, consists of exactly three agents.
  Before we continue with the correctness proof, we observe the following for~$\Pi$.
  \begin{observation}\label{obs:sym-Pi}
    \begin{compactenum}[(i)]
      \item\label{obs:sym-Pi-elem} For each~$(i,v)\in [3\enn] \times \{x,y\}$,
      element agent~$v_i$ has two friends~$v_i^0$ and $\hat{v}_i$, and $L$ enemies~$x_i^z$, $z\in [L]$, in $\Pi(v_i)$,
      element agent~$\hat{v}_i$ has two friends~$v_i^0$ and ${v}_i$, and $L$ enemies~$x_i^z$, $z\in [L]$, in $\Pi(\hat{v}_i)$,      
      while both $s_i$ and $t_i$ have two friends and one enemy in their joint initial coalition.
      \item\label{obs:sym-Pi-private} For each~$(i,v)\in [3\enn] \times \{x,y\}$, 
      private agent~$v_i^0$ has four friends~$v_i$, $\hat{v}_i$, $v_i^1$, and $v_i^{L}$,
      and $L-2$ enemies in~$\Pi(v_i^0)$, while each agent~$v_i^z$, $z\in [L]$, has
      exactly two friends and $L$ enemies in~$\Pi(v_i^z)$.
      \item\label{obs:sym-Pi-connect} For each $(i,v,z)\in [3\enn]\times \{a,b\}\times \{0,1,2\}$,
      agent~$v_i^0$ has two friends ($s_i$ and $t_i$) and one enemy in $\Pi(v_i^0)$,
      agent~$v^z_i$ has two friends and no enemies in $\Pi(v^z_i)$,
      and agent~$v^4_i$ has exactly two friends but no enemies in~$\Pi(v^4_i)$.
      The two friends of~$v^4_i$ can be either two dummy agents, or one dummy agent and one set agent, or two set agents.
      Each dummy agent from~$\Dummys$ has exactly one friend and one enemy in her initial coalition.
      \item\label{obs:sym-Pi-set} For each $i\in [n]$ and $j\in [3\emm]$ with $i\in C_j$, agent $c_j^i$ has $d_j^i$ and one agent from $\{c_j^k\mid k\in C_j\setminus\{i\}\}$
      as friends and no enemies in her initial coalition~$\Pi(c_j^i)$,
      and $d_j^i$ has exactly one friend~$c_j^i$ and one enemy in her initial coalition~$\Pi(d_j^i)$.
    \end{compactenum}
  \end{observation}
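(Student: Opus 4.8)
The statement is a mechanical consequence of the construction: it simply records, for each agent, how many of its friends (as catalogued in \cref{obs:sym-prefs}) and how many of its enemies lie inside its own block of the initial partition~$\Pi$. Since we work in the \FE\ model, every agent sharing a coalition with a given agent~$u$ is either a friend or an enemy of~$u$; hence the number of enemies of~$u$ inside~$\Pi(u)$ is always $|\Pi(u)|-1$ minus the number of friends of~$u$ inside~$\Pi(u)$. The plan is therefore a short case analysis over the handful of block types occurring in~$\Pi$: for each block I intersect each member's friend-set, read off from \cref{obs:sym-prefs}, with the block itself, and then obtain the enemy count from the block size.

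\emph{Element and friendship blocks.} Each element block $\{v_i,\hat v_i,v_i^0,\ldots,v_i^{L}\}$ ($v\in\{x,y\}$) has $L+3$ agents. The friends of~$v_i$ are $\hat v_i,v_i^0,t_{i-1}$, of which only the first two lie in the block, so $v_i$ has two friends and all remaining $L$ block-mates $v_i^1,\ldots,v_i^{L}$ are enemies; the tally for~$\hat v_i$ is symmetric, with $s_i$ as the sole outside friend. For the friendship agents, $v_i^0$ is the junction of the triangle $\{v_i,\hat v_i,v_i^0\}$ and the friendship cycle, so all four of its friends---the triangle-mates $v_i,\hat v_i$ together with its two neighbours on the cycle---lie inside, yielding four friends and $L-2$ enemies; every interior cycle agent $v_i^z$ ($z\in[L]$) instead has only its two cycle-neighbours inside, hence two friends and $L$ enemies. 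This gives parts~\eqref{obs:sym-Pi-elem} and~\eqref{obs:sym-Pi-private}. The hub block $\{s_i,t_i,a_i^0,b_i^0\}$ is handled by noting that $s_i$ and $t_i$ are each adjacent to both $a_i^0$ and $b_i^0$ but not to one another, so each has two friends and exactly one enemy, completing part~\eqref{obs:sym-Pi-elem}.

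\emph{Connector, dummy, and set blocks.} The block $\{v_i^1,v_i^2,v_i^3\}$ is a triangle in~$\goodG$, so each member has its two block-mates as friends and no enemy. The crucial block is the connector block containing $v_i^4$: by the very definition of the dummy set~$\Dummys$ it consists of $v_i^4$ together with the set agents $\{d_j^i\mid C_j\in\mathcal{C}^{\star}\cap\mathcal{C}(i)\}$ and exactly $2-|\mathcal{C}^{\star}\cap\mathcal{C}(i)|$ dummies (with $\star=\textsf{out}$ for $v=a$ and $\star=\textsf{in}$ for $v=b$), so it always contains precisely two co-members of~$v_i^4$, all of which are friends of~$v_i^4$ by \cref{obs:sym-prefs}; thus $v_i^4$ has two friends and no enemy, while each dummy has its unique friend~$v_i^4$ together with one further, non-adjacent co-member, i.e.\ one friend and one enemy. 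This settles part~\eqref{obs:sym-Pi-connect}. Finally, the block $\{c_j^i,c_j^k,c_j^r\}$ is a triangle of set agents, so each $c_j^i$ has its two block-mates as friends and no enemy, whereas each $d_j^i$ lies in the connector block above with its single block-friend $v_i^4$ and one enemy, giving part~\eqref{obs:sym-Pi-set}.

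\emph{Main obstacle.} Every step reduces to reading friend-sets off \cref{obs:sym-prefs} against the blocks of~$\Pi$, and only two places demand genuine care. First, the index bookkeeping on the friendship cycle: one must confirm that $v_i^0$ picks up \emph{both} of its cycle-neighbours in addition to its two triangle-mates (four block-friends in total), while every other cycle agent picks up exactly two, which is what distinguishes the counts in part~\eqref{obs:sym-Pi-private}. Second, the counting in the connector block, which is precisely where condition~\eqref{cond:at-most-two-3sets} enters: it guarantees $|\mathcal{C}^{\star}\cap\mathcal{C}(i)|\le 2$, so that the number $2-|\mathcal{C}^{\star}\cap\mathcal{C}(i)|$ of padding dummies is non-negative and the total number of co-members of~$v_i^4$ is exactly two irrespective of how many sets contain~$i$ on that side of the element-cycle.
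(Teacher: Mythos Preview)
Your proposal is correct and is exactly the kind of mechanical verification one expects for such an observation; the paper itself states \cref{obs:sym-Pi} without proof, so there is nothing to compare against beyond the construction and \cref{obs:sym-prefs}, which you use in the intended way.

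One small remark: in part~\eqref{obs:sym-Pi-set} your argument is actually more accurate than the stated observation. The statement names $d_j^i$ (resp.\ $c_j^i$) as the in-coalition friend of $c_j^i$ (resp.\ $d_j^i$), but since $\Pi(c_j^i)=\{c_j^i,c_j^k,c_j^r\}$ while $d_j^i$ lies in the connector block $\Pi(v_i^4)$, these identifications are typos in the paper; the correct in-coalition friends are the two triangle-mates $c_j^k,c_j^r$ for $c_j^i$, and $v_i^4$ for $d_j^i$, exactly as you write. The friend/enemy \emph{counts}---which are all that the subsequent proof uses---are unaffected, so your verification is sound.
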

  It remains to show that $I$ admits an exact cover if and only if $\Pi$ is \emph{not} core stable (resp.\ strictly core stable).
  We prove this via the following claims.

  \begin{clm}\label{claim:planar-sym-forward}
    If $I$ admits an exact cover, then $\Pi$ is not core stable.
  \end{clm}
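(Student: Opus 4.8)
The plan is to imitate the argument of \cref{claim:planar-deg4-forward}: from an exact cover $\mathcal{K}\subseteq\mathcal{C}$ of $I$ I will exhibit a coalition $V'$ of size exactly $L+2=39\enn$ and show that it strictly blocks $\Pi$. Writing $\mathcal{K}^{\textsf{out}}\coloneqq\mathcal{K}\cap\mathcal{C}^{\textsf{out}}$ and $\mathcal{K}^{\textsf{in}}\coloneqq\mathcal{K}\cap\mathcal{C}^{\textsf{in}}$, I take
\begin{align*}
  V'\coloneqq{}&\{x_i,\hx_i,y_i,\hy_i,s_i,t_i\mid i\in[3\enn]\}\cup\{c_j^i,d_j^i\mid i\in[3\enn],\ C_j\in\mathcal{K}\cap\mathcal{C}(i)\}\cup{}\\
  &\{a_i^z\mid i\in\textstyle\bigcup_{C_j\in\mathcal{K}^{\textsf{out}}}C_j,\ z\in\{0,\dots,4\}\}\cup\{b_i^z\mid i\in\textstyle\bigcup_{C_j\in\mathcal{K}^{\textsf{in}}}C_j,\ z\in\{0,\dots,4\}\}.
\end{align*}
Since $\mathcal{K}$ is an exact cover, each element $i$ lies in exactly one member of $\mathcal{K}$, which sits on exactly one side of the element-cycle; hence for every $i$ precisely one of the chains $a_i^0,\dots,a_i^4$ or $b_i^0,\dots,b_i^4$ enters $V'$, and no dummy agent is used. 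Counting then gives $|V'|=6\cdot 3\enn+5\cdot 3\enn+6\cdot\enn=39\enn=L+2$.

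Next I would verify, via \cref{obs:sym-prefs,obs:sym-Pi}, that every agent of $V'$ strictly prefers $V'$ to her $\Pi$-coalition, grouped by type. Agents $s_i$ and $t_i$ have two friends in $\Pi(s_i)=\Pi(t_i)=\{s_i,t_i,a_i^0,b_i^0\}$ but three in $V'$ ($s_i$ gains $\hx_i,\hy_i$ and the one of $a_i^0,b_i^0$ selected by the cover, while $t_i$ gains $x_{i+1},y_{i+1}$ and that same selected agent). Each selected connection agent $v_i^z$ ($v\in\{a,b\}$) has two friends in its $\Pi$-coalition and three in $V'$: $v_i^0$ gains $s_i,t_i,v_i^1$; $v_i^1,v_i^2,v_i^3$ each have all three of their friends present; and $v_i^4$, which in $\Pi$ sits with only two friends, now has $v_i^2,v_i^3$ together with $d_j^i$ for the covering set $C_j$ (here I use that $C_j$ is on the correct side, so the edge $\{d_j^i,v_i^4\}$ exists and $d_j^i\in V'$). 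Each set agent $d_j^i$ with $C_j\in\mathcal{K}$ has one friend in $\Pi$ and two in $V'$ (its partner $c_j^i$ and the connector $a_i^4$ resp.\ $b_i^4$), and each $c_j^i$ with $C_j\in\mathcal{K}$ has two friends in $\Pi(c_j^i)=\{c_j^i,c_j^k,c_j^r\}$ and three in $V'$ (the two other $c$-agents plus $d_j^i$). For all of these, strictly more friends makes $V'$ strictly preferred regardless of enemies. The remaining agents $x_i,\hx_i,y_i,\hy_i$ are the delicate case: each keeps exactly two friends in $V'$ ($x_i$: $\hx_i$ and $t_{i-1}$; $\hx_i$: $x_i$ and $s_i$; symmetrically $y_i,\hy_i$), but has $L$ enemies in $\Pi$ and only $|V'|-1-2=39\enn-3=L-1$ enemies in $V'$, so with an unchanged friend count and a strictly smaller enemy count she too strictly prefers $V'$. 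Hence $V'$ strictly blocks $\Pi$ and $\Pi$ is not core stable.

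The main obstacle here is not conceptual but the cardinality bookkeeping: everything hinges on $|V'|$ being exactly $L+2$, which is what forces the four ``hat/non-hat'' element agents to retain their friend count while shedding exactly one enemy — the single place where the ``same number of friends, fewer enemies'' clause of friend-oriented preferences is invoked. This is precisely why the construction uses two friendship chains of length $L+1=39\enn-1$ per element and a five-vertex connection chain on each side (rather than four, as in \cref{thm:verif-non-symmetric-four}, to compensate for the extra ``hat'' copies and the larger degree budget); an off-by-one in any of these gadget sizes would break both this direction and, symmetrically, the converse direction that must exploit the same counting to pin the cover down to being exact.
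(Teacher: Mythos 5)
Your proof is correct and follows essentially the same route as the paper's: you construct the identical blocking coalition $V'$ (element agents, the selected side's connection chain per element, and the set agents of the cover), verify $|V'|=39\enn=L+2$, and check agent by agent that everyone strictly improves, with the hat/non-hat element agents being the only ones relying on "same number of friends, one fewer enemy." No gaps; your per-type friend counts agree with the paper's \cref{obs:sym-prefs,obs:sym-Pi}.
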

  \begin{proof}[Proof of \cref{claim:planar-sym-forward}]
    \renewcommand{\qedsymbol}{$\diamond$}
    Let $\mathcal{K}$ be an exact cover of~$I$.
    Partition $\mathcal{K}$ into two subsets as follows:
     \begin{align*}
       \mathcal{K^{\textsf{out}}}\coloneqq \mathcal{K} \cap \mathcal{C}^{\textsf{out}} \text{ and }
       \mathcal{K^{\textsf{in}}}\coloneqq \mathcal{K}\setminus \mathcal{K}^{\textsf{in}}.
     \end{align*}
    We claim that the following coalition
    \begin{align*}
      V'  \coloneqq
      &\{x_i,\hx_i,y_i,\hy_{i},s_i,t_i,c_j^i,d_j^i \mid i\in [3\enn]\wedge C_j\in \mathcal{K}\cap \mathcal{C}(i) \} \cup\\
      & \{a_i^z\mid i \in \bigcup_{C_j\in \mathcal{K}^{\textsf{out}}}{C_j}, z\in \{0,\ldots,4\}\}\cup \\
      &\{b_i^z\mid
        \bigcup_{C_j\in \mathcal{K}^{\textsf{in}}}{C_j},z\in \{0,\ldots,4\}\}.
    \end{align*}
    is blocking~$\Pi$. Note that $|V'|=39\enn=L+2$.
    \begin{itemize}[--]
      \item For each $(i,v)\in [3\enn]\times \{x,y\}$, 
      both element agents~$v_i$ and $\hat{v}_i$ have exactly two friends and $|V'|-3=L-1$ enemies in~$V'$,
      agents~$s_i$ and $t_i$ each have three friends in $V'$.
      Hence, by \cref{obs:sym-Pi}\eqref{obs:sym-Pi-elem},
      agents~$x_i,s_i,t_i$ all prefer $V'$ to their coalition under~$\Pi$.
      \item For each $i \in\bigcup_{C_j\in \mathcal{K}^{\textsf{out}}}{C_j}$,
      agent~$a_i^0$ has three friends in~$V'$, namely $s_i$, $t_i$, and $a_i^1$,
      agent~$a_i^1$ has three friends in~$V'$, namely $a_i^0$, $a_i^2$, and $a_i^3$,
      agent~$a_i^2$ has three friends in~$V'$, namely $a_i^1$ and $a_i^3$, and $a_i^4$,
      agent~$a_i^3$ has three friends in~$V'$, namely $a_i^1$ and $a_i^2$, and $a_i^4$,
      while agent~$a_i^4$ has at least three friends in~$V'$, namely $a_i^2$, $a_i^3$, and $d_j^i$ with $C_j\in \mathcal{K}^{\textsf{out}}$.
      By \cref{obs:sym-Pi}\eqref{obs:sym-Pi-connect}, we have that each agent~$a_i^z$, $i \in \bigcup_{C_j\in \mathcal{K}^{\textsf{out}}}{C_j}$ and $z\in \{0,\ldots,4\}$, strictly prefers $V'$ to her initial coalition~$\Pi(a_i^z)$.
      \item Likewise, we infer the same for $b_i^z$, $i \in \bigcup_{C_j\in \mathcal{K}^{\textsf{in}}}{C_j}$ and $z\in \{0,\ldots,4\}$.
      \item For each~$C_j\in \mathcal{K}$ and each $i\in C_j$, agent~$c_j^i$ has three friends in~$V'$, namely $\{c_j^{i'}\mid i'\in C_j\setminus \{i'\}\}$ and $d_j^i$,
      and agent $d_j^i$ has two friends in~$V'$: $c_j^i$ and either $a_i^4$ (if $C_j\in \mathcal{K}^{\textsf{out}}$) or $b_i^4$ (if $C_j\in \mathcal{K}^{\textsf{in}}$).
      By \cref{obs:sym-Pi}\eqref{obs:sym-Pi-set}, we obtain that both $c_j^i$ and $d_j^i$, $C_j\in \mathcal{K}$, strictly prefer $V'$ to their respective coalition under~$\Pi$.
    \end{itemize}
    Summarizing, we have shown that $V'$ is indeed blocking~$\Pi$.
    \end{proof}
    Next, we observe several properties that a blocking coalition of $\Pi$ needs to satisfy.
    \begin{clm}\label{claim:blocking}
      Every weakly blocking coalition~$V'$ of $\Pi$ satisfies:
      \begin{compactenum}[(i)]
        \item\label{block-dummy-private} $(\Dummys\cup \{v_i^z\mid (i,z)\in [3\enn]\times \{0,\ldots,L\}\})\cap V'=\emptyset$.
        \item\label{block-elem} If there exists an $i\in [3\enn]$ and a $v\in \{x,y\}$ such that $\{v_{i},\hat{v}_{i}\}\cap V' \neq \emptyset$, then
        $\{s_{i'},t_{i'}\mid i'\in [3\enn]\}\subseteq V'$, and 
        for each $i'\in [3\enn]$ we have that ``$\{x_{i'},\hat{x}_{i'}\}\subseteq V'$ or $\{y_{i'},\hat{y}_{i'}\}\subseteq V'$'' and
        ``$\{a^0_{i'},a_{i'}^1\}\subseteq V'$ or $\{b^0_{i'},b^1_{i'}\}\subseteq V'$''.
        \item\label{block-select} If there exists an $i\in [3\enn]$ such that $\{s_i,t_i\}\cap V' \neq \emptyset$, then
         $\{s_{i'},t_{i'}\mid i'\in [3\enn]\}\subseteq V'$, and 
        for each $i'\in [3\enn]$ we have that ``$\{x_{i'},\hat{x}_{i'}\}\subseteq V'$ or $\{y_{i'},\hat{y}_{i'}\}\subseteq V'$'' and
        ``$\{a^0_{i'},a_{i'}^1\}\subseteq V'$ or $\{b^0_{i'},b^1_{i'}\}\subseteq V'$''.
        \item\label{block-connect} For all $(i,v)\in [3\enn]\times \{a,b\}$, if $\{v_i^0, v_i^1,v_i^2,v_i^3\}\cap V' \neq \emptyset$,
        then $\{v_i^z\mid z\in \{0,\ldots,4\}\}\cup \{s_i,t_i\}\subseteq V'$.
        \item\label{block-setcover} For all $(i,v)\in [3\enn]\times \{a,b\}$, if $v_i^4 \in V'$,
        then $\{v_i^2,v_i^3\}\cap V' \neq \emptyset$ and there exists a set~$C_j\in \mathcal{C}$ such that $\{c^i_j,d^i_j\}\subseteq V'$.
        \item\label{block-exact} If $\{c_j^i,d_j^i\}\cap V' \neq \emptyset$ for some $j\in [\emm]$ and $i \in C_j$, then for all $k\in C_j$ it holds that $\{c_j^k,d_j^k\}\subseteq V'$
        and $\{a^4_k,b_k^4\}\cap V'\neq \emptyset$.
      \end{compactenum}
    \end{clm}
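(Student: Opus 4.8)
The plan is to derive all six parts from a single principle: for every agent $i\in V'$ the weak-blocking condition forces $|N^+_{\goodG}(i)\cap V'|\ge |N^+_{\goodG}(i)\cap \Pi(i)|$, and whenever these two numbers are equal it forces $|N^+_{\badG}(i)\cap V'|\le |N^+_{\badG}(i)\cap \Pi(i)|$. Reading off from \cref{obs:sym-prefs,obs:sym-Pi} how many friends and enemies each agent already has in $\Pi$, this shows that the presence of one agent in $V'$ forces the presence of (almost) all of its friends, and iterating this along the cyclic and triangular structure of the gadgets yields the stated implications. Throughout I would use that a weakly blocking $V'$ must contain at least one strictly improving agent, so in particular $V'$ cannot coincide with any $\Pi(w)$.

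For part (i) I would first treat the private friendship agents $v_i^z$ ($v\in\{x,y\}$): such an agent already has both of its two friends in $\Pi$, so $v_i^z\in V'$ propagates around the cycle $v_i^0,\dots,v_i^L$ and, because $v_i^0$ has four friends in $\Pi$, also pulls $v_i,\hat v_i$ into $V'$; since $v_i^0$'s friend set is then exhausted, no further agent can lie in $V'$, forcing $V'=\Pi(v_i)$, which is not weakly blocking. For a dummy agent $d$, whose unique friend is some $v_i^4$: $d\in V'$ forces $v_i^4\in V'$ and, since $d$ has one friend and one enemy in $\Pi$ and can gain no friends, $|V'|\le 3$; a short case distinction on the at most one remaining agent of $V'$ shows that $V'$ is then either $\Pi(v_i^4)$ or leaves some member strictly worse off. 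This removes all private and dummy agents from further consideration.

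Parts (ii)--(iv) then become pure propagation arguments. With the private agents gone, each of $v_i,\hat v_i,s_i,t_i$ and each connection agent $v_i^0,\dots,v_i^3$ has so few friends in $\Pi$ relative to its total friend set that — once $V'$ is not tiny, so that there is no enemy-slack — its membership in $V'$ forces essentially all of its friends into $V'$. The induced cascade runs around the element-cycle through the chain $\dots t_{i-1}\to\{x_i,y_i\}\to\{\hat x_i,\hat y_i\}\to s_i\to\{a_i^0,b_i^0\}\to t_i\dots$ and down each connection gadget through the two triangles $\{v_i^1,v_i^2,v_i^3\}$, $\{v_i^2,v_i^3,v_i^4\}$ together with the edges to $v_i^0,s_i,t_i$; reading off where it stabilizes gives exactly (ii), (iii) (same conclusion, different trigger) and (iv). Parts (v) and (vi) follow quickly afterwards: if $v_i^4\in V'$ then, having only two friends in $\Pi$ and (by (i)) no available dummy, it needs three friends in $V'$, hence at least one of $v_i^2,v_i^3$ and at least one set agent $d_j^i$; and $d_j^i\in V'$, with friends only $c_j^i,v_i^4$ and at least one enemy in $V'$, forces $c_j^i\in V'$, giving (v). For (vi), $c_j^i\in V'$ (or $d_j^i\in V'$, which by the previous step forces $c_j^i\in V'$) gives either $V'=\Pi(c_j^i)$ or all three friends $d_j^i,c_j^k,c_j^r$ of $c_j^i$ in $V'$; re-running the $d_j^i$-argument at each $k\in C_j$ (each $c_j^k$ now has the enemy $d_j^i$ in $V'$, so it too needs all three friends, pulling in $d_j^k$ and then $v_k^4$, i.e.\ $a_k^4$ or $b_k^4$) yields the claim.

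The step I expect to be the real work is disposing of the ``small $V'$'' boundary cases that sit underneath every propagation step: ``agent $i$ must gain a friend'' is only valid once we know $V'$ contains some enemy of $i$, and ruling out the alternative — $V'\subseteq\{i\}\cup N^+_{\goodG}(i)$ with $i$ merely indifferent — must be checked separately for the private, dummy, element, connection and set gadgets. Each individual check is short (it always ends in either $V'=\Pi(w)$ for some $w$ or a strictly worse-off agent), but there are several of them and they must be assembled carefully so that no weakly blocking $V'$ escapes; everything else is routine bookkeeping built on \cref{obs:sym-prefs,obs:sym-Pi}.
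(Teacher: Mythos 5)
Your plan follows essentially the same route as the paper's proof: you eliminate the private and dummy agents first (statement (i)), then propagate membership around the element-cycle and through the connection triangles using the friend/enemy counts from \cref{obs:sym-prefs,obs:sym-Pi} to get (ii)--(iv), and finally use the counting arguments for $v_i^4$, $d_j^i$, and $c_j^i$ to get (v)--(vi), with the small-coalition boundary cases disposed of by showing $V'$ would equal some initial coalition (no strict improver) or leave some member strictly worse off — which is exactly how the paper handles them. The details you defer are the same short case checks the paper carries out, so I see no gap in the approach.
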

    \begin{proof}[Proof of \cref{claim:blocking}]
      \renewcommand{\qedsymbol}{$\diamond$}
      All statements can be shown using reasoning similar to the one for \eqref{eq:no-xi^0}--\eqref{%
        eq:all-in-or-all-out}.
      We prove them here for the sake of completeness.

      First, it is straightforward to see that no dummy agent from~$D$ will participate in a weakly blocking coalition since each dummy agent only weakly prefers to be in a coalition~$W$ (other than her initially coalition) that consists of herself and the other agent~$v_i^4$ with $(i,v)\in [3\enn]\times \{a,b\}$ but this agent~$v_i^4$ strictly prefers her initial coalition (where she has two friends) to~$W$.
      Second, by a similar reasoning, we can show that the no private agent~$v_i^z$, $(i,z)\in [3\enn]\times \{0,\ldots,L\}$ is in~$V'$.
      This yields Statement~\eqref{block-dummy-private}.
     Combining the second part of Statement~\eqref{block-dummy-private} with \cref{obs:sym-prefs}\eqref{obs:sym-Prefs-elem1} and \cref{obs:sym-Pi}\eqref{obs:sym-Pi-elem}, we infer the following:
      \begin{multline}
        \text{For all } (i',v)\in [3\enn]\times \{x,y\} \text{ it holds that if } \\
        \{v_{i'},\hat{v}_{i'}\} \cap V' \neq \emptyset,
        \text{ then } \{v_{i'},\hat{v}_{i'},t_{i'-1},s_{i'}\}\subseteq V',\label{eq:xi}
      \end{multline}
      where $t_{i'-1}=t_{3\enn}$ if $i'=1$.

      Now, to show Statement~\eqref{block-elem}, assume that there exists an $i\in [3\enn]$ and a $v\in \{x,y\}$, such that $\{v_{i},\hat{v}_{i}\}\cap V' \neq \emptyset$.
      In the following, we let $i-1=3\enn$ if $i=1$, and $i+1=1$ if $i=3\enn$.
      By \eqref{eq:xi}, we have that $\{v_i,\hat{v}_i,t_{i-1},s_i\}\subseteq V'$.
      This implies that $t_{i-1}$ and $s_i$ each have at least two enemies in $V'$.
      Since both $t_{i-1}$ and $s_i$ have exactly two friends and one enemy in their respective initial coalitions (see \cref{obs:sym-Pi}\eqref{obs:sym-Pi-elem}), %
      for $V'$ to be a weakly blocking coalition, both $t_{i-1}$ and $s_i$ must have at least three friends in~$V'$.
      By their preferences, it follows that
      \begin{align}
       |\{a^0_{i-1},b^0_{i-1}, x_{i}, y_{i}\}\cap V'| \ge 3 \text{ and }         |\{a^0_{i},b^0_{i}, \hx_{i}, \hy_{i}\}\cap V'| \ge 3.\label{eq:a0}
      \end{align}
      By \eqref{eq:xi} it follows that
      \begin{align}
        \{x_{i},\hx_{i},t_{i-1},s_{i}\} \subseteq V' \text{ or }  \{y_{i},\hy_{i},t_{i-1},s_{i}\} \subseteq V'.\label{eq:st}
      \end{align}
      Moreover, since each agent from $\{a^0_{i-1},b^0_{i-1}, a^0_{i}, b^0_{i}\}$ initially has two friends and one enemy in her respective coalition
      and has at least two enemies in~$V'$,
      it follows that each agent in~$\{a^0_{i-1},b^0_{i-1}\}\cap V'$ (resp.\ $\{a^0_{i}, b^0_{i}\}\cap V'$) must have all their three friends in $V'$ (cf.\ \eqref{eq:a0}).
      By construction, it follows that
      \begin{align}
\nonumber        \{a^0_{i-1},a^1_{i-1}\} \subseteq V \text{ or } &  \{b^0_{i-1},b^1_{i-1}\} \subseteq V, \text{ and }\\
 \nonumber       \{a^0_{i},a^1_{i}\} \subseteq V \text{ or } & \{b^0_{i},b^1_{i}\} \subseteq V, \text{ and }\\
        \{s_{i-1},t_{i-1},s_i,t_i\} \subseteq & V'.\label{eq:a1}
      \end{align}
      Since both $s_{i-1}$ and $t_i$ have two friends and one enemy in their respective initial coalition
      and have now at least two enemies in~$V'$,
      in order for them to join the blocking coalition~$V'$, 
      they must each have at least three friends in~$V'$:
        \begin{align}
        \nonumber  |\{a^0_{i-1},b^0_{i-1}, \hx_{i-1}, \hy_{i-1}\}\cap V'| \ge 3 \text{ and }\\
       |\{a^0_{i},b^0_{i}, x_{i+1}, y_{i+1}\}\cap V'| \ge 3.\label{eq:a0-2}
        \end{align}
        By \eqref{eq:xi}, this implies that
        \begin{align}
          \nonumber
          \{x_{i-1},\hx_{i-1}\}\subseteq V'\text{ or } & \{y_{i-1},\hy_{i-1}\}\subseteq V',
          \text{ and }\\
          \{x_{i+1},\hx_{i+1}\}\subseteq V'\text{ or } & \{y_{i+1},\hy_{i+1}\}\subseteq V'.\label{eq:xii}
        \end{align}
        
       By \eqref{eq:a0}--\eqref{eq:xii} and by repeatedly using the aforementioned reasoning, we conclude Statement~\eqref{block-elem}.

      Statement~\eqref{block-select} follows by a similar reasoning.
      Assume that there exists an $i\in [3\enn]$ such that $\{s_{i},t_{i}\}\cap V' \neq \emptyset$.
      Then, by \cref{obs:sym-Pi}\eqref{obs:sym-Pi-elem},
      we immediately find that $|\{\hx_i,\hy_i,a_i^0,b_i^0\}\cap V'|\ge 3$ (if $s_i\in V'$)
      or $|\{x_{i+1},y_{i+1},a_i^0,b_i^0\}\cap V'|\ge 3$ (if $t_i\in V'$).
      In any case, we obtain that there exists an $i'\in [3\enn]$ such that $\{x_{i'},y_{i'},\hx_{i'},\hy_{i'}\}\cap V'\neq \emptyset$,
      which by Statement~\eqref{block-elem}, implies the conclusion in Statement~\eqref{block-select}.      

       To show Statement~\eqref{block-connect}, let us assume that $v_i^z\in V'$ holds for some~$(v,i,z)\in \{a,b\}\times [3\enn] \times \{0,1,2,3\}$.

       We distinguish between three cases.
       If $z=1$, then since $v_i^1$ initially has two friends and no enemies (see \cref{obs:sym-Pi}\eqref{obs:sym-Pi-connect}),
       in order to weakly block, either $V'$ has exactly two friends and no enemies or at least three friends for~$v_i^1$.
       The former case implies that $V'=\{v_i^0,v_i^1,v_i^2\}$ or $V'=\{v_i^0,v_i^1,v_i^3\}$ since $V'=\neq \Pi(v_i^1)=\{v_i^1,v_i^2,v_i^3\}$.
       However, $v_i^0$ will not weakly prefer $V'$ to her initial coalition~$\Pi(v^0_i)$ where she has two friends.
       This means that $V'$ must contains three friends for~$v_i^0$, i.e., $\{v_i^0,v_i^1,v_i^2,v_i^3\}\subseteq V'$.

       Similarly, we infer that $\{v_i^1,v_i^2,v_i^3,v_i^4\}\subseteq V'$ when $z\in \{2,3\}$. 

       Finlay, if $z=0$, then since $v_i^0$ initially has two friends and one enemies (see \cref{obs:sym-Pi}\eqref{obs:sym-Pi-connect}),
       in order to weakly block, either $V'$ has exactly two friends and at most one enemy or at least three friends for~$v_i^0$.
       The former case implies that $v_i^1\in V'$.
       However, by the above, we also infer that $|V'|\ge 5$, a contradiction. 
       This means that $V'$ must have at least three friends for~$v_i^0$, i.e., $\{v_i^0,v_i^1,s_i,t_i\}\subseteq V'$.      
      These jointly imply that $\{v_i^{z'}\mid z'\in \{0,\ldots, 4\}\}\cup \{s_i,t_i\}\subseteq V'$, as desired.

      For Statement~\eqref{block-setcover}, we observe that $v_i^4$ initially has no enemy and exactly two friends.
      If $v_i^4\in V'$, then it has to get at least one of $\{ v_i^2,v_i^3\}$ (it has to get at least two friends, but if only two, then they cannot be the same). Then $v_i^2$ or $v_i^3$ also has to get another friend. If $v_i^1\in V'$, then $v_i^4$ has at least one enemy in $V'$. Otherwise $V'=\{ v_i^2,v_i^3,v_i^4\}$ and no agent strictly improves or if there are some others (but not $v_i^1$) in $V'$, then $v_i^2$ and $v_i^3$ will be worse off. So $v_i^4$ must get at least three friends in a weakly blocking coalition.
      This implies that $\{v_i^2,v_i^3\}\cap V' \neq \emptyset$ and there exists a set~$C_j\in \mathcal{C}$ such that $d_j^i\in V'$.
      Since $d_j^i$ initially has exactly one friend and no enemy,
      for $V'$ to be blocking,
      agent~$d_j^i$ must have at least two friends, since it has strictly more enemies.
      By construction, it follows that $c_j^i\in V'$, as desired.
      
      Finally, for Statement~\eqref{block-exact}, assume that $p_j^i\in V'$ for some $j\in [\emm]$, $i\in V'$, and $p\in \{c,d\}$.
      Let $C_j=\{i,k,r\}$.
      Observe that if $p = c$, then $p_j^i$ initially has two friends but no enemy,
      and if $p=d$, then she initially has one friend and one enemy. 
      First, assume that $p=c$.
      Then, if $p_j^i$ would only get two friends in $V'$ (but not the same two),
      then she cannot have any enemy in~$V'$, i.e., either $V'=\{p_j^i,d_j^i,c_j^k\}$ or $V'=\{p_j^i,d_j^i,c_j^r\}$.
      However, neither $c_j^k$ nor $c_j^r$ weakly prefers $V'$ to her initial coalition where she has two friends.
      This means that $p_j^i$ must have three friends in~$V'$, i.e., $\{c_j^i,c_j^k,c_j^r,d_j^i\}\subseteq V'$.
       Analogously, we infer that $\{c_j^i,c_j^k,c_j^r,d_j^i,d_j^k,d_j^r\}\subseteq V'$.

       Next, assume that $p=d$.
       Then, $p_j^i$ would get only one friend in~$V'$, then this friend cannot be $a_i^4$ or $b_i^4$ as by Statement~\eqref{block-setcover}, she would have at lest two enemies and would not weakly prefer $V'$ to her initial coalition.
       This means that $c_j^i\in V'$, and by previous reasoning that $p_j^i$ needs to get at least two friends in~$V'$ as otherwise she would not weakly prefer $V'$ to her initial coalition.
       That is, $\{c_j^i,d_j^i,a_i^4\}\subseteq V'$ if $C_j\in \mathcal{C}^{\textsf{out}}$ or
       $\{c_j^i,d_j^i,b_i^4\}\subseteq V'$ if $C_j\in \mathcal{C}^{\textsf{in}}$.
       Combined, we obtain Statement~\eqref{block-exact}.
    \end{proof}
    Now, we are ready to show the following.
    \begin{clm}
      \label{claim:sym-back}
      If $V'$ admits a weakly blocking coalition, then $I$ admits an exact cover.
    \end{clm}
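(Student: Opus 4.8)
The plan is to mirror the three‑step argument of \cref{claim:planar-deg4-backward}, now feeding on the structural facts already collected in \cref{claim:blocking} rather than re‑deriving them. Let $V'$ be a weakly blocking coalition of $\Pi$, and set $\mathcal{K}\coloneqq\{C_j\mid c_j^i\in V'\text{ for some }i\in C_j\}$; I claim $\mathcal{K}$ is an exact cover of $[3\enn]$. \emph{Step~1 (if $V'$ meets an element agent, then $\mathcal{K}$ is a set cover).} Assume some agent among $x_i,\hat x_i,y_i,\hat y_i,s_i,t_i$ lies in $V'$. By \cref{claim:blocking}\eqref{block-elem} (or \eqref{block-select}, if $V'$ meets only some $s_i$ or $t_i$), for every $i'\in[3\enn]$ we get $s_{i'},t_{i'}\in V'$ and $\{a^0_{i'},a^1_{i'}\}\subseteq V'$ or $\{b^0_{i'},b^1_{i'}\}\subseteq V'$; by \cref{claim:blocking}\eqref{block-connect} an entire selector chain of $i'$ then lies in $V'$, so $a^4_{i'}\in V'$ or $b^4_{i'}\in V'$, and \cref{claim:blocking}\eqref{block-setcover} produces a set $C_j\in\mathcal{C}$ with $i'\in C_j$ and $\{c_j^{i'},d_j^{i'}\}\subseteq V'$, i.e.\ $C_j\in\mathcal{K}$ covers $i'$. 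Hence $\mathcal{K}$ covers $[3\enn]$, so $|\mathcal{K}|\ge\enn$.

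\emph{Step~2 (the cover is exact).} Here I would first sharpen \cref{claim:blocking}\eqref{block-elem}: for a fixed $i'$, if only one of the pairs $\{x_{i'},\hat x_{i'}\}$, $\{y_{i'},\hat y_{i'}\}$ lies in $V'$ --- say $\hat y_{i'}\notin V'$ --- then $s_{i'}\in V'$ has at least two enemies in $V'$ (e.g.\ $t_{i'}$) but only two friends in $\Pi(s_{i'})$ (\cref{obs:sym-Pi}\eqref{obs:sym-Pi-elem}), so it needs three friends in $V'$, forcing $\hat x_{i'},a^0_{i'},b^0_{i'}\in V'$, and then \cref{claim:blocking}\eqref{block-connect} drags \emph{both} selector chains of $i'$ into $V'$. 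Thus each $i'$ forces at least $4+2+5=11$ pairwise distinct agents into $V'$ (four element agents, $s_{i'},t_{i'}$, one chain; or two element agents, $s_{i'},t_{i'}$, two chains), and by \cref{claim:blocking}\eqref{block-exact} each $C_j\in\mathcal{K}$ forces six further, pairwise disjoint, set agents $c_j^k,d_j^k$ ($k\in C_j$). Hence $|V'|\ge 33\enn+6|\mathcal{K}|$. If $|\mathcal{K}|>\enn$, this gives $|V'|\ge 39\enn+6$; but then any element agent $v_i\in V'$ has at most two friends in $V'$ (its third possible friend $v_i^0\notin V'$ by \cref{claim:blocking}\eqref{block-dummy-private}), hence at least $|V'|-3\ge 39\enn+3=L+5$ enemies in $V'$, while in $\Pi(v_i)$ it has two friends and only $L$ enemies (\cref{obs:sym-Pi}\eqref{obs:sym-Pi-elem}), so $v_i$ does not weakly prefer $V'$ --- contradicting that $V'$ weakly blocks $\Pi$. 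So $|\mathcal{K}|=\enn$, and a cover of $3\enn$ elements by $\enn$ triples is exact.

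\emph{Step~3 ($V'$ does meet an element agent).} Otherwise $V'\cap\{x_i,\hat x_i,y_i,\hat y_i,s_i,t_i\mid i\in[3\enn]\}=\emptyset$, and the contrapositives of \cref{claim:blocking}\eqref{block-connect}, \eqref{block-setcover}, \eqref{block-exact} cascade: no selector agent $v_i^z$ ($v\in\{a,b\}$, $z\le 3$), hence no $v_i^4$, hence no set agent lies in $V'$; together with \cref{claim:blocking}\eqref{block-dummy-private} this forces $V'=\emptyset$, which cannot block $\Pi$. Hence $V'$ meets an element agent, and Steps~1--2 make $\mathcal{K}$ an exact cover. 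Combined with \cref{claim:planar-sym-forward}, this proves \cref{thm:verif-symmetric-eight}.

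The main obstacle is the counting refinement in Step~2: one has to stop $V'$ from being ``cheap'' on the element side, and the sharpening above --- two element agents of $i'$ dragging in two whole selector chains of length five rather than one --- is exactly what pins the per-element cost at $11$, so that the six extra set agents forced by a non-exact cover push $|V'|$ beyond $L+2=39\enn$. Getting the constants ($39\enn$, chain length five, two-versus-four element agents) to line up is the delicate step, just as in Step~2 of \cref{claim:planar-deg4-backward}.
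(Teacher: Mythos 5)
Your proof is correct and takes essentially the same route as the paper's: it rules out the element-agent-free case via the contrapositives of \cref{claim:blocking}, derives a set cover from \cref{claim:blocking}\eqref{block-elem}--\eqref{block-exact}, and forces $|\mathcal{K}|\le \enn$ by weighing $|V'|\ge 33\enn+6|\mathcal{K}|$ against the element agents' enemy tolerance (the paper phrases this as $|V'|\le 39\enn+1$, you as a direct contradiction when $|\mathcal{K}|>\enn$). Your explicit case analysis giving the per-element cost of $11$ (four element agents plus one chain, or two element agents plus both chains) is exactly the refinement the paper uses, somewhat tersely, via the ``$s_{i'}$ needs three friends'' argument inside the proof of \cref{claim:blocking}\eqref{block-elem}.
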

    \begin{proof}[Proof of \cref{claim:sym-back}]
      \renewcommand{\qedsymbol}{$\diamond$}
      Let $V'$ be a blocking coalition of~$\Pi$.
      We claim that $\mathcal{K}\coloneqq \{C_j\mid c_j^i\in V' \text{ for some } i\in C_j\}$ is an exact cover for~$I$.
      First of all, we show that $\{x_i,y_i,\hx_i,\hy_i\}\cap V'\neq \emptyset$ for some~$i\in [3\enn]$.
      Suppose, for the sake of contradiction, that $\{\{x_i,y_i,\hx_i,\hy_i\} \mid i\in [3\enn]\}\cap V'= \emptyset$.
      Then, by the contra-positives of \cref{claim:blocking}\eqref{block-elem}--\eqref{block-select},
      it follows that $\{v_i,\hat{v}_i,s_i,t_i\mid (i,v)\in [3\enn]\times \{x,y\}\}\cap V' = \emptyset$.
      Hence, by the contra-positive of \cref{claim:blocking}\eqref{block-connect},
      it follows that $\{v_i^z\mid (i,v,z)\in [3\enn]\times \{a,b\}\times \{0,\ldots,3\}\}\cap V'= \emptyset$.
      By the contra-positive of \cref{claim:blocking}\eqref{block-setcover},
      we infer that $\{v_i^4\mid i\in [3\enn]\}\cap V'=\emptyset$.
      By the contra-positive of \cref{claim:blocking}\eqref{block-exact}, we infer that $\{c_j^k,d_j^k\mid j\in [\emm]\text{ and }k\in C_j\}\cap V'=\emptyset$,
      leaving $V'\cap D\neq \emptyset$ or $v_i^z\in V'$ for some $(v,i,z)\in \{x,y\}\times [3\enn]\times \{0,\ldots,L\}$ since $V'$ cannot be empty,
      a contradiction to \cref{claim:blocking}\eqref{block-dummy-private}.

      Now, we have shown that $\{x_i,y_i,\hx_i,\hy_i\}\cap V'\neq \emptyset$ for some~$i\in [3\enn]$.
      By \cref{claim:blocking}\eqref{block-elem}--\eqref{block-setcover}, we obtain that $\mathcal{K}$ is indeed a set cover.

      It remains to show that $\mathcal{K}$ is an exact cover, i.e., $|\mathcal{K}| \le \enn$.
      Since for all $i\in [3\enn]$, agent~$x_i$ initially has two friends and $L=39\enn-2$ enemies in~$\Pi(x_i)$,
      and has two friends in $V'$ (see \cref{claim:blocking}\eqref{block-dummy-private}),
      it follows that $x_i$ must have at most~$L=39\enn-2$ enemies in $V'$ too,
      implying that $|V'|\le 39\enn+1$.
      By \cref{claim:blocking}\eqref{block-elem}--\eqref{block-connect}, we obtain that
      $|\{x_i,y_i,\hx_i,\hy_i,s_i,t_i,a_i^0,\ldots,a_i^4,b_i^0,\ldots,b_i^4\}\subseteq V'| \ge 11$ holds for all $i\in [3\enn]$.
      By \cref{claim:blocking}\eqref{block-exact}, we obtain that
      $|\{c_j^k,d_j^k \mid j\in \mathcal{K}, i\in C_j\}\cap V'| \ge 6|\mathcal{K}|$.
      This implies that $39\enn+1\ge |V'|\ge 33\enn+6|\mathcal{K}|$, resulting in $|\mathcal{K}|\le \enn+\frac{1}{6}$, so $|\mathcal{K}|\le \enn$, since it is an integer, as desired.   
      \end{proof}
   Together with \cref{claim:planar-sym-forward,claim:sym-back}, we obtain the correctness proof for our construction.
 \end{proof}
}

 Next, strengthening the result by \citet{Brandt_Bullinger_Tappe_2022}, we show that finding a Nash stable solution is \np-hard even if the friendship graph has constant maximum degree and is planar.

 \newcommand{\fenashhard}{%
      \FE-\NS\ is \np-complete even if the friendship graph is planar and $\maxdeg=9$.
 }
 \begin{theorem}[\appendixsymb]\label{thm:fe-ns-hard}
  \fenashhard %
 \end{theorem}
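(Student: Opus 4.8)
The plan is to prove membership in \np\ by the standard argument --- a Nash stable partition is a polynomial-size certificate, and individual rationality together with the absence of envy toward another coalition are checkable in polynomial time --- and to prove \np-hardness by a reduction from \pxct, in the same spirit as \cref{thm:verif-non-symmetric-four}. Given a \pxct\ instance $I=([3\enn],\mathcal{C})$ whose element-linked graph $G(I)$ is planar and satisfies \eqref{cond:at-most-two-3sets}, I would build one \emph{element gadget} per element $i\in[3\enn]$ and one \emph{set gadget} per set $C_j\in\mathcal{C}$, connect the element gadget of $i$ to the set gadget of $C_j$ whenever $i\in C_j$, and lay the whole construction out following the planar embedding of $G(I)$; the gadgets are of constant size, so the friendship graph stays planar and of bounded maximum degree, which a careful design keeps down to $9$.

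The element gadget of $i$ is built around a small friendship configuration that admits no Nash stable partition when considered on its own --- for instance a short directed friendship path, which as one checks has no individually rational, envy-free partition in isolation --- equipped with a fixed private friend and a single \emph{connector} agent $e_i$ that is, in addition, a friend of the ``access'' agent $\tau_j$ of every set gadget with $i\in C_j$. The intended behaviour is that in any Nash stable partition $e_i$ can be neither a singleton (it would then envy the coalition of its private neighbours) nor placed inside its own element gadget (that leaves the gadget unstable), so $e_i$ must lie in the coalition of some $\tau_j$ with $i\in C_j$; call such a $C_j$ \emph{activated}. The set gadget of $C_j=\{i,k,r\}$ is designed so that $\tau_j$ has a ``baseline'' coalition of private friends that it strictly prefers to being joined by only one or two of $e_i,e_k,e_r$, but which it is willing to give up precisely when all three connectors are simultaneously available; hence an activated set necessarily hosts $e_i,e_k,e_r$ and thereby ``covers'' all three of its elements, while a non-activated $\tau_j$ sits with its baseline. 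Since each $e_i$ lies in exactly one coalition and the gadgets force that coalition to contain at most one access agent, each element is covered by at most one activated set; combined with the fact that every $e_i$ must be hosted somewhere, each element is covered by exactly one activated set, i.e.\ the activated sets form an exact cover. The converse direction is obtained by the explicit partition that activates exactly the sets of a given exact cover $\mathcal{K}$ (put $\tau_j$ with $e_i,e_k,e_r$ for $C_j\in\mathcal{K}$, keep every element gadget in the configuration that losing $e_i$ makes stable, and leave every other $\tau_j$ with its baseline) followed by a check that no agent has a profitable deviation.

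The routine parts are the verification of planarity and of the degree bound --- each $e_i$ touches only the at most three access agents of the sets containing $i$, and \eqref{cond:at-most-two-3sets} lets the two ``same-side'' sets be attached without crossings exactly as in \cref{thm:verif-non-symmetric-four}, while every other agent stays internal to a constant-size gadget --- together with the lengthy but mechanical case analysis that each gadget is stable precisely in the intended configurations. The main obstacle is conceptual: because preferences are friend-oriented, an agent always prefers having \emph{more} friends, so ``an element is covered at most once'' cannot be enforced by making extra coverers unattractive to the element; the construction must therefore push the ``at most once'' constraint entirely onto the uniqueness of the coalition containing the shared connector $e_i$, while pushing ``at least once'' onto the built-in instability of the bare element gadget, and must make $\tau_j$'s preference between its baseline and the three connectors sharp enough that \emph{partial} activation is never stable. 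Designing a single family of gadgets that achieves all of this, remains planar, and has maximum degree only $9$ is the heart of the proof.
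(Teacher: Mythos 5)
Your high-level plan (membership in \np\ via the partition as certificate, hardness via a reduction from \pxct\ with constant-size element and set gadgets laid out along the planar embedding) matches the paper's strategy, but the proposal has a genuine gap: the gadgets that carry the entire argument are never constructed, and you say yourself that designing them ``is the heart of the proof.'' Concretely, you leave unspecified (i) the element gadget that is Nash-unstable in isolation yet becomes stable exactly when the connector $e_i$ departs, (ii) the set gadget whose access agent $\tau_j$ makes partial activation unstable, (iii) the argument ruling out the third placement option for $e_i$ (joining some unrelated coalition, e.g.\ another element's gadget or a non-access agent of a set gadget), and (iv) the mechanism forcing each coalition to contain at most one access agent. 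Without these, neither direction of the equivalence can be checked, so what you have is a plan rather than a proof. Moreover, your phrasing that $\tau_j$ ``is willing to give up'' its baseline only when all three connectors arrive does not match Nash-deviation semantics: under Nash stability $\tau_j$ cannot refuse agents who join her coalition; instability of partial activation must instead be produced by some agent \emph{envying} another existing coalition, which your sketch does not pin down.

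For comparison, the paper's construction is leaner and resolves exactly these points. It uses the element agents themselves as the connectors: $x_i$ is a mutual friend of the set agents $s_j$ with $i\in C_j$, so the ``at least once'' constraint is immediate from friend-orientation (an element agent with no friend in her coalition envies any coalition containing a set-agent friend, regardless of enemies). Each set gadget consists of the set agent $s_j$ plus a chain of nine enforcers whose forced sub-coalitions create an attractive fallback $\{s_j^0,s_j^1,s_j^2\}$ (three friends and at most one enemy for $s_j$); hence $s_j$ envies that coalition unless her own coalition contains at least three friends, which, since the enforcers cannot be with her, means \emph{all three} of her element friends -- this is the all-or-nothing behaviour you wanted from $\tau_j$, implemented purely through envy. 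The ``at most once'' constraint then follows from disjointness of coalitions together with the fact that two set agents cannot share a coalition (an enforcer would acquire three enemies and envy its own escape coalition). If you flesh out your gadgets, you must supply analogues of all of these verifications, plus the planarity and degree-$9$ bookkeeping you currently only assert.
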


 \appendixproofwithstatement{thm:fe-ns-hard}{\fenashhard}{
 \begin{proof}
   To show hardness, we reduce from \pxct. %
   Let $I=([3\enn], \mathcal{C})$ be an instance of \pxct\ with $\mathcal{C}=\{C_1. \ldots, C_{\emm}\}$.
   We create an instance of \FE-\NS\ as follows.
   \begin{compactitem}[--]
     \item For each element~$i\in [3\enn]$, create an \myemph{element} agent $x_i$. Define $X=\{x_i\mid i\in [3\enn]\}$. 
     \item For each set~$C_j\in \mathcal{C}$, create a \myemph{set} agent~$s_j$ and a set of $9$ \myemph{enforcers}, called $s_j^0$,   $s_j^1$, $s_j^2$,  $t_j^0, \ldots, t^5_j$.
     For each element $i\in C_j$, agent~$s_j$ and $x_i$ are mutual friends.
     Additionally $s_j$ considers the agents from $\{s^0_j,s^1_j,s^2_j\}$ friends.
     Agent~$s_j^0$ (resp.\ $s_j^1$ and $s_j^2$) considers agent~$s_j^1$ (resp.\ $s_j^2$ and $s_j^0$) a friend.
     Agent~$s_j^0$ also considers agent $t^0_j$ a friend.
     Agent~$t_j^0$ (resp.\ $t_j^1$ and $s_j^2$) considers agent~$s_j^1$ (resp.\ $s_j^2$ and $s_j^0$) a friend.
     Agent~$t_j^2$ also considers agent $t^3_j$ a friend.
     Agents~$t_j^3$ and $t_j^4$ are mutual friends.
     Agent~$t_j^3$ also considers agent $t^5_j$ a friend.
     Summarizing, the friendship graph consists of the following arc set $\{(x_i, s_j), (s_j, x_i)\mid i\in C_j, i\in [3\enn], j\in [\emm]\}\cup \{(s_j,s_j^z), (s_j^z, s_j^{z+1}), (t_j^z, t_j^{z+1})\mid 0\le z \le 2, j\in [\emm]\}\cup \{(s_j^0, t_j^{0}), (t_j^2, t_j^3), (t_j^3, t_j^4),(t_j^4, t_j^3), (t_j^4, t_j^5)\}$, where superscript $z+1$ is taken modulo $3$.
   \end{compactitem}

   \begin{figure}
    \def \rad {.5}
    \def \radd {1}
    \def \ra {1.9}
    \centering
    \begin{tikzpicture}[scale=1,every node/.style={scale=0.9},>=stealth', shorten <= 2pt, shorten >= 2pt]
      
      \foreach  \x / \y / \n in {.3/0/s, -.8/-.5/x1, -.8/0/x2, -.8/.5/x3}
        {
          \node[pn] at (\x, \y) (\n) {};
        }
        \foreach \n / \nn / \p / \l / \r  / \c in
        {s/{s_j}/below right/4/-4/black, x1/{x_i}/left/1/0/black, x2/{x_p}/left/1/0/black, x3/{x_q}/left/1/0/black}{ 
          \node[\p = \l pt and \r pt of \n, text=\c, inner sep=.5pt, fill=white] {${\nn}$};
        }
      \begin{scope}[xshift=1.5cm]

        \trianglegadget{s}

      \end{scope}
      
      \begin{scope}[xshift=3.2cm]
        
        \trianglegadget{t}

      \end{scope}

      \begin{scope}[xshift=4.4cm]
        
        \foreach  \x / \y / \n in {0/0/t3, .8/0/t4, 1.5/0/t5}
        {
          \node[pn] at (\x, \y) (\n) {};
        }

        \foreach \n / \nn / \p / \l / \r  / \c in
        {t3/{3}/below/2/0/black, t4/{4}/below/2/0/black, t5/{5}/below/2/0/black}{ 
          \node[\p = \l pt and \r pt of \n, text=\c, inner sep=.5pt, fill=white] {$t^{\nn}_j$};
        }
      \end{scope}

      \begin{pgfonlayer}{bg}
      \foreach \s / \t / \aa / \type in {s/x1/10/fc,s/x2/10/fc,s/x3/10/fc, x1/s/10/fc,x2/s/10/fc,x3/s/10/fc, s/sa/35/fc,s/sc/-35/fc, s/sb/0/fc, sa/sb/0/fc,sb/sc/0/fc, sc/sa/0/fc, sa/ta/0/fc, ta/tb/0/fc, tb/tc/0/fc, tc/ta/0/fc, tc/t3/0/fc, t3/t4/30/fc, t4/t3/30/fc, t4/t5/0/fc} {
         \draw[->, \type] (\s) edge[bend right = \aa] (\t);
       }  
      \end{pgfonlayer}
      
    \end{tikzpicture}
    \caption{Illustration for the set and element gadgets for \cref{thm:fe-ns-hard}, where $C_j=\{i, p, q\}$.}\label{fig:fe-ns}
  \end{figure}
  
  This completes the construction of the instance.
  See \cref{fig:fe-ns} for an illustration.
  Clearly, the set agents have the highest degree, which is $9$.
  It is also straightforward that the corresponding friendship graph is planar since $I$ has a planar embedding.  
  It remains to show the correctness, i.e., $I$ admits an exact cover if and only if the constructed instance admits a Nash stable partition.

 For the ``only if'' part, let $\mathcal{K}$ be an exact cover.
 We claim that partition~$\Pi =\{\{s_j,x_i,x_p,x_q\}$, $\{s_j^0,s_j^1,s_j^2\}\mid C_j\!\in\!\mathcal{K}$  with $C_j=\{i,p,q\}\}$ $\cup \{\{s_j,s_j^0,s_j^1,s_j^2\}\mid C_j\!\notin\! \mathcal{K}\}$ $\cup\{\{t_j^0,t_j^1,t_j^2\}\, \{t_j^3,t_j^4\}, \{t_j^5\}\mid j\!\in\! [\emm]\}$ is Nash stable.

 We first consider the enforcers and set agents and let $j\in [\emm]$.
 Observe that agent~$t_j^5$ will not deviate since she does not have any friends and she is alone in~$\Pi$.
 Agent~$t_j^4$ will not deviate either since she has two friends and is in a coalition with one friend and no enemies. 
 Similarly, no agent from $\{t_j^0, t_j^1, t_j^3, s_j^1, s_j^2\}$ will deviate since each of them has one friend and is in a coalition which contains her only friend.
 Consequently, agent~$t_j^2$ (resp.\ $s_j^0$) will not deviate since she has two friends and is in a coalition with one friend and one enemy (resp.\ at most two enemies), but the other coalition which contains her other friend contains one enemy (resp.\ two enemies) for her.
 Agent~$s_j$ will not deviate either since she has in total six friends and is always in a coalition with exactly three friends and no enemies.

 Finally, we consider the element agents and let $i\in [3\enn]$.
 Agent~$x_i$ has at most three friends and is in a coalition with one friend and two enemies.
 Note that her other remaining set friends do not correspond to sets from the exact cover and are in two different coalitions of size four.
 This means that each of these coalitions contains one friend and three enemies for~$x_i$.
 Hence, agent~$x_i$ will not deviate either.

 For the ``if'' part, let $\Pi$ be a Nash stable partition.
 We claim that $\mathcal{K}=\{C_j\in \mathcal{C} \mid \{x_i, x_p, x_q\}\subseteq \Pi(s_j) \text{ with } C_j=\{i,p,q\}\}$ is an exact cover.
 Before we show this, we observe some useful properties of~$\Pi$.
 \begin{clm}\label{clm:either-cover-or-not}
   \begin{compactenum}[(i)]
     \item \label{fe-ns-enforcer} 
     For each~$j\in [\emm]$, it holds that $\{s_j^0,s_j^1,s_j^2\}\subseteq \Pi(s_j^0)$ and $\Pi(s_j^0)$ contains at most two enemies for~$s_j^0$. 
      \item \label{fe-ns-set} It holds that $\Pi(s_j)\neq \Pi(s_{\ell})$ for all set agents~$s_j, s_{\ell}$, $j\neq \ell \in [\emm]$. %
   \end{compactenum}
 \end{clm}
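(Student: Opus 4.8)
The plan is to exploit the rigid ``enforcer'' structure attached to each set $C_j$, together with one simple locking principle: if an agent $v$ has exactly one friend $w$ and $v$ is not alone in $\Pi$, then $w\in\Pi(v)$ --- otherwise $v$ has no friends in $\Pi(v)$ and either violates individual rationality (if $\Pi(v)$ is non-trivial) or envies $\Pi(w)\cup\{v\}$ (if $\Pi(v)=\{v\}$). Applying this to $s_j^1$ (only friend $s_j^2$) and to $s_j^2$ (only friend $s_j^0$) immediately gives $\{s_j^0,s_j^1,s_j^2\}\subseteq\Pi(s_j^0)$, which is the first half of~(i); applying it to $t_j^0$ and $t_j^1$ gives $\{t_j^0,t_j^1,t_j^2\}\subseteq\Pi(t_j^0)$.

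The next step is to pin down the tail $t_j^3,t_j^4,t_j^5$. Agent $t_j^5$ has no friends, so individual rationality forces $\Pi(t_j^5)=\{t_j^5\}$. Then $t_j^4$'s only available friend is $t_j^3$, and since $t_j^4$ must not envy $\{t_j^4,t_j^5\}$ (one friend, no enemies) she must already be in a coalition with one friend and no enemies, i.e.\ $\Pi(t_j^3)=\Pi(t_j^4)=\{t_j^3,t_j^4\}$. In particular $t_j^3\notin\Pi(t_j^2)$, so $t_j^2$ has exactly one friend ($t_j^0$) inside $\Pi(t_j^2)$; if $\Pi(t_j^2)$ contained any further agent, $t_j^2$ would have at least two enemies and would strictly prefer $\{t_j^2,t_j^3,t_j^4\}$ (one friend, one enemy). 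Hence $\Pi(t_j^0)=\Pi(t_j^1)=\Pi(t_j^2)=\{t_j^0,t_j^1,t_j^2\}$. This rules out $t_j^0\in\Pi(s_j^0)$ (else $\Pi(t_j^0)$ would also contain $s_j^0,s_j^1,s_j^2$), so $s_j^0$ has exactly one friend, $s_j^1$, in $\Pi(s_j^0)$. If $\Pi(s_j^0)$ contained three or more enemies of $s_j^0$, then $s_j^0$ would envy $\{t_j^0,t_j^1,t_j^2\}\cup\{s_j^0\}$, where she has one friend ($t_j^0$) and only two enemies --- contradicting Nash stability. This proves the second half of~(i).

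For~(ii), I would argue by contradiction: suppose $\Pi(s_j)=\Pi(s_\ell)$ for some $j\ne\ell$, and write $C$ for this common coalition; recall $s_j$ and $s_\ell$ are mutual enemies. If $\Pi(s_j^0)=C$, then $C$ would contain $s_j^2,s_j,s_\ell$, all enemies of $s_j^0$, contradicting~(i); so $\Pi(s_j^0)\ne C$. Since the only agent outside $\{s_j^0,s_j^1,s_j^2\}$ with a friend among them is $s_j$ --- and $s_j$ now lies in $C\ne\Pi(s_j^0)$ --- individual rationality together with the bound from~(i) force $\Pi(s_j^0)=\{s_j^0,s_j^1,s_j^2\}$ exactly. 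Now the only friends $s_j$ can have inside $C$ are element agents $x_i$ with $i\in C_j$, so $s_j$ has at most three friends and at least one enemy ($s_\ell$) in $C$, whereas joining $\Pi(s_j^0)=\{s_j^0,s_j^1,s_j^2\}$ would give her three friends and no enemies, which she strictly prefers. Thus $s_j$ envies $\Pi(s_j^0)$, contradicting Nash stability.

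The main obstacle is exactly the double edge of the locking principle: it bundles $\{s_j^0,s_j^1,s_j^2\}$ and $\{t_j^0,t_j^1,t_j^2\}$ so firmly that these bundles survive any single-agent deviation, so a priori their coalitions could absorb arbitrarily many harmless extra agents and the size/enemy bounds in~(i) could fail. The trick that breaks this is the short tail $t_j^3,t_j^4,t_j^5$: a friendless $t_j^5$ forces $\{t_j^3,t_j^4\}$ to be an isolated pair, which hands $t_j^2$ a cheap, permanently available alternative $\{t_j^2,t_j^3,t_j^4\}$; that is what keeps $\Pi(t_j^0)$ at size exactly three and, in turn, gives $s_j^0$ (and then $s_j$) a concrete profitable deviation whenever the claimed structure is violated. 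Keeping the friend/enemy counts exact throughout --- one friend versus two enemies for $s_j^0$, three friends versus one enemy for $s_j$ --- is where the care is needed.
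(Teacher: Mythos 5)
Your proof is correct, and for part (i) it follows essentially the same chain as the paper: force $\Pi(t_j^5)=\{t_j^5\}$, then $\Pi(t_j^4)=\{t_j^3,t_j^4\}$, use this to pin $\Pi(t_j^2)$ down to exactly $\{t_j^0,t_j^1,t_j^2\}$, and finally bound $s_j^0$'s enemies by the threat of her joining that triple, with the locking principle (an agent with a unique friend must be in that friend's coalition) giving $\{s_j^0,s_j^1,s_j^2\}\subseteq\Pi(s_j^0)$. For part (ii) your route differs in a worthwhile detail. The paper keeps only the bound from (i) on $\Pi(s_j^0)$, assumes w.l.o.g.\ $C_j\neq C_\ell$, argues that both $s_j$ and $s_\ell$ must then gather all three of their element friends in the common coalition, and uses the resulting \emph{two} enemies of $s_j$ (namely $s_\ell$ and an element agent of $C_\ell\setminus C_j$) to beat the ``three friends, at most one enemy'' offer of $\Pi(s_j^0)$. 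You instead first show that $\Pi(s_j^0)$ is \emph{exactly} $\{s_j^0,s_j^1,s_j^2\}$, by observing that no agent outside $\{s_j\}\cup\{s_j^0,s_j^1,s_j^2\}$ has a friend in the triangle, so any fourth member would sit with zero friends and three enemies and violate individual rationality. After that, a single enemy ($s_\ell$) in the common coalition already gives $s_j$ a strictly better deviation (three friends, no enemies), so you need neither the ``all element friends'' step nor the case distinction $C_j\neq C_\ell$; in particular your argument also covers duplicate sets in $\mathcal{C}$, which the paper's ``w.l.o.g.'' quietly brushes aside. Both arguments are sound; yours is marginally more self-contained on this point.
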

 \begin{proof}[Proof of \cref{clm:either-cover-or-not}]
    \renewcommand{\qedsymbol}{$\diamond$}
     For Statement~\eqref{fe-ns-enforcer},
     we first observe that $\Pi(t_j^5)=\{t_j^5\}$ since she has no friends (i.e., only enemies).
    Then, in order to not envy $\Pi(t_j^5)$ where she has one friend, agent~$t_j^4$ must be in a coalition with one friend and no enemies (since she has in total two friends).
    This is only possible if $\Pi(t_j^4)=\{t_j^3, t_j^4\}=\Pi(t_j^3)$.
    Similarly, in order to not envy~$\Pi(t_j^3)$ where she has one friend, agent~$t_j^2$ must be in a coalition with one friend and at most one enemy.
    This means $t_j^1\in \Pi(t_j^0)$.
    Now, since $t_j^1$ considers $t_j^0$ an enemy, to avoid~$t_j^1$ from preferring
    to be alone, agent~$t_j^1$ must have her only friend in her coalition, implying that  $t_j^2\in \Pi(t_j^0)$.
    In other words, we have $\Pi(t_j^2)=\{t_j^0, t_j^1, t_j^2\}=\Pi(t_j^0)$.
    Finally, to not envy $\Pi(t_j^0)$ where she has one friend, agent~$s_j^0$ must be in a coalition with one friend and at most two enemies.
    Analogously, we have $\{s_j^0,s_j^1,s_j^2\}\subseteq \Pi(s_j^0)$, as desired.

    For Statement~\eqref{fe-ns-set}, suppose for a contradiction that $\Pi(s_j)=\Pi(s_\ell)$ holds for two distinct set-agents~$s_j, s_{\ell}$, $j\neq \ell \in [\emm]$.
    Without loss of generality assume that $C_j\neq C_{\ell}$. Then, neither of $s_j^0$ and $s_{\ell}^0$ can be in $\Pi(s_j)$ as they would get at least three enemies and they would envy $\Pi (t_j^0)$ or $\Pi (t_{\ell}^0)$, respectively, a contradiction.

  However, this implies that $s_j$ and $s_{\ell}$ must have three friends, so both of them gets all element agent friends. Hence, $s_j$ and $s_{\ell}$ have at least two enemies in $\Pi(s_j)$ (namely $s_{\ell}$/$s_j$ and at least one element agent). It follows that $s_j$ envies $\Pi(s_j^0)$ since it contains three friends and at most one enemy for her (see \cref{clm:either-cover-or-not}\eqref{fe-ns-enforcer}),
  a contradiction.
  \end{proof}

  To continue, \cref{clm:either-cover-or-not}\eqref{fe-ns-enforcer} implies that $\mathcal{K}$ is a set packing, i.e., no two sets in~$\mathcal{K}$  intersects.
  To complete the proof, we show that $\mathcal{K}$ is also a set cover, i.e., $\cup_{C\in \mathcal{K}}C=[3\enn]$.
  Consider an arbitrary element~$i\in [3\enn]$.
  Since $x_i$ considers the three corresponding set agents as friends, it follows that $x_i\in \Pi(s_j)$ for some set $C_j$ with $C_j=\{i,p,q\}$.
  It suffices to show that $\{x_i, x_p, x_q\}\subseteq \Pi(s_j)$. 
  Note that $\Pi(s_i)\cap \Pi(s_j^0)=\emptyset$ as otherwise $s_j^0$ will have at least three enemies (including $x_i, s_j, s_j^2$) in her coalition $\Pi(s_j^0)$, a contradiction to \cref{clm:either-cover-or-not}\eqref{fe-ns-enforcer}.
  Now, to not envy $\Pi(s_j^0)$, agent~$s_j$ must be in a coalition with at least three friends.
  This is only possible if all her element friends are in her coalition, as desired. %
\end{proof}
}
\subsection{Algorithms and refined complexity for FE}\label{sec:refined}

We start with some simple polynomial-time algorithms for core verification.
\newcommand{\propfecorealg}{%
For each of the following cases, \FE-\verif and \FE-\sverif are polynomial-time solvable.
\begin{compactenum}[(i)]
  \item The friendship graph is a acyclic.
  \item $\maxdeg=2$.
  \item The preferences are symmetric and $\maxdeg=4$. %
\end{compactenum}
}
\begin{proposition}[\appendixsymb]
\label{obs:FE-acyclic}
\propfecorealg
\end{proposition}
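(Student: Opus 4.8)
The plan is to give a direct polynomial-time algorithm in each of the three cases by exploiting structural limitations on the possible blocking coalitions. Recall that checking whether a \emph{fixed} coalition $W$ strictly (resp.\ weakly) blocks $\Pi$ is trivially polynomial, so it suffices to argue that in each case we can enumerate a polynomial-size family of candidate blocking coalitions that is guaranteed to contain a blocking one whenever $\Pi$ is not (strictly) core stable.

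\textbf{Case (i): $\goodG$ acyclic.} The key observation is that in any blocking coalition $W$, every agent must be weakly better off, hence in particular no agent may lose a friend; but more usefully, I claim a minimal blocking coalition must be ``small and structured''. First note that since $\goodG$ is a DAG, it has a topological order; consider an agent $w\in W$ that is a sink of $\goodG[W]$, i.e.\ has no out-neighbour (friend) inside $W$. Then $w$ has zero friends in $W$, so for $w$ to weakly prefer $W$ to $\Pi(w)$ we need $w$ to have zero friends in $\Pi(w)$ as well, and no more enemies in $W$ than in $\Pi(w)$; in fact for a strict improvement $w$ needs strictly fewer enemies. A clean special case is: if $w$ is a source-free sink it contributes nothing, so one shows that a minimal blocking coalition must actually be a single vertex or a subset of a coalition of $\Pi$ — more precisely, I would argue that a minimal blocking coalition $W$ cannot contain any agent who has a friend outside of $W\cap \Pi$-structure, and conclude $W\subseteq \Pi(w)$ for some $w$, or $W=\{w\}$ with $w$ preferring to be alone. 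That reduces the search to: for each agent, check the singleton deviation, and for each coalition of $\Pi$, check all ``sub-coalitions obtained by removing agents'' — but the latter is still exponential, so the real argument is that acyclicity forces a minimal blocking coalition to be an \emph{out-closed} set within a single $\Pi$-coalition of a very restricted form, or a singleton; I expect one proves that a minimal weakly blocking $W$ satisfies $W\subseteq\Pi(v)$ for the topologically-last vertex $v$ of $W$, and then within $\Pi(v)$ one only needs to test, for each $v$, the coalition $\{u\in\Pi(v): u \text{ can reach }v \text{ in }\goodG[\Pi(v)]\}$ and its sink-removals, a polynomial collection.

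\textbf{Case (ii): $\maxdeg=2$.} Here the friendship graph is a disjoint union of paths and cycles. A blocking coalition $W$ is determined ``locally'': each agent cares only about its at most two out-neighbours. One enumerates candidate blocking coalitions by noting that a minimal blocking coalition must be connected in the undirected sense in $\goodG$ (otherwise a component could block on its own, contradicting minimality), hence is a sub-path or sub-cycle of one of the path/cycle components; there are only $O(n^2)$ such sub-paths and $O(n)$ sub-cycles, so we enumerate all of them plus all singletons and test each. The only subtlety is that ``connected'' must be justified: if $W$ splits into friendship-components $W_1,W_2$ with no arcs between, then for each $w\in W_1$ its friends and enemies in $W$ all lie in $W_1$ (degree $\le 2$ means no friend outside; enemies could be anywhere, but weak blocking only constrains the friend-count and enemy-count, and $w$'s enemy-set in $W$ is $N^+_{\badG}(w)\cap W$, which may intersect $W_2$) — so I would instead define the candidate family to be all unions of at most two such sub-path/sub-cycle pieces, still polynomially many, or argue enemies don't create new intersections because $\maxdeg$ bounds total degree so $|N^+_{\badG}(w)|\le 2$ as well and one can bound the interaction; this bookkeeping is the main nuisance here.

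\textbf{Case (iii): symmetric preferences, $\maxdeg=4$.} With symmetric friendships, $\goodG$ is an undirected graph of maximum degree $4$. The crucial fact is that a minimal blocking coalition $W$ must be connected in $\goodG$ (a friendship-isolated vertex in $W$ has no friends in $W$, so cannot strictly improve unless alone, and a friendship-component of $W$ not meeting the ``strict improver'' could be removed), so $W$ is a connected subgraph; moreover every $w\in W$ has at least one friend in $W$ (else, by minimality and symmetry, remove $w$), which together with symmetry and $\maxdeg\le 4$ forces $|W|$ to be small — indeed each agent in $W$ has between $1$ and $4$ friends in $W$ and can have at most $4$ total neighbours, and a strict improver must gain a friend or shed an enemy; I would push this to show $|W|=O(1)$ (the degree bound caps how large a mutually-improving connected cluster can be, since adding the $(k+1)$-st vertex to a clique-like blob eventually makes some earlier vertex worse). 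Then enumerating all connected subgraphs on $O(1)$ vertices in a degree-$4$ graph takes $n\cdot 4^{O(1)}=O(n)$ time, and we test each. \textbf{The main obstacle} I anticipate across all three parts is pinning down the correct, provably-polynomial candidate family — in particular rigorously establishing the ``minimal blocking coalitions are small / lie inside one $\Pi$-coalition / are connected'' structural lemmas, especially in case (i) where enemies can link far-apart agents and the DAG structure of $\goodG$ alone does not obviously bound $|W|$; the honest route there is the topological-sink argument showing a minimal $W$ is contained in the reachability-closure (within a single $\Pi$-coalition) of its last vertex.
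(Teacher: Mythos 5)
Your case (i) never lands, and the reason is that you never use the defining feature of \FE: every non-friend is an enemy. With that in hand the acyclic case is immediate and needs none of your structural machinery. If $\goodG$ is acyclic, then any coalition $C$ with $|C|\ge 2$ contains a sink of $\goodG[C]$, i.e.\ an agent with zero friends and at least one enemy in $C$, who therefore strictly prefers being alone; so the singleton $\{v\}$ strictly blocks, and the \emph{only} partition that can be (strictly) core stable is the all-singleton one. Conversely, the all-singleton partition is strictly core stable, since any candidate blocking coalition of size at least two again has such a sink agent who is strictly worse off than alone. Verification thus reduces to checking whether $\Pi$ consists of singletons, which is the paper's argument. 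Your proposed lemmas (``a minimal blocking coalition lies inside one $\Pi$-coalition'', ``is an out-closed set'', ``a polynomial collection of sink-removals'') are neither proved nor needed, and the step you yourself flag as the main obstacle is exactly the missing argument.

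For (ii) your outline is essentially the paper's, but your connectivity worry is resolved by the same observation you are missing in (i): there is no separate bounded-degree enemy graph in \FE\ (your appeal to $|N^+_{\badG}(w)|\le 2$ is wrong — the enemies of $w$ in $W$ are \emph{all} non-friends in $W$). Precisely because of this, deleting an entire friendship-component from a (weakly) blocking coalition keeps every remaining agent's friend count unchanged and strictly decreases her enemy count, so a single component already strictly blocks; hence minimal blocking coalitions are connected and the $O(n^2)$ connected subgraphs of a disjoint union of paths and cycles suffice, making your ``unions of two pieces'' fallback unnecessary and unjustified. Case (iii) has a genuine gap: your key claim that a minimal blocking coalition has constant size is false even at undirected degree two — take a single friendship cycle and let $\Pi$ consist of adjacent pairs; then every agent needs two friends to improve, so the unique (hence minimal) blocking coalition is the whole cycle, of size $n$. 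Note also the degree convention implicit in the paper (compare the symmetric hardness at $\maxdeg=8$, whose agents have four mutual friends): symmetric $\maxdeg=4$ means two undirected neighbors, so (iii) is again a union of paths and cycles and is handled by exactly the same connected-subgraph enumeration as (ii), not by a bounded-size cluster argument at undirected degree four.
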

\appendixproofwithstatement{obs:FE-acyclic}{\propfecorealg}{
  \begin{proof}
  We consider the acyclic friendship graph case separately.
When the associated friendship graph is acyclic, every strongly connected component is a singleton.
Hence, the only coalition structure which is core stable is the one where each agent is in a singleton coalition. Hence, verifying (strictly) stable core is polynomial-time solvable.

In the remaining two case, the underlying directed or undirected graph $G$ has maximum degree at most $2$. Therefore it is a disjoint union of path and cycles. %

Now, if there is a blocking coalition, such that it induces a disconnected graph, then taking only one of the connected components must also be a blocking coalition, since those agent in it get the same number of friends and less enemies. 

Now, it is easy to see that a disjoint union of paths and cycles on $n$ agents has at most $n^2$ connected subgraphs, so we can check all of them whether they are blocking in polynomial-time. 
\end{proof}
}

While both verification problems are trivial if the friendship graph is acyclic, we show that interestingly even one feedback arc makes the problem intractable.
\newcommand{\corefashard}{%
  \FE-\verif\ and \FE-\sverif\ are \conp-com\-plete even if $\fas=1$, and each agent has at most $3$ friends. 
}
\begin{theorem}[\appendixsymb]\label{thm:fas_number1}
\corefashard %
\end{theorem}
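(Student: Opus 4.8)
Containment in \conp\ is immediate, since checking whether a fixed coalition weakly or strictly blocks $\Pi$ takes polynomial time; so the plan is to prove \conp-hardness by a polynomial reduction from \pxct\ (the planarity of the input instance is not needed here) whose output has exactly one feedback arc and friend-out-degree at most~$3$. I would arrange the constructed partition $\Pi$ so that it is core stable if and only if the \pxct-instance has \emph{no} exact cover, and simultaneously so that it is strictly core stable under the same condition: from an exact cover I will build a \emph{strictly} blocking coalition, and conversely I will show that any \emph{weakly} blocking coalition yields an exact cover, which (exactly as at the end of the proof of \cref{thm:verif-non-symmetric-four}) settles both \FE-\verif\ and \FE-\sverif.

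\textbf{The construction.} The idea is to reuse the element/set/connector/friendship gadgets behind \cref{thm:verif-non-symmetric-four} but to \emph{collapse every cycle into a single global one}. I would fix a distinguished arc $e^\ast=(a^\ast,b^\ast)$ that will be the unique feedback arc, and lay out the rest of the friendship graph as a DAG, so that deleting $e^\ast$ leaves it acyclic. From $b^\ast$ a long ``spine'' is routed through a small element gadget for every $i\in[3\enn]$, in a fixed topological order, and then into $a^\ast$; at the gadget of element~$i$ the spine may instead detour into the gadget of some set $C_j\ni i$. The gadget of $C_j=\{i,k,r\}$ is wired using directed \emph{paths} only (never the $2$-cycles and per-gadget cycles of \cref{thm:verif-non-symmetric-four}) so that the spine can detour through it consistently only if it also detours through it at the gadgets of $k$ and $r$, and so that the element agent of~$i$ can be served by at most one detour. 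Friend-out-degree at most~$3$ is met by letting the set agent of $C_j$ be a friend of its three element agents only and giving every other gadget agent at most two out-arcs --- which forces the set--element incidences to be oriented from the element agents toward the set agent, rather than the other way around, so as not to create new cycles. Finally I attach to each element a directed \emph{friendship chain} (a path, not a cycle) of a length chosen so that its agents can never enter a blocking coalition but their presence forces the associated element agent to refuse a blocking coalition that is ``too large'', upgrading ``covered at least once'' to ``covered exactly once''. The partition $\Pi$ is obtained by cutting the unique nontrivial strongly connected component into many small coalitions (singletons elsewhere), chosen so that in $\Pi$ every agent has at least one friend and no enemy, yet a blocking coalition is forced to reassemble a full cycle through $e^\ast$ along a consistent choice of set detours.

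\textbf{Correctness.} For the forward direction, given an exact cover $\mathcal K$ I take $V'$ to be all element agents, all connector agents, and all agents of the set gadgets of the members of $\mathcal K$; routing the spine through exactly those detours makes every member of $V'$ gain strictly more friends than under $\Pi$, so $V'$ strictly blocks and $\Pi$ is not core stable, hence not strictly core stable. For the converse, let $W$ be a weakly blocking coalition. Since deleting $e^\ast$ yields a DAG, $W$ must contain both $a^\ast$ and $b^\ast$: otherwise $G[W]$ is acyclic, its topological sink has no friend in $W$ but --- as \FE\ has no neutrals --- at least one enemy in $W$, whereas she has a friend in her $\Pi$-coalition, so she does not weakly prefer $W$. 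Propagating the requirement ``every member needs a friend inside $W$'' backwards along the spine --- the same style of chained implications used in the proof of \cref{thm:verif-non-symmetric-four} --- forces $W$, for every element~$i$, to contain the gadget of~$i$ and the full gadget of \emph{some} set $C_j\ni i$; the set-gadget wiring then forces the gadgets of all three elements of $C_j$ to select~$C_j$, so $\mathcal K:=\{C_j:W\text{ uses the detour of }C_j\}$ is a set cover. The friendship chains bound $|W|$ from above and the mandatory per-gadget agents bound it from below, which together force $|\mathcal K|\le\enn$, i.e.\ $\mathcal K$ is an exact cover. Combining the two directions, $\Pi$ is (strictly) core stable iff $I$ has no exact cover, and $\fas=1$, $\maxfdeg=3$ hold by construction.

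\textbf{Main obstacle.} The crux is realising the branching needed to encode the choice of cover while the graph stays ``a DAG plus one arc'': every auxiliary cycle available in \cref{thm:verif-non-symmetric-four} (the per-element friendship cycles, the per-set cycles, the pervasive $2$-cycles) must be replaced by a directed path, and the single surviving cycle must be threaded through every gadget and made long enough that the weakly blocking coalitions are exactly the ``cover-shaped'' ones, all while keeping each agent's friend-out-degree down to~$3$; checking that the resulting graph is still acyclic after removing just $e^\ast$, and that $\Pi$ together with the friendship chains rules out all the small, ``local'' blocking coalitions, is where the bulk of the case analysis will go.
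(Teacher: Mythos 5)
Your high-level plan matches the paper's in spirit (reduce from an exact-cover problem, thread a single global cycle through one feedback arc, force the blocking coalition to follow a ``spine'' that picks up every element and one set per element, and use an enemy budget to turn ``cover'' into ``exact cover''), but the concrete gadget you rely on for the size cap does not work, and this is exactly the delicate point of the theorem. You propose attaching to each element a directed friendship \emph{chain} (a path) whose agents ``can never enter a blocking coalition but whose presence forces the element agent to refuse a blocking coalition that is too large''. For the chain to impose enemies on the element agent it must lie in her initial coalition; but then the last agent of the path has no friend inside that coalition (and none pointing back, since closing the chain would create an extra cycle and hence $\fas\ge 2$ per element), so she has zero friends and many enemies in $\Pi$ and strictly prefers to be alone --- the singleton $\{p_L\}$ strictly blocks, and $\Pi$ is never core stable regardless of the cover. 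This is also in tension with your own description of $\Pi$ (``every agent has at least one friend and no enemy''): with no enemies in $\Pi$, an agent who gets the same number of friends in a deviating coalition tolerates \emph{zero} enemies there, which rules out the large, all-elements blocking coalition you need; and cutting the unique directed cycle into small coalitions in which everyone keeps a friend is impossible without $2$-cycles, which again raise $\fas$. In \FE\ with $\fas=1$ you simply cannot give each element its own private enemy-providing coalition in which every member retains a friend.

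The paper resolves this by reassigning the roles. Element agents $a_i$ and set agents $c_j$ are \emph{singletons} in $\Pi$ (no friends, no enemies --- individually rational and eager to join anything with a friend), and the entire size-capping machinery is concentrated in one big initial coalition consisting of the unique cycle of special agents $s_0,s_1,\dots,s_{4\enn+2}$ together with shortcut agents $x_1,\dots,x_{3\enn-1}$ hanging off it; the only feedback arc is $(s_0,s_1)$, and the spine is closed through the instance by arcs $(s_i,a_i)$, $(a_i,c_j)$ for $i\in C_j$, and $(c_j,s_0)$. A minimal blocking coalition is strongly connected, hence contains the feedback arc, hence (chasing the friend requirements of $s_1,\dots,s_{3\enn}$) all of $s_0,\dots,s_{3\enn}$ and all element agents; the cap now comes from $s_0$, who has one friend and $7\enn$ enemies in $\Pi$, so the blocking coalition can contain at most $\enn$ set agents, giving exactness, while the dummies $s_{3\enn+1},\dots,s_{4\enn+2}$ and the $x_i$ pad the initial coalition so the counts come out right (and rule out ``shortcut'' deviations through the $x_i$). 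Note also that the paper does not use one construction for both problems: for \sverif\ it deletes one special agent so that the cover-induced coalition is only \emph{weakly} blocking, whereas your plan of a single construction (strictly blocking from a cover, exact cover from any weakly blocking coalition) is fine in principle but would still have to be realized by a concrete gadget replacing the broken chains.
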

\appendixproofwithstatement{thm:fas_number1}{\corefashard}{
  \begin{proof}
  To prove the theorem, we first consider \FE-\verif\ and provide a polynomial reduction from the \np-complete \pxct problem. Later we show that the same reduction works for \FE-\sverif\ with slight modifications.
  Let $I=([3\enn],\mathcal{C})$ denote an instance of \pxct, where $\mathcal{C}=\{C_1,\dots,C_{\emm }\}$.
  We construct an instance of \FE-\verif\ as follows.

\begin{itemize}[--]
    \item For each element $i\in [3n]$, create an element agent $a_i$.
    \item For each set $C_j$, $j\in [\emm ]$, create a set agent $c_j$.
    \item Create $4\enn+3$ special agents, called~$s_0,s_1, \dots, s_{4\enn+2}$.
    The special agents~$s_1, \dots, s_{3\enn}$ will act as the element selectors,
    while agent~$s_0$ will act as the set selector. The remaining agents are dummies that enforce a maximum size for the blocking coalition.
    \item Finally, create $3\enn-1$ additional agents, named $x_1, \dots, x_{3\enn-1}$.

  \end{itemize}
 
\noindent The friendship graph has the arcs: %

$\{(c_j,s_0) \mid j\in [\emm]\} $ 
    $\bigcup ~ \{(s_i,a_i) \mid i\in [3\enn]\}$
     $\bigcup~ \{(a_i,c_j) \mid$ for all $i, j\in [3\enn] \times [\emm ]$ such that $i\in C_j\}\ \bigcup$ 
     $\{(s_z,s_{z+1}) \mid$ $z \in \{0\} \cup [4\enn+2]\} \bigcup~ \{(s_i,x_i), (x_i,s_{i+1}), (x_i,s_{3\enn+1}) \mid $ $i \in [3\enn-1]\}$, where $z+1$ is taken modulo $4\enn+3$.

Before specifying the initial partition, we analyze the maximum number of friends an agent has and the feedback arc set number.
First, we observe that the friendship graph has one feedback arc: $(s_0, s_1)$.
Without it, the order 
$(s_1,x_1,s_2,x_2, \dots, s_{3\enn-1},x_{3\enn-1},$ $s_{3\enn}, \dots,$ $ s_{4\enn+2},
a_1,\dots, a_{3\enn}$, $c_1, \dots, c_{\emm}, s_0)$
is a topological order of the vertices, so the remaining digraph is acyclic.

It is also straightforward to see that each element agent~$a_i,$ $ i \in [3\enn]$ has three friends,
each set agent $c_j, j \in [\emm]$ has one friend,
each special agent $s_z$, $z\in \{0, 3\enn+1,\dots,4\enn+2\}$ has one friend,
each additional agent~$x_{i}$, $i\in [3\enn-1]$ and
agent~$s_{3\enn}$ have two friends,
and each remaining special agent from $\{s_1, \dots, s_{3\enn-1}\}$ has three friends.
Summarizing, every agent has at most three friends.

To complete the construction, we define the initial partition as follows: $\Pi = \{\{s_0, \dots, s_{4\enn+2}, x_1, \dots, x_{3\enn-1}\}\} \cup \{\{c_j\} \mid j \in [\emm]\} \cup \{\{a_i\} \mid i \in [3\enn]\}$.

\begin{figure}
       \centering
       \begin{tikzpicture}[scale=1,every node/.style={scale=0.9}, >=stealth', shorten <= 2pt, shorten >= 2pt]

         \begin{scope}[xshift=-1.8cm, yshift=-1cm]
           \foreach  \x / \y / \n  / \st in
           {1.2/0/a1/pn, 1.2/-.5/a2/pn, 1.2/-1.2/a3np/pnn, 1.2/-1.5/a3n/pn,
             -1.5/0/x1/pn, -1.5/-.5/x2/pn, -1.5/-1/x3np/pnn, -1.5/-1.5/x3n/pn,
             0/0.5/s1/pn, 0/0/s2/pn, 0/-0.52/s3np1/pnn,0/-0.65/s3np/pnn,0/-0.75/s3np3/pnn, 0/-1.25/s3n/pn}
           {
             \node[\st] at (\x*0.8, \y) (\n) {};
           }
         \end{scope}

         \begin{scope}[xshift=.5cm, yshift=-.7cm]
           
           \foreach  \x / \y / \n / \st in
           {0/0/c1/pn, 0/-.5/c2/pn, 0/-1/cmp/pnn, 0/-1.5/cm/pn}
           {
             \node[\st] at (\x, \y) (\n) {};
           }
         \end{scope}

         \begin{scope}[xshift=2cm, yshift=-.7cm]
             \node[pn] at (0,-1) (s0) {};
         \end{scope}

         \begin{scope}[xshift=-1.8cm, yshift=-3cm]
    
           \foreach  \x / \y / \n / \st in
           {  0/0/s3n1/pn, 1/0/s3n2/pn, 1.55/0/smpp1/pnn,1.75/0/smpp/pnn,1.95/0/smpp3/pnn, 2.5/0/s4n1/pn, 3.5/0/s4n2/pn}
           {
             \node[\st] at (\x, \y) (\n) {};
           }
         \end{scope}

         \foreach \n / \nn / \p / \l / \r  / \c in
         { x1/{x_1}/above left/1/-1/black, x2/{x_2}/above left/1/-1/black, x3n/{x_{3\enn-1}}/above left/0/-1/black,
           a1/{a_1}/below/1/0/black, a2/{a_2}/below/0/0/black, a3n/{a_{3\enn}}/below/1/0/black, 
            c1/{c_1}/above right/0/0/black, c2/{c_2}/above right/0/0/black, cm/{c_{\emm}}/above right/0/0/black,
             s1/{s_{1}}/above right/2/0/black, s2/{s_{2}}/above right/0/0/black,
             s3n/{s_{3\enn}}/above right/0/0/black, s3n1/{s_{3\enn+1}}/below right/0/0/black, s4n2/{s_{4\enn+2}}/below right/0/0/black,  s3n2/{s_{3\enn+2}}/below right/0/0/black, s4n1/{s_{4\enn+1}}/below right/0/0/black,
            s0/{s_{0}}/above right/1/0/black}{
           \node[\p = \l pt and \r pt of \n, text=\c, inner sep=.5pt, fill=white] {\small ${\nn}$};
         }

         \node at (smpp) {$\ldots$};
         \node[rotate=90] at (cmp) {$\ldots$};
         \node[rotate=90, xshift=4pt] at (a3np) {$\ldots$};
         \node[rotate=90,xshift=0pt] at (x3np) {$\ldots$};
         \node[rotate=90,xshift=0pt] at (s3np) {$\ldots$};
         \begin{pgfonlayer}{bg}
           \foreach \s / \t / \aa / \type in {
             s1/x1/0/fc, x1/s2/0/fc, s2/x2/0/fc, x2/s3np/0/fc, s3np/x3n/0/fc,
              x3n/s3n/0/fc, 
              x1/s3n1/0/fc, x2/s3n1/0/fc, x3n/s3n1/0/fc,
              s1/s2/0/fc, s2/s3np1/0/fc, s3np3/s3n/0/fc, s3n/s3n1/0/fc, s3n1/s3n2/0/fc, s3n2/smpp1/0/fc, smpp3/s4n1/0/fc, s4n1/s4n2/0/fc,
             c1/s0/0/fc,c2/s0/0/fc,cm/s0/0/fc} {
             \draw[->, \type] (\s) edge[bend right = \aa] (\t);
           }

            \foreach \i in {1,2,3n} { 
             \draw[->, fc] (s\i) edge[] (a\i);
           }

 \draw[yellow, very thick, rounded corners] (s0.east) .. controls ($(s0)+(1.2, 0.5)$) and ($(s0)+(1.2, .8)$)  .. ($(c1)+(0.4, 0.4)$) .. controls ($(c1)+(0, 0.4)$) and ($(s1)+(1.2, .4)$)  ..  (s1.east);

           \draw[->, fc, rounded corners] (s0.east) .. controls ($(s0)+(1.2, 0.5)$) and ($(s0)+(1.2, .8)$)  .. ($(c1)+(0.4, 0.4)$) .. controls ($(c1)+(0, 0.4)$) and ($(s1)+(1.2, .4)$)  ..  (s1.east);
           
         \draw[->, fc, rounded corners] (s4n2.west) .. controls ($(s4n2)+(2, 0)$) and ($(s0)+(2, 0)$)   .. (s0.east);

           \draw[->] (a1) edge (c2);
           \draw[->] (a1) edge ($(c2)!0.5!(cmp)$);
           \draw[->] (a1) edge ($(c2)!0.9!(cmp)$);

           \draw[->] (a2) edge (c1);
           \draw[->] (a2) edge (cmp);
           
           \draw[->] (a3n) edge (cmp);
           \draw[->] (a3n) edge ($(cmp)!0.5!(cm)$);
           \draw[->] (a3n) edge (cm);
         \end{pgfonlayer}
     
       \end{tikzpicture}\caption{The construction in \cref{thm:fas_number1} where \fas = 1 and the highlighted arc is the only feedback arc.}   \label{fig:fas_number1}
     \end{figure}

We observe the following for the initial partition~$\Pi$.
\begin{observation} \label{obs:fas} %
   \begin{compactenum}[(i)]
  \item \label{obs:fas,1i} For every $i \in [3\enn]$ and $j \in [\emm]$,  $a_i$ and $c_j$ have zero friends and zero enemies,
  \item \label{obs:fas,1ii}For every $z \in [3\enn - 1]$, $s_z$ and $x_z$ each have two friends and $7\enn -1$ enemies,
  \item  \label{obs:fas,1iii}For every $z \in \{0,3\enn, \dots, 4\enn+2\}$, $s_z$ has 1 friend and $7\enn $ enemies.
  \end{compactenum}
\end{observation}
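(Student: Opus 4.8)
The plan is to read every count directly off the arc set of the friendship graph and the description of the initial partition~$\Pi$; all three items are pure bookkeeping, so the whole argument is a short case distinction. Recall that in the \FE\ model, inside a coalition~$S$ containing an agent~$u$, the friends of~$u$ are exactly the out-neighbours of~$u$ in~$\goodG$ that lie in~$S$, and every other member of~$S$ is an enemy of~$u$; hence $u$ has $|S|-1-|N^+_{\goodG}(u)\cap S|$ enemies in~$S$.

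For item~(i), in~$\Pi$ every element agent~$a_i$ and every set agent~$c_j$ forms a singleton coalition, so it has no co-members at all, hence zero friends and zero enemies. For items~(ii) and~(iii), all special agents $s_0,\dots,s_{4\enn+2}$ together with all additional agents $x_1,\dots,x_{3\enn-1}$ form a single coalition~$W$ of~$\Pi$ with $|W|=(4\enn+3)+(3\enn-1)=7\enn+2$; thus any~$u\in W$ has $7\enn+1-f(u)$ enemies in~$W$, where $f(u)\coloneqq|N^+_{\goodG}(u)\cap W|$. It remains to evaluate~$f$ on the relevant agents. For $z\in[3\enn-1]$ the arc set gives $N^+_{\goodG}(s_z)=\{s_{z+1},a_z,x_z\}$ with $s_{z+1},x_z\in W$ and $a_z\notin W$, so $f(s_z)=2$; and $N^+_{\goodG}(x_z)=\{s_{z+1},s_{3\enn+1}\}\subseteq W$, so $f(x_z)=2$; this yields item~(ii). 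For item~(iii): $N^+_{\goodG}(s_0)=\{s_1\}\subseteq W$, so $f(s_0)=1$; $N^+_{\goodG}(s_{3\enn})=\{s_{3\enn+1},a_{3\enn}\}$ with only $s_{3\enn+1}\in W$, so $f(s_{3\enn})=1$; and for $z\in\{3\enn+1,\dots,4\enn+2\}$ the only outgoing arc of~$s_z$ is $(s_z,s_{(z+1)\bmod(4\enn+3)})$, which lands in~$W$, so $f(s_z)=1$. Substituting these values into $7\enn+1-f(u)$ gives the stated enemy counts.

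The only point needing care --- and hence the sole, mild obstacle --- is keeping the index ranges straight: the arcs $(s_i,a_i)$ exist only for $i\in[3\enn]$ and the arcs $(s_i,x_i)$ only for $i\in[3\enn-1]$, so $f(s_z)$ drops from~$2$ to~$1$ exactly on $z\in\{0,3\enn,\dots,4\enn+2\}$, and the wrap-around arc $(s_{4\enn+2},s_0)$ of the big cycle must be counted modulo $4\enn+3$. Once these boundary cases are handled correctly, the claimed numbers of friends and enemies follow immediately.
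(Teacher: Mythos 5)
Your proposal is correct: it is exactly the direct bookkeeping the paper intends, reading the out-neighbourhoods off the arc set and using that the coalition of the special and additional agents has size $7\enn+2$ while the element and set agents are singletons (the paper states the observation without further proof, treating it as immediate from the construction). Your boundary-case checks (the ranges of the arcs $(s_i,a_i)$ and $(s_i,x_i)$, and the wrap-around arc to $s_0$) are precisely the points to verify, and they are handled correctly.
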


It remains to show the correctness, i.e., $I$ admits an exact cover if and only if $\Pi$ is not core stable, i.e., $\Pi$ is strictly blocked by some coalition.
\begin{clm}
\label{claim:fas2}
If $\Pi$ is strictly blocked by some coalition, then $I$ admits an exact cover.
\end{clm}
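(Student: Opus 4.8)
The plan is to take an arbitrary coalition $W$ that strictly blocks $\Pi$ and prove that $\mathcal{K}\coloneqq\{C_j\mid c_j\in W\}$ is an exact cover of $I$. Everything hinges on \cref{obs:fas}: each element agent $a_i$ and each set agent $c_j$ has no friend in $\Pi$, so it can belong to $W$ only if $W$ gives it a friend; and every agent of $\{s_0,s_{3\enn},\dots,s_{4\enn+2}\}$ (resp.\ of $\{s_1,\dots,s_{3\enn-1},x_1,\dots,x_{3\enn-1}\}$) has one friend (resp.\ two friends) but $7\enn$ (resp.\ $7\enn-1$) enemies in the big coalition $B$ of $\Pi$, which has $7\enn+2$ members, so such an agent can be in $W$ only by gaining a friend or losing an enemy relative to $B$.

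First I would show $W\not\subseteq B$. Tracing the ``gain a friend or lose an enemy'' requirement through the out-arcs $s_i\to x_i\to s_{i+1}$, $x_i\to s_{3\enn+1}$, the cycle $s_z\to s_{z+1}$ (indices mod $4\enn+3$) and $s_0\to s_1$, one finds that a single vertex of $B$ lying in $W$ already forces all of $B$ into $W$; but $B$ cannot strictly block the partition it is a part of, a contradiction. Hence $W$ contains some $a_i$ or $c_j$; following the friendship arcs $a_i\to c_j\to s_0$ we get $s_0\in W$, then $s_1\in W$ (the only out-neighbour of $s_0$), and --- since $s_0$ keeps one friend but must lose an enemy --- the key bound $|W|\le 7\enn+1$. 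A short induction along the ``spine'' then gives $s_i\in W$ for all $i\in[3\enn]$: for $i\le 3\enn-1$, agent $s_i$ has out-neighbours $\{s_{i+1},a_i,x_i\}$ and two friends in $B$, so it needs at least two friends in $W$; if these do not include $s_{i+1}$, then $\{a_i,x_i\}\subseteq W$, and $x_i\in W$ (out-neighbours $s_{i+1},s_{3\enn+1}$, two friends in $B$) forces $s_{i+1}\in W$ anyway.

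The heart of the proof is to show $a_i\in W$ for every $i\in[3\enn]$. If some $a_{i_0}\notin W$, then $s_{i_0}$ must compensate through $x_{i_0}$ (when $i_0\le3\enn-1$) or is forced to retain $s_{3\enn+1}$ (when $i_0=3\enn$); either way $s_{3\enn+1}\in W$, and the single-out-neighbour chain $s_{3\enn+1}\to\cdots\to s_{4\enn+2}\to s_0$ drags all $\enn+2$ dummy agents into $W$. Now $W\supseteq\{s_0,\dots,s_{4\enn+2}\}$, each $s_i$ with $i\le3\enn-1$ still needs a second friend in $\{a_i,x_i\}$, so $\bigl|\{i\in[3\enn]:a_i\in W\}\bigr|+\bigl|\{i\in[3\enn-1]:x_i\in W\}\bigr|\ge 3\enn-1$, while $\mathcal{K}$ must cover every $i$ with $a_i\in W$, giving $|\mathcal K|\ge\tfrac13\bigl|\{i:a_i\in W\}\bigr|$. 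Adding these contributions yields $|W|\ge(4\enn+3)+\tfrac43\bigl|\{i:a_i\in W\}\bigr|+\bigl|\{i:x_i\in W\}\bigr|\ge 7\enn+2$, contradicting $|W|\le 7\enn+1$. Hence all $a_i\in W$, each $a_i$ has a friend $c_j\in W$ with $i\in C_j$, so $\mathcal K$ is a cover; and since $\{s_0,\dots,s_{3\enn}\}$, $\{a_1,\dots,a_{3\enn}\}$ and $\{c_j\in W\}$ are pairwise disjoint subsets of $W$, we get $6\enn+1+|\mathcal K|\le|W|\le 7\enn+1$, i.e.\ $|\mathcal K|\le\enn$, forcing $|\mathcal K|=\enn$ and making the cover exact.

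I expect the main obstacle to be the ``all $a_i\in W$'' step: one must verify carefully that dropping any single $a_{i_0}$ really pulls the whole dummy chain into $W$, and that the ensuing counting neither double-counts agents nor over-credits $\mathcal K$ (each in-$W$ element agent needs its own covering triple, but one triple may serve up to three such agents). The $W\not\subseteq B$ argument, the $s_0/s_1$ deductions, the spine induction, and the closing arithmetic are all routine once \cref{obs:fas} and the bound $|W|\le 7\enn+1$ are in hand.
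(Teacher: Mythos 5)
Your proof is correct and follows essentially the same strategy as the paper's: force the spine $s_0,s_1,\dots,s_{3\enn}$ into the blocking coalition, use $s_0$'s single friend and enemy count to get the size bound $|W|\le 7\enn+1$, force each element agent to have a covering set agent, and finish by counting to get $|\mathcal{K}|=\enn$. Your two deviations --- establishing $s_0,s_1\in W$ via the ``$W\subseteq B$ would force $W=B$'' argument instead of the paper's minimality/strong-connectivity step through the unique feedback arc $(s_0,s_1)$, and handling the $x_i$'s inside the size count rather than first proving $x_i\notin W$ as the paper does --- are both valid and, if anything, slightly streamline the argument.
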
 

\begin{proof}[Proof of \cref{claim:fas2}]
 \renewcommand{\qedsymbol}{$\diamond$}
Let $P$ be a strictly blocking coalition of $\Pi$. 
We first show that $\{ s_0,s_1, \dots, s_{3\enn}\}\subseteq P$ and for every $i \in [3\enn - 1]$, $x_i \notin P$.

We can assume that $P$ is an inclusionwise minimal blocking coalition, therefore the graph induced by $P$ is strongly connected. 
This means that $s_0$ and $s_1$ must be in $P$. Since $P$ is blocking, each agent must have at least the same number of friends as in $\Pi$. Since $s_1$ has two friends in $\Pi$ by \cref{obs:fas}\eqref{obs:fas,1ii}, it must hold that $x_1$ or $s_2\in P$, but if $x_1\in P$, then also $s_2\in P$ by \cref{obs:fas}\eqref{obs:fas,1ii}. By repeating this argument, we have that $\{ s_0,s_1, \dots, s_{3\enn}\}\subseteq P$. 

If there is $i \in [3\enn - 1]$ such that $x_i\in P$, then by \cref{obs:fas}\eqref{obs:fas,1ii}, $x_i$ must get both her  friends, so $s_{3\enn+1}\in P$. Then $\{s_{3\enn+2}, \dots, s_{4\enn+2}\} \subset P$ by \cref{obs:fas}\eqref{obs:fas,1iii}, because for  every $z \in \{3\enn+1,\dots ,4\enn+2\}$, the only friend of $s_z$ is $s_{z+1}$ (here we take $z + 1$ modulo $4\enn + 3$). But then the size of $P$ is at least $(4\enn+3)+(3\enn-1)=7\enn+2$, since $\{s_0, \dots, s_{4\enn+2}\} \subseteq P$ and each $s_z, z \in [3\enn -1]$ must have a friend other than $s_{z+1}$, which can be either $x_z$ or $a_z$. Then, $s_0$ cannot strictly improve by  \cref{obs:fas}\eqref{obs:fas,1iii}, since she would have at least $7\enn$ enemies. This concludes that for no $i \in [3\enn - 1], x_i \in P$.

As no agent $x_i$ is in $P$ for any $i \in [3\enn -1]$, and each of $\{s_1, \dots,$ $s_{3\enn - 1}\}$ must have 2 friends to join $P$, $\{a_1, \dots, a_{3\enn-1}\} \subset P$. Since every $a_i$, $i \in [3\enn - 1]$ has to obtain a friend in $P$, at least $\enn$ set agents have to be in $P$. Furthermore, if $s_{3\enn+1}\in P$, then as we have seen, for every $z \in \{3\enn + 1, \dots, 4\enn + 2\}$, $s_z \in P$. Thus the size of $P$ would be at least $(4\enn+3)+(3\enn-1)+\enn =8\enn+2>7\enn+2$, so $s_0$ would disimprove, a contradiction. Therefore $s_{3\enn+1}\notin P$ and $s_{3\enn}$ must obtain $a_{3\enn}$ as a friend, hence $a_{3\enn}\in P$. The agent $s_0$ can only improve if the size of $P$ is strictly smaller than $7\enn+2$ by \cref{obs:fas}\eqref{obs:fas,1iii}. Thus only $\enn$ set agents can be included, so these set agents must form an exact cover.
\end{proof}
\begin{clm}
\label{claim:fas3}
If $I$ admits an exact cover, then there is a strictly blocking coalition $P$.
\end{clm}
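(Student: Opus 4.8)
The plan is to produce an explicit strictly blocking coalition directly from the exact cover. Let $\mathcal{K}\subseteq\mathcal{C}$ be an exact cover of $[3\enn]$; since every member of $\mathcal{C}$ has exactly three elements, $|\mathcal{K}|=\enn$. I would take
\[
  P \;:=\; \{s_0,s_1,\dots,s_{3\enn}\} \;\cup\; \{a_i \mid i\in[3\enn]\} \;\cup\; \{c_j \mid C_j\in\mathcal{K}\},
\]
so that $|P| = (3\enn+1) + 3\enn + \enn = 7\enn+1 = |\Pi(s_0)| - 1$. The intuition is that $P$ walks down the chain $s_1\to a_1,\ s_1\to s_2\to a_2,\dots$, collecting every element agent, sends each $a_i$ to the \emph{unique} set agent $c_j$ with $C_j\in\mathcal{K}$ and $i\in C_j$ (well defined because $\mathcal{K}$ partitions $[3\enn]$), and closes the cycle through the arcs $c_j\to s_0\to s_1$; crucially $P$ is one agent smaller than the big special coalition $\Pi(s_0)$, which is exactly what will let $s_0$ gain.

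Next I would verify, agent type by agent type using \cref{obs:fas}, that every member of $P$ \emph{strictly} prefers $P$ to its coalition in $\Pi$. Every element agent $a_i$ has no friend in $\Pi$ but exactly one friend in $P$ (its covering $c_j$), so it strictly improves on friend count alone; the same holds for each $c_j$ with $C_j\in\mathcal{K}$, which goes from zero friends in $\Pi$ to the one friend $s_0$ in $P$. For the special agents the friend count is unchanged while the enemy count strictly drops because $|P|<|\Pi(s_0)|$: $s_0$ keeps its only friend $s_1$ and goes from $7\enn$ to $7\enn-1$ enemies; each $s_i$ with $i\in[3\enn-1]$ keeps two friends ($a_i$ and $s_{i+1}$, trading the lost neighbour $x_i\notin P$ for the gained neighbour $a_i\in P$) and goes from $7\enn-1$ to $7\enn-2$ enemies; and $s_{3\enn}$ keeps its single friend $a_{3\enn}$ (trading the lost $s_{3\enn+1}\notin P$ for $a_{3\enn}\in P$) and goes from $7\enn$ to $7\enn-1$ enemies. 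Hence $P$ strictly blocks $\Pi$, so $\Pi$ is not core stable, which finishes the claim; combined with \cref{claim:fas2} this yields the equivalence ``$\Pi$ is core stable $\iff$ $I$ has no exact cover''.

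I do not expect a genuine obstacle here: the whole argument is just bookkeeping of friends against enemies inside $P$. The only points that need care are (a) checking that no special agent loses a friend when moving into $P$ --- each one exchanges exactly one neighbour for exactly one other --- and (b) confirming $|P|=7\enn+1$ strictly, so that every special agent's enemy count genuinely decreases by one (note that we are forced to include $s_{3\enn}$, and hence $a_{3\enn}$, since otherwise $s_{3\enn-1}$ would be worse off). The same coalition $P$ is strictly (hence weakly) blocking, so the forward direction needed for the \FE-\sverif\ variant in \cref{thm:fas_number1} comes for free from the very same construction.
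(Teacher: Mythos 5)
Your proof is correct and is essentially the paper's own argument: the identical coalition $P=\{s_0,\dots,s_{3\enn}\}\cup\{a_i\mid i\in[3\enn]\}\cup\{c_j\mid C_j\in\mathcal{K}\}$ of size $7\enn+1<7\enn+2$, verified with the same friend/enemy bookkeeping via \cref{obs:fas} (you merely spell out the special agents' friend exchanges more explicitly than the paper does). Only your closing aside is slightly off: for \FE-\sverif\ the paper modifies the construction (deleting $s_{4\enn+2}$), and in that modified instance the very same coalition is only \emph{weakly} blocking, since the special agents then keep the same number of enemies rather than strictly fewer.
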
 
\begin{proof}[Proof of claim \ref{claim:fas3}] \renewcommand{\qedsymbol}{$\diamond$}
  Let $\mathcal{K}$ be an exact cover. Then, let $P = \{a_i \mid i \in [3\enn]\} \cup \{c_j \mid C_j \in \mathcal{K}\} \cup \{s_0, \dots, s_{3\enn}\}$. We will show that $P$ is a blocking coalition.
  The element agents and the set agents from $P$ each have one friend,
  so they strictly improve by observation \cref{obs:fas}\eqref{obs:fas,1i}.
  Finally, the special agents~$\{s_0, \dots, s_{3\enn}\}$ have the same number of friends, but since the size of the coalition is only $(3\enn+1)+3\enn+\enn =7\enn+1<7\enn+2$, each of them has fewer enemies, so they strictly improve.
\end{proof}

Claims \ref{claim:fas2} and \ref{claim:fas3} prove the correctness of the construction.

For \sverif\, the only difference in the construction is that we have one less special agent (i.e., we delete $s_{4\enn +2}$ and add the edge $(s_{4\enn + 1}, s_0$). The initial coalition for $s_0$ is $\{s_0,\dots,s_{4\enn+1}, x_1,\dots,x_{3\enn-1}\}$.
This means that $\{s_0, \dots, s_{3\enn}\}$ all have one less enemy originally in $\Pi$.  

Hence, we replace \cref{obs:fas}\eqref{obs:fas,1ii} - \eqref{obs:fas,1iii} by following:

\begin{compactenum}[({i}i')]
\item  \label{obs:fas,1ii_S}For every $z \in [3\enn - 1]$, $s_z$ and $x_z$ each have two friends and $7\enn -2$ enemies.
\item  \label{obs:fas,1iii_S} For every $z \in \{0,3\enn, \dots, 4\enn+2\}$, $s_z$ has 1 friend and $7\enn -1$ enemies.
\end{compactenum}

For the sake of completeness, we give a proof that a weakly blocking coalition implies the existence of an exact cover.

\begin{clm}
\label{claim:fas4}
If $\Pi$ is weakly blocked by some coalition, then $I$ admits an exact cover.
\end{clm}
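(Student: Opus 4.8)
The plan is to reuse the structure of the proof of \cref{claim:fas2}, now for the \FE-\sverif\ construction described above (the special agent~$s_{4\enn+2}$ deleted, the arc~$(s_{4\enn+1},s_0)$ added, and the initial ``big'' coalition being $B\coloneqq\{s_0,\dots,s_{4\enn+1},x_1,\dots,x_{3\enn-1}\}$) together with the revised observations~(ii') and (iii'). The key difference to keep in mind is that $|B|=7\enn+1$ is one smaller than before, so every special agent is now ``tight''. Given a weakly blocking coalition~$P$ of~$\Pi$, I would set $\mathcal{K}\coloneqq\{C_j\mid c_j\in P\}$ and aim to prove that $\mathcal{K}$ is an exact cover. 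First I would note that $|P|\ge 2$ (a singleton can never strictly improve on its $\Pi$-coalition), and then that the subgraph induced by~$P$ must contain a cycle: otherwise it has a sink vertex~$w$ with no friend inside~$P$, but then $w$ strictly prefers its $\Pi$-coalition to~$P$ (that coalition is a singleton if $w$ is an element or set agent, and otherwise $w$ has a friend in~$\Pi$ but none in~$P$), contradicting weak blocking. Since $(s_0,s_1)$ is the unique feedback arc, this cycle forces $s_0,s_1\in P$. Propagating the friend-count requirement from~(ii') along the selector chain, exactly as in \cref{claim:fas2} ($s_z\in P$ forces $s_{z+1}\in P$, possibly via~$x_z$, which then needs both its friends $s_{z+1}$ and $s_{3\enn+1}$), I would then obtain $\{s_0,s_1,\dots,s_{3\enn}\}\subseteq P$.

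The next step is to rule out $x_i\in P$ for all $i\in[3\enn-1]$. If $x_i\in P$, then $s_{3\enn+1}\in P$, and the single-friend chain $s_{3\enn+1}\to\cdots\to s_{4\enn+1}\to s_0$ forces $\{s_0,\dots,s_{4\enn+1}\}\subseteq P$; since every $s_z$, $z\in[3\enn-1]$, still needs a second friend from $\{x_z,a_z\}$, this gives $|P|\ge 7\enn+1$. On the other hand, $s_0$ has exactly one friend ($s_1$) in~$P$ and, by~(iii'), $7\enn-1$ enemies in~$B$, so weak preference forces $|P|\le 7\enn+1$. Hence $|P|=7\enn+1$ and $P=\{s_0,\dots,s_{4\enn+1}\}\cup\{g_z\mid z\in[3\enn-1]\}$ with $g_z\in\{x_z,a_z\}$. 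If some $g_z=a_z$, then $a_z$ has no friend in~$P$ (its friends are set agents, and none of those lie in~$P$) but has a positive number of enemies, so it strictly disimproves, a contradiction. Thus all $g_z=x_z$, i.e.\ $P=B$, in which case every agent of~$P$ is indifferent between~$P$ and its $\Pi$-coalition, so no agent strictly improves, again contradicting that $P$ is weakly blocking.

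Once $x_i\notin P$ for all $i$, each $s_z$, $z\in[3\enn-1]$, must obtain both remaining friends, so $\{a_1,\dots,a_{3\enn-1},s_1,\dots,s_{3\enn}\}\subseteq P$. I would then show $s_{3\enn+1}\notin P$: otherwise the single-friend chain again forces $\{s_0,\dots,s_{4\enn+1}\}\subseteq P$, and since each $a_z$ now needs a set-agent friend and covering $[3\enn-1]$ with $3$-element sets requires at least $\enn$ distinct set agents, $|P|\ge 8\enn+1>7\enn+1$, contradicting $s_0$'s weak preference. Hence $s_{3\enn}$ must obtain $a_{3\enn}$, so $\{a_1,\dots,a_{3\enn}\}\subseteq P$, and then every $a_z$, $z\in[3\enn]$, needs a friend $c_j\in P$ with $z\in C_j$ --- so $\mathcal{K}$ is a set cover. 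Finally, since $\{s_0,\dots,s_{3\enn}\}$, $\{a_1,\dots,a_{3\enn}\}$ and $\{c_j\mid C_j\in\mathcal{K}\}$ are pairwise disjoint, $|P|\ge 6\enn+1+|\mathcal{K}|$, while $|P|\le 7\enn+1$; hence $|\mathcal{K}|\le\enn$, and since $\mathcal{K}$ covers a $3\enn$-element universe with $3$-element sets, $|\mathcal{K}|=\enn$ and $\mathcal{K}$ is an exact cover. (This direction is the only one requiring a new argument for the \FE-\sverif\ variant; the converse, that an exact cover yields a weakly blocking coalition, follows from the analogue of \cref{claim:fas3}.)

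The step I expect to be the main obstacle is excluding the degenerate possibility $P=B$ in the second paragraph. Because a weakly blocking coalition only needs one strictly improving agent, $P$ could a priori coincide with (a size-$(7\enn+1)$ relative of) the initial block, with all its members merely indifferent. Deleting a single dummy agent is exactly what rules this out: it tightens the enemy counts of the special agents enough that any size-$(7\enn+1)$ candidate either equals~$B$ (no strict improver) or contains an~$a_z$ with no friend in~$P$ (a strict disimprover), so a genuine strict improver must pull a set or element agent into~$P$ --- which reinstates the counting argument from the strict case.
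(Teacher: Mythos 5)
Your proof is correct and follows essentially the same route as the paper's: propagate the friend-count requirements along the selector chain to force $\{s_0,\dots,s_{3\enn}\}\subseteq P$, use $s_0$'s enemy budget to cap $|P|$ at $7\enn+1$, rule out the $x_i$'s, and then count to conclude that exactly $\enn$ set agents (an exact cover) lie in~$P$. The only deviations are cosmetic and in fact slightly tighten the write-up: you obtain $s_0,s_1\in P$ via the unique feedback arc $(s_0,s_1)$ on a cycle inside~$P$ rather than via the paper's ``minimal weakly blocking coalitions are strongly connected'' assumption, and you make explicit the case analysis ($g_z=a_z$ strictly disimproves versus $P=B$ yields no strict improver) behind the paper's terse ``no agent strictly improves'' step.
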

\begin{proof}[Proof of claim \ref{claim:fas4}] \renewcommand{\qedsymbol}{$\diamond$}
Let $P$ be a weakly blocking coalition of $\Pi$. 
We can again assume that $P$ is an inclusionwise minimal blocking coalition, therefore the graph induced by $P$ is strongly connected.  

This means that $s_0$ and $s_1$ must be in $P$. Since $s_1$ has two friends in $\Pi$ by \cref{obs:fas}(i\ref{obs:fas,1ii_S}'), it must hold that $x_1$ or $s_2\in P$, but if $x_1\in P$, then also $s_2\in P$ by \cref{obs:fas}(i\ref{obs:fas,1ii_S}'). Continuing using this argument, we can see that $\{ s_0,s_1, \dots, s_{3\enn}\}\subset P$. 

If there is $i \in [3\enn - 1]$ such that $x_i\in P$, then by \cref{obs:fas}(i\ref{obs:fas,1ii_S}'), $x_i$ must get both her  friends, so $s_{3\enn+1}\in P$. As for every $z \in \{3\enn+1,\dots ,4\enn +1\}$ the only friend of $s_z$ is $s_{z+1}$, we have that $s_{3\enn+2}, \dots, s_{4\enn+1}$ are also in $P$ by  \cref{obs:fas}(i\ref{obs:fas,1iii_S}'). Then, the size of $P$ is at least $(4\enn+2)+(3\enn-1)=7\enn+1$, since all $\{s_0, \dots, s_{4\enn - 1}\} \subseteq P$ and for every $z \in \{1, \dots, 3\enn-1\}, s_z$ must have a friend other than $s_{z+1}$, which can be either $x_z$ or $a_z$. But then, $s_0$ must have at least $7\enn$ enemies, so for $s_0$ to weakly improve, there cannot be any more agents inside $P$. But this would mean that no agent strictly improves, a contradiction. Thus there is no $i \in [3\enn - 1]$ such that $x_i \in P$,

Therefore, for each of $\{s_1, \dots, s_{3\enn - 1}\}$ to have 2 friends, we must have that $\{a_1, \dots, a_{3\enn-1}\} \subset P$. Hence, since they all must get a friend to weakly improve, at least $\enn$ set agents have to be in $P$. Furthermore, if $s_{3\enn+1}\in P$, then as we have seen, $\{s_{3\enn + 1}, \dots, s_{4\enn + 1}\} \subset P$. Then the size of $P$ must be at least $(4\enn+2)+(3\enn-1)+\enn =8\enn+1>7\enn+1$, so $s_0$ would be worse off. Therefore $s_{3\enn}$ must obtain only $a_{3\enn}$ as a friend and hence $a_{3\enn}\in P$. For the size of $P$ to be at most $7\enn+1$ ($s_0$ can only weakly improve if this holds by \cref{obs:fas}(i\ref{obs:fas,1iii_S}')), only $\enn$ set agents can be included, so these set agents must form an exact cover.
\end{proof}

The other direction is similar to the \verif case.
The weakly blocking coalition~$P$ will be exactly the same with the only difference being that the special agents~$\{s_0, \dots, s_{3\enn}\}$ obtain the same number of enemies as in $\Pi$, so they only weakly improve.
\end{proof}
}

\noindent Additionally bounding the maximum degree~$\maxdeg$ does not help to break down the complexity.
\newcommand{\fasfasdelta}{%
  \FE-\verif\ and \FE-\sverif\ are \conp-com\-plete, even if $\fas=2$ and $\maxdeg=5$. %
}
\begin{theorem}[\appendixsymb]
\label{thm:sverif_cont_fas+delta}
\fasfasdelta %
\end{theorem}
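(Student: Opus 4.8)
The plan is to take the reduction from \pxct\ used in the proof of \cref{thm:fas_number1} and locally repair the two agents whose degree is unbounded there, paying a single extra feedback arc. In that construction every agent has at most three \emph{friends}, so the only obstacle to bounding \maxdeg\ is the in-degree, and exactly two agents offend: the set selector $s_0$, which is the unique friend of all $\emm$ set agents $c_1,\dots,c_{\emm}$ (in-degree $\approx\emm$), and the agent $s_{3\enn+1}$, which is the common target of all shortcut agents $x_1,\dots,x_{3\enn-1}$ (in-degree $\approx 3\enn$). In both places the offending vertex is a \emph{collector}: whenever an incident agent lands in a blocking coalition $P$, the collector is forced into $P$ as well, and for $s_0$ it is precisely the number of its enemies inside $P$ that forces $|P|$ to be small, hence $|\mathcal K|\le\enn$.

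First I would give every set agent $c_j$ and every shortcut agent $x_i$ a private directed friendship path of a fixed length $\ell$, and let all of these private paths feed into a single common \emph{funnel cycle} $F$ of fresh agents whose last distinguished vertex points to $s_0$; here $s_0$ keeps its old role and old position in the main cycle but loses the direct arcs $c_j\to s_0$. Because $F$ is a cycle it is ``all or nothing'': as soon as one private path reaches it, the whole of $F$ is dragged into $P$, so the number of new agents pulled in equals $\ell\cdot|\{j:c_j\in P\}|+|F|+1$, a \emph{fixed} function of $|\mathcal K\cap P|$; this is the key point, and it is why a path instead of a cycle would not do. To keep $F$ (and sub-arcs of it, and the new private-path agents) from forming blocking coalitions on their own, each vertex of $F$ gets, besides its cycle-successor, one element agent as a second friend (distributed so each element agent gains at most one such arc), $F$ forms its own coalition in $\Pi$, and the private-path agents are singletons in $\Pi$. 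Choosing $|F|$ of order $\enn$ and the length of the shortcut agents' private paths large enough makes the inclusion of any $x_i$ in a blocking coalition blow the size bound, exactly as $s_{3\enn+1}$ did before. The resulting graph has one new directed cycle, so its feedback arc set number is $2$ (the old arc plus one arc of $F$), and a routine inspection of all in- and out-neighborhoods shows the maximum degree is $5$.

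Then I would re-run the two implications of \cref{thm:fas_number1} essentially verbatim. For the forward direction, from an exact cover $\mathcal K$ I would exhibit the coalition consisting of $\{s_1,\dots,s_{3\enn}\}$, all element agents, the $c_j$'s with $C_j\in\mathcal K$ together with their private paths, the whole funnel $F$, and $s_0$; one checks that each element/set/path agent gains a friend, each vertex of $F$ gains its second friend, and $s_0$ loses enough enemies because $|\mathcal K|=\enn$ keeps the coalition below the calibrated threshold. For the backward direction, a minimal (weakly) blocking coalition $P$ is strongly connected, hence contains the feedback arcs, hence $s_0$ and $s_1$; propagating along the selector chain forces $\{s_1,\dots,s_{3\enn}\}\subseteq P$; the shortcut agents are excluded by the size argument above; so every $a_i$ is in $P$, which forces a set cover, and since any $c_j\in P$ drags in all of $F$, the enemy count at $s_0$ forces the cover to be exact. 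The \sverif\ variant follows from the same removal of one dummy agent and shift of one arc as in \cref{thm:fas_number1}.

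The main obstacle is the design of the funnel gadget: it must preserve the ``forced-inclusion'' behaviour of both $s_0$ and $s_{3\enn+1}$, it must not create any spurious (weakly) blocking coalition inside itself or among the new private-path agents --- which is exactly why $F$ is a cycle with \emph{external} element-agent friends rather than a vertex-padded coalition, and why two-cycles must be avoided throughout --- and all of this together may cost at most one extra feedback arc, which is what forces the single funnel $F$ to absorb the roles of both collectors. Getting the counts in $\Pi$ and in the blocking coalition to line up so that the improvement condition at $s_0$ becomes \emph{exactly} equivalent to ``$|\mathcal K|\le\enn$'' is where the bookkeeping has to be done with care.
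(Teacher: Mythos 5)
There is a genuine gap, and it sits exactly in the design decision you single out as the key one: making the funnel vertices' second friends \emph{element} agents. Under friend-oriented preferences an agent strictly improves the moment she gains an additional friend, no matter how many enemies she picks up; enemy/size counting only constrains agents whose number of friends stays the same. This is what tethers the set agents to $s_0$ in the proof of \cref{thm:fas_number1}: their unique friend is $s_0$, and $s_0$ can only improve under the calibrated size bound. Once you reroute $c_j$'s unique friendship into a private path ending in a funnel cycle $F$ whose vertices befriend element agents, the whole gadget becomes self-sufficient and the tether is gone. Concretely, the coalition $P=F\cup\{a_i\mid i\in[3\enn]\}\cup\{c_j\mid j\in[\emm]\}\cup\{\text{all private path agents of the }c_j\}$ strictly blocks your $\Pi$ \emph{regardless of the \pxct\ instance}: every path agent and every element agent goes from $0$ friends to at least $1$, every set agent gains its path head, and every funnel vertex gains its element partner on top of its cycle successor ($2$ friends versus $1$ in $\Pi$). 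Hence $\Pi$ is never (strictly) core stable and the backward direction of your reduction fails on no-instances. (Orienting the partner arcs the other way, from elements into $F$, does not help either: then the funnel vertices cannot strictly improve in your intended blocking coalition and the forward direction breaks.)

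The $\fas$ claim fails for the same structural reason. The arcs $F_v\to a_i\to c_j\to(\text{private path})\to F_w$ followed by a forward segment of $F$ form directed cycles that avoid the special cycle entirely; since \emph{every} funnel vertex has an element partner and every element lies in at least two sets, for any single deleted arc of $F$ one can pick $v$ to be the funnel vertex just before that arc and obtain such a cycle that survives, so on top of the arc needed for the special cycle you need at least two more, i.e.\ $\fas\ge 3$ for your graph. Both defects are avoided in the paper's construction for \cref{thm:sverif_cont_fas+delta} precisely by the choice you rejected: the set agents' collector is replaced by a helper cycle $t_0,\dots,t_{3\enn-1}$ whose vertices take \emph{special} agents as their second friends (one distinct $s_l$ each, and the $x_z$'s likewise point to distinct far-away specials, which is why the special chain is lengthened to about $10\enn$ agents). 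Then the helper cycle cannot improve without dragging in the size-calibrated special machinery, so no spurious blocking coalition exists, and every new cycle passes through the special chain, so two feedback arcs suffice. Repairing your funnel while keeping $\maxdeg=5$ and $\fas=2$ essentially forces you into that design, so this is not a bookkeeping issue but a missing structural idea.
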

\appendixproofwithstatement{thm:sverif_cont_fas+delta}{\fasfasdelta}{
  \begin{proof}%
  We first consider \FE-\sverif\ and provide a polynomial reduction from the \np-complete \xctg problem. In contrast to \pxct, each element appears in exactly three sets.
\decprob{\xctg}
{A $3\enn$-element set~$\mathcal{X}=[3\enn]$ and a collection~$\mathcal{C}=(C_1,\ldots,C_{3\enn})$ of $3$-element subsets of~$X$ such that each element~$i\in X$ appears in exactly three members of~$\mathcal{C}$.}{Does~$\mathcal{C}$ contain an \myemph{exact cover} for~$X$, i.e., a subcollection~$\mathcal{K} \subseteq \mathcal{C}$ such that each element of~$X$ occurs in exactly one member of~$\mathcal{K}$?}
  
  Let $I=([3\enn],\mathcal{C})$ denote an instance of \xctg, where
  the sets are $\mathcal{C}=\{C_0,\dots,C_{3\enn -1}\}$; for the sake of easier reasoning we assume that the set index starts with $0$. We construct an instance of our verification problem as follows.

     \begin{figure}
       \centering
       \begin{tikzpicture}[scale=1,every node/.style={scale=0.9}, >=stealth', shorten <= 2pt, shorten >= 2pt]

         \begin{scope}[xshift=-1.8cm, yshift=-1cm]
           \foreach  \x / \y / \n  / \st in
           {1.2/0/a1/pn, 1.2/-.5/a2/pn, 1.2/-1/a3np/pnn, 1.2/-1.5/a3n/pn,
             -1.5/0/x1/pn, -1.5/-.5/x2/pn, -1.5/-1/x3np/pnn, -1.5/-1.5/x3n/pn,
             0/0.25/s3n/pn, 0/-0.25/s3n1/pn, 0/-0.75/s3np/pnn, 0/-1.25/s6n/pn}
           {
             \node[\st] at (\x*0.8, \y) (\n) {};
           }
         \end{scope}

         \begin{scope}[xshift=.5cm, yshift=-.7cm]
           
           \foreach  \x / \y / \n / \st in
           {0/0/c1/pn, 0/-.5/c2/pn, 0/-1/cmp/pnn, 0/-1.5/cm/pn}
           {
             \node[\st] at (\x, \y) (\n) {};
           }
         \end{scope}

         \begin{scope}[xshift=2cm, yshift=-.7cm]
           
           \foreach  \x / \y / \n / \st in
           {0/0/t0/pn, 0/-.5/t1/pn, 0/-1/tmp/pnn, 0/-1.5/tm/pn,
             1/0/s0/pn, 1/-.5/s1/pn, 1/-1/snp/pnn, 1/-1.5/sm/pn}
           {
             \node[\st] at (\x, \y) (\n) {};
           }
         \end{scope}

         \begin{scope}[xshift=0cm, yshift=-3cm]
    
           \foreach  \x / \y / \n / \st in
           { -2.5/0/s66n/pn, -2/0/s6n1/pn, -1/0/snhp/pnn, 0/0/s9nm1/pn, .5/0/s9n/pn, 1/0/smpp/pnn, 1.5/0/s10nm1/pn, 2/0/s10n/pn}
           {
             \node[\st] at (\x, \y) (\n) {};
           }
         \end{scope}

         \foreach \n / \nn / \p / \l / \r  / \c in
         { x1/{x_1}/above left/1/-1/black, x2/{x_2}/above left/1/-1/black, x3n/{x_{3\enn-1}}/above left/0/-1/black,
           a1/{a_1}/below/1/0/black, a2/{a_2}/below/0/0/black, a3n/{a_{3\enn}}/below/1/0/black, 
            c1/{c_0}/above right/0/0/black, c2/{c_2}/above right/0/0/black, cm/{c_{3\enn-1}}/above right/0/0/black,
             t0/{t_0}/above right/0/0/black, t1/{t_1}/above right/0/0/black, tm/{t_{3\enn-1}}/above right/0/0/black,
             s3n/{s_{3\enn}}/above left/0/0/black, s3n1/{s_{3\enn+1}}/above right/0/0/black, s6n/{s_{6\enn-1}}/above left/0/0/black, 
            s0/{s_{0}}/above/-1/0/black, s1/{s_{1}}/above right/0/0/black, sm/{s_{3\enn-1}}/above right/0/0/black,
            s66n/{s_{6\enn}}/left/0/0/black, s6n1/{s_{6\enn+1}}/below right/0/0/black,
           s10n/{s_{10\enn}}/below/0/0/black, %
            s9nm1/{s_{9\enn-1}}/below/0/0/black}{
           \node[\p = \l pt and \r pt of \n, text=\c, inner sep=.5pt, fill=white] {\small ${\nn}$};
         }

         \node[xshift=-5pt] at (smpp) {$\ldots$};
         \node[xshift=5pt] at (snhp) {$\ldots$};
         \node[rotate=90] at (cmp) {$\ldots$};
         \node[rotate=90,xshift=-1pt] at (a3np) {$\ldots$};
         \node[rotate=90] at (x3np) {$\ldots$};
         \node[rotate=90,xshift=5pt] at (s3np) {$\ldots$};
         \node[rotate=90,xshift=5pt] at (tmp) {$\ldots$};
         \node[rotate=90,xshift=5pt] at (snp) {$\ldots$};

         \begin{pgfonlayer}{bg}
           \foreach \s / \t / \aa / \type in {
             s3n/a1/-5/fc, s3n1/a2/-5/fc,  
             s6n/a3n/-5/fc,
             s3n/s3n1/0/fc, s3np/s6n/0/fc,
             s3n/x1/0/fc, x1/s3n1/0/fc, s3n1/x2/0/fc, x2/s3np/0/fc, x1/s66n/0/fc, %
              x3n/s6n/0/fc,
             s6n/s66n/0/fc, s66n/s6n1/0/fc, s6n1/snhp/0/fc,
             x3n/s9nm1/0/fc, s9nm1/s9n/0/fc, smpp/s10nm1/0/fc, s10nm1/s10n/0/fc,
             t0/t1/0/fc, 
             tmp/tm/0/fc,  
              s0/s1/0/fc, %
             snp/sm/0/fc,
             c1/t0/0/fc,c2/t1/0/fc,cm/tm/0/fc} {
             \draw[->, \type] (\s) edge[bend right = \aa] (\t);
           }

            \foreach \i in {0,1,m} { 
             \draw[->, fc] (t\i) edge[] (s\i);
           }

			\draw[yellow, very thick, rounded corners] (sm.south) .. controls ($(sm)+(0, -0.4)$) and ($(sm)+(1.2, -0.4)$) .. ($(sm)+(1.2, 0.2)$) .. controls ($(s0)+(1.2, 0.5)$) and ($(s0)+(1.2, .8)$)   .. (s3n.east);

           \draw[->, fc, rounded corners] (sm.south) .. controls ($(sm)+(0, -0.4)$) and ($(sm)+(1.2, -0.4)$) .. ($(sm)+(1.2, 0.2)$) .. controls ($(s0)+(1.2, 0.5)$) and ($(s0)+(1.2, .8)$)   .. (s3n.east);
           
            \draw[->, fc, rounded corners] (s10n.west) .. controls ($(s10n)+(2, 0)$) and ($(s0)+(2, 0)$)   .. (s0.east);
            
            \draw[yellow, very thick, rounded corners] (tm.south) .. controls ($(tm)+(0, -0.4)$) and ($(tm)+(-0.4, -0.4)$) .. ($(tm)+(-0.4, 0.2)$) .. controls ($(t0)+(-0.4, 0.2)$) and ($(t0)+(-0.4, .4)$)   .. (t0.west);
             
             \draw[->, fc, rounded corners] (tm.south) .. controls ($(tm)+(0, -0.4)$) and ($(tm)+(-0.4, -0.4)$) .. ($(tm)+(-0.4, 0.2)$) .. controls ($(t0)+(-0.4, 0.2)$) and ($(t0)+(-0.4, .4)$)   .. (t0.west);
              
             \draw[->, fc, rounded corners] (x2.west) .. controls ($(x2)+(-0.4, 0)$) and ($(x2)+(-0.4, -0.4)$) .. ($(x3n)+( -0.4, -0.4)$) .. controls ($(s66n)+(0, -0.4)$) and ($(s6n1)+(-0.4, -0.4)$)   .. (s6n1.south);

           \draw[->] (a1) edge (c2);
           \draw[->] (a1) edge ($(c2)!0.5!(cmp)$);
           \draw[->] (a1) edge ($(c2)!0.9!(cmp)$);

           \draw[->] (a2) edge (c1);
           \draw[->] (a2) edge (cmp);
           
           \draw[->] (a3n) edge (cmp);
           \draw[->] (a3n) edge ($(cmp)!0.5!(cm)$);
           \draw[->] (a3n) edge (cm);
         \end{pgfonlayer}
     
       \end{tikzpicture}\caption{The construction in \cref{thm:sverif_cont_fas+delta} where \fas = 2, the highlighted arcs are the feedback arcs.}   \label{fig:sverif_const_fas+delta}
     \end{figure}

\begin{itemize}[--]
    \item For each element $i\in [3\enn]$, create an element agent $a_i$.
    \item For each set $C_j\in \mathcal{C}$, create a set agent $c_j$ and a helper agent $t_j$; note that $j$ starts with $0$.
    \item Create $10\enn$ special agents $\{s_0,s_1, \dots, s_{10\enn}\}$. %
    \item Create $3\enn-1$ dummy agents $\{x_1, \dots, x_{3\enn -1}\}$.
\end{itemize}
\noindent The friendship graph has the following arcs. %

\noindent $E(\goodG)=\{(s_{z},s_{z+1}) \mid z\in \{0\, \dots, 10\enn\}\}$
    $\bigcup \{(a_i,c_j) \mid$ for each~$i\in [3\enn]$ and $C_j\in \mathcal{C}$ with $i\in C_j\} \bigcup$
    $\{(s_{3\enn-1+z},x_z),(x_z,s_{3\enn +z}),(x_z,$ $s_{6\enn-1 +z}) \mid$ $z \in [3\enn -1]\}$
    $\bigcup \{ (t_l,t_{l+1}), \{(c_l,t_l), (t_l,s_l) \mid l\in \{0, \dots, 3\enn-1 \}\}
    \bigcup \{ (s_{3\enn-1 +i},a_i)\mid i\in [3\enn ]\}$,
    where $z+1$ is taken modulo $10\enn$ and $l+1$ is taken modulo $3\enn$.

Before specifying the initial partition, we analyze the maximum degree and the feedback arc set number.
First, we observe that the friendship graph has two feedback arcs:
Delete the arcs $(s_{3\enn-1},s_{3\enn}),(t_{3\enn-1},t_0)$.
Then, the order $(s_{3\enn},x_1,s_{3\enn +1},x_2, \dots, s_{6\enn -2},$ $x_{3\enn -1},s_{6\enn-1},
\dots, s_{10\enn},
a_1, \dots, a_{3\enn },c_0, \dots, c_{3\enn-1}, t_0,\dots, t_{3\enn-1}, s_0,$ $\dots,$ $s_{3\enn - 1})$ is a topological order of the vertices, so the remaining digraph is acyclic.

It is also straightforward to verify the following.
\begin{compactitem}[--]
  \item For every $i \in [3\enn]$, the element agent~$a_i$ has three friends and is a friend to the agent~$s_{3\enn-1+i}$.
  \item For every set $C_j \in \mathcal{C}$, the set agent~$c_j$ has one friend and is a friend to exactly three element agents.
  \item For every $z \in \{0, \dots, 3\enn-1, 6\enn, \dots,10\enn \}$ the agent $s_z$
  has one friend and is a friend to at most two agents, while the agent~$s_{6\enn-1}$ has two friends and is a friend of two agents. 
  \item For every $l \in [3\enn - 1]$, $x_l$ has two friends and is a friend to one agent. %
  \item For every $z \in \{3\enn, \dots, 6\enn - 2\}$, $s_z$ has three friends and is a friend to two agents. 
\end{compactitem}
Summarizing, every agent has at most five in- and out-neighbors in the friendship graph.

To complete the construction, we define the initial partition~$\Pi \coloneqq \{\{s_0, \dots, s_{10\enn}, x_1, \dots, x_{3\enn-1}\}\} \cup \{\{t_0, \dots, t_{3\enn-1}\}\} \cup \{\{c_j\} 
\mid j \in \{0,$ $ \dots, 3\enn - 1\}\} \cup \{\{a_i\} \mid i \in [3\enn]\}$.

We observe the following. %
\begin{observation} \label{obs:fas+deg}
   It holds that in the initial partition~$\Pi$, 
   \begin{compactenum}[(i)] 
  \item \label{obs:fas+deg,1i} For every $i \in [3\enn], C_j \in \mathcal{C}$, agents~$a_i$ and $c_j$ have each no friends and no enemies,
  \item \label{obs:fas+deg,1ii}For every $z \in \{3\enn, \dots, 6\enn - 2\}, l \in [3\enn - 1]$ agents $s_z$ and $x_l$ both have two friends and $13\enn -3$ enemies,
  \item  \label{obs:fas+deg,1iii}For every $z \in \{0, \dots, 3\enn - 1, 6\enn - 1, \dots, 10\enn\}$ agent $s_z$ has one friend and $13\enn -2$ enemies, and 
  \item \label{obs:fas+deg,1iv} For every $l\in \{0, \dots, 3\enn-1\}$, $t_l$
  has one friend and $3\enn -2$ enemies. 
  \end{compactenum}
\end{observation}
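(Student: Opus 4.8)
The plan is to verify the four counts by directly reading off, for each agent~$v$, the coalition $\Pi(v)$ containing it together with those out-neighbours of~$v$ in the friendship graph~$\goodG$ that also lie in $\Pi(v)$. The one fact worth isolating at the start is that in the \FE\ model there are no neutral agents: inside any coalition~$W$ with $v\in W$, every agent other than~$v$ is a friend of~$v$ (i.e.\ in $N^+_{\goodG}(v)$) or an enemy, so the number of enemies of~$v$ in~$W$ equals $|W|-1-|N^+_{\goodG}(v)\cap W|$. Hence it suffices, for each type of agent, to (a) name its coalition and compute that coalition's size and (b) count its out-arcs of $\goodG$ that stay inside the coalition.

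First I would record the sizes. The non-singleton coalitions of $\Pi$ are the ``selector'' coalition $B\coloneqq\{s_0,\dots,s_{10\enn}\}\cup\{x_1,\dots,x_{3\enn-1}\}$, of size $(10\enn+1)+(3\enn-1)=13\enn$, and the ``helper'' coalition $T\coloneqq\{t_0,\dots,t_{3\enn-1}\}$, of size $3\enn$; every $a_i$ and every $c_j$ forms its own coalition. Statement~(i) is then immediate, a lone agent having no friends and no enemies. For statement~(iv), the only out-arcs of $t_l$ in $\goodG$ are $(t_l,t_{l+1})$ (with $t_{3\enn}$ read as $t_0$) and $(t_l,s_l)$; only $t_{l+1}$ lies in~$T$, so $t_l$ has one friend and $3\enn-1-1=3\enn-2$ enemies in~$T$.

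For statements~(ii) and~(iii) I would run through the members of~$B$. The point requiring care is to track which $s$-vertices are the tails of the arcs $(s_{3\enn-1+z},x_z)$ ($z\in[3\enn-1]$) and $(s_{3\enn-1+i},a_i)$ ($i\in[3\enn]$), namely indices in $\{3\enn,\dots,6\enn-2\}$ and $\{3\enn,\dots,6\enn-1\}$ respectively. For $z\in\{3\enn,\dots,6\enn-2\}$ the out-arcs of $s_z$ go to $s_{z+1}\in B$, to $x_{z-3\enn+1}\in B$ (as $z-3\enn+1\in[3\enn-1]$) and to $a_{z-3\enn+1}\notin B$, so $s_z$ has $2$ friends and $13\enn-3$ enemies in~$B$. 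Similarly $x_l$ ($l\in[3\enn-1]$) has out-arcs only to $s_{3\enn+l}$ and $s_{6\enn-1+l}$, both of index at most $9\enn-2\le 10\enn$ and hence in~$B$, again yielding $2$ friends and $13\enn-3$ enemies; this is statement~(ii). For statement~(iii), if $z\in\{0,\dots,3\enn-1\}\cup\{6\enn-1,\dots,10\enn\}$ then $s_z$ is the tail of no $x$-arc, and the only $a$-arc possibly leaving such an $s_z$ (from $s_{6\enn-1}$) goes to $a_{3\enn}\notin B$, so the sole out-neighbour of $s_z$ in~$B$ is its successor on the cycle of special agents; thus $s_z$ has $1$ friend and $13\enn-2$ enemies.

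I do not expect any genuine difficulty here: the proof is pure bookkeeping. The only things to be careful about are the index arithmetic across the several arc families that define~$\goodG$, and being consistent about the length of the special-agent cycle --- note that the claimed enemy counts $13\enn-3$ and $13\enn-2$ already pin down $|B|=13\enn$, i.e.\ that there are $10\enn+1$ special agents $s_0,\dots,s_{10\enn}$ forming a single directed cycle.
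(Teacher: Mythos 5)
Your proposal is correct and matches the paper, which states this observation without further argument as a direct bookkeeping check of the construction: the counts follow exactly as you compute them from the coalition sizes ($13\enn$ for the special/dummy coalition, $3\enn$ for the helpers, singletons for $a_i,c_j$) and the out-arcs staying inside each coalition. Your remark about the special-agent count is also well taken — the enemy counts indeed force $10\enn+1$ agents $s_0,\dots,s_{10\enn}$, consistent with the listed set and the initial partition despite the paper's prose saying ``$10\enn$ special agents.''
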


It remains to show the correctness, i.e., $I$ admits an exact cover if and only if $\Pi$ is not strictly core stable, i.e., $\Pi$ is weakly blocked by some coalition.
The ``only if'' part is shown by the following.
\begin{clm}
\label{claim:fas+deg2}
If $I$ admits an exact cover, then $\Pi$ is weakly blocked by some coalition.
\end{clm}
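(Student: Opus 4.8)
The plan is to turn an exact cover directly into an explicit weakly blocking coalition, mirroring the exact‑cover direction of the previous reduction but now also carrying the helper agents $t_l$. Suppose $\mathcal{K}\subseteq\mathcal{C}$ is an exact cover, so $|\mathcal{K}|=\enn$. I would take
\[
  P \;\coloneqq\; \{a_i \mid i\in[3\enn]\}\,\cup\,\{c_j \mid C_j\in\mathcal{K}\}\,\cup\,\{t_l \mid l\in\{0,\dots,3\enn-1\}\}\,\cup\,\{s_z \mid z\in\{0,\dots,6\enn-1\}\},
\]
so that $|P| = 3\enn + \enn + 3\enn + 6\enn = 13\enn$. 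The choice of $P$ is dictated by a chain of needs: each element agent $a_i$ wants a friend and, since $\mathcal{K}$ is an \emph{exact} cover, gets exactly one set‑agent friend $c_j$ with $C_j\in\mathcal{K}$; each such $c_j$ has the single friend $t_j$, forcing all chosen helpers in; each helper $t_l$ has friends $t_{l+1}$ and $s_l$, so I include \emph{all} $3\enn$ helpers (closing the directed $t$‑cycle, so $t_{l+1}\in P$) together with the $s$‑prefix $s_0,\dots,s_{3\enn-1}$ (so $s_l\in P$ for $l\le 3\enn-1$); finally the $s$‑path must be closed under ``taking the next vertex on the path'', which drags in $s_{3\enn},\dots,s_{6\enn-1}$, and there it stops, because $s_{6\enn-1}$'s remaining friend $a_{3\enn}$ already lies in $P$.

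Next I would verify, agent by agent and using \cref{obs:fas+deg}, that every member of $P$ weakly prefers $P$ to $\Pi$ and that at least one strictly prefers it. Each $a_i$ has no friend in $\Pi$ (\cref{obs:fas+deg}\eqref{obs:fas+deg,1i}) and exactly one friend in $P$, hence strictly improves; each $c_j$ with $C_j\in\mathcal{K}$ gains its only friend $t_j$ (from zero) and strictly improves; each helper $t_l$ has one friend in $\Pi$ (\cref{obs:fas+deg}\eqref{obs:fas+deg,1iv}) but two in $P$ (namely $t_{l+1}$ and $s_l$), hence strictly improves. For the special agents I split into the three groups of \cref{obs:fas+deg}\eqref{obs:fas+deg,1ii}--\eqref{obs:fas+deg,1iii}: if $z\in\{0,\dots,3\enn-1\}$, then $s_z$ keeps its single friend $s_{z+1}$ and has $|P|-2 = 13\enn-2$ enemies, exactly as in $\Pi$; if $z\in\{3\enn,\dots,6\enn-2\}$, then $s_z$ keeps two friends ($s_{z+1}$ and $a_{z-3\enn+1}$, losing only the out‑of‑$P$ dummy $x_{z-3\enn+1}$) and has $|P|-3=13\enn-3$ enemies, again as in $\Pi$; and $s_{6\enn-1}$ keeps the single friend $a_{3\enn}$ (losing $s_{6\enn}$) with $|P|-2=13\enn-2$ enemies, once more as in $\Pi$. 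Thus every special agent is indifferent between $P$ and $\Pi$. Altogether $P$ weakly blocks $\Pi$, which establishes the claim (and in particular $\Pi$ is not strictly core stable).

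I do not expect a genuine obstacle in this direction; it is pure verification. The one point demanding care is the arithmetic for the special agents: the value $|P| = 13\enn$ must be hit \emph{exactly} so that their enemy counts coincide with those in $\Pi$ — a strictly larger $P$ would make, e.g., $s_0$ strictly worse off and destroy the weak‑blocking property. In particular one must check that inserting \emph{all} $3\enn$ helpers (which are non‑friends, hence enemies, of $s_0,\dots,s_{3\enn-1}$) keeps $|P|$ at this threshold rather than exceeding it; this is the balancing trick, and including all helpers is simultaneously what allows every $t_l$ to collect a second friend. The substantive work of the theorem lies in the converse direction, turning an arbitrary weakly blocking coalition back into an exact cover, which is handled in the companion claim rather than here.
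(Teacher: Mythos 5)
Your proposal is correct and takes essentially the same route as the paper: the identical coalition $P=\{a_i\}\cup\{c_j\mid C_j\in\mathcal{K}\}\cup\{t_0,\dots,t_{3\enn-1}\}\cup\{s_0,\dots,s_{6\enn-1}\}$ of size exactly $13\enn$, with element, set, and helper agents strictly improving and the special agents indifferent because their friend and enemy counts match those in $\Pi$. Your per-group check of the special agents is just a more explicit version of the paper's one-line counting argument.
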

\begin{proof}[Proof of claim \ref{claim:fas+deg2}]
\renewcommand{\qedsymbol}{$\diamond$}
Let $\mathcal{K}$ be an exact cover.
Then, let $P$ be the coalition we get by adding all $3\enn $ element agents, the set agents corresponding to~$\mathcal{K}$, and the agents from~$\{t_0,t_1,\dots ,t_{3\enn-1},s_0 ,\dots,$ $ s_{6\enn-1}\}$. Formally, $P \coloneqq \{a_i \mid i \in [3\enn]\} \cup \{c_j \mid C_j \in \mathcal{K}\} \cup \{t_l \mid l \in \{0, \dots, 3\enn - 1\}\} \cup  \{s_z \mid z \in \{0, \dots, 6\enn - 1\}\}$.
Then, every set agent~$c_j$, $C_j \in \mathcal{K}$ and every element agent~$a_i, i \in [3\enn]$ have one friend in $P$, so they strictly improve by \cref{obs:fas+deg}\eqref{obs:fas+deg,1i}.
Every helper agent~$t_l, l \in \{0, \dots, 3\enn - 1\}$ obtains two friends, so she also strictly improves by \cref{obs:fas+deg}\eqref{obs:fas+deg,1iv}.
Finally, each agent~$s_z$, $z\in \{0, \dots, 6\enn - 1\}$ has the same number of friends in $\Pi$ and $P$, and since the size of $P$ is $3\enn +\enn+3\enn +6\enn=13\enn $, each has the same number of enemies, so they weakly improve.
\end{proof}

Now, we turn to the ``if'' part, and let $P$ be a weakly blocking coalition of~$\Pi$.
To improve readability, in the following, for each index~$z\in \{0, \dots, 10\enn\}$, we use \myemph{$\seqq{z,6\enn-1}$} to denote the following set~$\seqq{z,6\enn-1}\coloneqq \{z,\dots,6\enn-1\}$ if $z\le 6\enn-1$, and~$\seqq{z,6\enn-1}\coloneqq \{z,\dots,10\enn, 0,\dots,6\enn-1\}$ otherwise.
We first observe the following.
\begin{clm}
\label{claim:fas+deg-3} %
\begin{compactenum}[(i)]
  \item \label{fas+deg-3,ii}
  If $s_z\in P$ for some $z\in \{0, \dots, 10\enn - 1\}$,
  then
  $s_{z'}\in P$ for all $z'\in \seqq{z,6\enn-1}$.
  \item \label{fas+deg-3,i}
  If $t_l\in P$ for some $l\in \{0, \dots, 3\enn-1\}$, then $t_{l'}\in P$ for all $l'\in \{0, \dots, 3\enn-1\}$. %
 
  \item \label{fas+deg-3,iii} $\{x_1,\dots ,x_{3\enn -1}\} \cap P = \emptyset$.
  \end{compactenum}
\end{clm}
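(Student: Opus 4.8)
The plan is to work throughout with an inclusion-wise \emph{minimal} weakly blocking coalition $P$; then $|P|\ge 2$, and since $\goodG$ has no neutral agents, every $v\in P$ is worse off in $P$ unless it gets at least as many friends in $P$ as in $\Pi(v)$, and is only weakly better off with equally many friends if it gets no more enemies. I would prove the three statements in the order \eqref{fas+deg-3,i}, then \eqref{fas+deg-3,iii}, then \eqref{fas+deg-3,ii}; this order avoids circularity, because \eqref{fas+deg-3,i} and the following basic observation use nothing but the gadget structure. \emph{Forced propagation:} for $z\in\{0,\dots,3\enn-1\}\cup\{6\enn,\dots,10\enn\}$ the agent $s_z$ has a single friend in $\goodG$, namely $s_{z+1}$ (with $s_{10\enn+1}:=s_0$), and exactly that one friend inside $\Pi(s_z)$ (\cref{obs:fas+deg}); hence $s_z\in P$ forces $s_{z+1}\in P$. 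Chaining this, if such an $s_z$ lies in $P$ then so does every $s_{z'}$ on the forward arc of the $s$-cycle from $z$ up to $s_{3\enn}$. The chain cannot be continued past $s_{6\enn-1}$, because $s_{6\enn-1}$ has a \emph{second} friend $a_{3\enn}$ while having only one friend in $\Pi$, so it can be satisfied via $a_{3\enn}$ -- this is exactly why $\seqq{z,6\enn-1}$ stops at $6\enn-1$.

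For Statement~\eqref{fas+deg-3,i}, let $T'=\{l:t_l\in P\}$ and suppose $\emptyset\ne T'\subsetneq\{0,\dots,3\enn-1\}$, so there is $l^+\in T'$ with $l^++1\notin T'$. Since $t_{l^+}$ has only the two friends $\{t_{l^++1},s_{l^+}\}$ and one friend in $\Pi$, and $t_{l^++1}\notin P$, it keeps exactly one friend in $P$, so it is not worse off only if $s_{l^+}\in P$ and $|P|\le 3\enn$. Forced propagation now puts $s_{l^+},\dots,s_{3\enn}$ into $P$. Then $s_{3\enn}$ needs two of its three friends $\{s_{3\enn+1},x_1,a_1\}$ in $P$; but $x_1\in P$ would force $s_{6\enn}\in P$ (the high friend of $x_1$), and forced propagation from $s_{6\enn}$ would then drag linearly many further special agents into $P$, contradicting $|P|\le 3\enn$; hence $s_{3\enn+1},a_1\in P$. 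Iterating through $s_{3\enn},\dots,s_{6\enn-2}$ forces $s_{3\enn+1},\dots,s_{6\enn-1}\in P$ and $a_1,\dots,a_{3\enn-1}\in P$. But $\{s_{l^+},\dots,s_{6\enn-1}\}$ and $\{a_1,\dots,a_{3\enn-1}\}$ are disjoint and together have $(6\enn-l^+)+(3\enn-1)\ge 6\enn$ members, contradicting $|P|\le 3\enn$. Hence $T'\in\{\emptyset,\{0,\dots,3\enn-1\}\}$, which is Statement~\eqref{fas+deg-3,i}.

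For Statement~\eqref{fas+deg-3,iii}, suppose $x_z\in P$. Then both friends of $x_z$ -- the agent $s_{3\enn+z}$ and the high agent $s_{6\enn-1+z}$ -- lie in $P$, and forced propagation from $s_{6\enn-1+z}$ puts $s_{6\enn-1+z},\dots,s_{10\enn},s_0,\dots,s_{3\enn}$ into $P$; in particular $s_0\in P$, so (again because $s_0$ has a unique friend) $|P|\le 13\enn$. Propagating through $s_{3\enn},\dots,s_{6\enn-1}$ exactly as in Statement~\eqref{fas+deg-3,i} -- at each index $j'$ either $x_{j'}\in P$ itself carries the next $s$-agent along, or $x_{j'}\notin P$ and then $s_{3\enn+j'},a_{j'}\in P$ -- puts all of $s_0,\dots,s_{6\enn-1}$ into $P$ and, for every $j'\in\{1,\dots,3\enn-1\}$, at least one of $x_{j'},a_{j'}$ into $P$. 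If all of $x_1,\dots,x_{3\enn-1}\in P$, their high friends force all of $s_{6\enn},\dots,s_{10\enn}$ into $P$ too, so $P\supseteq\Pi(s_0)$, i.e.\ $P=\Pi(s_0)$ -- impossible, since a coalition cannot block the partition it belongs to. Otherwise some $a_{j'}\in P$, which forces a set agent $c_j$ (the only friends of $a_{j'}$ are the three $c$'s containing $j'$), then $t_j\in P$ (the only friend of $c_j$ is $t_j$), then by Statement~\eqref{fas+deg-3,i} all $3\enn$ helper agents. Counting the $6\enn$ special agents $s_0,\dots,s_{6\enn-1}$, the $\ge 4\enn-z+1$ high agents $s_{6\enn-1+z},\dots,s_{10\enn}$, the $3\enn$ helper agents, and the $3\enn-1$ selector agents gives $|P|\ge 16\enn-z\ge 13\enn+1$, contradicting $|P|\le 13\enn$. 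So no $x_z$ lies in $P$.

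Finally, for Statement~\eqref{fas+deg-3,ii}, Statement~\eqref{fas+deg-3,iii} lets me extend forced propagation to the middle agents: if $s_z\in P$ with $z\in\{3\enn,\dots,6\enn-2\}$ then, since $x_{z-3\enn+1}\notin P$ and $s_z$ needs two of its three friends $\{s_{z+1},x_{z-3\enn+1},a_{z-3\enn+1}\}$, both $s_{z+1}$ and $a_{z-3\enn+1}$ lie in $P$. Combining the low-, high-, and middle-range propagations, membership of any $s_z$ with $z\in\{0,\dots,10\enn-1\}$ in $P$ propagates forward through the $s$-cycle and halts precisely at $s_{6\enn-1}$, giving $s_{z'}\in P$ for all $z'\in\seqq{z,6\enn-1}$. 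The main obstacle is the size accounting in Statements~\eqref{fas+deg-3,i} and \eqref{fas+deg-3,iii}: one must verify that the forced families of $s$-, $a$-, $c$-, and $t$-agents are genuinely disjoint, that every ``$x_{j'}\in P$'' branch really forces enough extra special agents to overshoot the size cap coming from $s_0$ (resp.\ $t_{l^+}$) not being worse off, and that the one large coalition surviving all the forcing -- namely $\Pi(s_0)$ itself -- is excluded only because it is already a block of $\Pi$.
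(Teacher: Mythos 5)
Your proof is correct; I checked the forced-propagation steps and the size accounting against the construction and \cref{obs:fas+deg} and found no gaps. It does, however, organize the argument differently from the paper. The paper proves the $s$-chain propagation (the first listed item) first and \emph{unconditionally}: for a middle-range agent $s_z$ it observes that if $x_{z-3\enn+1}\in P$ then $x_{z-3\enn+1}$ needs both of her friends, one of which is $s_{z+1}$, so $s_{z+1}\in P$ in either case --- no exclusion of the dummies and no size bound is needed there. The other two items are then short consequences: the helper statement follows because $s_l\in P$ drags at least $3\enn+1$ specials into $P$, giving $t_l$ too many enemies unless $t_{l+1}$ joins, and the dummy exclusion is a two-case count (element agent in $P$ or not) against $x_l$'s own enemy budget of $13\enn-3$. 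You instead prove the helper statement first from the local cap $|P|\le 3\enn$ imposed by $t_{l^+}$, then the dummy exclusion from the cap $|P|\le 13\enn$ imposed by $s_0$ (rather than by $x_l$; both give the same threshold), and only then the $s$-propagation, using the dummy exclusion to get through the middle range. This works --- indeed inside your proof of the dummy exclusion you already use the paper's observation that $x_{j'}\in P$ itself carries $s_{3\enn+j'}$ along --- but it makes the propagation statement depend on the other two, whereas the paper's order makes it self-contained and shortens the rest. Two minor remarks: your counts are more than sufficient (in the helper argument the specials $s_{l^+},\dots,s_{6\enn-1}$ alone already exceed $3\enn$), and your standing minimality assumption on $P$ is never actually used, so your argument in fact establishes the claim for an arbitrary weakly blocking coalition, which is how it is stated and later applied.
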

\begin{proof}[Proof of \cref{claim:fas+deg-3}]\renewcommand{\qedsymbol}{$\diamond$}
  For (\ref{fas+deg-3,ii}), assume that $s_{z}\in P$ for some $z\in \{0, \dots, 10\enn\}$.
  We only need to consider the case when $z\neq 6\enn-1$, as $\seqq{z, 6\enn-1}=\{6\enn-1\}$ if $z=6\enn-1$.
  If $z\in \{3\enn,\dots,6\enn -2\}$,
  then $s_z$ has three friends $a_{z-3\enn+1}$, $s_{z+1}$ and $x_{z-3\enn+1}$ in the friendship graph,
  from which she has two in~$\Pi$.
  Therefore, she must have at least two friends in~$P$.
  That is, $s_{z+1}\in P$ or $x_{z-3\enn+1}\in P$.
  Note that from the latter, since $x_{z-3\enn+1}$ has her only two friends in $\Pi$,
  so she does in $P$, implying that $s_{z+1}\in P$.
  By applying the above recursively, we obtain that $\{s_{z+1}, \dots,s_{6\enn-1}\} = \seqq{z, 6\enn - 1}\subseteq P$. 

  If $z\notin\{3\enn,\dots ,6\enn-1\}$, i.e., $z\in \{0,\dots,3\enn-1,6\enn,\dots,10\enn\}$,
  then $s_z$ has only one friend, namely $s_{z+1}$ that she has in $\Pi$ ($z+1$ is taken modulo $10\enn$), so $s_{z+1}\in P$.
  By applying the above recursively, we obtain that $\{s_{z+1}, \dots, s_{10\enn}, s_{0}, \dots, s_{3\enn}\}\subseteq P$.
  Together with the first case, we further obtain that  $\{s_{z+1}, \dots, s_{10\enn}, s_{0}, \dots, s_{6\enn-1}\} = \seqq{z, 6\enn - 1} \subseteq P$, as desired.

  For \eqref{fas+deg-3,i}, suppose, towards a contradiction, that there exists an index~$l\in\{0, \dots, 3\enn-1\}$ such that $t_l\in P$ but $t_{l+1}\notin P$. Throughout this paragraph we take $l + 1$ modulo $3\enn$.
  By \cref{obs:fas+deg}\eqref{obs:fas+deg,1iv}, agent~$t_l$ must get a friend in $P$, so $s_l\in P$.
  By \eqref{fas+deg-3,ii}, we have that $ \{s_{l+1},\dots,s_{6\enn-1} \} \subseteq P$.
  As $l\le 3\enn-1$, we obtain that in $P$, agent~$t_l$ has at least $3\enn $ enemies, so by \cref{obs:fas+deg}\eqref{obs:fas+deg,1iv} she can only improve if she gets another friend, so $t_{l+1}\in P$, contradiction. 
  Hence, we get that $t_l\in P$ implies $t_{l+1}\in P$, so $t_{l'}\in P$ for all $l'\in \{0, \dots, 3\enn-1\}$.

  To prove \eqref{fas+deg-3,iii}, suppose, towards a contradiction, that $x_l\in P$ for some~$l\in [3\enn-1]$.
By \cref{obs:fas+deg}\eqref{obs:fas+deg,1ii}, it implies that both of her friends $s_{3\enn+i}$ and $s_{6\enn +l-1}$ are in $P$.
By Statement~\eqref{fas+deg-3,ii} and by previous reasoning, we have that $\{s_{6\enn +l-1},\dots, s_{10\enn}, s_0,\dots,s_{6\enn-1}\}\subseteq P$.
Every $s_z$, $z \in \{3\enn,\dots ,6\enn-1\}$ must receive two friends in $P$ by \cref{obs:fas+deg}\eqref{obs:fas+deg,1ii}-\eqref{obs:fas+deg,1iii}. As no two of them have a mutual friend, they are all different.
We distinguish between two cases.

Case I: Suppose that none of the element agents $a_{i}, i \in \{0, \dots,$ $ 3\enn - 1\}$ are in $P$.
Therefore, for every~$z\in \{3\enn,\dots ,6\enn -2\}$, $s_z$ has the dummy~$x_z$ agent as a friend, and $s_{6\enn-1}$ receives $s_{6\enn}$ as friend.
By \cref{obs:fas+deg}\eqref{fas+deg-3,i}, $\{s_0, \dots, s_{10\enn}, x_1, \dots, x_{3\enn - 1}\} \subseteq P$.
No other agent can be inside $P$, as otherwise the size of $P$ would be more than $13\enn $ and $x_l$ would have more enemies, contradiction. But then, none of the agents in $P$ strictly improves, contradiction.

Case II: Suppose $a_{i}\in P$ for some ${i}\in [3\enn]$.
Then, by \cref{obs:fas+deg}\eqref{obs:fas+deg,1i}, $a_{i}$ must get a friend, so there is some $c_j \in P$ such that $i \in C_j$ and again by \cref{obs:fas+deg}\eqref{obs:fas+deg,1i}, agent~$c_j$ also must get a friend, so $t_j\in P$.
By \eqref{fas+deg-3,ii}, $\{t_0, \dots, t_{3\enn - 1}\} \subset P$. Therefore, the size of $P$ is at least $3\enn +3\enn +(7\enn +2)+1=13\enn + 3>13\enn $ (there are $3\enn$ agents from $\{t_0, \dots, t_{3\enn - 1}\}$,
at least $3\enn$ distinct friends for $\{s_{3\enn},\dots s_{6\enn-1}\}$,
at least $7\enn +2$ special agents,
and the set agent~$c_{j}$) so $x_l$ has more enemies in $P$, a contradiction. 
\end{proof}

The ``if'' part follows from the following claim.
\begin{clm}
\label{claim:fas+deg4}
If $\Pi$ is weakly blocked by some coalition, then  $I$ admits an exact cover.
\end{clm}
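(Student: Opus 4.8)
The plan is to take an arbitrary weakly blocking coalition $P$ of $\Pi$ and show that $\mathcal{K}\coloneqq\{C_j : c_j\in P \text{ for some } i\in C_j\}$ is an exact cover of $I$. As in the proof of \cref{claim:fas2}, I would first pass to an inclusion\nobreakdash-minimal weakly blocking coalition, so that the friendship subgraph induced on $P$ is strongly connected and $|P|\ge 2$. Throughout I would lean on \cref{claim:fas+deg-3}, which already tells us that $P$ contains no dummy agent, that the $s$\nobreakdash-agents of $P$ form a cyclically contiguous block ending at $s_{6\enn-1}$ (its first item), and that $P$ contains either all helpers $t_l$ or none.

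\emph{Step 1: $P$ contains an element agent.} I would first rule out $P\subseteq\{s_z\}\cup\{t_l\}$. On these vertices $\goodG$ consists of two vertex\nobreakdash-disjoint directed cycles together with the one\nobreakdash-directional bridges $t_l\to s_l$, so a strongly connected subset lies inside one of the two cycles; by \cref{obs:fas+deg} neither a singleton nor a full cycle makes anyone strictly better off (taking the whole $s$\nobreakdash-cycle even makes every $s_z$ with $z\in\{3\enn,\dots,6\enn-2\}$ lose a friend). Hence $P$ meets $\{a_i\}\cup\{c_j\}$, and if it contains some $c_j$ then, since the only in\nobreakdash-neighbours of $c_j$ are the agents $a_i$ with $i\in C_j$, strong connectivity forces an element agent into $P$ as well.

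\emph{Step 2: all element agents and all of $s_0,\dots,s_{6\enn-1}$ lie in $P$.} If $a_i\in P$ then its unique in\nobreakdash-neighbour $s_{3\enn-1+i}$ lies in $P$; walking back along in\nobreakdash-neighbours (each of $s_{3\enn},\dots,s_{3\enn-1+i}$ has, once the excluded dummies are removed, $s_{z-1}$ as its only in\nobreakdash-neighbour in $P$) gives $\{s_{3\enn-1},\dots,s_{3\enn-1+i}\}\subseteq P$, and the first item of \cref{claim:fas+deg-3} pushes this forward to $\{s_{3\enn-1},\dots,s_{6\enn-1}\}\subseteq P$. Each $s_z$ with $z\in\{3\enn,\dots,6\enn-2\}$ has exactly the three friends $s_{z+1},a_{z-3\enn+1},x_{z-3\enn+1}$ and only two of them ($s_{z+1}$ and the dummy) in $\Pi(s_z)$; since dummies are excluded and $s_{z+1}\in P$, to keep at least two friends $s_z$ must bring $a_{z-3\enn+1}$ into $P$, so $a_1,\dots,a_{3\enn-1}\in P$. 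Next, $s_{6\enn}\notin P$: otherwise the first item of \cref{claim:fas+deg-3} wraps around and forces \emph{every} $s$\nobreakdash-agent into $P$, which together with the already\nobreakdash-established all $t_l$ and $a_1,\dots,a_{3\enn-1}$ makes $|P|>13\enn$ and leaves $s_0$ strictly worse off than in $\Pi$, a contradiction. Hence $s_{6\enn-1}$'s only available friend in $P$ is $a_{3\enn}$, so $a_{3\enn}\in P$. Finally, every $a_i\in P$ needs a friend, which can only be some $c_j$ with $i\in C_j$, so $\mathcal{K}$ covers $[3\enn]$; and since all $t_l$ lie in $P$ and the only friends of $t_l$ are $t_{l+1}$ and $s_l$, each $t_l$ (which would otherwise keep a single friend while gaining many enemies) must pull $s_l$ into $P$, giving $s_0,\dots,s_{3\enn-1}\in P$ as well.

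\emph{Step 3: counting.} Now $P$ contains the $6\enn$ agents $s_0,\dots,s_{6\enn-1}$, all $3\enn$ helpers $t_l$, all $3\enn$ element agents, and $|\mathcal{K}|$ set agents, so $|P|\ge 12\enn+|\mathcal{K}|\ge 13\enn$. On the other hand $s_0\in P$ has exactly one friend ($s_1$) in $P$, the same number as in $\Pi$, so for $s_0$ to weakly prefer $P$ we must have $|P|-2\le 13\enn-2$, i.e.\ $|P|\le 13\enn$. Therefore $|P|=13\enn$ and $|\mathcal{K}|=\enn$, and a family of $\enn$ three\nobreakdash-element sets covering the $3\enn$ elements of $X$ is an exact cover, as desired. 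The fiddly part, and the step I would be most careful about, is the bookkeeping in Step~2 of which $s$\nobreakdash-agents are \emph{forced} into $P$: the crucial point is that $t_l$'s only friends are $t_{l+1}$ and $s_l$ (the arc $c_l\to t_l$ is one\nobreakdash-directional, so $c_l$ does not help $t_l$), which is exactly what drags all of $s_0,\dots,s_{3\enn-1}$ into $P$ and yields the tight bound $|P|\ge 13\enn$; without it the covering number need not collapse to $\enn$. Ruling out $s_{6\enn}\in P$ by the size estimate is the other place where an off\nobreakdash-by\nobreakdash-a\nobreakdash-constant error in the enemy counts of \cref{obs:fas+deg} would break the argument.
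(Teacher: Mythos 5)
Your overall route is the paper's: pull the block $s_{3\enn-1},\dots,s_{6\enn-1}$ into $P$, use the counts of \cref{obs:fas+deg} together with \cref{claim:fas+deg-3} to force in all element agents, the helpers, and $s_0,\dots,s_{3\enn-1}$, and then play $|P|$ off against $s_0$'s enemy count to squeeze $|\mathcal{K}|$ down to $\enn$. The only structural difference is the entry point: the paper notes that a minimal (hence strongly connected) weakly blocking coalition must contain both endpoints of one of the two feedback arcs and splits into the cases $\{t_{3\enn-1},t_0\}\subseteq P$ and $\{s_{3\enn-1},s_{3\enn}\}\subseteq P$, whereas you first show $P$ contains an element agent and then walk backwards along in-neighbours to reach $s_{3\enn-1}$. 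That variant is fine, and your Step 1, the forcing of $a_1,\dots,a_{3\enn-1}$, the exclusion of $s_{6\enn}$, the forcing of $a_{3\enn}$, and the final counting in Step 3 all check out against the construction.

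There is, however, one load-bearing assertion you never prove: that the helper agents $t_l$ are in $P$ at all. You invoke it twice --- once as ``the already-established all $t_l$'' when excluding $s_{6\enn}$ (where it is not yet established, though there it is also unnecessary: any set agent forced in by an element agent already pushes $|P|$ past $13\enn$ and makes $s_0$ strictly worse off), and once, essentially, to drag $s_0,\dots,s_{3\enn-1}$ into $P$. Without the $t_l$'s you only know $P$ contains $s_{3\enn-1},\dots,s_{6\enn-1}$, the element agents and the chosen set agents, and the upper bound $|P|\le 13\enn$ coming from $s_0$ (or $s_{3\enn-1}$) is then far too slack to force $|\mathcal{K}|=\enn$; so the exactness of the cover genuinely depends on this step. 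Note that \cref{claim:fas+deg-3} only gives the all-or-nothing property for the $t_l$'s; you still need one of them inside $P$. The missing link is short and you already have its ingredients: every element agent in $P$ must obtain a set-agent friend $c_j$ (as you say), and $c_j$ --- with no friends or enemies in $\Pi$ but at least one enemy in $P$ --- must in turn obtain her unique friend $t_j$, whence \cref{claim:fas+deg-3} yields all $t_l\in P$ and then your argument for $s_0,\dots,s_{3\enn-1}\in P$ and the tight count $|P|\ge 12\enn+|\mathcal{K}|$ goes through. The paper makes exactly this step explicit (its second case derives $c_j\in P\Rightarrow t_j\in P$, and its first case starts from $t_0\in P$ directly); adding it makes your proof complete.
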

\begin{proof}[Proof of claim \ref{claim:fas+deg4}] \renewcommand{\qedsymbol}{$\diamond$}
Assume that $P$ is weakly blocking~$\Pi$. 
We can assume that $P$ is an inclusionwise minimal blocking coalition, therefore the graph induced by $P$ is strongly connected. 
This means that $s_{3\enn-1},s_{3\enn}$ or  $t_0,t_{3\enn-1}$ must be in~$P$.

First, if $t_0\in P$, then by \cref{claim:fas+deg-3}\eqref{fas+deg-3,i}, we have that $t_l\in P$ for all~$l\in \{0, \dots, 3\enn -1\}$. As there has to be an agent who strictly improves, there must be more agents in $P$ than only $\{t_l \mid l \in \{0 , \dots, 3\enn - 1\}\}$. For every $t_l, l \in \{0 , \dots, 3\enn - 1\}$, the initial coalition is smaller than $\Pi(t_l)$. Thus $t_l$ must gain a friend to weakly improve. Particularly, $s_0\in P$ and by claim \ref{claim:fas+deg-3}\eqref{fas+deg-3,ii}, $\{s_0,s_1,\dots, s_{6\enn-1}\} \subset P$. 

By \cref{obs:fas+deg}\eqref{obs:fas+deg,1ii}, each~$s_z$, $z\in \{ 3\enn,\dots, 6\enn-2\}$, must have a friend other than $s_{z+1}$ and by \cref{claim:fas+deg-3}\eqref{fas+deg-3,iii}, it cannot be $x_{z - 3\enn + 1}$.
Therefore, we get that $a_i\in P$ for every $i\in [3\enn-1]$.
We claim that $a_{3\enn}$ is also in $P$ because of the following: If $a_{3\enn} \notin P$, then by \cref{obs:fas+deg}\eqref{obs:fas+deg,1ii}, agent~$s_{6\enn-1}$ must get her other friend~$s_{6\enn}$,
 and by \cref{claim:fas+deg-3}\eqref{fas+deg-3,ii}, we would obtain that $\{s_z \mid z\in \{0, \dots, 10\enn\}\} \subseteq P$.
This implies that $|P|\ge 10\enn+1+3\enn+3\enn-1 > |\Pi(s_0)|$ so $s_0$ would have more enemies, a contradiction.
Hence, $\{a_i\mid i\in [3\enn]\}\subseteq P$. 
By \cref{obs:fas+deg}\eqref{obs:fas+deg,1i}, each $a_i, i \in [3\enn]$ must get a friend, so there are at least $\enn$ set agents~$c_j, C_j \in \mathcal{C}$ in $P$, as each of them is a friend to three element agents.
Suppose the number of set agents in $P$ is more than $\enn$.
Then $|P|\ge 3\enn +(\enn +1)+3\enn +6\enn =13\enn +1$, so $s_0$ would have more enemies, contradiction. Therefore, the number of set agents in $P$ has to be $\enn$. As each $a_i, i \in [3\enn]$ has a set agent friend, we obtain that the set agents in $P$ correspond to an exact cover.

Second, if $\{s_{3\enn-1},s_{3\enn}\}\subseteq P$, then by \cref{claim:fas+deg-3}\eqref{fas+deg-3,ii}, $\{s_{3\enn +1},\dots,$ $ s_{6\enn -1}\}\subset P$.
By \cref{obs:fas+deg}\eqref{obs:fas+deg,1ii}, every~$s_z, z\in \{ 3\enn,\dots, 6\enn -2\}$, must have a friend other than $s_{z+1}$ and by \cref{claim:fas+deg-3}\eqref{fas+deg-3,iii}, it cannot be $x_{z - 3\enn + 1}$.
Therefore, we get that $a_i\in P$ for all $i\in [3\enn-1]$.
Each $a_i, i \in [3\enn - 1]$ must get a friend, so there is some $c_j\in P, j \in \{0, \dots, 3\enn - 1\}$.
Agent~$c_j$ also must get a friend, so $t_j\in P$ and by \cref{claim:fas+deg-3}\eqref{fas+deg-3,i}, we get that $t_0,t_{3\enn-1}\in P$. We have already shown that if $t_0,t_{3\enn-1}\in P$, then $I$ admits an exact cover, so this concludes the proof.
\end{proof}
The proof for \FE-\verif\ is analogous with the only difference that we add an agent $s_{10\enn +1}$ in the cycle of special agents. We delete the friendship arc $(s_{10\enn},s_{0})$ and add the arcs $(s_{10\enn},s_{10\enn +1})$ and $(s_{10\enn +1},s_0)$. %

We can check that the same coalition as in \cref{claim:fas+deg2} is a strictly blocking coalition now, as even the agents $\{s_0,..,s_{6\enn-1}\}$ strictly improve, because they get less enemies.

The proof of \cref{claim:fas+deg-3} is similar, but with the additional agent~$s_{10\enn+1}$.
Moreover, if we change \cref{claim:fas+deg4} by replacing weakly blocking with strictly blocking, then the same proof shows that the existence of a strictly (and thus weakly) blocking coalition implies the existence of an exact cover. The only modification needed is that now to strictly improve $s_0$ must get strictly less enemies in the coalition, hence its size must be strictly less than $13\enn +1$ (because of the addition of $s_{10\enn + 1}$, the initial coalition containing $s_0$ has size $13\enn +1$). We have that at most $\enn$ set agents can be in $P$, as required.
\end{proof}
}

Next, we observe that checking whether a specific strictly blocking coalition exists can be done in linear time. 
This result will be useful for designing further algorithms.
\newcommand{\lemstrictlymore}{%
  Given a coalition structure~$\Pi$ with maximum coalition size~$\maxcoal$, in linear time, we can either find a blocking coalition where every agent obtains strictly more friends than in $\Pi$, or conclude that each weakly (resp.\ strictly) blocking coalition has size at most $\maxcoal$.
}
\begin{lemma}[\appendixsymb]
\label{lemma:preprocess}
\lemstrictlymore
\end{lemma}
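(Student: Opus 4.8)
The plan is to reduce the stated dichotomy to a single linear-time ``peeling'' computation. The key structural observation is that in the \FE model the enemies of an agent~$i$ inside a coalition~$S$ containing her are exactly the members of $S\setminus(\{i\}\cup N^+_{\goodG}(i))$, so their number equals $|S|-1-|N^+_{\goodG}(i)\cap S|$. Consequently, if $W$ is any weakly blocking coalition with $|W|>\maxcoal$, then for every $i\in W$ we must have $|N^+_{\goodG}(i)\cap W|>|N^+_{\goodG}(i)\cap\Pi(i)|$: otherwise $i$ would have the same number of friends in~$W$ as in~$\Pi(i)$ but, since $|W|>\maxcoal\ge|\Pi(i)|$, strictly more enemies in~$W$, contradicting that she weakly prefers~$W$. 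Hence it suffices to search for a blocking coalition in which every agent strictly gains friends; if none exists, then every weakly blocking coalition has size at most~$\maxcoal$, and since every (nonempty) strictly blocking coalition is in particular weakly blocking, the same bound holds there.

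To perform the search, for each agent~$i$ I would set the threshold $\tau_i:=|N^+_{\goodG}(i)\cap\Pi(i)|+1$, the minimum number of in-coalition friends that makes~$i$ strictly better off; note $\tau_i\ge 1$. I then compute the unique inclusion-maximal set $W\subseteq V$ satisfying $|N^+_{\goodG}(i)\cap W|\ge\tau_i$ for all $i\in W$ by the standard elimination procedure: initialise $W:=V$ together with a counter $c_i:=|N^+_{\goodG}(i)|$ for every~$i$; then repeatedly pick an agent $i\in W$ with $c_i<\tau_i$, delete~$i$ from~$W$, and for every in-neighbour~$j$ of~$i$ in~$\goodG$ that is still in~$W$ decrement $c_j$, enqueueing~$j$ whenever $c_j$ drops below~$\tau_j$. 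The procedure stops when no under-threshold agent remains in~$W$.

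For correctness, if the procedure ends with $W\neq\emptyset$ then by construction every $i\in W$ has $|N^+_{\goodG}(i)\cap W|\ge\tau_i>|N^+_{\goodG}(i)\cap\Pi(i)|$, so $W$ is a blocking coalition in which every agent strictly gains friends (and in particular $|W|\ge 2$, since a singleton coalition gives its member no friend, but $\tau_i\ge1$). If instead the procedure ends with $W=\emptyset$, I claim no such coalition exists: if $W^\star$ were one, then at the moment the procedure is about to delete some~$i$ from the current set $W'\supseteq W^\star$ we must have $i\notin W^\star$, because $i\in W^\star$ would force $\tau_i\le|N^+_{\goodG}(i)\cap W^\star|\le|N^+_{\goodG}(i)\cap W'|=c_i<\tau_i$, a contradiction; thus the invariant $W^\star\subseteq W$ is maintained throughout, and $W=\emptyset$ forces $W^\star=\emptyset$, which is not a coalition. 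Combining with the structural observation of the first paragraph gives exactly the claimed dichotomy.

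Finally, for the running time: computing all thresholds~$\tau_i$ takes $O(|V|+|E(\goodG)|)$ time by keeping an array that maps each agent to the index of its coalition in~$\Pi$ and scanning out-neighbourhoods; the elimination procedure touches each agent and each arc of~$\goodG$ only a constant number of times (one decrement per arc, one deletion per agent, amortised via the queue), so it also runs in $O(|V|+|E(\goodG)|)$ time, which is linear in the input size. The only genuinely non-routine point is the structural reduction that forces large weakly blocking coalitions to be friend-strictly-improving; the invariant $W^\star\subseteq W$ and the amortised analysis of the peeling are then standard, and everything else is bookkeeping.
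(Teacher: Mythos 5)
Your proposal is correct and follows essentially the same route as the paper's proof: the same thresholds $\tau_i=|N^+_{\goodG}(i)\cap\Pi(i)|+1$ (the paper's $r(v)$), the same iterative peeling of agents with too few remaining out-neighbours, the same observation that any weakly blocking coalition of size exceeding $\maxcoal$ must make every member strictly gain friends, and the same $O(|V|+|E(\goodG)|)$ accounting. The only differences are presentational (your explicit invariant $W^\star\subseteq W$ versus the paper's remark that a "wonderfully blocking" coalition induces a subgraph in which no agent would ever be deleted), so nothing further is needed.
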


\begin{proof}[Proof sketch.]
  Call a coalition a \myemph{wonderfully blocking coalition} if every agent in it has strictly more friends than in $\Pi$.
  We observe that if no coalitions are wonderfully blocking,
  then in any blocking (resp.\ weakly blocking) coalition~$U'$,
  there is an agent who has the same number of friends, so she cannot get more enemies than in $\Pi$, implying that $|U'|\le \maxcoal$. %
  Hence, checking whether wonderfully blocking coalitions exist in the desired time completes the proof. 

  For each agent~$v\in V$, let $f_{\Pi}(v)$ denote the number of friends she has in $\Pi$, i.e.,
  $f_{\Pi}(v) = |N_{\goodG}^+(v) \cap \Pi(v)|$, and let $r(v)=f_{\Pi}(v)+1$.
  Let $U$ be a hypothetical wonderfully blocking coalition.
  Then, each agent~$v\in U$ needs at least $r(v)$ friends in~$U$.
  Now, if there are agents in the input with out-degree less than $r(v)$, then we delete them since they cannot be included in $U$. 
  Then, we recursively delete the agents~$v$ that have less than $r(v)$ out-neighbors in the resulting friendship graph.
  We repeat this process as long as there is an agent~$v$ with out-degree less than $r(v)$.
  If this procedure terminates with some agents remaining,
  then they form a wonderfully blocking coalition; otherwise, there can be none.
  \appendixalg{lemma:preprocess}{%
    \ifshort
    The correctness proof and the running time are deferred to the full version.
    \else
    The correctness proof and the running time are deferred to the appendix.
  \fi}{\lemstrictlymore}{
    The correctness proof is as follows:
    That the remaining agents form a wonderfully blocking coalition is straightforward  since they have at least $r(v)= f_{\Pi}(v)+1$ friends from the remaining agents.
    If no agent remains, then no wonderfully coalition exists since a wonderfully blocking coalition induces a subgraph, where an agent $v$ has out degree at least $r(v)$.
    So we would not have deleted any of these agents in the above process.
    
    Initially computing $r(v)$ and $|N^+_{\goodG}(v)|$ for every $v \in V$ takes $O(n + m)$ time. When recursively deleting agents, we touch any edge only when we delete it. For each edge deletion, updating the number of neighbors and checking whether an agent and should be removed can be done in a constant time. Thus the number of operations of the deletion process is in $O(m)$. Therefore the time complexity of the algorithm is $O(n + m)$.

    }
\end{proof}

Based on \cref{lemma:preprocess}, core verification is polynomial-time solvable if the largest initial coalition has bounded size.
However, this result cannot be improved to obtain fixed-parameter tractability. 
\newcommand{\fecorekwhard}{
  \FE-\verif\ and \FE-\sverif\ are in \xp\ and \cowone-hard wrt.\ $\maxcoal$; hardness remains even for symmetric preferences.}

\begin{theorem}[\appendixsymb]\label{thm:W1h+XP_largest_coalition}
\fecorekwhard
\end{theorem}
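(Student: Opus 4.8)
We plan to handle the two directions separately, reusing \cref{lemma:preprocess}.

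\emph{Membership in \xp\ (parameter $\maxcoal$).} Given $(V,\goodG)$ and the partition $\Pi$, first run the linear-time procedure of \cref{lemma:preprocess}. If it returns a blocking coalition in which every agent obtains strictly more friends than in $\Pi$, then this coalition is in particular strictly (hence weakly) blocking, so $\Pi$ is neither core nor strictly core stable and we answer \textsc{no}. Otherwise \cref{lemma:preprocess} guarantees that every weakly (resp.\ strictly) blocking coalition has size at most $\maxcoal$, so we enumerate all $n^{O(\maxcoal)}$ subsets of $V$ of size at most $\maxcoal$ and test each, in polynomial time, for being (weakly) blocking. This gives the desired $\maxcoal$-\xp\ algorithms for both \FE-\verif\ and \FE-\sverif.

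\emph{\cowone-hardness (parameter $\maxcoal$).} Since ``$\Pi$ is (strictly) core stable'' is the complement of ``$\Pi$ admits a (weakly) blocking coalition'', it suffices to give a parameterized reduction from a $\wone$-hard problem to the latter question; I would reduce from \textsc{Multicolored Clique} (a graph $G$ whose vertex set is partitioned into $k$ colour classes with no edge inside a class; is there a clique using one vertex per class?), with parameter $k$. For each vertex $v$ create a \emph{vertex agent} and for each edge $e$ between two distinct classes an \emph{edge agent}; encode incidence of $v$ with $e$ by mutual friendship, so all preferences are symmetric by construction. On top of this add, for each colour and for each pair of colours, small \emph{selection} and \emph{verification} gadgets together with a global \emph{size-limiter} gadget, and choose the initial partition $\Pi$ so that every one of its coalitions has size $O(k^2)$, whence $\maxcoal=O(k^2)$. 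The gadgets should be designed so that any blocking coalition is forced to consist of exactly one vertex agent per colour and exactly one edge agent per pair of colours, with all incidence constraints met --- which forces the chosen vertices to form a multicolored $k$-clique --- and conversely such a clique yields a (wonderfully) blocking coalition; hence $\Pi$ fails to be (strictly) core stable iff $G$ has a multicolored $k$-clique, which is the reduction.

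The hard part will be ruling out \emph{spurious} blocking coalitions while keeping all initial coalitions of size $O(f(k))$. A naive encoding in which each relevant agent merely ``needs at least $k-1$ friends'' makes the existence of a blocking coalition equivalent to the non-emptiness of the $(k-1)$-core of $G$ (a subgraph of minimum degree $\ge k-1$), which is polynomial-time decidable and hence useless for a lower bound. So one genuinely needs an \emph{exactness}/consistency constraint pinning the blocking coalition to the clique shape: exact selection of one vertex per colour (exploiting that two vertex agents of the same colour are automatically enemies in \FE), exact choice of one edge agent per colour pair, and incidence checks via the friendship links, all capped by a size-limiter so the blocking coalition has exactly $k+\binom{k}{2}$ participating ``real'' agents. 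The tension is that exactness constraints typically arise only when an agent keeps its friend count fixed and only reduces enemies, which tends to rigidly determine its coalition, whereas the coalition still must have enough slack to encode the combinatorial choice; balancing this while simultaneously keeping $\maxcoal=O(k^2)$ and preserving symmetry is the technical heart of the argument, deferred to the appendix.
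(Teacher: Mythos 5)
Your \xp\ argument is exactly the paper's: run \cref{lemma:preprocess}; if it finds a coalition in which everyone gains a friend, answer no, and otherwise every (weakly) blocking coalition has size at most $\maxcoal$, so enumerate all $n^{O(\maxcoal)}$ candidate coalitions. That half is fine. The hardness half, however, has a genuine gap: the entire construction that would make your reduction work --- the ``selection'', ``verification'' and ``size-limiter'' gadgets meant to rule out spurious blocking coalitions --- is left unspecified and explicitly deferred. You correctly diagnose that a bare ``each vertex agent needs at least $k-1$ friends'' condition only encodes a $(k-1)$-core and therefore cannot yield a lower bound, but you never resolve the tension you identify, so as written there is no reduction, only a description of what a reduction would have to accomplish. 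You also do not say how the \FE-\sverif\ variant is handled, where one needs a weakly blocking coalition to appear exactly when a clique exists.

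For comparison, the paper resolves this far more simply, reducing from plain \clique\ (parameter $h$) with no colour classes and no auxiliary selection gadgets. Each vertex $i$ gets a vertex agent $u_i$ plus $h-2$ private friends forming a clique with it; each edge $e$ gets an edge agent $f_e$ plus $h+\binom{h}{2}$ private agents, exactly two of which are friends of $f_e$; the initial partition consists of these private pods, so $\maxcoal=h+\binom{h}{2}+1=O(h^2)$. Private agents already have all their friends and no enemies, so they never deviate; consequently a vertex agent can only improve by collecting at least $h-1$ edge-agent friends, while an edge agent can at best recover its two friends (its endpoints) and must therefore strictly reduce its enemy count, which caps every blocking coalition at $h+\binom{h}{2}$ agents. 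A short handshaking count then forces exactly $h$ vertex agents and $\binom{h}{2}$ edge agents, i.e.\ a clique --- this is precisely the ``exactness'' constraint you were missing, obtained purely from the enemy count of the edge agents rather than from extra gadgets; and for \sverif\ the paper deletes one private agent per edge pod so that the clique coalition becomes weakly blocking and reruns the same argument. Without such concrete gadgets (or this padding trick), your \cowone-hardness claim is not established.
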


\appendixproofwithstatement{thm:W1h+XP_largest_coalition}{\fecorekwhard}{
  \begin{proof}
  We first show the \xp\ result.
  Given an instance $I=(G=(V,E), \Pi)$ of \FE-(\textsc{strict}) \verif\~, where the largest coalition in $\Pi$ has size $\maxcoal$, we provide an algorithm
  to determine whether $\Pi$ is in the (strict) core and show that it runs in $O(|V|^{\maxcoal}\cdot |E|)$ time. %
  The algorithm consists of two steps. 
 In the first step, we check whether there is a strictly blocking coalition, where each agent get strictly more friends than in $\Pi$, with the procedure from Lemma \ref{lemma:preprocess}. 

 If we found such a coalition, then we stop and conclude $\Pi$ is not a core stable (and hence not strictly core stable) solution.
 Otherwise, a blocking coalition can have size at most $\maxcoal$, because $\Pi$ can only be blocked by some coalition~$B$, in which at least one agent~$x$ has the same number friends in $\Pi$ as in $B$.
 By the blocking coalition definition, agent~$x$ must have no more enemies in~$B$ than in $\Pi$.
 That is, $|B|\le |\Pi(x)|\le \maxcoal$.
 Hence, in the next step we check all $O(|V|^{\maxcoal})$ possible coalitions with at most $\maxcoal$ agents, and check whether they are (strictly) blocking. If there is a weakly (resp.\ strictly) blocking coalition, then $\Pi$ is not  strictly core (resp.\ core) stable, otherwise it must be.

 The running time of the second step is bounded by $O(|V|^{\maxcoal}\cdot |E|)$, since deciding whether a given coalition is (strictly) blocking can be done in $O(|E|)$ time.

 Now, we turn to the hardness result. We reduce from the following problem:

 \decprob{\clique\ }
{A graph $\hG = (\hV, \hE)$, an integer $h$.}{Does~$\hG$ contain a clique of size $h$, i.e., a $h$-vertex subgraph of $\hG$, such that every pair of vertices is adjacent?}

Given an instance $(\hG = (\hV, \hE), h)$ of \clique, we construct an instance of  \FE-\verif\ with $(h-1)|\hV|+ ( h + \binom{h}{2} + 1)|\hE|$ agents.

     \begin{figure}
     \centering
       \begin{tikzpicture}[scale=1,every node/.style={scale=0.9}, >=stealth', shorten <= 2pt, shorten >= 2pt]

      \foreach \x / \y / \n / \nn / \typ / \p / \dx / \dy in {
      0/3/ui/u_i/pn/{below left}/-1/-4,
      -1/4/a1i/a^1_i/pn/{below left}/-1/-4,
      -0.7/5/a2i/a^2_i/pn/{above left}/-1/-4,
      0.7/5/invi/\dots/pnn/{above right}/-1/-4,
      1/4/ah2i/a^{h-2}_i/pn/{below right}/-1/-4,
      4/3/uj/u_j/pn/{right}/1/1,
      3/4/a1j/a^1_j/pn/{below left}/-1/-4,
      3.3/5/a2j/a^2_j/pn/{above left}/-1/-4,
      4.7/5/invj/\dots/pnn/{above right}/-1/-4,
      5/4/ah2j/a^{h-2}_j/pn/{below right}/-1/-4,
      4/1.5/h1/\;/pnn/{above right}/-1/-4,
      5/1.5/h2/\;/pnn/{above right}/-1/-4,
      0.2/1.5/h3/\;/pnn/{above right}/-1/-4,
      2/1/f/f_{\{i,j\}}/pn/{right}/1/1,
      1.3/0/b1/b^1_{\{i,j\}}/pn/{below left}/1/1,
      2.7/0/b2/b^2_{\{i,j\}}/pn/{below right}/-1/1,
      1/-1/b3/b^3_{\{i,j\}}/pn/{below left}/-1/-4,
      2/-1.7/invb/\dots/pnn/{below right}/-1/-4,
      3/-1/bh2/b^{h + \binom{h}{2}}_{\{i,j\}}/pn/{below right}/-1/-4} {
        \node[\typ] at (\x,\y) (\n) {};
        \node[\p = \dx pt and \dy pt of \n] {$\nn$};
      }

      \begin{pgfonlayer}{bg}
           \foreach \s / \t / \aa / \typ in {
           ui/f/0/fc,uj/f/0/fc,ui/h3/0/fc,uj/h1/0/fc,uj/h2/0/fc,
           ui/a1i/0/fc,ui/a2i/0/fc,ui/invi/0/fc,ui/ah2i/0/fc,a1i/a2i/0/fc,a1i/invi/0/fc,a1i/ah2i/0/fc,
           a2i/invi/0/fc,a2i/ah2i/0/fc,ah2i/invi/0/fc,
           uj/a1j/0/fc,uj/a2j/0/fc,uj/invj/0/fc,uj/ah2j/0/fc,a1j/a2j/0/fc,a1j/invj/0/fc,a1j/ah2j/0/fc,
           a2j/invj/0/fc,a2j/ah2j/0/fc,ah2j/invj/0/fc,
           f/b1/0/fc,f/b2/0/fc,b1/b2/0/fc,b1/invb/0/fc,b1/bh2/0/fc,
           b2/invb/0/fc,b2/bh2/0/fc,bh2/invb/0/fc,
           b3/b1/0/fc,b3/b2/0/fc,b3/invb/0/fc,b3/bh2/0/fc} {
             \draw[-, \typ] (\s) edge[bend right = \aa] (\t);
           }

         \end{pgfonlayer}
       \end{tikzpicture}\caption{Illustration for the proof of \cref{thm:W1h+XP_largest_coalition}, depicting the reduction corresponding to vertices $i$, $j$ and the edge $\{i, j\}$ between them.}\label{fig:W1h+XP_largest_coalition}
     \end{figure}
     
For each vertex $i \in \hV$ we add a vertex agent $u_i$, and for each $j \in [h - 2]$ a private agent $a^j_i$.
For each edge $e \in \hE$ we add an edge agent $f_e$, and for each $j \in [h + \binom{h}{2}]$ a private agent $b^j_e$.

We add the following friendship-edges for every $i \in \hV$:
\begin{compactitem}[--]
\item for every $j \in [h - 2]$, add an edge $\{a^{j}_i, u_i\},$
\item for every $j, j' \in [h - 2], j < j'$, add an edge $\{a^j_i, a^{j'}_i\} $
\end{compactitem} 
and for every $e = \{i, j\} \in \hE$:
\begin{compactitem}[--]
\item for every $\ell \in [2]$, add an edge $\{f_e, b^{\ell}_e\}$,
\item for every $j, j' \in [h + \binom{h}{2}], j < j'$, add an edge $\{b^j_e, b^{j'}_e\}$
\item add the edges $\{u_i, f_e\}$, $\{u_j, f_e\}$.
\end{compactitem}
The construction is depicted in \cref{fig:W1h+XP_largest_coalition}.

We define the initial coalition $\Pi = \{\{ u_i, a^1_i, \dots a^{h - 2}_i  \} \mid i \in V \} \cup \{ \{ f_e, b^1_e, \dots, b^{h + \binom{h}{2}}_e \} \mid e \in E  \}$. The size of a largest coalition in $\Pi$ is $h + \binom{h}{2} + 1$, which is clearly a function of $h$.

Observe the following facts about our construction:
\begin{observation}In $\Pi$, the following holds:\label{obvs:init_coal_friends}
\begin{compactenum}[(i)]
\item For every $i \in \hV$, $u_i$ has $h - 2$ friends and $0$ enemies in $\Pi$. \label{obvs:init_coal_friends1}
\item For every $i \in \hV, j \in [h - 2]$, $a^j_i$ has $h - 2$ friends and $0$ enemies in $\Pi$.\label{obvs:init_coal_friends2}
\item For every $e \in \hE$, $f_e$ has $2$ friends and $h + \binom{h}{2} -2$ enemies in $\Pi$.\label{obvs:init_coal_friends3}
\item For every $e \in \hE, j \in [2]$, $b^j_i$ has $h + \binom{h}{2}$ friends and $0$ enemies in $\Pi$. For every $e \in \hE, j \in \{3, \dots, h + \binom{h}{2}\}$, $b^j_i$ has $h + \binom{h}{2}  - 1$ friends and $1$ enemy in $\Pi$.\label{obvs:init_coal_friends4}
\end{compactenum}
\end{observation}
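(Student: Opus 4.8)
The plan is to verify the four statements directly from the construction, one initial coalition at a time; this is a pure bookkeeping argument, so I would not expect any real obstacle beyond keeping track of the convention in the \FE\ model that, inside a coalition, every agent regards every non-friend as an enemy. Concretely, once I know how many friends an agent has inside her own initial coalition, her number of enemies there is just (coalition size) $-\,1\,-$ (that count), so it suffices to count friendships inside each gadget.

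First I would handle the vertex gadgets. The coalition of $u_i$ is $\{u_i,a^1_i,\dots,a^{h-2}_i\}$, of size $h-1$. By construction $u_i$ is adjacent in the friendship graph to each of $a^1_i,\dots,a^{h-2}_i$ and to no other agent in this coalition, so $u_i$ has $h-2$ friends and $0$ enemies there, which is~\eqref{obvs:init_coal_friends1}. Each $a^j_i$ is adjacent to $u_i$ and to every $a^{j'}_i$ with $j'\neq j$ (these $h-1$ agents form a clique in the friendship graph), hence has $1+(h-3)=h-2$ friends and $0$ enemies, giving~\eqref{obvs:init_coal_friends2}.

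Next I would handle the edge gadgets. For $e=\{i,j\}\in\hE$ the coalition of $f_e$ is $\{f_e,b^1_e,\dots,b^{h+\binom{h}{2}}_e\}$, of size $h+\binom{h}{2}+1$; the only friendship edges incident to $f_e$ that stay inside this coalition are $\{f_e,b^1_e\}$ and $\{f_e,b^2_e\}$ (the friends $u_i,u_j$ of $f_e$ lie in other coalitions), so $f_e$ has $2$ friends and $h+\binom{h}{2}-2$ enemies, which is~\eqref{obvs:init_coal_friends3}. For the private agents $b^j_e$ I would use that $b^1_e,\dots,b^{h+\binom{h}{2}}_e$ form a clique in the friendship graph, and that among them exactly $b^1_e$ and $b^2_e$ are additionally adjacent to $f_e$: hence for $j\in[2]$ the agent $b^j_e$ has $(h+\binom{h}{2}-1)+1=h+\binom{h}{2}$ friends and $0$ enemies, while for $j\ge 3$ it has $h+\binom{h}{2}-1$ friends and exactly one enemy, namely $f_e$; this is~\eqref{obvs:init_coal_friends4}.

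There is no genuinely hard step: the observation is an immediate consequence of the definitions of the two gadgets and of the \FE\ preference model. The only point that deserves an explicit word in the write-up is that $f_e$ and each $b^j_e$ with $j\ge 3$ are enemies of one another precisely because the friendship graph contains no $f_e$--$b^j_e$ edge for $j\ge 3$; everything else is arithmetic with coalition sizes.
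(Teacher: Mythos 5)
Your proposal is correct and matches the paper's treatment: the paper states this observation without a separate proof precisely because it follows by the same direct counting of friendship edges inside each gadget coalition (using that in \FE\ every non-friend in a coalition is an enemy), which is exactly what you carry out. All four counts check out, including the coalition sizes $h-1$ and $h+\binom{h}{2}+1$ used implicitly in your enemy counts.
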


We now show that the reduced instance is equivalent to $I$.

\begin{clm}\label{claim:init_coal_forward}
If $\hG$ has a clique of size $h$, then $\Pi$ is not core stable.
\end{clm}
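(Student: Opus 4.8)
The plan is to exhibit an explicit strictly blocking coalition, which already witnesses that $\Pi$ is not core stable. Suppose $Q=\{q_1,\dots,q_h\}$ induces a clique in $\hG$. I would take the coalition
\[
  B \coloneqq \{u_{q_t} \mid t\in[h]\} \;\cup\; \{f_e \mid e\in\hE,\ e\subseteq Q\},
\]
consisting of the $h$ vertex agents for the clique vertices together with the $\binom{h}{2}$ edge agents for the edges lying inside $Q$ (all of which exist precisely because $Q$ is a clique), so that $|B| = h + \binom{h}{2}$. It then remains to check that every agent of $B$ strictly prefers $B$ to her coalition in $\Pi$.

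For a vertex agent $u_{q_t}$, its friends are exactly the private agents $a^1_{q_t},\dots,a^{h-2}_{q_t}$ --- none of which is in $B$ --- and the edge agents $f_e$ for edges $e$ of $\hG$ incident to $q_t$. Since $Q$ is a clique, exactly $h-1$ of the edges inside $Q$ are incident to $q_t$, so $u_{q_t}$ has $h-1$ friends in $B$, whereas by \cref{obvs:init_coal_friends}\eqref{obvs:init_coal_friends1} she has only $h-2$ friends in $\Pi(u_{q_t})$. Under friend-oriented preferences, having strictly more friends already makes $B$ strictly better, so $u_{q_t}$ strictly prefers $B$.

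For an edge agent $f_e$ with $e=\{q_s,q_t\}\subseteq Q$, its only friends are $b^1_e$, $b^2_e$, $u_{q_s}$ and $u_{q_t}$; of these exactly $u_{q_s}$ and $u_{q_t}$ lie in $B$, so $f_e$ has $2$ friends in $B$, the same number she has in $\Pi(f_e)$ by \cref{obvs:init_coal_friends}\eqref{obvs:init_coal_friends3}. Hence I would compare enemies instead: since $|\Pi(f_e)| = h + \binom{h}{2} + 1$, agent $f_e$ has exactly $h + \binom{h}{2} - 2$ enemies in $\Pi$, while since $|B| = h + \binom{h}{2}$ she has only $h + \binom{h}{2} - 3$ enemies in $B$. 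With the same number of friends but strictly fewer enemies, $f_e$ also strictly prefers $B$. Since every member of $B$ strictly improves, $B$ strictly (hence weakly) blocks $\Pi$, and therefore $\Pi$ is neither core stable nor strictly core stable.

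I do not anticipate any real obstacle in this direction: the gadget is tailored so that passing from $\Pi$ to $B$ hands each vertex agent one extra friend and lets each edge agent shed a single enemy, and the only place where the clique hypothesis enters is in guaranteeing that the required edge agents $f_e$ (one per pair inside $Q$) all exist, so that each $u_{q_t}$ indeed gains a friend. The genuinely delicate part of the reduction is the converse --- extracting a clique of size $h$ from an arbitrary blocking coalition --- which is handled in the following claim.
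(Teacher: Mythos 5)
Your proposal is correct and follows essentially the same route as the paper: you take the identical coalition (the $h$ clique-vertex agents plus the $\binom{h}{2}$ edge agents inside the clique) and verify that each vertex agent gains a friend while each edge agent keeps two friends but loses an enemy, so every member strictly improves and the coalition strictly blocks $\Pi$. The enemy count for the edge agents ($h+\binom{h}{2}-3$ in the blocking coalition versus $h+\binom{h}{2}-2$ in $\Pi$) matches the paper's computation, so there is nothing to add.
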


\begin{proof}[Proof of \cref{claim:init_coal_forward}]
\renewcommand{\qedsymbol}{$\diamond$}
Assume $\hG$ has a clique induced by $K \subseteq \hV$ with $|K| = h$. Then we show that the agents $K' = \{u_i \mid i \in K\} \cup \{f_{\{i, j\}} \mid \{i, j\} \subseteq K\}$ form a blocking coalition. Note that because $K$ is a clique, $f_{\{i, j\}}$ exists for every pair $\{i, j\} \subseteq K$. Also, $|K'| = h + \binom{h}{2}$.

Observe that:
\begin{itemize}
\item For every $i \in K$, $u_i$ has $h - 1$ friends in $K'$, namely $f_{\{i,j\}}$ for each $j \in K\setminus \{i\}$, and thus improves from $\Pi$ by \Cref{obvs:init_coal_friends}\eqref{obvs:init_coal_friends1}.
\item For every $\{i, j\} \subseteq K$, $f_{\{i, j\}}$ has $2$ friends, namely $u_i$ and $u_j$, and $(h - 2) + (\binom{h}{2} - 1) = h + \binom{h}{2} - 3$ enemies in $K'$ and thus strictly improves from $\Pi$ by \Cref{obvs:init_coal_friends}\eqref{obvs:init_coal_friends3}.
\end{itemize}

As every agent in $K'$ strictly improves, $K'$ is a strongly blocking coalition.
\end{proof}

\begin{clm}\label{claim:init_coal_backward}
If $\Pi$ is not core stable, then $\hG$ has a clique of size $h$.
\end{clm}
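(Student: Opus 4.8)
The plan is to start from a strictly blocking coalition $B$ of $\Pi$ (one exists because $\Pi$ is not core stable) and show that its vertex agents induce a clique of size $h$ in $\hG$. Throughout I assume $h\ge 3$, which is harmless since \clique\ parameterized by $h$ remains \wone-hard under this restriction.

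\emph{Step 1: $B$ contains no private agent.} Each $a^j_i$ already has all $h-2$ of its friends and no enemy in $\Pi$ by \Cref{obvs:init_coal_friends}\eqref{obvs:init_coal_friends2}, so it cannot strictly improve and hence $a^j_i\notin B$. Likewise $b^1_e$ and $b^2_e$ have all their friends and no enemy in $\Pi$ by \Cref{obvs:init_coal_friends}\eqref{obvs:init_coal_friends4}, so they are not in $B$; consequently any $b^j_e$ with $j\ge 3$ is missing two of its friends ($b^1_e$ and $b^2_e$) inside $B$, so it cannot reach the only configuration that would let it strictly improve (all $h+\binom{h}{2}-1$ friends present and zero enemies), and hence $b^j_e\notin B$. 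So $B$ consists only of vertex agents $u_i$ and edge agents $f_e$.

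\emph{Step 2: two local constraints.} If $f_e\in B$ with $e=\{i,j\}$, then since $b^1_e,b^2_e\notin B$ the only possible friends of $f_e$ in $B$ are $u_i$ and $u_j$; as $f_e$ has exactly two friends in $\Pi$ by \Cref{obvs:init_coal_friends}\eqref{obvs:init_coal_friends3} and at most two available in $B$, strict improvement forces $u_i,u_j\in B$ together with strictly fewer enemies, i.e.\ $|B|-3<h+\binom{h}{2}-2$, so $|B|\le h+\binom{h}{2}$. If $u_i\in B$, then all friends of $u_i$ in $B$ are edge agents $f_e$ with $i\in e$ (the $a^j_i$ being excluded by Step 1), and since $u_i$ has exactly $h-2$ friends and no enemy in $\Pi$ by \Cref{obvs:init_coal_friends}\eqref{obvs:init_coal_friends1}, strict improvement forces at least $h-1$ such incident edge agents to lie in $B$.

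\emph{Step 3: counting.} Let $B_V=\{i:u_i\in B\}$ and $B_E=\{e:f_e\in B\}$. By the first constraint every edge of $B_E$ has both endpoints in $B_V$, and since $B$ is nonempty we must have $B_V\neq\emptyset$ (otherwise $B_E$, and hence $B$, would be empty). Writing $d_i$ for the number of edges of $B_E$ incident to $i$, double counting gives $2|B_E|=\sum_{i\in B_V}d_i\ge (h-1)|B_V|$ by the second constraint, while $|B_E|\le\binom{|B_V|}{2}$; combining yields $(h-1)|B_V|\le |B_V|(|B_V|-1)$, hence $|B_V|\ge h$, and then $|B_E|\ge (h-1)|B_V|/2\ge\binom{h}{2}$. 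Together with $|B|=|B_V|+|B_E|\le h+\binom{h}{2}$ this forces $|B_V|=h$ and $|B_E|=\binom{h}{2}=\binom{|B_V|}{2}$, so $B_E$ is exactly the set of all edges of $\hG$ among the $h$ vertices of $B_V$; thus $B_V$ induces a clique of size $h$, completing the proof. The only place needing care is Step 1 — in particular the two-stage exclusion of the $b$-agents — and checking in Step 3 that the single bound $|B|\le h+\binom{h}{2}$, once matched against $|B_V|\ge h$ and $|B_E|\ge\binom{h}{2}$, is tight enough to pin down both cardinalities exactly; everything else is routine.
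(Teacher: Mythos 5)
Your proof is correct and follows essentially the same route as the paper's: exclude the private agents $a^j_i$ and $b^j_e$, observe that any edge agent in the blocking coalition forces both its vertex agents in and bounds $|B|\le h+\binom{h}{2}$, observe that any vertex agent needs $h-1$ incident edge agents, and then count. The only (cosmetic) difference is the endgame — you pin down $|B_V|=h$ and $|B_E|=\binom{h}{2}$ via the double-counting inequality $2|B_E|\ge(h-1)|B_V|$ combined with $|B_E|\le\binom{|B_V|}{2}$, whereas the paper gets $|V(K')|\ge h$ directly from one vertex agent's $h-1$ edge-agent friends and $\le h$ from handshaking plus the size bound, then argues each vertex is adjacent to all others; both yield the clique.
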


\begin{proof}[Proof of~\cref{claim:init_coal_backward}]
\renewcommand{\qedsymbol}{$\diamond$}
Assume $\Pi$ admits a strictly blocking coalition $K'$.
First we show that for every $i \in \hV, j \in [h-2]$, $e \in E$, and $j' \in [h + \binom{h}{2}]$ no $a^j_i$ or $b^{j'}_e$ is in $K'$.

For every $i \in V, j \in [h - 2]$, $a^j_i$ has by \cref{obvs:init_coal_friends}\eqref{obvs:init_coal_friends2} all of her friends and no enemies in $\Pi$. Thus there is no coalition such that $a^j_i$ strictly improves. Similarly for each $e \in E, j \in [2]$, $b^j_e$ also has all her friends and no enemies in $\Pi$ by \cref{obvs:init_coal_friends}\eqref{obvs:init_coal_friends4}.

Assume $b^j_e \in K'$ for some $e \in E, j \in \{3, \dots,  h + \binom{h}{2}\}$. As $b^j_e$ is in the same coalition with all of her friends by \cref{obvs:init_coal_friends}\eqref{obvs:init_coal_friends4}, she must have all of her friends and fewer enemies in $K'$. However, $b^j_e$ is friends with $b^1_e$, who does not join any blocking coalition, a contradiction.

Assume $f_{\{i, j\}} \in K'$, for some  $\{i, j\} \in \hE$. By \cref{obvs:init_coal_friends}\eqref{obvs:init_coal_friends3} we know that $f_{\{i, j\}}$ must have at least two friends in $K'$. Because the private agents of $f_{\{i,j\}}$ do not join any blocking coalition, those friends must be $u_i$ and $u_j$. This implies that since $f_{\{i, j\}}$ can obtain at most the same number of friends as in $\Pi$, she must have fewer enemies in $K'$. Thus $|K'| \leq h + \binom{h}{2}$.

Let $V(K') = \{u_i \mid i \in \hat{V}\} \cap K'$ be the set of vertex agents in $K'$ and $E(K') =  \{f_e \mid e \in \hat{E}\} \cap K'$ the set of edge agents.  Assume $u_i \in V(K')$ for some $i \in \hV$. As $u_i$ has $h - 2$ friends and no enemies in $\Pi$, she must have at least $h - 1$ friends in $K'$, none of which is her private agent. Thus the blocking coalition must contain $h -1$ edge-agents $f_{\{i, j\}}, \{i, j\} \in \hE$. As edge-agents must obtain both their vertex-agent friends, the blocking coalition must also include $h - 1$ agents  $u_j$ such that $\{i, j\} \in \hE$. Therefore  $|V(K')| \geq h$.

We know that each vertex-agent has at least $h - 1$ friends, and those friends must be edge-agents. This implies that $|V(K')| \leq h$ as otherwise by the handshaking lemma we would have $|E(K')| + |V(K')| \geq \frac{(h + 1)(h - 1)}{2} + h + 1 > h + \binom{h}{2}$. Thus $|V(K')| \leq h$. As we already have $|V(K')| \geq h$, she must be that $|V(K')| = h$.

Let $K \coloneqq \{i \in V \mid u_i \in V(K')\}$. We show that $K$ induces a clique on $\hG$. We know that $|K| = |V(K')| = h$. Let $u_i \in V(K')$ be an arbitrary vertex agent in $K'$. We know she is friends with $h - 1$ edge-agents in $K'$ and every edge agent gets both of her friends in $K'$. Thus there must be at least $h - 1$ other vertices in $K$ such that $i$ is adjacent to them. Because $|K| = h$, every vertex in $K$ must be adjacent to $i$. Thus $K$ must induce a clique of size $h$. 
\end{proof}

Thus we have shown that $I$ has a clique of size $h$ if and only if $\Pi$ admits a strictly blocking coalition.\\

To prove the statement for \FE-\sverif\, we remove the agent $b^{h + \binom{h}{2}}_e$ and the associated friendship edges for every $e \in \hE$.
For every $e \in \hE$, agent~$f_e$ now has $2$ friends and $h + \binom{h}{2} - 3$ enemies.

If there is a clique induced by $K \subseteq \hV$,  $|K| = h$, $K' = \{u_i \mid i \in K\} \cup \{f_{\{i, j\}} \mid \{i, j\} \subseteq K\}$ forms a weakly blocking coalition. For every $i \in \hV$, $u_i$ obtains strictly more friends as before and thus improves. For every $e \in \hE$, $f_e$ obtains $2$ friends and $h + \binom{h}{2} - 3$ enemies, so she weakly improves.

If there is a weakly blocking coalition $K'$, no private agent of a vertex agent joins a blocking coalition through identical reasoning. For every $e \in \hE, j \in [2]$, $b^j_e$ does not join a blocking coalition through the identical reasoning as well. For every $e \in \hE, j \in \{3, \dots, h + \binom{h}{2} - 1\}$, $b^j_e$ must obtain all of her friends and at most one enemy. This enemy cannot be $f_e$, because otherwise $K' = \Pi(b^j_e)$. But we must have that $b^1_e$ obtains all of her friends including $f_e$ in $K'$, a contradiction.

To weakly improve, every $f_e$, where $e \in \hE$, must obtain at least $2$ friends and at most $h + \binom{h}{2} - 3$ enemies, just as in the case for \verif. On the other hand, every $u_i$, where $i \in \hV$, must obtain $h - 2$ friends and no enemies to weakly improve. However, $u_i$'s private agents do not join any blocking coalition, so $u_i$ must obtain $h -2$ edge agents. An edge agent does not join a blocking coalition without both of her friends, so there must be $u_j \in K'$, where $\{i, j\} \in \hE$. However, $u_j$ is an enemy to $u_i$, so $u_i$ does not weakly improve. Thus $u_i$ must strictly improve in any blocking coalition, just as in the proof of \cref{claim:init_coal_backward}. As an arbitrary blocking coalition must contain a vertex agent, there is always an agent that strictly improves. The rest of the reasoning is identical to the \verif\-case.
\end{proof}
}

\noindent Combining~$\maxcoal$ with $\maxdeg$, we obtain a fixed-parameter algorithm, based on random separation.

\newcommand{\coremaxcoaldelta}{%
    \FE-\verif and \FE-\sverif are \fpt wrt. $(\maxcoal, \maxdeg)$. %
}
\begin{theorem}[\appendixsymb]\label{thm:fpt_coalition+deg}
 \coremaxcoaldelta 
\end{theorem}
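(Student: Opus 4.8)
The plan is to combine the preprocessing of \cref{lemma:preprocess} with the \emph{random separation} technique. First, run the linear-time procedure of \cref{lemma:preprocess} on the input~$(\goodG,\Pi)$. If it returns a coalition in which every agent obtains strictly more friends than in~$\Pi$, then this coalition is strictly (hence weakly) blocking, so we report that $\Pi$ is neither core nor strictly core stable. Otherwise we know that \emph{every} weakly (resp.\ strictly) blocking coalition has size at most~$\maxcoal$. Moreover, we may restrict attention to blocking coalitions that are connected in the underlying (undirected) friendship graph: if $B$ is a weakly (resp.\ strictly) blocking coalition that splits as $B=B_1\uplus B_2$ with no friendship edge between $B_1$ and $B_2$, then every agent of $B_1$ keeps exactly her friends from $B$ but has strictly fewer enemies, so she \emph{strictly} prefers $B_1$ to her coalition in $\Pi$; hence $B_1$ is strictly (hence weakly) blocking and $|B_1|<|B|$. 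Iterating, whenever a strictly (resp.\ weakly) blocking coalition exists there is a connected one of size at most~$\maxcoal$. So it suffices to decide whether there is a connected coalition $B\subseteq V$ with $|B|\le\maxcoal$ that strictly (resp.\ weakly) blocks~$\Pi$.

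Next, recall that whether an agent~$v$ (weakly) prefers a coalition~$B\ni v$ to $\Pi(v)$ depends only on $|N^+_{\goodG}(v)\cap B|$, on $|B|$, and on the two input-fixed quantities $|N^+_{\goodG}(v)\cap \Pi(v)|$ and $|\Pi(v)|$, since in \FE\ the enemies of $v$ in $B$ are exactly the members of $B\setminus(\{v\}\cup N^+_{\goodG}(v))$; consequently, testing whether a \emph{given} candidate $B$ blocks $\Pi$ takes time linear in the number of friendship arcs incident to $B$. Now apply random separation to the vertex set of~$\goodG$: colour each agent independently \textbf{green} or \textbf{red}, each with probability $\tfrac12$. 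Fix a connected blocking coalition $B$ with $|B|\le\maxcoal$, if one exists, and consider the ``good event'' that all agents of $B$ are green and all agents of $N_{\goodG}(B)\setminus B$ are red. Since the underlying graph of $\goodG$ has maximum degree $\maxdeg$, we have $|N_{\goodG}(B)\setminus B|\le \maxcoal\cdot\maxdeg$, so the good event prescribes the colour of at most $\maxcoal(\maxdeg+1)$ agents and thus occurs with probability at least $2^{-\maxcoal(\maxdeg+1)}$. Conditioned on it, $B$ is a full connected component of the green subgraph: it is connected and has no green neighbour outside itself.

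This yields the following \fpt\ algorithm. Repeatedly sample such a colouring; in each round, compute the connected components of the green subgraph of~$\goodG$ and, for each component $C$ with $|C|\le\maxcoal$, test in linear time whether $C$ strictly (resp.\ weakly) blocks~$\Pi$; if some such $C$ does, report that $\Pi$ is not core (resp.\ strictly core) stable. After $\Theta\!\left(2^{\maxcoal(\maxdeg+1)}\right)$ unsuccessful rounds, report that $\Pi$ is (strictly) core stable. Note that this algorithm has one-sided error: it never errs on stable instances and, on unstable ones, each round independently exposes the blocking coalition with probability at least $2^{-\maxcoal(\maxdeg+1)}$. It is derandomised in the standard way by replacing the random colourings with an $(n,k)$-universal set for $k=\maxcoal(\maxdeg+1)$, of size $2^{k}\cdot k^{O(1)}\cdot\log n$ and computable in time $2^{k}\cdot k^{O(1)}\cdot n\log n$~\cite{CyFoKoLoMaPiPiSa2015}, which realises every bipartition of every $k$-subset of $V$, in particular the good colouring of $B$. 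Each round costs $O(n+m)$, so the overall running time is $2^{O(\maxcoal\cdot\maxdeg)}\cdot(n+m)\log n$, which is \fpt\ with respect to $(\maxcoal,\maxdeg)$.

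The delicate point is pinning down the separated component to be \emph{exactly} $B$ and not a superset: this is precisely why we also colour $N_{\goodG}(B)\setminus B$ red, and bounding this set by $\maxcoal\cdot\maxdeg$ is exactly where the degree parameter is indispensable. The remaining ingredients (the reduction to connected blocking coalitions, linear-time verification of a candidate, and the universal-set derandomisation) are routine, so I would only spell out the connectivity argument and the success-probability bound in full.
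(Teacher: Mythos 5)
Your proof is correct and follows essentially the same route as the paper's: after the \cref{lemma:preprocess} preprocessing, both arguments exploit that an (inclusion-wise minimal) blocking coalition is connected and of size at most $\maxcoal$, and then separate it from its at most $\maxcoal\cdot\maxdeg$ neighbours via random separation, checking the resulting components for being (weakly/strictly) blocking. The only immaterial differences are that the paper derandomises with $n$-$p$-$(\maxdeg p)$-lopsided universal families, yielding a $\binom{\maxcoal+\maxdeg\maxcoal}{\maxcoal}\cdot 2^{o(\maxcoal\maxdeg)}$ factor instead of your $2^{O(\maxcoal\maxdeg)}$ from plain $(n,k)$-universal sets, and that it uses strongly connected components for strictly blocking coalitions where your undirected-connectivity argument also suffices.
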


\appendixproofwithstatement{thm:fpt_coalition+deg}{\coremaxcoaldelta}{
\begin{proof}
To prove the theorem we first design an algorithm to find a blocking coalition (if exists) and then prove its correctness. Intuitively, the algorithm is divided into two phases. The first phase is a separation phase. To describe this phase formally, we  define $n$-$p$-$q$-{\em lopsided universal} family~\cite{FominLPS16}.

Given a universe $U$ and an integer $i$, we denote all the $i$-sized subsets of $U$ by ${U \choose i}$. We say that a  family 
$\mathcal{F}$ of sets over a universe $U$ with $\vert U\vert=n$, is an $n$-$p$-$q$-{\em lopsided universal} family if for every $A \in {U \choose p}$ and $B \in {U \setminus A \choose q}$, there is an $F \in \mathcal{F}$ such that $A \subseteq F$ and $B \cap F = \emptyset$. 

\begin{lemma}[\cite{FominLPS16}]
\label{lem:lopsidedUniversal}
There is an algorithm that given $n,p,q\in {\mathbb N}$ constructs an  $n$-$p$-$q$-lopsided universal family $\mathcal{F}$ of cardinality ${p+q \choose p} \cdot 2^{o(p+q)}  \log n$ in time 
$\vert  \mathcal{F} \vert  n$. 
\end{lemma}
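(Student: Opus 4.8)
The plan is to prove the lemma as a two-stage \emph{derandomization}. Set $k:=p+q$. The first stage shrinks the ground set from $[n]$ to a set of size polynomial in $k$; the second stage builds the family on this small ground set by recursively composing balanced splitters down to a brute-force base case. Only the first stage ever looks at $n$, so it will account for the single $\log n$ factor and the $n$ in the running time, while the second stage is responsible for the combinatorial factor $\binom{p+q}{p}\,2^{o(p+q)}$.

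\emph{Stage 1 (universe reduction).} Fix a target pair $(A,B)$ with $|A|=p$, $|B|=q$, $A\cap B=\emptyset$, and write $X=A\cup B$, so $|X|=k$. Using the splitter constructions of Naor, Schulman and Srinivasan I would first take a perfect hash family $\mathcal{H}$ of functions $h:[n]\to[k^2]$ of size $k^{O(1)}\log n$, computable in time $k^{O(1)}\,n$, such that for every $k$-set $X$ some $h\in\mathcal{H}$ is injective on $X$. If $\mathcal{G}$ is any lopsided universal family on the small ground set $[k^2]$ for parameters $(p,q)$, then
\[
\mathcal{F}=\{\,h^{-1}(G)\mid h\in\mathcal{H},\ G\in\mathcal{G}\,\}
\]
is lopsided universal on $[n]$: choose $h$ injective on $X$, then $G\in\mathcal{G}$ separating the disjoint sets $h(A)$ and $h(B)$ inside $[k^2]$; its preimage contains $A$ and avoids $B$. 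Hence $|\mathcal{F}|\le|\mathcal{H}|\cdot|\mathcal{G}|$, and it remains to build $\mathcal{G}$ of size $\binom{p+q}{p}2^{o(p+q)}$ on a ground set of size $m=k^2$.

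\emph{Stage 2 (small ground set).} Here I would recurse on the \emph{size of the relevant set}. Apply an $(m,k,r)$-splitter with $r=\lceil\sqrt k\rceil$ buckets, of size $2^{o(k)}$, so that for any $X$ some splitter function distributes $X$ into $r$ groups with $k_i:=|X\cap\mathrm{group}_i|=O(\sqrt k)$ and $\sum_i k_i=k$. For a fixed splitter function, each \emph{shape} $(k_1,\dots,k_r)$, and each \emph{budget vector} $(p_1,\dots,p_r)$ with $\sum_i p_i=p$, form the product family that picks in every group $i$ a member of a within-group lopsided family on ground set $\mathrm{group}_i$ (of size $m/r\approx k^{3/2}$) for parameters $(p_i,\,k_i-p_i)$. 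Since $k_i=O(\sqrt k)$, the recursion depth in the relevant-set size is only $O(\log\log k)$, bottoming out at a constant-size relevant set where I simply enumerate all subsets of the group; this base case is exact, so it produces the binomial coefficients with no overhead.

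The size bookkeeping is the crux and the step I expect to be the main obstacle. Measuring each within-group family by its inductive bound $\binom{k_i}{p_i}2^{o(k_i)}\log(m/r)$ and fixing a shape, the sum over budget vectors factors as a product of local overheads times $\sum_{p_1+\dots+p_r=p}\prod_{i=1}^{r}\binom{k_i}{p_i}$, which Vandermonde's convolution identity collapses to $\binom{k}{p}=\binom{p+q}{p}$. The remaining factors are pure overhead: the $2^{o(k)}$ splitter functions, the number of shapes and budget vectors (each at most the compositions of $k$ into $r$ parts, $\binom{k+r-1}{r-1}\le 2^{O(\sqrt k\log k)}=2^{o(k)}$), the per-group $\log(m/r)=O(\log k)$ factors whose product $(O(\log k))^{\sqrt k}=2^{o(k)}$, and the losses accumulated over the $O(\log\log k)$ recursion levels. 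The delicate point is to choose the branching $r$ and the Naor--Schulman--Srinivasan parameters so that \emph{all} these overheads multiply to only $2^{o(p+q)}$ rather than $k^{\Theta(\log k)}$; this is exactly where subexponential (as opposed to quasipolynomial) splitters are required, and verifying that depth times per-level loss stays $2^{o(k)}$ is the technical heart of the argument. Finally, since emitting each set $h^{-1}(G)$ costs $O(n)$ time, the whole procedure runs in time $|\mathcal{F}|\cdot n$ up to the claimed bound, which completes the construction.
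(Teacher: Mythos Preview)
The paper does not prove this lemma at all: it is quoted verbatim from \cite{FominLPS16} and used as a black box in the proof of \cref{thm:fpt_coalition+deg}. So there is nothing to compare your argument against here.

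That said, your sketch is a reasonable outline of the actual construction in the cited reference (universe reduction via perfect hashing, recursive splitting, Vandermonde to recover the binomial). The places you flag as delicate---keeping the accumulated overhead at $2^{o(p+q)}$ across the recursion and choosing the splitter parameters accordingly---are indeed where the work lies in \cite{FominLPS16}, and your write-up does not fully discharge them; it remains a plan rather than a proof. For the purposes of this paper, though, you are entitled to simply cite the result.
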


Intuitively, first,  the algorithm separates the agents in a blocking coalition from the remaining agents by deleting the non-blocking neighbors of the blocking coalition in $\goodG$. This is achieved  using the $n$-$p$-$(\maxdeg\cdot p)$-{\em lopsided universal} family $\mathcal{F}$. 
 In the second phase, we identify either a  strictly or a weakly blocking coalition (if  exists) for the given core partition or strict core partition, respectively.

\noindent
\textbf{Algorithm.} 
Let $n = |V|$ and $\maxdeg$ %
denote the maximum degree of $\goodG$.
 We write the agents in $V$ as $[n]$ and for $X \subseteq [n]$, we write $\goodG[X]$ to denote the friendship graph induced by $X$. The algorithm is simple. First we check whether there is a (strictly) blocking coalition $S$ of size $|S|>\maxcoal$. If yes, we stop and return $S$, otherwise continue.

 Then, for each integer $1\leq p \leq \maxcoal$, we construct a $n$-$p$-$(\maxdeg \cdot p)$-lopsided universal family $\mathcal{F}$ using the algorithm in \Cref{lem:lopsidedUniversal}. 
 Then, for each $F \in \mathcal{F}$ we check if a (strongly) connected component $C$ in $\goodG[F]$ is (strictly) blocking $\Pi$. If so, we return $C$.

\noindent
\textbf{Correctness.}
Let $S$ be an inclusion-wise minimal hypothetical coalition blocking (or strictly blocking) $\Pi$. 
To show that our algorithm correctly computes such a blocking coalition $S$, we first observe that $|S| \leq \maxcoal$, otherwise we can find such an $S$ in polynomial time by lemma \ref{lemma:preprocess} in the first phase. %
Next we show that there exists $F \in \mathcal{F}$ such that $S$ is a (resp., strongly) connected component in $\goodG[F]$ which completes the proof.

Let the union of the neighbors of the agents of $S$ in $\goodG$ excluding $S$ be $N(S)$, i.e., $N(S) = (\cup_{v \in S} (N^-(v) \cup N^+(v))) \setminus S$. Observe that since $\maxdeg$ is the maximum degree of any agent in $\goodG$ and $\vert S \vert = p$, we have that $\vert N(S)  \vert \leq \maxdeg p$. 
Therefore, from the definition of $\mathcal{F}$,  there exists a set $F \in \mathcal{F}$ such that $S \subseteq F$ and $N(S) \cap F =\emptyset$. We call such a set $F$ a \emph{good} set.

For each set $F \in \mathcal{F}$ 
we define $\goodG[F]$, the friendship graph induced by the agents in $F$.
We know that an inclusion-wise minimal (resp., strictly) blocking coalition must be (resp., strongly) connected. Hence, $S$ is a (resp., strongly) connected component in $\goodG[F]$ for a good $F$ since $N(S) \cap F = \emptyset$. As there exist a good $F \in \mathcal{F}$, the algorithm finds it correctly by enumerating the (resp., strongly) connected components of $\goodG[F]$ and checking if it is  a (resp., strictly) blocking coalition. 

\noindent
\textbf{Running Time.} Note that the second step of the algorithm can be done in time $O(n^2)$ time. The algorithm in \Cref{lem:lopsidedUniversal} takes time $\vert  \mathcal{F} \vert  n$ to compute $\mathcal{F}$ and to enumerate the set $\mathcal{F}$. Hence, the total time required is ${(\maxcoal+\maxdeg \maxcoal) \choose \maxcoal}\cdot 2^{o(\maxcoal+\maxdeg \maxcoal)}  \log n \cdot n^{O(1)}$ which is $2^{O(\maxcoal \log \maxdeg)o(\maxdeg \maxcoal)} \cdot n^{O(1)}$. This concludes the proof of the theorem.
\end{proof}
}

Based on the observation below, we obtain a color-coding based fixed-parameter algorithm for the combined parameter~$(\maxcoal,\fas)$.
To this end, we call a coalition in a given coalition structure~$\Pi$ a \myemph{singleton} (resp.\ \myemph{non-singleton}) coalition if it has size one (resp.\ larger than one). 
Accordingly, an agent is a \myemph{singleton} (resp.\ \myemph{non-singleton}) agent (wrt.\ $\Pi$) if she is in a {singleton} (resp.\ {non-singleton}) coalition.
\newcommand{\obsfecorefptkfas}{
  If $\Pi$ is core stable, then there are at most $\maxcoal \cdot \fas$ non-singleton agents.
}

\begin{observation}[\appendixsymb]\label{obs:non_singleton_number}
\obsfecorefptkfas
\end{observation}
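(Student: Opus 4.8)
The plan is to establish a structural constraint on the non-singleton coalitions of a core stable partition and then charge each such coalition to a distinct arc of a feedback arc set of $\goodG$.

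\textbf{Step 1 (core stability forces individual rationality, and uses the absence of neutrals).} First I would observe that in a core stable $\Pi$ no agent strictly prefers being alone to her own coalition: if $v$ strictly preferred $\{v\}$ to $\Pi(v)$, then the singleton $\{v\}$ would be a strictly blocking coalition. Now take any non-singleton coalition $C\in\Pi$ and any $v\in C$. Since we are in the \FE\ model, there are no neutral agents, so every agent in $C\setminus\{v\}$ is either a friend or an enemy of $v$. If $v$ had no friend inside $C$, then $\{v\}$ and $C$ give $v$ the same number of friends (namely $0$), while $C$ contains at least $|C|-1\ge 1$ enemies of $v$ and $\{v\}$ contains none; hence $\{v\}\succ_v C$, contradicting core stability. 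So in a core stable $\Pi$, every agent of every non-singleton coalition has at least one friend inside her own coalition.

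\textbf{Step 2 (each non-singleton coalition induces a cycle).} Step~1 says that for each non-singleton $C\in\Pi$ the induced subdigraph $\goodG[C]$ has minimum out-degree at least one; a finite digraph with this property contains a directed cycle (repeatedly follow out-arcs until a vertex is revisited).

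\textbf{Step 3 (charging to a feedback arc set).} Fix a minimum feedback arc set $F$ of $\goodG$, so $|F|=\fas$ and $\goodG-F$ is acyclic. For each non-singleton $C\in\Pi$, the directed cycle found in Step~2 lies entirely inside $\goodG[C]$; were $F$ to contain no arc with both endpoints in $C$, then $\goodG[C]$ would be a subdigraph of the acyclic graph $\goodG-F$, a contradiction. Hence $F$ contains at least one arc internal to $C$. Since the coalitions of $\Pi$ are pairwise vertex-disjoint, the internal arc sets of distinct coalitions are pairwise disjoint, so distinct non-singleton coalitions are charged to distinct arcs of $F$. Therefore $\Pi$ has at most $|F|=\fas$ non-singleton coalitions, each of size at most $\maxcoal$, giving at most $\maxcoal\cdot\fas$ non-singleton agents.

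\textbf{Main obstacle.} There is no deep technical difficulty; the only place requiring care is Step~1, and in particular the point that it genuinely relies on the absence of neutral agents — the argument would fail verbatim in the \FEN\ model, where a non-singleton coalition may consist of one agent together with agents neutral to her. Everything else is a disjointness-and-counting argument.
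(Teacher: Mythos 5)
Your proof is correct and follows essentially the same route as the paper: each non-singleton coalition of a core stable partition contains a directed cycle in $\goodG$, these cycles are vertex-disjoint, so a feedback arc set must hit each one, giving at most $\fas$ non-singleton coalitions and hence at most $\maxcoal\cdot\fas$ non-singleton agents. The only (harmless) difference is that you derive the cycle from individual rationality (every agent needs a friend in her coalition, so minimum out-degree at least one), whereas the paper invokes the known fact that core stable coalitions induce strongly connected subgraphs.
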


\appendixproofwithstatement{obs:non_singleton_number}{\obsfecorefptkfas}{
  \begin{proof}[Proof of \cref{obs:non_singleton_number}]
    \renewcommand{\qedsymbol}{$\diamond$}
    As noted by \cite{woeginger2013core}, it is straightforward that in a core stable partition every partition must be a strongly connected component.

    Each connected non-singleton component contains at least one cycle and because the initial coalitions are pairwise disjoint, these cycles are disjoint. To make $\goodG$ acyclic, we must remove an edge from each of these cycles. Therefore the feedback arc set number of $\goodG$ is at least the number of non-singleton coalitions in $\Pi$.    
    Since each coalition has at most $\maxcoal$ agents, the observation follows.
  \end{proof}
}

\newcommand{\thmfecorefptkdelta}{
\FE-\verif and \FE-\sverif are \fpt wrt.\ $(\maxcoal, \fas)$.
}
\begin{theorem}\label{thm:fe-core-fpt-k-delta}
  \thmfecorefptkdelta
\end{theorem}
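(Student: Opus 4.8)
The plan is to combine the two structural facts established just above---Lemma~\ref{lemma:preprocess} (wonderfully blocking coalitions can be detected in linear time, and otherwise every weakly/strictly blocking coalition has size at most $\maxcoal$) and Observation~\ref{obs:non_singleton_number} (a core stable partition has at most $\maxcoal\cdot\fas$ non-singleton agents)---with a color-coding argument on top of the reduction to \textsc{Directed Subgraph Isomorphism} hinted at in the introduction. First I would run the Lemma~\ref{lemma:preprocess} procedure; if it returns a wonderfully blocking coalition, we are done (the partition is neither core nor strictly core stable), so assume from now on that every weakly blocking coalition has size at most $\maxcoal$. I would also first check, in polynomial time, whether every coalition of $\Pi$ is a strongly connected component of $\goodG$; if not, $\Pi$ is not core stable. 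So assume $\Pi$ has at most $\maxcoal\cdot\fas$ non-singleton agents.

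\textbf{Key steps.} Let $B$ be a hypothetical inclusion-wise minimal (weakly or strictly) blocking coalition; then $|B|\le\maxcoal$ and $\goodG[B]$ is strongly connected. Write $B = B_{\mathsf{NS}}\cup B_{\mathsf{S}}$ where $B_{\mathsf{NS}}$ is the set of non-singleton agents of $\Pi$ contained in $B$ and $B_{\mathsf{S}}$ the singleton ones. Because singleton agents of $\Pi$ have, by definition, zero friends inside their own coalition, any singleton agent placed in $B$ only needs to gain at least one friend and lose no ground on enemies---the key point being that a minimal such $B$ cannot be ``too spread out.'' The crucial observation (the one the introduction calls ``our crucial observation'') is that, since the coalitions of $\Pi$ are strongly connected, contracting each non-singleton coalition of $\Pi$ and looking at how $B$ meets the singleton agents, the ``skeleton'' of $B$ is essentially a directed in-tree: a minimal strongly connected subgraph that is forced to reuse the few non-singleton coalitions can be described by a rooted in-tree pattern of size $O(\maxcoal^2)$. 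I would then make this precise by (i) guessing, by brute force, which subset $B_{\mathsf{NS}}$ of the at most $\maxcoal\cdot\fas$ non-singleton agents lies in $B$ (there are at most $2^{\maxcoal\cdot\fas}$ choices, or more carefully $\binom{\maxcoal\cdot\fas}{\le\maxcoal}$), together with the friend/enemy counts these agents already have in $\Pi$; and (ii) reducing the remaining task---extending $B_{\mathsf{NS}}$ by at most $\maxcoal$ singleton agents so that the result is strongly connected and every agent strictly (resp.\ weakly) improves---to finding a copy of a directed in-tree pattern $T$ of size $O(\maxcoal^2)$ whose root is anchored at $B_{\mathsf{NS}}$. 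For the in-tree pattern search I would apply standard color-coding: randomly color the singleton agents with $|V(T)|=O(\maxcoal^2)$ colors, look for a colorful homomorphic image of $T$ by bottom-up dynamic programming over $T$ (for each vertex of $T$ and each color subset, keep the set of agents that can host that subtree), and derandomize via an $(n,O(\maxcoal^2))$-perfect hash family. Finally I would verify that the colorful tree found, together with $B_{\mathsf{NS}}$, indeed yields a blocking coalition by checking the improvement condition for each of its at most $\maxcoal$ agents in linear time; for the strict-core version one additionally requires that at least one agent strictly improves, which can be handled by iterating over which agent is the strict improver.

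\textbf{Running time and the main obstacle.} The number of guesses for $B_{\mathsf{NS}}$ is bounded by a function of $\maxcoal$ and $\fas$ only; the color-coding step runs in $2^{O(\maxcoal^2)}\cdot n^{O(1)}$ time (the pattern has $O(\maxcoal^2)$ vertices and constant treewidth, being a tree); derandomization contributes only a $2^{O(\maxcoal^2)}\log n$ factor; so the whole algorithm is fixed-parameter tractable in $(\maxcoal,\fas)$. The main obstacle, and the part deserving the most care, is establishing the ``in-tree skeleton'' claim: showing that a minimal strongly connected blocking coalition of size at most $\maxcoal$, once we have fixed which non-singleton agents it uses, admits a description as a bounded-size directed in-tree pattern anchored at those agents, so that \textsc{Directed Subgraph Isomorphism} with an in-tree pattern---rather than the general (and hard) version---suffices. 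Everything else is routine: Lemma~\ref{lemma:preprocess}, Observation~\ref{obs:non_singleton_number}, and textbook color-coding with perfect hash families. I would also take care of the boundary case $\fas=0$ (DAG) separately, where Proposition~\ref{obs:FE-acyclic} already gives a polynomial algorithm, and note that when there are no non-singleton agents at all the pattern search reduces to looking for a single short directed cycle plus attached in-branches, which is the genuinely new ingredient over Theorem~\ref{thm:fpt_coalition+deg}.
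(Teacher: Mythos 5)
Your overall strategy coincides with the paper's: preprocess with \cref{lemma:preprocess} and \cref{obs:non_singleton_number}, guess the non-singleton part~$\Bns$ of a hypothetical blocking coalition among the at most $\maxcoal\cdot\fas$ non-singleton agents, and find the at most $\maxcoal$ missing singletons by color-coding against a directed in-tree pattern of size $O(\maxcoal^2)$, derandomized via perfect hash families. The problem is that the one step you explicitly defer---the ``in-tree skeleton'' claim---is the heart of the proof, and the route you sketch for it (contract the coalitions of $\Pi$, then appeal to minimality and strong connectivity of the blocking coalition) is not the mechanism that makes it true. Minimality does give strong connectivity of the induced friendship graph, but strong connectivity by itself produces no tree pattern, and in particular no pattern encoding \emph{how many} distinct singleton friends each agent of $\Bns$ must acquire; these counting requirements are what the whole construction hinges on.

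What is actually needed (and what the paper does) is the following. First, after your own preprocessing one already knows that $\goodG[\singles]$ is acyclic: a cycle among singleton agents would be a coalition in which everyone strictly gains a friend, so \cref{lemma:preprocess} would have reported it. Second, for a fixed target size~$b$ of the extension one defines, for each $a_i\in\Bns$, the number $r(a_i)$ of singleton friends that $a_i$ must gain; this threshold depends on comparing $b$ with $|\Pi(a_i)|$ (and differs between \verif\ and \sverif), so the pattern must carry these multiplicities. Third, one duplicates each $a_i\in\Bns$ into $r(a_i)$ copies, deletes all arcs not incident to singletons, and redirects every arc from a singleton into $\Bns$ to an artificial sink~$t$; the resulting graph is a DAG, and inside any blocking coalition every vertex other than $t$ has out-degree at least one there (with copies of the same $a_i$ forced onto distinct singleton out-neighbors), so picking one out-arc per vertex yields an in-tree rooted at~$t$ with $O(\maxcoal^2)$ vertices. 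This acyclicity-plus-duplication-plus-sink argument, not contraction or minimality, is the missing lemma; once it is in place, your colorful-tree dynamic programming and the hash-family derandomization go through essentially as in the paper (which, incidentally, needs only $b-|\Bns|\le\maxcoal$ colors, since only the singletons are colored). Finally, your separate treatment of the case $\Bns=\emptyset$ via ``a short cycle plus in-branches'' is unnecessary: after \cref{lemma:preprocess} reports no wonderfully blocking coalition, a blocking coalition consisting solely of singletons cannot exist, which is exactly why the anchored-tree search over nonempty $\Bns$ suffices.
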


\begin{proof}[Proof sketch]
  Let $(V, \goodG)$ be an instance of \FE and $\Pi$ an initial coalition structure.
  The algorithm has two phases.
  First, %
  we preprocess the instance so that each non-trivial blocking coalition has at most $\maxcoal$ agents and the reduced instance excludes some undesired cycles.
  Second, we further reduce the instance to one which is acyclic and observe that any non-trivial blocking coalition must ``contain'' an in-tree of size $O(\maxcoal^2)$.
  Hence, for each possible in-tree we can use color-coding to check whether it exists in FPT-time. In the following we use $\singles$ and $\nonsingles$ to denote the singleton and non-singleton agents, respectively.
  
 \noindent The first phase consists of the following polynomial-time steps:
  \begin{compactenum}[\textbf{(P}1\textbf{)}]
    \item Check whether $\Pi$ contains a coalition~$U$ such that $\goodG[U]$ is not strongly connected.
    If yes, then return NO since the strongly connected subgraph corresponding to the sink component in~$\goodG[U]$ is strictly blocking $\Pi$. 
    \item If $\goodG[\singles]$ contains a cycle, then the singletons agents on the cycle is strictly blocking $\Pi$, so return NO.  \label{fe-core-fpt-k-delta_prep_2}%
    \item\label{P3} By \cref{lemma:preprocess}, check in linear time whether there is a blocking (resp.\ weakly blocking) coalition of size greater than $\maxcoal$.
\end{compactenum}

\appendixalg{thm:fe-core-fpt-k-delta}{%
\noindent  The second phase is as follows:  
  For each subset~$\Bns\subseteq \nonsingles$ of size $k'\le \maxcoal$ and each size $b$ with $k' \le b\le \maxcoal$,
  we check whether there exists a blocking coalition of size~$b$ which contains all non-singletons from $\Bns$ and exactly $b-|\Bns|$ singletons; note that after phase one, we only need to focus on coalitions of size at most $\maxcoal$ and can assume that $|\nonsingles|\le \maxcoal\cdot \fas$.
  We return YES if and only if no pair~$(\Bns, b)$ can be extended to a blocking coalition (i.e., \cref{alg:dp_treesearch} returns NO for all $(\Bns, b)$).

  Given $(\Bns, b)$, the task reduces to searching for the $b-|\Bns|$ missing singleton agents, assuming that such an extension is possible.
  To achieve this, we reduce to searching for an in-tree of size $O(\maxcoal^2)$ in a directed acyclic graph~(DAG)~$H$, which using color coding, can be done in $f(\maxcoal)\cdot |H|^{O(1)}$ time where $f$ is some computable function.
  First of all, if $|\Bns| = b$, then we check whether $\Bns$ is blocking (resp.\ weakly blocking) in polynomial time and return $\Bns$ if this is the case; otherwise we continue with a next pair~$(\Bns, b)$.
  In the following, let $B$ be a hypothetical blocking coalition of size~$b> |\Bns|$ which consists of~$\Bns$ and $b-|\Bns|$ singleton agents. 
  The searching has two steps.

  \noindent \textbf{(C1) Construct a search graph~$\hgoodG$ from $\goodG$.} Based on $\goodG$ we construct a DAG~$\hgoodG$, where
  we later search for the crucial part of the blocking coalition.
  We compute the minimum number~$r(a_i)$ of singleton-friends each non-singleton agent~$a_i$ in $\Bns$ should obtain from $B$ by checking how many friends she initially has.
  Let \myemph{$\frPi$} and \myemph{$\frBns$} denote the number of friends agent~$a_i$ has in~$\Pi(a_i)$ and $\Bns$, respectively.
  If $\frPi <  \frBns$, then let $r(a_i) \coloneqq 0$.
  Otherwise for \verif, let

  {\centering
    $r(a_i) \coloneqq
    \begin{cases}
      \frPi + 1 - \frBns, & \text{ if } b \geq |\Pi(a_i)| \\
      \frPi - \frBns, &  \text{ otherwise.}
    \end{cases}$
    \par
  }
  
  \noindent For \sverif, the first if-condition is $b > |\Pi(a_i)|$ instead of $b \geq |\Pi(a_i)|$.
  After the computation, we check whether some non-singleton agent~$a_i\in \Bns$ has $r(a_i) > b - |\Bns|$.
  If $a_i$ is such an agent, then she will not weakly prefer~$B$ to~$\Pi(a_i)$ since there are not enough friends for her, so we continue with a next pair~$(\Bns,b)$.
  Now, we construct~$\hgoodG$.
  First, duplicate the vertices in $\Bns$ as \myemph{$\hBns$} $\coloneqq \{a^z_i \mid (a_i,z) \in \Bns \times [r(a_i)]\}$.
  The vertex set and the arc set of $\hgoodG$ are defined as
  \myemph{$\hV$} $\coloneqq \hBns \cup \singles \cup \{t\}$, where $t$ is an artificial sink, and
  \myemph{$\hE$} $\coloneqq \{(a^z_i, s) \mid (a_i, s) \in E(\goodG)
  \cap (\Bns\times \singles), z\in [r(a_i)]\}
  \cup  E(\goodG[\singles]) \cup \{(s, t) \mid (s, a_i) \in E(\goodG) \cap (\singles \times \Bns)\}$, respectively.
  Now let $\hgoodG \coloneqq (\hV, \hE)$.
  Briefly put, we remove all arcs that are not incident to the singletons,
  and redirect every arc from a singleton to a non-singleton vertex in $\Bns$ to the artificial sink~$t$. 
  Note that $\hgoodG$ is acyclic since by (P\ref{fe-core-fpt-k-delta_prep_2}) no singleton agents induce a cycle.

  \noindent \textbf{(C2) Search for a tree structure in $\hgoodG$.}
  Observe that in $\goodG[B]$, each non-singleton agent~$a_i\in \Bns$ has at least $r(a_i)$ singleton friends and each singleton agent in $B\setminus \Bns$ has at least one friend.
  Equivalently, in the modified induced subgraph~$\hgoodG[B]$, each non-singleton agent in~$\hBns$ (resp.\ singleton agent in $B\setminus \hBns$) has at least one out-arc. Then, $\hgoodG[B]$ contains an in-tree~$\TB$ on vertex set~$\hBns\cup (B\setminus \Bns) \cup \{t\}$ such that 
\begin{compactenum}[(t1)]
  \item \label{tree1} every vertex~$a_i^z\in \hBns$ has exactly one out-neighbor and this out-neighbor is a singleton vertex such that no two non-singletons~$a_i^z$ and $a_i^{j}$ ($z\neq j$) share the same out-neighbor, 
  \item \label{tree2} every singleton vertex in $V(\TB)\setminus (\hBns\cup \{t\})$ has exactly one out-neighbor and this out-neighbor is either the root~$t$ or some singleton vertex, and
  \item \label{tree3} $t$ does not have any out-neighbors.
\end{compactenum}
Observe that $\TB$ has exactly $|\hBns|+|B\setminus \Bns|$ arcs. 
Since $\hgoodG$ is acyclic and the artificial sink~$t$ does not have any out-arcs, by the above conditions, $\TB$ must be a directed in-tree with root at~$t$.
In particular, $\TB$ is connected.
For ease of reasoning, let us call an in-tree~$T$ \myemph{good} if
there exists a subset~$B'$ of $\singles$ with $b-|\Bns|$ vertices
such that $T$ is a directed graph on $\hBns\cup B' \cup \{t\}$
and satisfies condition~(t\ref{tree1})--(t\ref{tree3}) above, replacing the name $\TB$ with $T$.

If $\hgoodG$ contains a good in-tree~$T'$, then the vertices in~$\Bns$ and the singleton vertices in $T'$ forms a desired blocking coalition. 
By applying the color-coding algorithm of \citet{Alon1995Colorcoding}, we can already search for a good in-tree in FPT-time.
For the sake of completeness and to better analyze the running time, we show how to combine color-coding with a polynomial-time algorithm to search for it.

We describe the approach via~\cref{alg:dp_treesearch}.
By \citet{naor1995splitters}, in line~\ref{alg:colorcoding-hash}, we compute in $f(\maxcoal)\cdot |\singles|^{O(1)}$ time a family~$\mathcal{F}$ of coloring functions  (aka.\ \myemph{perfect Hash family}) from $\singles$ to $[b-|\Bns|]$ which guarantees to contain a \emph{good} {coloring} function.
Here, a function $\chi\colon \singles\to [b-|\Bns|]$ is called \myemph{good} (wrt.~$B$) if it assigns to each singleton vertex in $B\cap \singles$ a distinct color from $[b-|\Bns|]$\ifshort; see the full version for more details on this.
\else
See appendix for more details on this.
\fi
Hence, in line~\ref{alg:colorcoding-coloring} we iterate through each coloring~$\chi$ in~$\mathcal{F}$.
Note that if $\chi$ is good for~$B$,
then after coloring the vertices in $\singles$ according to $\chi$,
there must exist a good in-tree on vertices $\hBns\cup [b-|\Bns|]\cup \{t\}$ as well. %
Hence, in line~\ref{alg:colorcoding-tree}, we iterate through all good in-trees~$T$. %
\begin{algorithm}[t!]
\caption{\textbf{(C2)} Searching for $\TB$ in $\hgoodG$ given $\Bns$ and $b$}\label{alg:dp_treesearch}

  $\mathcal{F} \gets $ $(|\singles|,b-|\Bns|)$-perfect Hash family on the universe~$\singles$\label{alg:colorcoding-hash}

  \ForEach{coloring~$\coloring \in \mathcal{F}$ with $\chi\colon \singles \to [b-\Bns]$\label{alg:colorcoding-coloring}}{
    \ForEach{Good in-tree~$T$~on vertex set~$\hBns\cup [b-|\Bns|] \!\cup\! \{t\}$\label{alg:colorcoding-tree}}{%
      $\Bs \gets \emptyset$; $\tau(\singles) \gets$ a topological order of $\singles$ in $\hgoodG$
      
      \ForEach{$v \in \tau(V_S)$\label{alg:colorcoding-singletons-A}}{
        $\Colors_v \gets N^-_T(\coloring(v)) \cap [b-|\Bns|]$ \label{alg:colors}
        
        \If{$\Colors_v\subseteq \coloring(N^{-}_{\hgoodG}(v)\cap \singles)$
          and $N^-_T(\coloring(v)) \cap \hBns = N^{-}_{\hgoodG}(v)\cap \hBns$\label{eq:in-neighbors}}
        {let $\sing_v \!\subseteq\! N^{-}_{\hgoodG}\!(v)\!\cap\! \singles$ s.t.\  $|\sing_v|\!=\!|\Colors_v|$ and $\coloring(\sing_v)\!=\!\Colors_v$\label{alg:colorcoding-S_v}}
        \lElse{$\hgoodG\gets \hgoodG-v$}\label{alg:colorcoding-singletons-Z}
      }
      \If{$N^-_T(t) \subseteq \coloring(N^{-}_{\hgoodG}(t))$\label{alg:colorcoding-t}}
      {
        let $\sing_t \!\subseteq\! N^{-}_{\hgoodG}(t)$ s.t.\ $|\sing_t|\!=\!|N^-_T(t)|$ and $\coloring(\sing_t)\!=\!N^-_T(t)$\label{alg:colorcoding-in-t}%

        $\Bs\gets \sing_t$; $Q \gets \sing_t$

        \While{$Q \neq \emptyset$}{
          $P \gets \bigcup_{v\in Q}\sing_v$

          $\Bs\gets \Bs \cup P$;     $Q\gets P$            \label{alg:colorcoding-Bs}%

        }\label{alg:t-loop}
        \lIf{$|\Bs|=b-|\Bns|$}
        {\Return{$\Bs\cup \Bns$}}
      }
    }
  }

\Return NO, i.e., $(\Bns,b)$ cannot be extended to a blocking coalition
\end{algorithm}

For ease of reasoning, we also use color to refer to a vertex in $[b-|\Bns|]$, and given a subset~$S'\subseteq \singles$ let $\chi(S')=\{\chi(s)\mid s\in S'\}$.
In lines~\ref{alg:colorcoding-singletons-A}--\ref{alg:colorcoding-singletons-Z}, we iterate through each singleton vertex~$v$ in the topological order~$\tau(\singles)$ in $\hgoodG$ (recall that $\hgoodG$ is a DAG),
and check whether it has enough in-neighbors whose colors match the in-neighbors of its color~$\chi(v)$ in the tree~$T$ (line~\ref{eq:in-neighbors}).
More specifically, we check whether $v$ has singleton in-neighbors of colors~$\Colors_v$ indicated by the in-neighbors of $\chi(v)$ in $T$ and whether it has the same non-singleton in-neighbors as its color~$\chi(v)$ in $T$.
If yes, then we use $\sing_v$ to store a subset of such singleton in-neighbors for~$v$ (line \ref{alg:colorcoding-S_v}). 
Otherwise, assuming that $\chi$ is good, $v$ cannot be used for the blocking coalition, so we delete it from the search graph.
Note that the order~$\tau(\singles)$ ensures that we do not mistakenly store a singleton vertex which cannot be used later on. %
Finally, in line~\ref{alg:colorcoding-t}, we check whether the in-neighbors of $t$ contain enough singletons with appropriate colors.
If yes, we use the stored set of singleton vertices~$\sing_v$ to iteratively collect all vertices in $\sing_v$ from root~$t$ to leaves; note that in an in-tree, the root is the sink and the leaves are the sources.
We return the set~$\Bs$ if it contains exactly $b-|\Bns|$ singletons, and return NO if no iteration gives a desired set~$\Bs$.
\ifshort
The correctness proof and running time analysis are deferred to the full version.
\else
The correctness proof and running time analysis are deferred to the appendix.
\fi
}
{\thmfecorefptkdelta}{%
  \smallskip
\noindent \textbf{Missing material for the perfect Hash family.}

\begin{definition}[\cite{Alon1995Colorcoding,naor1995splitters}]
  A \emph{$(|U|,k)$-perfect Hash family~$\mathcal{F}$} on universe $U$ is a family of functions from $U$ to $[k]$
  such that for every subset~$S\subset U$ of size~$k$ there exists a function~$f$ in the family~$\mathcal{F}$ that is bijective on~$S$, i.e., $f^{-1}([k])=S$.
\end{definition}

\begin{lemma}[\cite{naor1995splitters}]\label{lem:naor_splitters}
  There is an algorithm that given $U$ and $ k\in \mathds{N}$ constructs a $(|U|, k)$-perfect Hash family~$\mathcal{F}$ on $U$ of cardinality $e^k \cdot k^{O(\log k)}\cdot \log|U|$ in $e^k\cdot k^{O(\log k)}\cdot |U| \log|U|$ time.
\end{lemma}

\noindent \textbf{Correctness of the algorithm.}
The correctness proofs for \FE-\verif\ and \FE-\sverif\ work almost the same. The only difference lies in the value~$r(a_i)$ for each non-singleton agent~$a_i$. 

\begin{sloppypar}We start by showing that if after the two phases the algorithm returns a coalition~$B$,
then it is blocking (resp.\ strictly) $\Pi$.
By construction, let $B$ consist of the non-singletons~$\Bns$ and the singletons~$\Bs$, and let $\chi$, $T$, and $(\sing_s)_{s\in \Bs}$ be the good coloring function, the in-tree, and the stored vertex sets, respectively, that were used to find $\Bs$.
Let $b=|B|$.
\end{sloppypar}

Using the inverse~$\chi^{-1}$, we claim that $\hgoodG$ contains a tree which is derived from $T$ by renaming the colors back to the singleton agents: %

\begin{claim}\label{cl:T-is-contained-G}
  The following function~$\phi\colon V(T) \to \hBns \cup \Bs \cup \{t\}$ is bijective.
  \begin{align*}
    \phi(\chi(v)) = v,   \text{ for every } v\in \Bs, \text{ and }\\
    \phi(a) = a,  \text{ for every } a \in \hBns \cup \{t\}.
  \end{align*}
 For each~$(i,j)\in E(T)$ we have $(\phi(i), \phi(j))$ $\in$ $E(\hgoodG[\hBns\cup \Bs \cup \{t\}])$.
\end{claim}
\begin{proof}[Proof of \cref{cl:T-is-contained-G}] \renewcommand{\qedsymbol}{$\diamond$}
  We first show that $\phi$ is well-defined and bijective.
  For the well-definedness of $\phi$, it suffices to show that no two vertices in $\Bs$ share the same color under~$\chi$ since $|\Bs|=b-|\Bns|$.
  Certainly, by construction, all agents in $\sing_t$ have different colors (see line~\ref{alg:colorcoding-t}).
  Thus, we need to show that if $\Bs$ does not contain any two vertices with the same color before the execution in line~\ref{alg:colorcoding-Bs},
  then it will neither after this line.
  Now, assume that $\Bs$ does not have two vertices with the same color before executing line~\ref{alg:colorcoding-Bs}, and let $Q$ denote a set of vertices that were added in a previous iteration or in line~\ref{alg:colorcoding-t} if this is the first iteration.
  Suppose, for the sake of contradiction, that some two vertices in $\Bs\cup (\cup_{v\in Q}\sing_v)$ has the same color, say $c$.
  Since we only add a vertex~$u$ if it has out-arc to some vertex~$v$ in $Q$ and its color~$\chi(u)$ is an in-neighbor of $\chi(v)$ in $T$,
  there must exists two colors in $T$ which has the same in-neighbor~$c$,
  implying that $c$ has two out-arcs in~$T$, a contradiction to $T$ being an in-tree. 
  
  Now, it remains to show that $(\phi(i), \phi(j)) \in  E(\hgoodG[\hBns\cup \Bs \cup \{t\}])$ holds for each $(i,j)\in E(T)$.
  We distinguish between two cases, either $j=t$ or $j\in [b-|\Bns|]$; recall that by definition, no vertex has out-arc to a non-singleton in~$T$.
  If $j=t$, then by construction, there must exist a singleton vertex~$s\in \sing_t$ with $\chi(s)=i$.
  By definition, we have that $s\in \Bs$, meaning that $(s,t)\in E(\hgoodG[\hBns\cup \Bs \cup \{t\}])$, as desired.

  If $j\in [b-|\Bns|]$, then by the bijection of $\phi$, there must exist a singleton~$v\in \Bs$ with $\chi(v)=j$.
  Hence, $i\in \hBns\cup [b-|\Bns|]$.
  By line~\ref{eq:in-neighbors} in \cref{alg:dp_treesearch}, there must exist a vertex~$u\in \hBns\cup \Bs$ such that $\chi(u)=i$ and $(u,v)\in  E(\hgoodG)$, as desired.
\end{proof}

Next, we show that $B$ has enough friends for each agent in it.
\begin{claim}\label{cl:B-enoughfriends}
  \begin{compactenum}[(i)]
    \item Each non-singleton agent~$a_i \in \Bns$ has at least $r(a_i)$ friends in $B$.
    \item Each singleton agent~$s\in \Bs$ has at least one friend in $B$.
  \end{compactenum}
\end{claim}

\begin{proof}[Proof of \cref{cl:B-enoughfriends}] \renewcommand{\qedsymbol}{$\diamond$}
  For the first statement, we observe that in $T$, each non-singleton agent~$a_i\in \Bns$ is duplicated $r(a_i)$ times such that each copy~$a_i^j$, $j\in [r(a_i)]$, has exactly one out-neighbor towards a singleton vertex, but no two copies share the same out-neighbor (see condition (t\ref{tree1})).
  By the bijection defined in \cref{cl:T-is-contained-G}, we infer that each non-singleton agent has at least $r(a_i)$ friends in~$B$.

  The second statement works analogously by
  observe that each vertex~$i\in [b-|\Bns|]$ has exactly one out-neighbor in $T$ which is either $t$ (if in the blocking coalition she will have a friend to a non-singleton agent) or another vertex in $[|B|-|\Bns|]$ (if in the blocking coalition she will have a friend to a singleton agent.
\end{proof}

The above two claims immediately imply the following.
\begin{claim}\label{cl:B-blocking}
  If \cref{alg:dp_treesearch} returns a set~$B$, then
  for \FE-\verif, $B$ is strictly blocking~$\Pi$,
  while for \FE-\sverif, $B$ is weakly blocking~$\Pi$.
\end{claim}

\begin{proof}[Proof of \cref{cl:B-blocking}] \renewcommand{\qedsymbol}{$\diamond$}
  By \cref{cl:B-enoughfriends} and by the definition of $r(a_i)$, we know that in the case of \FE-\verif, each agent in $B$ prefers $B$ to her initial coalition, so $B$ is strictly blocking~$\Pi$. 
For \FE-\sverif, each non-singleton agent weakly prefers $B$ to her initial coalition and each singleton agent in $\Bs$ strictly prefers $B$ to her initial coalition.
Since $B\neq \Bns$, i.e., there is at least one singleton agent who strictly prefers $B$ over $\Pi$, $B$ is weakly blocking~$\Pi$, as desired.
\end{proof}

Now, we show that if there exists a blocking coalition, then our algorithm will return NO in the first phase or return a blocking coalition in the second phase.
Clearly, if $\goodG[U]$ is not strongly connected for some coalition~$U$ in $\Pi$ or $\goodG[\singles]$ contains a cycle, then phase one will detect it and return NO.
Similarly, in (P\ref{P3}), we use the linear time algorithm behind \cref{lemma:preprocess} to check whether a blocking coalition of size larger than $\maxcoal$ and report it accordingly.
Hence, let us assume that we are in phase two and there is a blocking coalition which is not found in the preprocessing in the first phase.
Then,  $|\nonsingles|\le \maxcoal\cdot \fas$ and any blocking coalition has at most $\maxcoal$ agents.
Let $B^*$ be an inclusionwise minimal blocking (resp.\ weakly blocking) coalition.
Let us consider the iteration in phase two where $\Bns = B^*\cap \nonsingles$ and $b=|B^*|$.
We can also assume that $|\Bns| < b$ as otherwise $\Bns=B^*$ and we can check whether $\Bns$ is blocking  (resp.\ strictly blocking) in polynomial-time. 

Let $(r(a_i))_{a_i\in \Bns}$ and $\hBns$ be the numbers of friends needed and the copies of the non-singletons, respectively.
Let $\Bs^*=B^*\setminus \Bns$ be the singleton agents in $B^*$ and let $\hgoodG$ be the modified search graph.
Since $B^*$ is blocking~$\Pi$, the modified graph~$\hgoodG[\hBns\cup \Bs^* \cup \{t\}]$ must contain a good in-tree~$\TBs$. %
If $B^*$ is blocking, then each non-singleton agent $a_i$ in $\Bns$ must have 
at least $r(a_i)$ singleton friends in $B^*\setminus \Bns$ i.e., each agent in $\Bns$ (weakly) prefers $B^*$ to her coalition in $\Pi$ and each singleton in $B^*$ must have a friend in $B^*$, otherwise she would strictly prefer $\Pi$ over $B^*$ since she would have no friends but at least one enemy in $B^*$. Hence, $\hgoodG[\hBns\cup \Bs^* \cup \{t\}]$ contain a in-tree which satisfies (t\ref{tree1})--(t\ref{tree3}). That is, for both the modified graph~$\hgoodG[\hBns\cup \Bs^* \cup \{t\}]$ must contain a good in-tree~$\TBs$.
By the perfect Hash functions, let $\chi$ be a good coloring for~$\Bs^*$ (i.e., bijective on $\Bs^*$).
Now observe that if we color the singleton vertices in~$\TBs$ according to $\chi$,
then we obtain a new good in-tree on $\Bns\cup [b-\Bns] \cup \{t\}$. %
Let $T$ be such a good in-tree. Formally, $V(T)=\hBns\cup [b-|\Bns|]\cup \{t\}$ such that
$E(T)=\{(a^z_i,\chi(s))\mid (a^z_i,s)\in E(\TBs) \cap (\hBns\times \Bs^*) \}\cup
\{(\chi(u), \chi(v))\mid (u,v)\in E(\TBs[\Bs^*])\}\cup
\{(\chi(s), t)\mid (s,t)\in E(\TBs), s\in \Bs^*\}$.

Let us go to the iteration when $\chi$ and $T$ are considered in line~\ref{alg:colorcoding-tree}.
Since $T$ is good and isomorphic to $\TBs$, 
no vertex in $\Bs^*$ is deleted in line \ref{alg:colorcoding-singletons-Z}.
Now, we claim that at the latest in this iteration, our algorithm will return a coalition which is blocking.
After going through all singleton agents in lines \ref{alg:colorcoding-singletons-A}--\ref{alg:colorcoding-singletons-Z}, the algorithm selects a non-empty subset~$\sing_t$ as described in line~\ref{alg:colorcoding-in-t} which is possible since $\TBs$ is good and $\Bs^*$ is not deleted.
Let $\Bs$ be the subset computed at the end of the while-loop in line \ref{alg:t-loop}.
Clearly, if $|\Bs|=b-|\Bns|$, then by \cref{cl:B-blocking}, we know that $\Bs$ is blocking (resp.\ strictly blocking)~$\Pi$.
Hence, it remains to show that $|\Bs|=b-|\Bns|$. Clearly, $|\Bs| \le b-|\Bns|$ since $\hgoodG$ is a DAG, so no vertex is considered more than once in the while-loop.
Now, suppose that one color~$c$ exists such that $\chi^{-1}(c)\cap \Bs =\emptyset$.
Then, by the goodness of~$T$, it follows that for the only out-neighbor~$d$ of~$c$ in $T$, it also holds that $\chi^{-1}(d)\cap \Bs =\emptyset$ as otherwise, when
some vertex~$v\in \chi^{-1}(d)$ was considered in $Q$, all vertices from $\sing_v$ will be added to $\Bs$, including the one with color~$c$.
Since $T$ is connected, all colors on the path from $c$ to $t$ are not present in $\chi^{-1}(\Bs)$, a contradiction to $t$ satisfying the condition given in line~\ref{alg:colorcoding-t}.

\noindent \textbf{Running time.}
Clearly, the processing steps in the first phase can be conducted in polynomial time.
By \cref{obs:non_singleton_number} we know that if $\goodG$ passes phase one, then there are at most $\maxcoal\cdot \fas$ non-singleton agents, every blocking coalition has at most $\maxcoal$ agents, and no blocking coalitions consist of only singleton agents.
Thus, there are $\sum_{i=1}^{\maxcoal}\binom{\maxcoal\cdot \fas}{i}$ ways to select the non-singleton agents~$\Bns$ for a potential blocking coalition and for each $\Bns$,
the size of the potential blocking coalition ranges from $|\Bns|$ to $\maxcoal$.
In total, the number of all combinations for $(\Bns,b)$ is $\maxcoal\cdot (\maxcoal\cdot \fas)^{\maxcoal}$.

Consider a combination~$(\Bns,b)$ and let $n$ and $m$ denote the number of agents and arcs in $V$ and $\goodG$, respectively.
Computing for every $a_i \in \nonsingles$, the value~$r(a_i)$ can be done in $O(n)$ time.
Similarly, $\hgoodG$ can be computed in $O(m + n\cdot \maxcoal)$-time; note that each non-singleton agent can have up to $\maxcoal$ copies.
Computing a $(|\singles|, b-|\Bns|)$-perfect Hash family~$\mathcal{F}$ takes $e^{\maxcoal}\cdot \maxcoal^{O(\log \maxcoal)} \cdot |\singles| \cdot \log{|\singles|}$ time. There are $e^{\maxcoal} \cdot \maxcoal^{O(\log \maxcoal)}\cdot \log|\singles|$ Hash functions to iterate through.

There are $(b-|\Bns|)^{(b-|\Bns|)\cdot |\Bns| + b-|\Bns|}=O(\maxcoal^{\maxcoal^2})$ many in-trees satisfying conditions~(t\ref{tree1})--(t\ref{tree3}) since each non-singleton
agent has $b-|\Bns|$ ways to select a singleton out-neighbor,
and each singleton agent has $b-|\Bns|$ ways to select out-neighbor which is either another singleton or the root~$t$.

A topological order can be computed in linear time. 
The for-loop in lines~\ref{alg:colorcoding-singletons-A}--\ref{alg:colorcoding-singletons-Z} can be done in $O(n+m+n\cdot \maxcoal)$ time since non-singleton agent has $O(\maxcoal)$ copies and finding an appropriate singleton agent subset~$\sing_v$ can be done in linear time. 
Finally, the collection of the stored vertices in lines~\ref{alg:colorcoding-t}--\ref{alg:t-loop} can also be done in linear time.

In total, the algorithm takes
$O(\maxcoal^{\maxcoal +1}\cdot \fas^{\maxcoal} \cdot e^{\maxcoal}\cdot \maxcoal^{O(\log{\maxcoal})} \cdot \maxcoal^{\maxcoal^2}\cdot (\log n + n+m+n\cdot \maxcoal))$-time, which is \fpt\ with respect to $(\maxcoal, \fas)$.}
\end{proof}

\looseness=-1
Next, we determine the existence of Nash stable partitions and show a dichotomy result, strengthening a result by \citet{Brandt_Bullinger_Tappe_2022}.

\newcommand{\thmfenashfasdelta}{%
  \FE-\NS\ is polynomial-time solvable if $\fas \leq 2$, 
  whereas it is
  \np-hard even if $\fas=3$ and $\maxdeg=5$. 
}
\begin{theorem}[\appendixsymb]\label{thm:fe-nash-fas-delta-nph}
  \thmfenashfasdelta %
\end{theorem}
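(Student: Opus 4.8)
I would prove the two halves of the dichotomy separately: a polynomial-time algorithm when $\fas\le 2$, and an \np-hardness reduction from a variant of exact cover by $3$-sets when $\fas=3$ (keeping $\maxdeg=5$).

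\textbf{The tractable case ($\fas\le 2$).} The plan rests on a structural observation specific to \FE. In any Nash stable partition the \emph{friendless} agents (those with out-degree $0$ in $\goodG$) must be singletons, since in a coalition of size at least $2$ such an agent has no friends and at least one enemy and hence strictly prefers being alone; conversely a singleton agent must be friendless, since otherwise she envies the coalition containing one of her friends. Thus in a Nash stable partition the singletons are \emph{exactly} the friendless agents, and the remaining agents form the union of the non-singleton coalitions. Moreover every non-singleton coalition $C$ induces a subgraph $\goodG[C]$ of minimum out-degree at least $1$ (individual rationality), so $\goodG[C]$ contains a directed cycle and therefore uses at least one arc of any fixed feedback arc set $\ff$ with $|\ff|=\fas$; since coalitions are vertex-disjoint, there are at most $\fas\le 2$ non-singleton coalitions. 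The algorithm then branches over the $O(1)$ ways of distributing the $\le 2$ arcs of $\ff$ among the $\le 2$ non-singleton coalitions. In each branch, deleting $\ff$ makes $\goodG$ acyclic, and the requirement that each coalition $C_\ell$ induce minimum out-degree $1$ becomes a constraint that each vertex other than the tails of the feedback arcs assigned to $C_\ell$ has an out-neighbour in its own class; processing the resulting DAG in reverse topological order either exposes an unavoidable obstruction (no such partition) or pins the non-singleton coalitions down to polynomially many candidates, each of which we test for Nash stability directly (verification is polynomial, cf.\ the discussion after the definitions). The only delicate point here is the case $\fas=2$ with two non-singleton coalitions, where the split of the determined vertex set between them is not immediately unique and needs the extra cross-coalition no-envy constraints together with the topological-order argument.

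\textbf{The hard case ($\fas=3$, $\maxdeg=5$).} I would reduce from an exact-cover-by-$3$-sets variant such as \xctg (or \pxct). For each element $i$ create an element agent $a_i$ whose only friends are the (three) set agents of the sets containing $i$, and for each set $C_j$ a set agent $c_j$ together with a small enforcer gadget. The target property is that every Nash stable partition places each element agent with exactly one of its set-friends and places that set-friend together with all three of its elements, so that the chosen sets form an exact cover; conversely, from an exact cover one reads off a Nash stable partition by forming the coalitions $\{c_j\}\cup C_j$ for the chosen sets and handling the unused set agents and enforcers separately. The obstruction to a naive encoding is that making an element agent and a set agent mutually want to be together uses a friendship $2$-cycle, and there are $\Theta(n)$ such pairs, which blows up $\fas$. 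Following the technique of \cref{thm:fas_number1} and \cref{thm:sverif_cont_fas+delta}, I would route all these interactions through a single long ``backbone'' cycle of auxiliary selector agents closed by only a constant number ($3$) of feedback arcs: the backbone keeps $\fas=3$ while imposing a global ``budget'' (through the number of enemies the backbone agents tolerate) that forces exactness of the cover, and per-set enforcer gadgets — acyclic emulations of the cyclic enforcers of \cref{thm:fe-ns-hard} — forbid an element agent from being grouped with two or three of its set-friends and forbid two set agents from sharing a coalition. The degree bound $\maxdeg=5$ is then a matter of wiring each agent carefully.

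\textbf{Main obstacle.} I expect the bulk of the work to lie in the hardness direction, and specifically in the converse part of its correctness: showing that \emph{every} Nash stable partition decomposes into the exact-cover structure. In \cref{thm:fe-ns-hard} this rigidity came from cyclic enforcer gadgets and the freedom of a dense (degree-$9$) friendship graph; reproducing it while holding the feedback arc set to size $3$ and the degree to $5$ — i.e., emulating those cycles acyclically through a shared backbone — is the delicate step, and calibrating the backbone length and the enforcers' friend/enemy counts so that a Nash stable partition exists precisely when an exact cover does is where the care is needed. On the tractable side the subtlety, as noted, is handling two non-singleton coalitions when $\fas=2$.
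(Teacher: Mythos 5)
Your proposal contains the right structural seeds, but both halves stop short of a proof. On the tractable side, your observations are correct and match the paper's in spirit (friendless agents are exactly the singletons of any Nash stable partition; every non-singleton coalition induces out-degree $\ge 1$, hence contains a cycle and consumes a feedback arc, so there are at most $\fas\le 2$ non-singleton coalitions). The gap is the step you yourself flag: after branching on the assignment of the $\le 2$ feedback arcs, you assert that processing the DAG in reverse topological order ``pins the non-singleton coalitions down to polynomially many candidates,'' but you give no argument, and a priori the number of two-way splits of the non-friendless agents with min out-degree $1$ in each part can be exponential; the no-envy constraints that would prune them are exactly what is left unproved. The paper avoids this enumeration entirely: it shows that either (a) some non-sink agent has \emph{only} sink friends (then NO), or (b) every non-sink agent with a sink friend has at least two non-sink friends, in which case the partition ``sinks as singletons, all other agents in one grand coalition'' is Nash stable (this needs no bound on $\fas$), or (c) some agent $v$ has a sink friend and a unique non-sink friend $w$, in which case $\Pi(v)=\{v,w\}$ is \emph{forced} (and requires $w$ to reciprocate, else NO); that $2$-cycle eats one feedback arc, so with $\fas\le 2$ the remaining non-sink agents must all lie in a single coalition, leaving exactly one candidate partition to verify. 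Your branching scheme, as stated, does not reach such a uniquely determined candidate.

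On the hardness side ($\fas=3$, $\maxdeg=5$) you describe a strategy rather than a reduction: element agents adjacent to set agents, a long backbone cycle closed by three feedback arcs, a size budget enforced through the backbone agents' enemy counts, and enforcer gadgets — and you explicitly concede that the rigidity argument (every Nash stable partition encodes an exact cover) is the unresolved obstacle. This is indeed the shape of the paper's construction (a friendship cycle through element agents $y_i,x_i$ and selectors $r_j$, a second cycle through budget agents $t_1,\dots,t_{8\enn}$, five enforcer agents $b_1,\dots,b_5$ contributing the third feedback arc, with coalition sizes controlled by the enemy counts of $b_1,b_2$ and the set agents $s_j$), but without the concrete gadget, the calibrated sizes, and the forcing claims (the analogue of \cref{clm:fe-nash-delta-fas}: the enforcers' coalitions are fixed, $b_1$ and $b_2$ must be separated, the two cycles must each be intact inside one coalition, and exactness follows from a counting/envy argument), the \np-hardness half is not established. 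In short: correct intuitions, but the two load-bearing arguments — the uniqueness-of-candidate argument for $\fas=2$ and the explicit low-degree, low-$\fas$ gadget with its converse correctness — are missing.
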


  \begin{proof}[Proof sketch]
    We only show the first part and give a linear-time algorithm for $\fas\le 2$.
    Let $(V, \goodG)$ be an instance of \FE-\NS.
    For ease of reasoning, call an agent a \myemph{sink agent} if she does not have any friend in $\goodG$; otherwise call her a \myemph{non-sink} agent. Let $S$ denote the set of all sink agents. First, we put each sink agent into a singleton coalition. 
After that, if there is a non-sink agent who has \emph{only} sink friends, then we return NO.
If every non-sink agent that has a sink friend also has at least two non-sink friends, then we return YES. Otherwise there is an agent $v$ who has at least one sink friend and only one non-sink friend $w$. We place $v$ and $w$ in a size-two coalition~$\{v, w\}$ if the friendship relation is symmetric (i.e., $w$ also considers $v$ a friend), otherwise we stop and return NO.
Finally, we put all remaining agents in the same coalition. If the obtained partition is Nash stable,
then we output YES, otherwise~NO.

For correctness, observe that all sink agents must be in singleton coalitions in any Nash stable partition. If there is an agent~$v$ who has only sink friends,
there is no way to place her into a coalition which contains at least one of her friends.
However, because $v$ has at least one sink friend, she will envy the singleton coalition of the friend, and we cannot have a Nash stable partition.
If every non-sink agent, who has a sink friend, has at least two non-sink friends,
the following partition~$\{\{s\} \mid s \in S\} \cup \{\{V \setminus S\}\}$ is Nash stable:
Clearly, each agent~$v \in V \setminus S$ that has a sink friend has at least 2 friends in $V \setminus S$ but at most one friend in any $\{s\}$, where $ s \in S$. If $v \in V \setminus S$ has no sink friends, then all of $v$'s friends are by construction in $V \setminus S$.

Assume there is a vertex $v$, who has a sink friend $s$ but only one non-sink friend $w$.
Then by the above, in any Nash stable partition~$\Pi$,
$\Pi(s)=\{s\}$ but $v$ has one friend and no enemies in $\{s\}$. %
This means, in~$\Pi$, we have $\Pi(v) = \{v, w\}$.
Hence, if the friendship is not symmetric, then there is no Nash stable solution.
Finally, if the algorithm assigned $v$ and $w$ together, then $(v,w,v)$ is a cycle and must contain a feedback arc. Every remaining non-sink agent must have a friend in any Nash stable solution. This implies that they must be in a coalition that contains a cycle. Since there is only one feedback arc left in $\goodG[V \setminus (S \cup \{v, w\}]$,
the remaining agents must be in the same coalition in a Nash stable partition. Therefore, if there is a Nash stable solution, then the algorithm finds one and otherwise outputs NO. 
\ifshort
The analysis of the running time is straightforward and deferred to the full version.
\else
The analysis of the running time is straightforward and deferred to the appendix.
\fi
\appendixalg{thm:fe-nash-fas-delta-nph}{\thmfenashfasdelta}{\thmfenashfasdelta}{
 \paragraph*{Runtime of the algorithm for $\fas \leq 2$}
It remains to show that the algorithm runs in time $O(n+m)$, where $n$ is the number of agents and $m$ the number of friendship relations.
Dividing agents into sinks and non-sinks takes $O(n)$-time. In the same step we can already place the sinks into singleton coalitions. We can check in $O(n + m)$-time how many non-sink friends each non-sink agent has. Checking whether each non-sink agent that has a sink friend has at least two non-sink friends is done in time $O(n)$. Similarly, finding an agent who has only one non-sink friend and placing it in a coalition with her can be done in time $O(n)$. In the final step we check whether a given coalition structure is Nash stable. This can be done in time $O(n+m)$. Thus the whole time complexity is $O(n+m)$.

\paragraph*{\np-hardness for $\fas = 3$}
Now, we show that the problem becomes \np-hard for $\fas=3$.
We reduce from \pxct.
  Let $I=([3\enn], \mathcal{C})$ denote an instance of \pxct\ with $\mathcal{C}=\{C_1$, $\ldots$, $C_{\emm}\}$. %
  Without loss of generality, assume that each element appears in at least two sets as otherwise the unique set~$C$ which contains the element is necessarily in the solution and we can delete the elements contained in~$C$ and the set~$C$ from the instance.  
  We create an instance of \FE-\NS\ as follows; all unmentioned relations are enemy relations.
   \begin{compactitem}[--]
     \item For each element~$i\in [3\enn]$, create two \myemph{element}-agents~$x_i$ and $y_i$ such that agent~$y_i$ considers $x_i$ a friend.
     For each~$i\in [3\enn-1]$, create a dummy agent~$w_i$ who does not consider anyone a friend, but is considered by $y_i$ a friend. 
     Additionally, for each~$i\in [3\enn-1]$, agents~$x_i$ and $y_i$ consider $y_{i+1}$ as a friend. 
     \item For each set~$C_j\in \mathcal{C}$, create a \myemph{set}-agent~$s_j$ and two auxiliary agents~$t_j$ and $r_j$ such that $s_j$ considers both $r_j$ and $t_j$ friends.
     For each element~$i\in [3\enn]$ and each set~$C_j\in \mathcal{C}$ with $i\in C_j$, agent~$x_i$ considers~$s_j$ a friend. 

     \item Create five enforcer-agents, $\{b_1, b_2, b_3, b_4, b_5\}$, such that $b_5$ does not have any friends, agent $b_3$ and $b_4$ are mutual friends, agent $b_4$ additionally considers $b_5$ a friend, and agent $b_3$ additionally considers $b_1$ and $b_2$ friends.
     Both agents $b_1$ and $b_2$ consider both $y_1$ and $t_1$ friends.     
     The sequence~$(b_1$, $y_1,$ $x_1$, $\ldots$, $y_{3\enn}$, $x_{3\enn}$, $r_{\emm}$, $\ldots$, $r_{1}$, $b_1)$ forms a friendship cycle.

     \item Create $8\enn-\emm$ dummy agents~$t_{\emm+1},\ldots,t_{8\enn}$; recall that $\emm \le 3\enn$.
     The sequence~$(b_2$, $t_1$, $\ldots$, $t_{8\enn}$, $b_1)$ forms a friendship cycle.
   \end{compactitem}
   
   \noindent This completes the construction of the instance, which can clearly be done in polynomial time.
   Note that we have created $17\enn+2\emm+4$ agents. 
   See \cref{fig:fe-nash-fas-delta-nph} for an illustration.
     
     We analyze the feedback arc set number and the maximum degree in the friendship graph.
     It is straightforward to check that deleting the three arcs $(b_4, b_3)$, $(r_1, b_1)$, and $(t_{8\enn}, b_2)$ results in an acyclic graph with a topological order such as $(b_3$, $b_4$, $b_5$, $b_1$, $b_2$, $y_1$, $x_1$, $\ldots$, $y_{3\enn}$, $x_{3\enn}$, $w_{1}$, $\ldots$,  $w_{3\enn-1}$, $s_1$, $\ldots$, $s_{\emm}$, $r_{\emm}$, $\ldots$, $r_1$, $t_1$, $\ldots$,  $t_{8\enn})$.
     As for the maximum degree,
     we note that agents~$y_i$, $i\in [3\enn-1]$, $s_j$, $j\in [\emm]$, and some element-agent~$x_i$, $i\in [3\enn]$ have maximum degree, which is five. 

     It remains to show the correctness, i.e., $I$ admits an exact cover if and only if the constructed instance has a Nash stable partition.
     \begin{figure}
       \centering
       \begin{tikzpicture}[scale=1,every node/.style={scale=0.9}, >=stealth', shorten <= 2pt, shorten >= 2pt]
         
         \foreach  \x / \y / \n in
         {-0.5/0/b1, 0.5/0/b2, 0/.5/b3, 1/.5/b4, 2/.5/b5}
         {
           \node[pn] at (\x, \y) (\n) {};
         }
         \begin{scope}[xshift=-1.8cm, yshift=-1cm]
           \foreach  \x / \y / \n  / \st in
           {1.2/0/x1/pn, 1.2/-.5/x2/pn, 1.2/-1/x3np/pnn, 1.2/-1.5/x3n/pn,
             -1/0/w1/pn, -1/-.5/w2/pn, -1/-1/w3np/pnn, %
             0/0.25/y1/pn, 0/-0.25/y2/pn, 0/-0.75/y3np/pnn, 0/-1.25/y3n/pn}
           {
             \node[\st] at (\x*0.8, \y) (\n) {};
           }
         \end{scope}

         \begin{scope}[xshift=.5cm, yshift=-.7cm]
           
           \foreach  \x / \y / \n / \st in
           {0/0/s1/pn, 0/-.5/s2/pn, 0/-1/smp/pnn, 0/-1.5/sm/pn}
           {
             \node[\st] at (\x, \y) (\n) {};
           }
         \end{scope}

         \begin{scope}[xshift=2cm, yshift=-.7cm]
           
           \foreach  \x / \y / \n / \st in
           {0/0/t1/pn, 0/-.5/t2/pn, 0/-1/tmp/pnn, 0/-1.5/tm/pn,
             1/0/a1/pn, 1/-.5/a2/pn, 1/-1/amp/pnn, 1/-1.5/am/pn}
           {
             \node[\st] at (\x, \y) (\n) {};
           }
         \end{scope}

         \begin{scope}[xshift=.6cm, yshift=-3cm]
    
           \foreach  \x / \y / \n / \st in
           { 0/0/rm/pn, .5/0/rmp/pnn, 1/0/rmpp/pnn, 1.5/0/r2/pn, 2/0/r1/pn}
           {
             \node[\st] at (\x, \y) (\n) {};
           }
         \end{scope}

         \foreach \n / \nn / \p / \l / \r  / \c in
         {b1/{b_1}/above left/0/0/black, b2/{b_2}/above right/0/0/black, b3/{b_3}/above left/0/0/black, b4/{b_4}/above right/0/0/black, b5/{b_5}/right/0/0/black, y1/{y_1}/above left/0/0/black, y2/{y_2}/above left/0/0/black, y3n/{y_{3\enn}}/above left/0/0/black, w1/{w_1}/above left/1/-1/black, w2/{w_2}/above left/1/-1/black, %
           x1/{x_1}/below/1/0/black, x2/{x_2}/below/0/0/black, x3n/{x_{3\enn}}/below/1/0/black, r1/{r_1}/below/0/0/black, r2/{r_2}/below/0/0/black, rm/{r_{\emm}}/below/0/0/black, s1/{s_1}/above right/0/0/black, s2/{s_2}/above right/0/0/black, sm/{s_{\emm}}/above right/0/0/black, t1/{t_1}/above right/0/0/black, t2/{t_2}/above right/0/0/black, tm/{t_{\emm}}/above right/0/0/black, a1/{t_{\emm+1}}/above right/-1/0/black, a2/{t_{\emm+2}}/above right/0/0/black, am/{t_{8\enn}}/above right/0/0/black}{ 
           \node[\p = \l pt and \r pt of \n, text=\c, inner sep=.5pt, fill=white] {\small ${\nn}$};
         }

         \node at ($(rmp)!0.5!(rmpp)$) {$\ldots$};
         \node[rotate=90] at (smp) {$\ldots$};
         \node[rotate=90, xshift=4pt] at (x3np) {$\ldots$};
         \node[rotate=90,xshift=5pt] at (y3np) {$\ldots$};
         \node[rotate=90,xshift=5pt] at (tmp) {$\ldots$};
         \node[rotate=90,xshift=5pt] at (amp) {$\ldots$};

         \begin{pgfonlayer}{bg}
           \foreach \s / \t / \aa / \type in {b3/b1/5/fc, b3/b2/-5/fc, b3/b4/20/fc, b4/b3/20/fc, b4/b5/0/fc,
             y1/x1/-5/fc, x1/y2/-5/fc, y2/x2/-5/fc, x3np/y3n/-5/fc,
             y3n/x3n/-5/fc,
             y1/y2/0/fc, y3np/y3n/0/fc,
             y1/w1/0/fc, y2/w2/0/fc, %
             x3n/rm/0/fc, rm/rmp/0/fc, rmpp/r2/0/fc, r2/r1/0/fc,
             t1/t2/0/fc, %
             tmp/tm/0/fc, tm/a1/0/fc, a1/a2/0/fc, %
             amp/am/0/fc,
             b1/y1/5/fc, b2/y1/0/fc, b1/t1/0/fc, b2/t1/-5/fc} {
             \draw[->, \type] (\s) edge[bend right = \aa] (\t);
           }

            \foreach \i in {1,2,m} { 
             \draw[->, fc] (s\i) edge[bend right=10] (r\i);
             \draw[->, fc] (s\i) edge[] (t\i);
           }

           \draw[->, fc, rounded corners] (r1.east) .. controls ($(r1)+(.3,0)$) and ($(r1)+(.3,-0.2)$) .. ($(r1)+(0.2, -0.4)$)  .. controls ($(rm)+(-3.8,-0.4)$) and ($(rm)+(-3.8,-0.4)$) .. ($(rm)+(-4, 0.4)$) %
           .. controls ($(rm)+(-4.1, 2.8)$) and ($(rm)+(-4.1, 2.9)$)
           .. (b1.west);

           \draw[->, fc, rounded corners] (am.south) .. controls ($(am)+(0, -0.4)$) and ($(am)+(1.2, -0.4)$) .. ($(am)+(1.2, 0.2)$) .. controls ($(a1)+(1.2, 0.5)$) and ($(a1)+(1.2, .8)$)   .. (b2.east);

           \draw[->] (x1) edge (s2);
           \draw[->] (x1) edge ($(s2)!0.5!(smp)$);
           \draw[->] (x1) edge ($(s2)!0.9!(smp)$);

           \draw[->] (x2) edge (s1);
           \draw[->] (x2) edge (smp);
           
           \draw[->] (x3n) edge (smp);
           \draw[->] (x3n) edge ($(smp)!0.5!(sm)$);
           \draw[->] (x3n) edge (sm);
         \end{pgfonlayer}
     
       \end{tikzpicture}\caption{Illustration for the proof of \cref{thm:fe-nash-fas-delta-nph}.}\label{fig:fe-nash-fas-delta-nph}
     \end{figure}

     For the ``only if'' part, assume that $\mathcal{K}$ is an exact cover for~$I$.
     Construct a partition~$\Pi \coloneqq \{\{b_5\} , \{b_3,b_4\}\} \cup \{\{w_i\} \mid i\in [3\enn-1]\} \cup \{\{b_1, y_1,x_1,\ldots, y_{3\enn},x_{3\enn}, r_{\emm}, \ldots, r_1\} \cup \{s_j \mid C_j \in \mathcal{K}\}\} \cup \{\{b_2, t_1,\ldots,$ $t_{8\enn}\} \cup \{s_j\mid C_j\in \mathcal{C}\setminus \mathcal{K}\}\}$.
      We show that $\Pi$ is Nash stable.
      First of all, it is individually rational since each agent is either alone or in a coalition with at least one friend. 
      Next, no agent~$\{b_5, w_1, \ldots, w_{3\enn-1}\}$ envies any coalition since each of them is alone and has no friends.
      Consequently, agent~$b_4$ does not envy any coalition since she has in total two friends and is with her one friend in a size-two coalition.
      Agent~$b_3$ is in a size-two coalition which contains one of her friends.
      The other coalitions each contain at most one friend.
      Thus, she does not envy any coalition.      
      Neither does any agent from $\{r_j\mid j\in [\emm]\}\cup \{t_j\mid j\in [8\enn]\}\cup \{y_{3\enn}\}$ envy some other coalition since each of them has only one friend and is in the same coalition as the friend.
      No agent from $\{y_i\mid i\in [3\enn-1]\}$ envies any coalition since each of them has in total three friends and is in a coalition with two friends for her.
    
      No agent from $\{b_1,b_2, s_1, \ldots, s_{\emm}\}$ envies any coalition since each of them has in total two friends and is in a coalition with exactly one friend, and the other coalition which contains her other friend has one more enemy for her; note that $|\Pi(b_1)|=|\Pi(b_2)|=7\enn+\emm+1$.
      Similarly, no agent $x_i$, $ i\in [3\enn]$ envies any coalition since each of them has in total at most four friends and is in a coalition with two friends (namely, $y_{i+1}$ or $r_{\emm}$) and some set-agent~$s_j$ since $\mathcal{K}$ is a set cover),
      and each other coalition contains at most two friends but has at least one more enemy.
      Since no agent envies any other coalitions, the partition is indeed Nash stable.

      For the ``if'' part, let $\Pi$ be a Nash stable partition.
      We claim that the sets which correspond to the set-agents contained in $\Pi(b_1)$ form an exact cover. %
      Before showing this, we first observe the following.
      \begin{clm}\label{clm:fe-nash-delta-fas}
        \begin{compactenum}[(i)]
          \item\label{fe-nash-delta-fas:b345} $\Pi(b_5)=\{b_5\}$, $\Pi(b_3)=\{b_3, b_4\}$, and $\Pi(w_i)=\{w_i\}$ for all $i\in [3\enn-1]$.
          \item\label{fe-nash-delta-fas:b1b2} $\Pi(b_1)\neq \Pi(b_2)$.
          \item\label{fe-nash-delta-fas:t} $\{t_1,\ldots, t_{8\enn}\}\subseteq \Pi(b_2)$.
          \item\label{fe-nash-delta-fas:x} $\{y_1,x_1,\ldots, y_{3\enn},x_{3\enn}, r_{\emm},\ldots,r_{1}\}\subseteq \Pi(b_1)$. 
          \item\label{fe-nash-delta-fas:s} For each $j\in [\emm]$, either~$s_j\in \Pi(b_1)$ or
          $s_j\in \Pi(b_2)$ holds.
        \end{compactenum}
      \end{clm}

      \begin{proof}[Proof of \cref{clm:fe-nash-delta-fas}]
        \renewcommand{\qedsymbol}{$\diamond$}
        Statement~\eqref{fe-nash-delta-fas:b345}: Since, by definition, an agent without any friends must be alone, we have that  $\Pi(b_5)=\{b_5\}$, and  $\Pi(w_i)=\{w_i\}$ for all $i\in [3\enn-1]$.
        Then, since $b_4$ considers $b_5$ a friend and has in total two friends, in order for agent~$b_4$ to not envy $\Pi(b_5)$, she must be in a coalition which consists of herself and her other friend~$b_3$.
        That is, $\Pi(b_4)=\{b_3, b_4\}$.

        Statement~\eqref{fe-nash-delta-fas:b1b2}: If $\Pi(b_1)\cap \Pi(b_2)\neq \emptyset$, meaning that $b_1$ and $b_2$ are in the same coalition, then $b_3$ will envy this coalition since she has only one friend in her coalition (see the first statement), a contradiction.

        Statement~\eqref{fe-nash-delta-fas:t}: Since an agent who has only one friend must be in the same coalition as her friend, we infer that every agent~$t_i$ , $i\in [8\enn-1]$, must be in the same coalition as $t_{i+1}$,
        and agent $t_{8\enn}$ must be in the same coalition as $b_2$, as desired.

        Statement~\eqref{fe-nash-delta-fas:x}: First of all, by a similar reasoning, we have that $\{r_{\emm}, \ldots,r_1\}\subseteq \Pi(b_1)$. By Statements~\eqref{fe-nash-delta-fas:b1b2}--\eqref{fe-nash-delta-fas:t}, we have that $y_1\in \Pi(b_1)$ as otherwise $b_1$ envies $\Pi(b_2)$.
        This means that $y_1$ is in a coalition with at least one enemy for her. 
        In order to let $y_1$ not envy $\Pi(w_1)$ (see Statement~\eqref{fe-nash-delta-fas:b345}), she must have two friends in her coalition.
        By construction, we have $x_1, y_2\in \Pi(y_1)=\Pi(b_1)$.
        Analogously, we have that $x_i, y_{i+1}\in \Pi(b_1)$ for all~$i\in [3\enn-1]$.
        Finally, $x_{3\enn}\in \Pi(b_1)$ since $x_{3\enn}$ is the only friend of $y_{3\enn}$ and $y_{3\enn}$ is in $\Pi(b_1)$ which contains at least one enemy for her.

        Statement~\eqref{fe-nash-delta-fas:s}: This follows directly from Statements~\eqref{fe-nash-delta-fas:b1b2}--\eqref{fe-nash-delta-fas:x} and from the fact that each agent~$s_j$, $j\in [\emm]$ considers $r_j$ and $t_j$ friends.
      \end{proof}

      Now, let $\mathcal{K}=\{C_j\mid s_j\in \Pi(b_1)\}$, and we show that $\mathcal{K}$ is a set cover (i.e., it covers each element at least once) and has size $\enn$.
      Suppose, for the contradiction of $\mathcal{K}$ being a set cover, that element~$i\in [3\enn]$ is not covered by $\mathcal{K}$.
      By construction this means that for each set~$C_j\in \mathcal{C}$ with $i\in C_j$ it holds that $s_j\notin \Pi(b_1)$.
      By \cref{clm:fe-nash-delta-fas}\eqref{fe-nash-delta-fas:s}, it follows that $\Pi(b_2)$ contains all set-agents~$s_j$ such that $i\in C_j$, implying that $\Pi(b_2)$ contains at least two friends for $x_i$ (see the assumption in the beginning).
      Since $x_i$ is in $\Pi(b_1)$ (see \cref{clm:fe-nash-delta-fas}\eqref{fe-nash-delta-fas:x}), we have that $x_i$ has only one friend in her coalition and she envies $\Pi(b_2)$, a contradiction.

      It remains to show that $|\mathcal{K}|=\enn$.
      Clearly, $|\mathcal{K}|\ge \enn$ since $\mathcal{K}$ is a set cover. 
      Suppose, towards a contradiction, that $|\mathcal{K}|>\enn$.
      By \cref{clm:fe-nash-delta-fas}\eqref{fe-nash-delta-fas:s}, at least $\enn+1$ set-agents are contained in $\Pi(b_1)$, i.e., 
      $|\Pi(b_1)| \ge 2 \cdot 3\enn + 1 + \emm + \enn + 1 =7\enn+\emm + 2$.
      This means that there exists at least one set-agent~$s_j\in \Pi(b_1)$ who has one friend and at least $7\enn+\emm$ enemies in her coalition.
      This agent will envy $\Pi(b_2)$ since $\Pi(b_2)$ contains a friend (namely $r_j$) and at most $1 + 8\enn - 1 + \emm - \enn - 1=7\enn+\emm-1$ enemies (see \cref{clm:fe-nash-delta-fas}\eqref{fe-nash-delta-fas:b345}--\eqref{fe-nash-delta-fas:x}), a contradiction.
    }
    \end{proof}

\section{The Model with Neutrals}\label{sec:fen}
\appendixsection{sec:fen} 
In this section, we consider the model with neutrals~\cite{ohta2017core}.
First, we observe that for acyclic friendship graphs, checking core stability is easy since it is equivalent to checking individual rationality.

\newcommand{\dagfriendcoreeasy}{%
  For acyclic friendship graphs, \FEN-\verif\ is linear-time solvable.
}
\begin{proposition}[\appendixsymb]\label{prop:dag-friends-core-P}
  \dagfriendcoreeasy
\end{proposition}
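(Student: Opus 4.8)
The plan is to show that when the friendship graph $\goodG$ is acyclic, a coalition structure~$\Pi$ is core stable if and only if it is individually rational, and that the latter condition is checkable in linear time.

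First I would record one simple structural fact: for every non-empty coalition~$W\subseteq V$, the induced subgraph~$\goodG[W]$ is acyclic (being a subgraph of the acyclic graph~$\goodG$), hence it admits a topological order and in particular a vertex of out-degree zero. That is, there exists an agent~$i\in W$ with $N^+_{\goodG}(i)\cap W=\emptyset$, i.e.\ $i$ has \emph{no} friend inside~$W$. This "sink agent" is the pivot of the whole argument.

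Next I would prove the equivalence. For the easy direction, if $\Pi$ is not individually rational, then some agent~$i$ strictly prefers~$\{i\}$ to~$\Pi(i)$; by the definition of friend-oriented preferences this is only possible when $i$ has no friend but at least one enemy in~$\Pi(i)$, and then the singleton coalition~$\{i\}$ strictly blocks~$\Pi$, so $\Pi$ is not core stable. Conversely, suppose~$\Pi$ is individually rational but some coalition~$W$ strictly blocks~$\Pi$; pick~$i\in W$ with $N^+_{\goodG}(i)\cap W=\emptyset$ as above. Since $i$ strictly prefers~$W$ to~$\Pi(i)$ and $|N^+_{\goodG}(i)\cap W|=0$, the first clause of the preference definition ($|N^+_{\goodG}(i)\cap W|>|N^+_{\goodG}(i)\cap\Pi(i)|$) cannot hold, so we are in the second clause: $|N^+_{\goodG}(i)\cap\Pi(i)|=0$ as well and $|N^+_{\badG}(i)\cap W|<|N^+_{\badG}(i)\cap\Pi(i)|$. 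Thus $\Pi(i)$ contains at least one enemy of~$i$ while $\{i\}$ contains none, and $i$ has no friend in~$\Pi(i)$; hence $i$ strictly prefers~$\{i\}$ to~$\Pi(i)$, contradicting individual rationality. Therefore $\Pi$ is core stable iff it is individually rational.

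Finally, individual rationality is tested in linear time: scan the arcs of~$\goodG$ and~$\badG$ to compute, for every agent~$i$, the quantities $|N^+_{\goodG}(i)\cap\Pi(i)|$ and $|N^+_{\badG}(i)\cap\Pi(i)|$; $\Pi$ fails to be core stable exactly when some~$i$ has the former equal to~$0$ and the latter at least~$1$. This runs in $O(|V|+|E(\goodG)|+|E(\badG)|)$ time. There is essentially no deep obstacle here; the one place that needs care is the case analysis on the lexicographic friend-oriented preference — verifying that the sink agent of~$\goodG[W]$ truly cannot strictly improve by joining~$W$ unless she already strictly prefers being alone — and handling the degenerate situations (a singleton~$W$, or neutral agents sitting in~$W$) correctly.
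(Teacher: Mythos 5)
Your proof is correct and follows essentially the same route as the paper: both reduce core stability to individual rationality, use the fact that an acyclic friendship graph forces some agent in any would-be blocking coalition to have no friend there (you pick the sink of $\goodG[W]$ explicitly; the paper equivalently notes every agent would need out-degree at least one, contradicting acyclicity), and check individual rationality in linear time.
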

\appendixproofwithstatement{prop:dag-friends-core-P}{\dagfriendcoreeasy}{
  \begin{proof}
    Let $(V, \goodG, \badG, \Pi)$ be an \FEN-\verif-instance where $\goodG$ is acyclic. %

 We show that $\Pi$ is core stable if and only if it is individually rational.

For the ``only if'' part, assume that $\Pi$ is core stable. Then, no agent forms a blocking coalition by itself, so $\Pi$ is individually rational.

For the ``if'' part, suppose that $\Pi $ is individually rational but there is a strictly blocking coalition $U$. Then, every agent in $U$ must also have a friend in $U$, otherwise they cannot strictly improve, as $\Pi$ was individually rational. Therefore, the friendship graph $\goodG[U]$ induced by $U$ must have at least one outgoing arc for each vertex, contradicting the fact that $\goodG$ is acyclic.

Linear-time solvability of \verif when the friendship graph is acyclic follows from the fact that checking whether a coalition structure is individually rational can be done efficiently by checking whether an agent can improve by being alone.
\end{proof}
}

For acyclic graphs, \IndS\ and \NS\ can be solved by 
a clever greedy algorithm operated on the reverse topological order.

\newcommand{\fendagnashindp}{
  If the friendship graph (resp.\ the union graph) is acyclic, then
  every \FEN-instance admits an individually stable (resp.\ Nash stable) partition,
  which can be found in polynomial time. 
}
\begin{theorem}[\appsymb]
\label{thm:ns_is_acyclic}
\fendagnashindp
\end{theorem}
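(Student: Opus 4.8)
The plan is to give a constructive polynomial-time algorithm that processes agents in reverse topological order of the acyclic graph and commits each agent either to a singleton coalition or to a coalition with one of her friends, ensuring individual stability (resp.\ Nash stability). Let me first describe the individual stability case with an acyclic friendship graph~$\goodG$ (the enemy graph~$\badG$ is arbitrary). Fix a topological order of~$\goodG$, so that every arc goes from an earlier agent to a later one; equivalently, reading it backwards, the \emph{sinks} of $\goodG$ come first. I would process agents from the last in topological order (the sinks, which have no friends) towards the first. Every sink agent has no friends, so in any individually rational partition she must be a singleton; I place each sink in her own coalition. Inductively, when I reach an agent~$v$, all of her out-neighbours (friends) have already been placed. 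If $v$ has at least one friend~$w$ whose current coalition~$C$ satisfies ``$v$ weakly prefers $C\cup\{v\}$ to being alone'' and ``every agent in $C$ weakly prefers $C\cup\{v\}$ to $C$'' — the blocking-tuple non-condition — then I add $v$ to such a coalition, choosing, say, the one maximising the number of $v$'s friends in it. Otherwise $v$ becomes a singleton. The key structural observation is that because friendship arcs only go ``forward'' in the order, once $v$ is placed no later agent is a friend of anyone in $v$'s coalition who is processed after $v$; more precisely, adding later agents to a coalition only ever adds friends (never removes friends) for the agents already inside, so nobody inside a coalition is ever made worse off by subsequent steps, and no agent already placed as a singleton ever acquires a friend-based reason to move. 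This monotonicity is what makes the greedy commitment safe.

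Next I would verify individual stability of the produced partition~$\Pi$. Individual rationality holds because a singleton agent is happy alone by construction, and a non-singleton agent~$v$ was only placed into a coalition giving her at least one friend (hence she weakly prefers it to being alone). To rule out blocking tuples: suppose $v$ envies some coalition~$C\in\Pi$ and every agent in $C$ weakly prefers $C\cup\{v\}$ to $C$. Since $v$ envies $C$, she has a friend in $C$ (friend-oriented preferences: strictly more friends, or the same friends and fewer enemies — but if she has zero friends in $C$ and zero in $\Pi(v)$ she cannot strictly prefer $C$ unless $C$ gives fewer enemies, which I must handle). Here is the subtlety: with friend-oriented preferences an agent with no friends anywhere could still envy a coalition with fewer enemies. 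I would handle this by a second pass or by building it into the greedy rule: a singleton agent with no friends never envies any coalition, because joining it would give her zero friends and a strictly larger enemy count is impossible to beat — wait, she could join a coalition where she has zero enemies too. So I need the greedy rule to also ensure that a friendless singleton is not ``accepted'' anywhere she would strictly prefer, OR argue that the friendship-graph-acyclic hypothesis combined with the processing order already prevents this. The cleanest route: when $v$ has no friends at all, she is a sink; place her as a singleton, and observe that she will never be part of a blocking tuple as the \emph{envier} only if some coalition has fewer enemies for her — but then that coalition would have accepted her only if its members weakly prefer her addition, and I can simply let the algorithm, when processing the members of that coalition, never form a configuration that a friendless sink would want to break into. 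Concretely I would process sinks \emph{last} among equivalued agents and, when a friendless agent could improve by joining~$C$, note that $C$ contains only agents processed before her, each of whom at their processing time could have absorbed her; the algorithm's greedy choice (absorb whenever the tuple condition is met) then guarantees she was already placed there or $C$'s members would have rejected her, contradiction.

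For the Nash stability case with the \emph{union} graph $\goodG+\badG$ acyclic, the argument is cleaner because there are effectively no enemies to worry about in a subtle way: if $\goodG+\badG$ is acyclic then in particular every agent's out-degree is eventually zero along the order, and a Nash stable partition only requires individual rationality plus ``no agent envies another coalition'' (dropping the acceptance condition). I would run essentially the same reverse-topological greedy: each sink of $\goodG+\badG$ (no friends, no enemies outgoing) is placed as a singleton; each later agent $v$ is added to the friend-coalition maximising her friend count if that makes her weakly prefer it to being alone, else placed as a singleton. Nash stability then follows because, by the acyclicity and processing order, when $v$ is placed she is added to the coalition giving her the most friends among all current coalitions containing a friend of hers, and — crucially — later additions only increase friend counts inside existing coalitions without affecting whether $v$ would rather be elsewhere, since $v$'s friends and enemies are all among the agents, and any coalition she might envy consists of agents that were (by the order) candidates to absorb her. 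I expect the main obstacle to be exactly this last point: showing rigorously that the greedy's local choice is globally envy-free, i.e., that no agent ends up envying a coalition that was ``assembled later'' in a way she could not have foreseen. I would address it by a monotonicity lemma: throughout the algorithm, for every coalition $C$ in the current partial partition and every agent $u$ not yet processed, the quantity $|N^+_{\goodG}(u)\cap C|$ is non-decreasing, and $|N^+_{\badG}(u)\cap C|$ changes only by additions that $u$'s eventual placement rule already accounts for; combined with the fact that $u$ is processed after all members of any coalition it could join (forward arcs only), this yields that $u$'s final placement is at least as good as any alternative coalition's \emph{final} state. Formalising that monotonicity lemma and checking the tie-breaking for friendless agents is the technical heart; everything else is bookkeeping, and the running time is clearly polynomial (a single pass with, per agent, a scan over her out-neighbours and their coalitions).
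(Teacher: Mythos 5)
Your overall strategy is the same as the paper's: process the agents in reverse topological order and greedily commit each one, using the fact that all of an agent's friends (and, in the Nash case, also all of her enemies) are processed before her, so her view of her own coalition is frozen after her turn. The genuine gap is your selection rule. You add $v$ to an eligible coalition ``maximising the number of $v$'s friends'', with ties broken arbitrarily, but correctness requires $v$ to join her \emph{most preferred} eligible coalition, i.e., ties in the friend count must be broken in favour of fewer enemies. Concretely, take agents $x,f_1,f_2,e$ with friendship arcs $(x,f_1),(x,f_2),(e,f_1)$ and the single enemy arc $(x,e)$; the union of the two graphs is acyclic with topological order $x,e,f_1,f_2$. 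Processing $f_2,f_1,e,x$, your rule builds $\{f_2\}$ and $\{e,f_1\}$, and at $x$'s turn both coalitions contain exactly one friend of $x$ and both would accept her, so it may place $x$ into $\{e,f_1\}$. In the output, $x$ has one friend and one enemy, while $\{f_2\}\cup\{x\}$ would give her one friend and no enemy, and $f_2$ is indifferent to $x$'s arrival; hence the partition is neither Nash stable nor individually stable (although $\{x,f_2\},\{e,f_1\}$ is). Your proposed ``monotonicity lemma'' cannot repair this, because nothing relevant changes after $x$'s turn: the mistake is made at the moment of choice, not by later insertions.

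With the corrected rule (join the most preferred coalition among those with a friend that also accept you, as in the paper), the argument closes essentially as you sketched: all of $v$'s friends are placed before her turn, later joiners of other coalitions can only be enemies or neutrals (so those coalitions only get worse for her), and her own coalition never changes from her point of view (for individual stability because the acceptance rule bars agents she regards as enemies from joining it; for Nash because union-acyclicity makes all later agents neutral to her). Hence if she strictly preferred some final coalition $P$, she would already have strictly preferred the earlier-processed part of $P$ at her turn and would have joined it unless some member there considered her an enemy, which is exactly the failure of a blocking tuple. Finally, your detour about friendless agents is a non-issue: a friendless singleton has zero friends and zero enemies, and any other coalition offers her zero friends and at least zero enemies, so she can never \emph{strictly} prefer it; indifference creates neither envy nor a blocking tuple, so no second pass or special tie-breaking for such agents is needed.
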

  \begin{proof}[Proof Sketch]%

\appendixalg{thm:ns_is_acyclic}{    Let $(V, \goodG, \badG)$ be an \FEN-instance.
    We first consider  \FEN-\IndS and assume that $\goodG $ is acyclic and thus has a topological order $v_1, \dots, v_n$ of $V$.
    The algorithm proceeds as follows:
    Iterate over $V$ in the reverse topological order $v_n, \dots, v_1$.
    In each step, check whether there exists a coalition~$U$ where $v_i$ has at least one friend and no one in~$U$ considers her an enemy.
    If no such coalition exists, then $v_i$ starts a new coalition.
    Otherwise, let $v_i$ join the most preferred coalition~$U$ among all such coalitions.

Now, we turn to \FEN-\NS and assume that $\goodG \cup \badG$ is acyclic and thus has a topological order $v_1, \dots, v_n$ of $V$.
The algorithm proceeds as follows: We iterate over $V$ in the reverse topological order $v_n, \dots, v_1$. In each step we let the current agent $v_i$ join her most preferred existing coalition, or in the case when $v_i$ has no friends in any, to start a new one.

The correctness of both algorithms relies on iterating through the agents in the topological order. Each of the agents selects in her turn her most preferred feasible coalition, and due to the ordering no agent will change her choice about her most preferred coalition later in the execution.
\iflong
The details of the correctness and the running time are deferred to the full version.
\else
The details of the correctness and the running time are deferred to the appendix.
\fi}{\fendagnashindp}{
\smallskip

\noindent Throughout this section, let $(V, \goodG, \badG)$ be an \FEN-instance.
\paragraph*{Correctness of the algorithm for \FEN-\IndS}
Let $v_i \in V$ be an arbitrary agent and $\Pi(v_i)$ her coalition at the end of the algorithm. For the correctness, we show that $v_i$ is an enemy to someone in every coalition that she strictly prefers.
To this end, let $P'$ be a coalition constructed by the algorithm such that $v_i$ strictly prefers~$P'\cup \{v_i\}$ to~$\Pi(v_i)$.
Let $P=P'\cup \{v_i\}$ and $V_i = \{v_1, \dots, v_{i - 1}\}$. 
    Because the algorithm works on the reverse topological order of $\goodG$,
    implying that $v_i$ does not have any friend in~$V_i$,
    for every coalition~$Q$ constructed by the algorithm,
    $|N^+_{\goodG}(v_i) \cap Q| = |N^+_{\goodG}(v_i) \cap (Q \setminus V_i)|$.
    Since the algorithm dictates that no enemy of $v_i$ joins $v_i$'s coalition after her, we have that $|N^+_{\badG}(v_i) \cap \Pi(v_i)| = |N^+_{\badG}(v_i) \cap( \Pi(v_i) \setminus V_i)|$.
    Therefore,~$v_i$ weakly prefers $P \setminus V_i$ to $P$, and is indifferent between $\Pi(v_i)$ and $\Pi(v_i) \setminus V_i$. 
    Since $v_i$ strictly prefers $P$ to $\Pi(v_i)$,
    it follows that $v_i$ also strictly prefers $P \setminus V_i$ to $\Pi(v_i) \setminus V_i$.
    Then, the only reason why $v_i$ has not picked $P\setminus V_i$ over $\Pi(v_i)\setminus V_i$ (on the iteration for $v_i$) is that there is an agent $w$ in $P\setminus V_i$, who considers $v_i$ an enemy, as desired.
    
\paragraph*{Correctness of the algorithm for \FEN-\NS}
For correctness, we show that no agent $v_i \in V$ strictly prefers any coalition to her  coalition. Let $\Pi$ be the partition returned by the algorithm. Assume towards a contradiction that there exists an agent~$v_i$ and a coalition~$P'\in \Pi$ such that
$v_i$ prefers~$P'\cup \{v_i\}$ to her own one~$\Pi(v_i)$ and let $P=P'\cup \{v_i\}$.
Because $\goodG \cup \badG$ is acyclic, $v_i$ is neutral towards $V_i = \{v_1, \dots, v_{i - 1}\}$. Thus, if $v_i$ strictly prefers $P$ to $\Pi(v_i)$, then she must strictly prefer $P \setminus V_i$ to $\Pi(v_i) \setminus V_i$.
However, to join $\Pi(v_i)$ in the algorithm execution, either $P\setminus V_i$ contains no friends for $v_i$ or $v_i$ strictly prefers $\Pi(v_i)\setminus V_i$ to $P\setminus V_i$, a contradiction in the latter case. In the former case, $\Pi(v_i)\setminus V_i$ is a singleton so $v_i$ does not strictly prefer $P$ to $\Pi(v_i)$, a contradiction.

\paragraph*{Running time}

Finally, both algorithms run in polynomial time: %
we first find a topological order of the vertices, which can be done in linear time. Then, we iterate over the agents once. In each iteration the current agent goes over all her neighbors to find her most preferred coalition, or the most preferred one in which she is not an enemy to anyone. Thus the algorithms takes $O(n^2\cdot m)$-time, where $m$ is the number of friendship and enemy relations.}
\end{proof}

It is know that for symmetric and additive separable preferences, Nash stable partitions always exist~\cite{bogomolnaia2002stability}.
For the \FEN-model, we can even find one in linear time, which consists of singletons who do not have any friends and the remaining agents in a grand coalition.

\newcommand{\symmnashind}{%
For symmetric friendship relations, Nash (and hence individually) stable partitions can be found in linear time. %
}
\begin{observation}[\appendixsymb]
\label{thm:sym-ns-is}
\symmnashind
\end{observation}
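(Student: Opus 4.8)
\textbf{Proof proposal for \cref{thm:sym-ns-is}.}

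The plan is to show that the partition $\Pi \coloneqq \{\{s\}\mid s\in S\}\cup\{V\setminus S\}$ is Nash stable, where $S$ denotes the set of agents who have no friends in $\goodG$ (``sink'' agents in the terminology of the proof sketch for \cref{thm:fe-nash-fas-delta-nph}). Since friendship is symmetric, an agent $v$ has no friends if and only if no agent considers $v$ a friend; consequently no agent in $V\setminus S$ has a friend inside $S$, and every agent in $V\setminus S$ has all of her friends inside $V\setminus S$. Note that recognizing $S$, forming $\Pi$, and verifying stability can all be done by a single pass over the friendship graph, giving the claimed linear running time.

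First I would check individual rationality: each singleton $\{s\}$ with $s\in S$ is trivially individually rational since $s$ has no friends and no enemies among her own coalition (she is alone), and she would not strictly prefer to be alone since she already is. Each agent $v\in V\setminus S$ has at least one friend, and since all of $v$'s friends lie in $V\setminus S=\Pi(v)$, she has at least one friend in her coalition, so by the definition of the preference order she strictly prefers $\Pi(v)$ to being alone. Hence $\Pi$ is individually rational.

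Next I would rule out envy. An agent $s\in S$ has no friends at all, so for any coalition $C$ we have $|N^+_{\goodG}(s)\cap (C\cup\{s\})| = 0 = |N^+_{\goodG}(s)\cap\{s\}|$, and moreover $s$ has no enemies, so $s$ is indifferent between $C\cup\{s\}$ and her own coalition $\{s\}$; in particular she does not envy $C$. For an agent $v\in V\setminus S$, the only other coalitions are the singletons $\{s\}$ with $s\in S$; since $v$ has no friends in $S$, joining $\{s\}$ would give $v$ the coalition $\{s,v\}$ in which she has $0$ friends, strictly fewer than the at least one friend she has in $\Pi(v)$. Therefore $v$ strictly prefers $\Pi(v)$ to $\{s,v\}$ and does not envy it. Since no agent envies any coalition and $\Pi$ is individually rational, $\Pi$ is Nash stable, and by \cref{thm:sym-ns-is} (resp.\ the relation that Nash stability implies individual stability, Proposition~(ii)) it is also individually stable. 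There is no real obstacle here; the only point requiring care is the symmetry argument that guarantees no cross-friendships between $S$ and $V\setminus S$, which is exactly what makes the two-block partition work.
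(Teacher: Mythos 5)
Your proposal is correct and takes essentially the same route as the paper's own proof: singletons for the friendless agents plus the grand coalition of everyone else, with symmetry of friendship guaranteeing no cross-friendships between the two blocks, and a linear-time construction. The only tiny slip is the phrase that an agent $s\in S$ ``has no enemies'' — such an agent may well have enemies in $\badG$; what matters (and what your argument actually uses) is that she has no enemies in her singleton coalition $\{s\}$ and zero friends anywhere, so no deviation can increase her friend count or decrease her enemy count, hence she never envies.
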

\appendixproofwithstatement{thm:sym-ns-is}{\symmnashind}{
\begin{proof}
We provide a simply way to find a Nash-stable partition in any such instance, which must also be individually stable. 

Let $(V, \goodG, \badG)$ be an \FEN-instance.

Let $S \subseteq V$ be the set of agents who have no friends, i.e, $s \in V$ for which $|N^+_{\goodG}(s)| = 0$.
We claim that the coalition structure $\Pi =\bigcup_{s\in S}\{ \{s\}\} \cup \{ V\setminus S\}$ is Nash stable. 

Let $s \in S$ be arbitrary. Because $|N^+_{\goodG}(s)| = 0$, no coalition has more friends than $\Pi(s)$. Because $\Pi(s) = \{s\}$, $s$ has no enemies in $\Pi(s)$ and thus no coalition can have fewer enemies. Therefore $s$ does not strictly prefer any coalition to $\Pi(s)$.

Let $v \in V \setminus S$ be arbitrary. By construction $v$ must have at least one friend in $\Pi(v) = V \setminus S$. Because the relationships are symmetric, $v$ does not have friends in any of the coalitions $\{s\}, s \in S$. Therefore there is no coalition that $v$ prefers to $\Pi(v)$.
This concludes that $\Pi$ is Nash stable.

As the set $S$ can easily be constructed in linear time, the theorem follows.
\end{proof}
}

\noindent The following  complements~\cref{prop:dag-friends-core-P} regarding~$\fas$ and show that both core verification problems remain hard even if $\fas, \maxdeg$ and $\maxcoal$ are bounded. %
\newcommand{\fencorethreepara}{%
\FEN-\verif\ (resp.\ \FEN-\sverif) is \conp-com\-plete even if $\maxdeg=12$, $\maxcoal=3$, and $\fas=1$ (resp.\ $\fas=0$).
}
\begin{theorem}[\appendixsymb]
\label{thm:verif-fas+deg+k}
\fencorethreepara%
\end{theorem}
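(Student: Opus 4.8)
The plan is to reduce from \pxct\ (or its exact variant \xctg) to establish \conp-hardness, reusing the architecture of the \FE-\verif\ reductions (\cref{thm:fas_number1,thm:sverif_cont_fas+delta}) but exploiting the presence of neutrals to control the degree, the coalition size, and the feedback arc set simultaneously. The key conceptual gain from neutrals is that blocking coalitions can only involve agents who obtain strictly more friends (since neutrals are invisible, an agent with the same number of friends cannot tolerate any extra enemy, and in these gadgets she will not be forced to have extra enemies either); this is essentially the mechanism behind \cref{lemma:preprocess}, and it lets us build the whole instance out of tiny constant-size coalitions. First I would create, for each element $i\in[3\enn]$, a short chain of element-agents and, for each set $C_j$, a set-agent, wiring the friendship arcs so that (a) in any blocking coalition the element-chains ``link up'' exactly as the element-cycle of $G(I)$ does — forcing a blocking coalition to touch every element — and (b) a set-agent is in a blocking coalition iff all three of its element-agents are, which yields a cover; then I would add a global ``counter'' gadget (a long neutral-padded cycle of special agents whose only friendships are consecutive) that places an upper bound on the size of any blocking coalition, forcing the cover to be exact. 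The initial partition $\Pi$ will consist almost entirely of singletons and size-$2$ or size-$3$ coalitions, giving $\maxcoal=3$; the only place a cycle survives in the friendship graph is the counter gadget, so deleting its one closing arc makes $\goodG$ acyclic, giving $\fas=1$ for \FEN-\verif. For \FEN-\sverif\ one further tweaks the counter so that even the special agents strictly improve, allowing the closing arc to be removed from the construction entirely (i.e., a DAG, $\fas=0$), exactly as in the \verif-vs-\sverif\ modification at the end of the proof of \cref{thm:sverif_cont_fas+delta}.

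The degree bound $\maxdeg=12$ is obtained by splitting any agent whose natural out-degree would be large (for example a set-selector that must point to many set-agents, or an element-agent appearing in several sets) into a bounded-width binary tree of neutral-connected ``copy'' agents, each copy carrying a constant number of arcs; since \pxct\ already bounds how often each element occurs and the planar embedding bounds the local structure, only a constant number of such splits is ever needed, and one checks case by case (element-agents, set-agents, special/counter agents, dummy padding agents) that the in-plus-out degree in $\goodG+\badG$ never exceeds $12$. I would state this as an \cref{observation}-style degree count, then prove correctness through the usual pair of claims: (forward) an exact cover $\mathcal{K}$ yields an explicit blocking coalition $V'$ consisting of all element-agents, the set-agents of $\mathcal{K}$, and the first $|V'|$ special agents, with $|V'|$ chosen to be exactly one less than the special agents' initial coalition size so that every special agent strictly loses an enemy; and (backward) a weakly (resp.\ strictly) blocking coalition $V'$ must — by the propagation lemmas forced by the chain and counter gadgets, analogous to \eqref{eq:no-xi^0}--\eqref{eq:all-in-or-all-out} — contain all element-agents, hence correspond to a set cover, and then a counting argument against the size bound (as in Step~2 of \cref{claim:planar-deg4-backward}) shows the cover is exact.

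The main obstacle will be making all three restrictions hold at once without them fighting each other: the counter gadget that enforces exactness naturally wants high degree or a long feedback structure, while the degree-splitting trick introduces new short cycles among the copy-agents that could themselves be local blocking coalitions or could raise $\fas$. The fix is to wire the copy-agents with neutral (rather than friendship) links wherever possible and to arrange each split gadget as an in-tree of friendship arcs pointing toward the original agent, so it contributes no cycle; and to keep the single unavoidable friendship cycle confined to the special-agent counter, whose closing arc is the lone feedback arc. A secondary subtlety, which I would handle exactly as \cref{thm:sverif_cont_fas+delta} does, is that for \sverif\ one needs a \emph{weakly} blocking coalition to still force an exact cover — this requires every ``forced'' agent (element-agents, set-agents, helper-agents) to \emph{strictly} improve on joining, so that weak blocking does not sneak in degenerate coalitions; the neutral padding makes this easy because a forced agent in these gadgets always gains a genuine new friend. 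Once the gadget degrees and the closing-arc count are verified and the two correctness claims are proved, combining them with the fact (noted in \cref{sec:defi}) that both problems lie in \conp\ gives \conp-completeness, proving the theorem.
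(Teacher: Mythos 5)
There is a genuine gap, and it sits at the heart of your construction: the global ``counter'' gadget that is supposed to upper-bound the size of a blocking coalition cannot work in the \FEN\ model with $\maxdeg=12$. In \FE, size bounds are enforceable because every non-friend is an enemy, so the number of enemies of a special agent grows with the coalition size (this is exactly what powers the counting arguments in \cref{thm:fas_number1,thm:sverif_cont_fas+delta}). In \FEN, an agent is affected only by her at most $\maxdeg$ non-neutral relations; a ``neutral-padded cycle of special agents'' is simply indifferent to how many other agents join, so no agent in it can distinguish a coalition containing $\enn$ chosen set-agents from one containing $\enn+1$, and exactness of the cover cannot be forced by any size argument once the degree is constant. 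The paper's reduction avoids this entirely: exactness is enforced \emph{locally}, by putting an enemy arc between every two set-agents whose sets intersect (at most six per set, so the degree stays bounded) and giving each set-agent only one usable friend, so that a set-agent can join a blocking coalition only if no intersecting set-agent does; combined with the requirement that every element-agent must pick up exactly one set-agent friend, the chosen sets form an exact cover with coalitions of size at most $3$ and a single feedback arc on a forwarder cycle. Your other ingredients (chains forcing all element-agents into the coalition, $(a_i,c_j)$ friendships yielding a cover, small initial coalitions) match the paper, but without replacing the counter by such a local enemy-based packing mechanism the backward direction of your reduction does not go through.

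A second, smaller inconsistency concerns \sverif\ with $\fas=0$: you require that every ``forced'' agent strictly improves by gaining a genuine new friend, but if every member of the blocking coalition has a friend inside it, the induced friendship graph has minimum out-degree one and hence a directed cycle, contradicting acyclicity of the (union) graph. The paper resolves this with a dedicated ``greedy'' agent $g$ who is the \emph{only} agent able to strictly improve (by trading an enemy for the friend $s_1$); all other agents in the weakly blocking coalition merely keep their numbers of friends and enemies, with carefully placed private-agent enemy arcs forcing the propagation along the chain and into an exact cover. Your sketch of ``tweak the counter so the special agents strictly improve and drop the closing arc'' would need to be replaced by some such seed-agent mechanism, since in a DAG weak blocking must originate from an agent who improves without gaining a friend inside the coalition.
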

\appendixproofwithstatement{thm:verif-fas+deg+k}{\fencorethreepara}{
  \begin{proof}
We first consider \FEN-\verif and reduce from \pxct. Let the elements be $\mathcal{X}=\{ 1,2\dots,3\enn \}$ and the sets be $\mathcal{C} = \{C_1,\dots,C_{\emm}\}$.

We construct an instance $I'$ of \FEN-\verif\ as follows:
\begin{compactitem}[--]
    \item For each element $i\in [3\enn ]$ we have an element agent $a_i$.
    \item For each set $C_j$, $j\in [\emm]$ we have a set agent $c_j$ and two private agents $d_j^1,d_j^2$ and a forwarder agent $x_j$.
    \item We add $s_1,\dots s_{3\enn }$ special agents and for each $s_l$, $l\in [3\enn -1]$ we add a private agent $t_l$. 
\end{compactitem}
Next, we describe the friendship arcs.
$E(\goodG)=\{ (s_l,s_{l+1}), (s_l,t_l),$ $(s_l,a_l)\mid l\in [3\enn -1]\} \cup \{ (s_{3\enn },a_{3\enn })\} \cup \{ (a_i,c_j)\mid i,j:i\in C_j\} \cup \{ (c_j,d_j^1),(d_j^1,d_j^2),(c_j,x_j)\mid j\in [\emm]\} \cup \{ (x_j,x_{j+1}\mid j\in [\emm -1]\} \cup \{ (x_{\emm },s_1)\} $.

The enemy arcs are:
$E(\badG) = \{ (c_j,d_j^2)\mid j\in [\emm ]\} \cup \{ (c_j,c_z)\mid j,z: j<z, C_j\cap C_z\ne \emptyset \}$.
     \begin{figure}
     \centering
       \begin{tikzpicture}[scale=0.85,every node/.style={scale=0.9}, >=stealth', shorten <= 2pt, shorten >= 2pt]
         
      \foreach \x / \y / \n / \nn / \typ / \p / \dx / \dy in {
      1/4/s1/s_1/pn/{above left}/-1/-4,
      4/4/si/s_i/pn/{below left}/-1/-4,
      5/4/si1/s_{i+1}/pn/{below left}/-1/-4,
      8/4/s3n/s_{3\enn}/pn/{below left}/-1/-4,  
      2/4/ps2/\;/pnn/{below left}/0/0,
      3/4/ps3/\;/pnn/{below left}/0/0,
      6/4/psi2/\;/pnn/{below left}/0/0,
      7/4/psi3/\;/pnn/{below left}/0/0,   
      1/3/a1/a_1/pn/{left}/-1/-4,
      4/3/ai/a_i/pn/{above left}/-1/-2,
      5/3/ai1/a_{i+1}/pn/{above right}/-1/-2,
      8/3/a3n/a_{3\enn}/pn/{left}/-1/-4,
      9/1/c1/c_{1}/pn/{below left}/-1/-4,
      7/1/cj/c_{j}/pn/{below left}/-1/-4,
      5/1/ck/c_{z}/pn/{below left}/-1/-2,
      4/1/ck1/c_{z + 1}/pn/{left}/-1/-2,
      2/1/cl/c_{l}/pn/{below left}/-1/-4,
      0/1/cm/c_{\emm}/pn/{above right}/-1/-4,
      0.5/2/pa1c1/\;/pnn/{left}/0/0,
      1/2/pa1c2/\;/pnn/{left}/0/0,
      1.5/2/pa1c3/\;/pnn/{left}/0/0,
      5/2.5/pai1c1/\;/pnn/{left}/0/0,
      6/2/pai1c2/\;/pnn/{left}/0/0,
      6.5/2/pai1c3/\;/pnn/{left}/0/0,
      7.5/2/pa3nc1/\;/pnn/{left}/0/0,
      8/2/pa3nc2/\;/pnn/{left}/0/0,
      8.5/2/pa3nc3/\;/pnn/{left}/0/0,
      4/2/pcka1/\;/pnn/{left}/0/0,
      5/2/pcka2/\;/pnn/{left}/0/0,
      9/-0.5/x1/x_{1}/pn/{below left}/-1/-4,
      7/-0.5/xj/x_{j}/pn/{below left}/-1/-4,
      5/-0.5/xk/x_{z}/pn/{below left}/-1/-4,
      4/-0.5/xk1/x_{z + 1}/pn/{below left}/-1/-4,
      2/-0.5/xl/x_{l}/pn/{below left}/-1/-4,
      0/-0.5/xm/x_{\emm}/pn/{below left}/-1/-4,
      8.2/-0.5/px2/\;/pnn/{below left}/-1/-4,
      7.8/-0.5/px3/\;/pnn/{below left}/-1/-4,
      6.2/-0.5/px4/\;/pnn/{below left}/-1/-4,
      5.8/-0.5/px5/\;/pnn/{below left}/-1/-4,
      3.2/-0.5/px6/\;/pnn/{below left}/-1/-4,
      2.8/-0.5/px7/\;/pnn/{below left}/-1/-4,
      1.2/-0.5/px8/\;/pnn/{below left}/-1/-4,
      0.8/-0.5/px9/\;/pnn/{below left}/-1/-4} {
        \node[\typ] at (\x,\y) (\n) {};
        \node[\p = \dx pt and \dy pt of \n] {$\nn$};
      }      
      
            \foreach \x / \y / \pn / \nn in {
      1/4.7/s1/1f,
      4/4.7/si/1f,
      5/4.7/si1/1f,
      8.2/1.5/c1/{1f,1e},
      7.6/0.2/cj/{1f,1e},
      5.6/0.2/ck/{1f,1e},
      3.4/0.2/ck1/{1f,1e},
      1.4/1.5/cl/{1f,1e},
      0.6/0.2/cm/{1f,1e}} {
        \node[privaten] at (\x,\y) (p\pn) {\nn};
        \draw[->, privatee] (\pn) edge[] (p\pn);
      }  

		\node at ($(ps2)!0.5!(ps3)$) {$\ldots$};
		\node at ($(psi2)!0.5!(psi3)$) {$\ldots$};
		\node at ($(px2)!0.5!(px3)$) {$\ldots$};
		\node at ($(px4)!0.5!(px5)$) {$\ldots$};
		\node at ($(px6)!0.5!(px7)$) {$\ldots$};
		\node at ($(px8)!0.5!(px9)$) {$\ldots$};

      \begin{pgfonlayer}{bg}
           \foreach \s / \t / \aa / \type in {
           s1/ps2/0/hiddenfc,
           ps3/si/0/hiddenfc,
           si/si1/0/fc,
           si1/psi2/0/hiddenfc,
           psi3/s3n/0/hiddenfc,
           s1/a1/0/fc,
           si/ai/0/fc,
           si1/ai1/0/fc,
           s3n/a3n/0/fc,
           ai/cj/0/fc,
           ai/ck/0/fc,
           ai/cl/0/fc,
           a1/pa1c1/0/hiddenfc,
           a1/pa1c2/0/hiddenfc,
           a1/pa1c3/0/hiddenfc,
           ai1/pai1c1/0/hiddenfc,
           ai1/pai1c2/0/hiddenfc,
           ai1/pai1c3/0/hiddenfc,
           a3n/pa3nc1/0/hiddenfc,
           a3n/pa3nc2/0/hiddenfc,
           a3n/pa3nc3/0/hiddenfc,
           pcka1/ck/0/hiddenfc,
           pcka2/ck/0/hiddenfc,
           cj/ck/0/ec,
           ck/cl/20/ec,
           x1/px2/0/hiddenfc,
           px3/xj/0/hiddenfc,
           xj/px4/0/hiddenfc,
           px5/xk/0/hiddenfc,
           xk/xk1/0/fc,
           xk1/px6/0/hiddenfc,
           px7/xl/0/hiddenfc,
           xl/px8/0/hiddenfc,
           px9/xm/0/hiddenfc,
           c1/x1/0/fc,
           cj/xj/0/fc,
           ck/xk/0/fc,
           ck1/xk1/0/fc,
           cl/xl/0/fc,
           cm/xm/0/fc,
           xm/s1/-40/fc} {
             \draw[->, \type] (\s) edge[bend right = \aa] (\t);
           }

         \end{pgfonlayer}
       \end{tikzpicture}\caption{Illustration for the proof of \cref{thm:verif-fas+deg+k} for \FE-\verif.
         Here $a_i$ is contained in the sets $C_j, C_z, C_l$, where $j < z < l$. The green rectangles indicate the private agents and the agent's relationship to them.}\label{fig:core-fas+deg+k}
     \end{figure}

The construction is illustrated in \cref{fig:core-fas+deg+k}.
The initial partition $\Pi$ is $\bigcup_{l\in [3\enn -1]}\{\{ s_l,t_l\}\} \cup \{\{ s_{3\enn }\}\} \bigcup_{i\in [3\enn ]}\{\{ a_i\}\} \bigcup_{j\in [\emm]} \{\{ c_j,d_j^1,d_j^2\}\}$ $\bigcup_{j\in [\emm]}\{\{ x_j\}\}$.

One can verify that the degree of each agent is at most 12 and each initial coalition has size at most 3. To show that the graph has feedback arc set number 1, remove the following arc: $ (x_{\emm},s_1)$. Then the following order is a topological order for the remaining graph: $s_1,t_1,s_2,t_2,\dots,s_{3\enn},a_1\dots,a_{3\enn},$ $c_1,\dots,c_{\emm},d_1^1,\dots,d_{\emm}^1,d_1^2,\dots,d_{\emm}^2$,\\
$x_1\dots,x_{\emm}$.

\begin{observation}
\label{obs:core-fas+dag+k}
It holds in $\Pi$ that:
\begin{compactenum}[(i)]
  \item \label{core-fas+deg+k,i} for each $i\in [3\enn ]$ $a_i$ has 0 friends and 0 enemies,
  \item \label{core-fas+deg+k,ii} for each $j\in [\emm]$ $c_j$ has 1 friend and 1 enemy, $d_j^1$ has 1 friend and 0 enemies and $d_j^2$ and $x_j$ has 0 friends and 0 enemies,
  \item \label{core-fas+deg+k,iii} for $l\in [3\enn -1]$ $s_l$ has 1 friend and 0 enemies and $t_l$ has 0 friends and 0 enemies and $s_{3\enn }$ has no friends or enemies.
\end{compactenum}

\end{observation}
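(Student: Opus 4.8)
The statement is a direct bookkeeping verification of friend/enemy counts in the initial partition~$\Pi$, so the plan is to carry out this enumeration systematically, one agent type at a time. The guiding principle throughout is that, under friend-oriented preferences, the number of friends (resp.\ enemies) an agent~$v$ has in her coalition equals $|N^+_{\goodG}(v)\cap \Pi(v)|$ (resp.\ $|N^+_{\badG}(v)\cap \Pi(v)|$): only \emph{out}-arcs matter, and among those only the ones whose head lies in the same coalition as~$v$. Thus for each agent I would first read off her out-neighbors in $\goodG$ and in $\badG$ directly from the arc sets $E(\goodG)$ and $E(\badG)$, and then intersect these with the coalition assigned to her by~$\Pi$.

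For statement~\eqref{core-fas+deg+k,i}, each element agent~$a_i$ is a singleton, $\Pi(a_i)=\{a_i\}$. Her only out-arcs in $\goodG$ are the arcs $(a_i,c_j)$ with $i\in C_j$, and she has no out-arcs in $\badG$; since no~$c_j$ lies in $\{a_i\}$, this gives $0$ friends and $0$ enemies. For statement~\eqref{core-fas+deg+k,ii} I would treat the four agents of a set gadget separately. The set agent~$c_j$ lies in $\Pi(c_j)=\{c_j,d_j^1,d_j^2\}$; its out-friends are $d_j^1$ and $x_j$, of which only $d_j^1$ is in the coalition (one friend), and its out-enemies are $d_j^2$ together with all $c_z$, $z>j$, sharing an element with~$C_j$, of which only $d_j^2$ is in the coalition (one enemy). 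The private agent~$d_j^1$, also in this coalition, has the single out-friend~$d_j^2$ (in the coalition) and no out-enemy, giving one friend and zero enemies; the private agent~$d_j^2$ has no out-arcs at all, hence zero of each; and the forwarder~$x_j$ is a singleton whose single out-friend ($x_{j+1}$, or $s_1$ when $j=\emm$) lies outside $\{x_j\}$, again giving zero friends and zero enemies.

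Finally, for statement~\eqref{core-fas+deg+k,iii}: for $l\in[3\enn-1]$ the special agent~$s_l$ lies in $\Pi(s_l)=\{s_l,t_l\}$ and has out-friends $s_{l+1}$, $t_l$, $a_l$, of which only $t_l$ is in the coalition (one friend), and no out-enemy; its paired private agent~$t_l$ has no out-arcs, so zero of each; and the last special agent~$s_{3\enn}$ is a singleton whose only out-arc $(s_{3\enn},a_{3\enn})$ leaves the coalition, giving zero friends and zero enemies. Since this is pure enumeration, there is no genuine obstacle beyond care: one must consistently use out-neighborhoods rather than in-neighborhoods, and must not overlook the ``wrap-around'' and cross-gadget arcs—such as $(x_{\emm},s_1)$, $(s_l,s_{l+1})$, $(a_i,c_j)$, and the $c_z$-enemy arcs—which are precisely the arcs \emph{leaving} each coalition and therefore contribute nothing to the stated counts.
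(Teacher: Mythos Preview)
Your proposal is correct and is precisely the kind of direct case-by-case enumeration the paper intends; the paper itself does not prove this observation, treating it as immediate from the definitions of $E(\goodG)$, $E(\badG)$, and $\Pi$, and your write-up fills in exactly that bookkeeping.
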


\begin{clm}
\label{claim:core:fas+deg+k-1}
If there is an exact cover, then there is a strictly blocking coalition to $\Pi$.
\end{clm}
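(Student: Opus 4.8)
\textbf{Proof plan for \cref{claim:core:fas+deg+k-1}.}
The plan is to start from an exact cover~$\mathcal{K}\subseteq\mathcal{C}$ for~$I$ and to explicitly exhibit a coalition~$P$ that every one of its members strictly prefers to her coalition in~$\Pi$. Mimicking the construction in \cref{thm:fas_number1}, I would take
\[
  P \coloneqq \{a_i \mid i\in[3\enn]\} \;\cup\; \{c_j \mid C_j\in\mathcal{K}\} \;\cup\; \{x_j \mid C_j\in\mathcal{K}\} \;\cup\; \{s_1,\ldots,s_{3\enn}\}.
\]
Wait---one must be careful about the forwarder agents: the special agents~$s_l$ need an incoming friend from the $x$-cycle, so I would actually include \emph{all} forwarders $\{x_1,\ldots,x_\emm\}$ in~$P$ together with $\{s_1,\ldots,s_{3\enn}\}$, so that the arc $(x_\emm,s_1)$ and the chain $(x_j,x_{j+1})$ keep every $s_l$ and $x_j$ supplied with a friend. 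The exact membership is a routine bookkeeping matter; the point is that $P$ should contain, besides all element agents, exactly the set agents $c_j$ with $C_j\in\mathcal{K}$, all their forwarders, and the special-agent segment $s_1,\ldots,s_{3\enn}$.

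Next I would verify, using \cref{obs:core-fas+dag+k}, that every agent in~$P$ strictly improves. By \eqref{core-fas+deg+k,i}, each element agent~$a_i$ has $0$ friends and $0$ enemies in~$\Pi$; since $\mathcal{K}$ is a cover, $P$ contains the unique $c_j\in\mathcal{K}$ with $i\in C_j$, giving $a_i$ one friend, and since $\mathcal{K}$ is \emph{exact}, $a_i$ has no other set-agent friend in~$P$, hence $0$ enemies in~$P$, so $a_i$ strictly improves. By \eqref{core-fas+deg+k,ii}, each chosen set agent~$c_j$ has $1$ friend and $1$ enemy in~$\Pi$; in~$P$ it keeps its forwarder friend~$x_j$ and, because $\mathcal{K}$ is exact (no two chosen sets intersect), it has no $c_z$-enemy in~$P$ and its private enemy $d_j^2\notin P$, so it has $1$ friend and $0$ enemies --- a strict improvement on the number of enemies for the same friend count. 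Each forwarder~$x_j$ with $C_j\in\mathcal{K}$ has $0$ friends and $0$ enemies in~$\Pi$ but gains friend~$x_{j+1}$ (or $s_1$) in~$P$, and keeps $0$ enemies; for the forwarders with $C_j\notin\mathcal{K}$ one similarly checks they gain a friend along the $x$-cycle. Finally, by \eqref{core-fas+deg+k,iii} each special agent~$s_l$, $l\in[3\enn]$, has at most $1$ friend and $0$ enemies in~$\Pi$, and in~$P$ it keeps its friend ($s_{l+1}$, or for $s_{3\enn}$ the agent~$a_{3\enn}$, or the forwarder chain) while still having no enemies; since $P$ strictly contains $\Pi(s_l)$ only when that adds enemies, I must make sure $s_l$ has \emph{strictly more friends} rather than merely the same --- this is where including $a_l$ as a friend of $s_l$ (via the arc $(s_l,a_l)$) matters, giving each $s_l$ two friends instead of one. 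A short case check confirms strict improvement for all special agents.

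The main obstacle, and the step I would spend the most care on, is the enemy-side accounting for the set agents and element agents: the whole reduction hinges on exactness of~$\mathcal{K}$ translating precisely into ``no $c_z$-enemy ends up in~$P$'' and ``each $a_i$ sees exactly one chosen set.'' I would state this as the crux and check it element by element, noting that the enemy arcs are only $(c_j,d_j^2)$ and $(c_j,c_z)$ for intersecting sets $C_j,C_z$, so keeping $d_j^2\notin P$ and choosing $\mathcal{K}$ pairwise disjoint kills every enemy relation inside~$P$ that involves a set agent. Once that is in place, the improvement of the element and set agents follows from \cref{obs:core-fas+dag+k}\eqref{core-fas+deg+k,i}--\eqref{core-fas+deg+k,ii}, and the remaining verifications for forwarders and special agents are routine via \eqref{core-fas+deg+k,iii}. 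I would close by observing that since every agent of~$P$ strictly prefers~$P$ to her coalition in~$\Pi$, the coalition~$P$ is strictly blocking, which proves the claim.
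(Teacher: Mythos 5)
Your proposal is correct and takes essentially the same route as the paper: the paper's proof uses exactly the coalition $P=\{s_i,a_i,c_j,x_l \mid i\in[3\enn],\, C_j\in\mathcal{K},\, l\in[\emm]\}$ (all element agents, the chosen set agents, \emph{all} forwarders, and $s_1,\dots,s_{3\enn}$) and verifies strict improvement agent by agent via \cref{obs:core-fas+dag+k}, just as you do. One harmless slip in your write-up: $a_i$ has zero enemies in $P$ simply because $a_i$ has no arcs in $\badG$ at all (exactness of $\mathcal{K}$ is only needed so that no two chosen set agents, which are mutual enemies when their sets intersect, end up in $P$), not because $a_i$ has a unique set-agent friend.
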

\begin{proof}[Proof of \cref{claim:core:fas+deg+k-1}]
\renewcommand{\qedsymbol}{$\diamond$}
Let $C_{j_1},\dots,C_{j_{\enn}}$ be the exact cover. Take the following coalition $P=\{ s_i,a_i,c_j,x_l\mid i\in [3\enn],j\in \{ j_1,\dots,j_{\enn}\},$ $ l\in [\emm] \}$.

We claim that $P$ is strictly blocking. Each $a_i$ and $x_l$ agent for $i\in [3\enn] ,l\in [\emm]$ have 1 friend in $P$, therefore by \cref{obs:core-fas+dag+k}(\ref{core-fas+deg+k,i})-(\ref{core-fas+deg+k,ii}), they strictly improve.

For each $l\in [3\enn-1]$, $s_l$ has 2 friends and 0 enemies, while $s_{3\enn }$ has 1 friend in $P$ and therefore by \cref{obs:core-fas+dag+k}(\ref{core-fas+deg+k,iii}) they strictly improve too. 

For each $i\in [\enn]$, $c_{j_i}$ has 1 friend and 0 enemies in $P$ because $C_{j_1}, \dots, C_{j_n}$ is an exact cover. By \cref{obs:core-fas+dag+k}(\ref{core-fas+deg+k,ii}) they also strictly improve.
\end{proof}

\begin{clm}
For any strictly blocking coalition $P$
\label{claim:fas+deg+k-2}
\begin{compactenum}[(i)]
\item \label{fas+deg+k,i}no private agent ($t_l$ or $d_j^z)$ can be included for each $l\in [3\enn-1]$, $j \in [\emm]$, and $z \in [2]$,
\item \label{fas+deg+k,ii}for $l\in [3\enn -1]$, $s_l$ can only strictly improve if $s_{l+1}, a_l \in P$ and $s_{3\enn } $ can only strictly improve if $a_{3\enn } \in P$,
\item \label{fas+deg+k,iii} for $i\in [3\enn ]$, $a_i$ can only strictly improve if there is exactly one $c_j \in P$ such that $i\in C_j$.

\end{compactenum}
\end{clm}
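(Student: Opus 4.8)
The plan is to establish all three parts by a uniform local argument: for each type of agent I would read off, from the arc sets of $\goodG$ and $\badG$, the set of agents she could \emph{ever} regard as a friend (her out-neighbours in $\goodG$) and as an enemy (her out-neighbours in $\badG$), and then compare the best outcome she could possibly obtain in any coalition with what \cref{obs:core-fas+dag+k} says she already has in $\Pi$. Throughout I use that every member of a strictly blocking coalition $P$ must \emph{strictly} improve over $\Pi$, i.e.\ gain at least one friend, or keep exactly the same number of friends while strictly reducing her number of enemies.

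For part (i): $t_l$ has no outgoing friendship arc, and by \cref{obs:core-fas+dag+k}(\ref{core-fas+deg+k,iii}) she has $0$ friends and $0$ enemies in $\Pi$; the same applies to $d_j^2$ by \cref{obs:core-fas+dag+k}(\ref{core-fas+deg+k,ii}). Neither can ever acquire a friend nor shed an enemy, so neither can strictly improve, hence neither lies in $P$. For $d_j^1$, whose only potential friend is $d_j^2$ and who already has that friend and no enemy in $\Pi$, strict improvement would require either two friends or fewer than zero enemies, both impossible, so $d_j^1\notin P$ as well. This disposes of all private agents.

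For part (ii): when $l\in[3\enn-1]$, the agent $s_l$ has exactly the three potential friends $s_{l+1}, t_l, a_l$ and no potential enemy, while in $\Pi$ she has $1$ friend and $0$ enemies; since she cannot reduce her enemies, strict improvement forces at least two friends inside $P$, and as $t_l\notin P$ by part (i) this means $\{s_{l+1},a_l\}\subseteq P$. For $s_{3\enn}$, whose only potential friend is $a_{3\enn}$ and who has $0$ friends and $0$ enemies in $\Pi$, strict improvement forces $a_{3\enn}\in P$. Part (iii) starts the same way: the potential friends of $a_i$ are precisely the set agents $c_j$ with $i\in C_j$, and $a_i$ has no potential enemy, so (since $a_i$ has $0$ friends in $\Pi$) at least one such $c_j$ must lie in $P$.

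The step I expect to be the main obstacle is the \emph{uniqueness} half of part (iii). I would argue by contradiction: if $c_j,c_z\in P$ with $j<z$ and $i\in C_j\cap C_z$, then $C_j\cap C_z\neq\emptyset$, so the arc $(c_j,c_z)$ belongs to $\badG$ and $c_z$ is an enemy of $c_j$ inside $P$; but the only potential friends of $c_j$ in $\goodG$ are $d_j^1$ and $x_j$, and $d_j^1\notin P$ by part (i), so $c_j$ has at most one friend and at least one enemy in $P$, whereas in $\Pi$ she already has one friend and one enemy by \cref{obs:core-fas+dag+k}(\ref{core-fas+deg+k,ii}) — hence $c_j$ cannot strictly improve, contradicting that $P$ strictly blocks $\Pi$. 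This forces exactly one set agent $c_j$ with $i\in C_j$ into $P$. This is the only place where the orientation of the enemy arcs among the $c_j$'s (pointing from the lower to the higher index) is used, and it is precisely the ingredient that will later upgrade a set cover to an exact cover in the proof of \cref{thm:verif-fas+deg+k}.
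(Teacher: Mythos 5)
Your proof is correct and follows essentially the same route as the paper's: part (i) by noting each private agent already has all (possibly zero) of her friends and no enemies in $\Pi$, part (ii) by counting that $s_l$ needs two friends none of which can be $t_l$, and part (iii) by observing that two set agents $c_j,c_z$ ($j<z$) containing $i$ would give $c_j$ an enemy while she can have at most her one non-private friend $x_j$ in $P$, so she could not strictly improve. The only difference is presentational (you spell out the in/out-arc bookkeeping more explicitly), not substantive.
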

\begin{proof}[Proof of claim \ref{claim:fas+deg+k-2}]
\renewcommand{\qedsymbol}{$\diamond$}
To prove (\ref{fas+deg+k,i}), we only have to observe that all private agents have all their friends and none of their enemies in $\Pi$, so they cannot strictly improve. 

For (\ref{fas+deg+k,ii}), by observation \ref{obs:core-fas+dag+k}(\ref{core-fas+deg+k,iii}), $s_l$ has to get 2 friends in $P$, and none of them can be a private friend by (\ref{fas+deg+k,i}),  so  they must be $s_{l+1}$ and $a_l$. In the case of $s_{3\enn }$, she must get her  only friend $a_{3\enn }$.

By (\ref{fas+deg+k,i}) and observation \ref{obs:core-fas+dag+k}(\ref{core-fas+deg+k,i}), $a_i$ must have a friend that can only be a set agent. Suppose she has 2 set agents as friends in $P$: $c_j$ and $c_z$, for some $j<z$. Then, it must hold that $C_j\cap C_z\ne \emptyset$, so $c_j$ has an enemy in $P$. As $c_j$ has only 1 non-private friend $x_j$, by (\ref{fas+deg+k,i}) and \cref{obs:core-fas+dag+k}(\ref{core-fas+deg+k,ii}) she cannot strictly improve, proving (\ref{fas+deg+k,iii}). 
\end{proof}

\begin{clm}
\label{claim:k+fas+deg-3}
If there is a strictly blocking coalition, then there is an exact cover. 
\end{clm}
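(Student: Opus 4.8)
The plan is to extract an exact cover from the set agents of a strictly blocking coalition~$P$, relying on the structural restrictions already proved in \cref{claim:fas+deg+k-2}. First I would note that $P\neq\emptyset$ and, using \cref{obs:core-fas+dag+k} together with \cref{claim:fas+deg+k-2}(\ref{fas+deg+k,i}), that every agent that may occur in~$P$ genuinely needs at least one \emph{friend} inside~$P$: each of $a_i$, $c_j$, $x_j$ has no friend and no enemy in~$\Pi$ and so must gain a friend to strictly improve, $s_{3\enn}$ likewise, and each $s_l$ with $l<3\enn$ needs two friends by \cref{claim:fas+deg+k-2}(\ref{fas+deg+k,ii}); private agents are already excluded. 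Hence the friendship subgraph $\goodG[P]$ has minimum out-degree at least one, so it contains a directed cycle, and since $(x_{\emm},s_1)$ is the \emph{unique} feedback arc of~$\goodG$, every such cycle must run through it. In particular $s_1\in P$. (If one prefers, the same conclusion follows by first passing to an inclusion-wise minimal, hence strongly connected, strictly blocking coalition, exactly as in the proof of \cref{claim:fas2}: a sink strongly connected component keeps all friends and can only drop enemies, and neutrals never affect preferences.)

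Next I would bootstrap along the $s$-chain. From $s_1\in P$, \cref{claim:fas+deg+k-2}(\ref{fas+deg+k,ii}) forces $s_2,a_1\in P$; iterating gives $\{s_1,\dots,s_{3\enn}\}\subseteq P$ and $\{a_1,\dots,a_{3\enn-1}\}\subseteq P$, and $a_{3\enn}\in P$ because $s_{3\enn}\in P$. Thus every element agent lies in~$P$. Now fix an element $i\in[3\enn]$; since $a_i\in P$ strictly improves, \cref{claim:fas+deg+k-2}(\ref{fas+deg+k,iii}) says that $P$ contains \emph{exactly one} set agent~$c_j$ with $i\in C_j$. Setting $\mathcal{K}\coloneqq\{C_j\mid c_j\in P\}$, this is precisely the statement that each element of~$\mathcal{X}$ is covered by exactly one member of~$\mathcal{K}$, so $\mathcal{K}$ is an exact cover (and $|\mathcal{K}|=\enn$ follows automatically). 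Together with \cref{claim:core:fas+deg+k-1}, this establishes the desired equivalence for \FEN-\verif.

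The proof is short because the real work is front-loaded into \cref{claim:fas+deg+k-2}; the only point that needs care is the reduction to a cycle, i.e.\ verifying that \emph{every} admissible agent type must indeed acquire a friend in a strictly blocking coalition (otherwise some agent could remain in $\goodG[P]$ as an out-degree-zero vertex and the cycle argument would fail), which is a direct case check against \cref{obs:core-fas+dag+k}. For \FEN-\sverif\ the friendship graph is required to be acyclic, so the construction is adapted to delete the feedback arc (and the then-redundant $x$-chain), and the analogue of \cref{claim:fas+deg+k-2} is run with ``weakly improve'' in place of ``strictly improve'' and weakly blocking coalitions in place of strictly blocking ones; once that is in hand, the extraction of the exact cover proceeds exactly as above.
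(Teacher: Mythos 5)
Your proof is correct and follows essentially the same route as the paper's: every agent in a strictly blocking coalition must gain a friend, so $\goodG[P]$ has minimum out-degree one and hence contains a cycle that must use the arc $(x_{\emm},s_1)$ whose removal makes $\goodG$ acyclic, forcing $s_1\in P$; the chain argument via \cref{claim:fas+deg+k-2}(ii) then pulls in all $s$- and $a$-agents, and part (iii) extracts the exact cover, exactly as in the paper. One minor slip: in $\Pi$ the agent $c_j$ has one friend ($d_j^1$) and one enemy ($d_j^2$), not zero of each, but the conclusion you actually need---that $c_j$ must still acquire a friend inside $P$ to strictly improve---remains valid, so the argument is unaffected.
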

\begin{proof}[Proof of \cref{claim:k+fas+deg-3}]
\renewcommand{\qedsymbol}{$\diamond$}
As $\Pi$ is individually rational by \cref{obs:core-fas+dag+k}, in any strictly blocking coalition $P$ each agent must get a friend. As $E(G)\setminus \{ (x_{\emm},s_1)\}$ is acyclic, so is $G\setminus \{ s_1\}$. Therefore we obtain that $s_1$ has to be in $P$, because otherwise $G[P]$ would be acyclic, contradicting our above observation.

By iteratively using \cref{claim:fas+deg+k-2}(\ref{fas+deg+k,ii}), we get that $\{ s_1,\dots, s_{3\enn },$ $a_1$, $\dots,a_{3\enn }\} \subset P$.

By \cref{claim:fas+deg+k-2}(\ref{fas+deg+k,iii}), we get that each $a_i$ must obtain exactly one set agent $c_j$ as a friend. By the definition of the arcs, this means that the set agents in $P$ satisfy that their corresponding sets form an exact cover.
\end{proof}

\noindent The correctness follows from Claims~\ref{claim:core:fas+deg+k-1} \& \ref{claim:k+fas+deg-3}.

Now we prove that \FEN-\sverif\ remains hard, even if the union of the friendship arcs and enemy arcs forms a DAG.
We reduce from \xctg . Let the elements be $\mathcal{X}=\{ 1,2\dots,3\enn \}$ and the sets be $C_1,\dots,C_{\emm}$.

We construct an instance $I'$ of \FEN-\sverif\ as follows: %
     \begin{figure}[t!]
     \centering
       \begin{tikzpicture}[scale=1,every node/.style={scale=0.9}, >=stealth', shorten <= 2pt, shorten >= 2pt]
         
      \foreach \x / \y / \n / \nn / \typ / \p / \dx / \dy in {
      1/4/s1/s_1/pn/{below left}/-1/-4,
      4/4/si/s_i/pn/{below left}/-1/-4,
      5/4/si1/s_{i+1}/pn/{below left}/-1/-4,
      8/4/s3n/s_{3\enn}/pn/{below left}/-1/-3,  
      2/4/ps2/\;/pnn/{below left}/0/0,
      3/4/ps3/\;/pnn/{below left}/0/0,
      6/4/psi2/\;/pnn/{below left}/0/0,
      7/4/psi3/\;/pnn/{below left}/0/0,   
      1/3/a1/a_1/pn/{left}/-1/-4,
      4/3/ai/a_i/pn/{above left}/-1/-2,
      5/3/ai1/a_{i+1}/pn/{above right}/1/-2,
      8/3/a3n/a_{3\enn}/pn/{right}/-1/-2,
      2/1/cj/c_{j}/pn/{below left}/-1/-4,
      5/1/ck/c_{z}/pn/{below left}/-1/-2,
      7/1/cl/c_{l}/pn/{below left}/-1/-4,
      0.2/4.7/g/g/pn/{above left}/-1/-4,
      0.5/2/pa1c1/\;/pnn/{left}/0/0,
      1/2/pa1c2/\;/pnn/{left}/0/0,
      1.5/2/pa1c3/\;/pnn/{left}/0/0,
      5/2.5/pai1c1/\;/pnn/{left}/0/0,
      6/2/pai1c2/\;/pnn/{left}/0/0,
      6.5/2/pai1c3/\;/pnn/{left}/0/0,
      7.5/2/pa3nc1/\;/pnn/{left}/0/0,
      8/2/pa3nc2/\;/pnn/{left}/0/0,
      8.5/2/pa3nc3/\;/pnn/{left}/0/0,
      4/2/pcka1/\;/pnn/{left}/0/0,
      5/2/pcka2/\;/pnn/{left}/0/0} {
        \node[\typ] at (\x,\y) (\n) {};
        \node[\p = \dx pt and \dy pt of \n] {$\nn$};
      }      
      
            \foreach \x / \y / \pn / \nn in {
      1/4.7/s1/2f,
      4/4.7/si/2f,
      5/4.7/si1/2f,
      8/4.7/s3n/1f,
      1.8/3/a1/1f,
      3.2/3/ai/1f,
      5.8/3/ai1/1f,
      7.2/3/a3n/1f,
      1/5.5/g/{1f,1e}} {
        \node[privaten] at (\x,\y) (p\pn) {\nn};
        \draw[->, privatee] (\pn) edge[] (p\pn);
      }

		\node at ($(ps2)!0.5!(ps3)$) {$\ldots$};
		\node at ($(psi2)!0.5!(psi3)$) {$\ldots$};

      \begin{pgfonlayer}{bg}
           \foreach \s / \t / \aa / \type in {
           s1/ps2/0/hiddenfc,
           ps3/si/0/hiddenfc,
           si/si1/0/fc,
           si1/psi2/0/hiddenfc,
           psi3/s3n/0/hiddenfc,
           s1/a1/0/fc,
           si/ai/0/fc,
           si1/ai1/0/fc,
           s3n/a3n/0/fc,
           ai/cj/0/fc,
           ai/ck/0/fc,
           ai/cl/0/fc,
           a1/pa1c1/0/hiddenfc,
           a1/pa1c2/0/hiddenfc,
           a1/pa1c3/0/hiddenfc,
           ai1/pai1c1/0/hiddenfc,
           ai1/pai1c2/0/hiddenfc,
           ai1/pai1c3/0/hiddenfc,
           a3n/pa3nc1/0/hiddenfc,
           a3n/pa3nc2/0/hiddenfc,
           a3n/pa3nc3/0/hiddenfc,
           pcka1/ck/0/hiddenfc,
           pcka2/ck/0/hiddenfc,
           cj/ck/0/ec,
           ck/cl/0/ec,
           g/s1/0/fc} {
             \draw[->, \type] (\s) edge[bend right = \aa] (\t);
           }

         \end{pgfonlayer}
       \end{tikzpicture}\caption{Illustration for the proof of \cref{thm:verif-fas+deg+k} for
         the \sverif\ case. The green rectangles indicate the private agents and the agent's relationship to them. Every agent considers the private agents of her  friends as enemies, whereas the private agents consider the friends of their owners an enemy.}\label{fig:sverif-fdk}
     \end{figure}

\begin{compactitem}[--]
    \item For each element $i\in [3\enn]$ we have an element agent $a_i$ with a private friend $b_i$.
    \item For each set $C_j$, $j\in [\emm]$ we have a set agent $c_j$.
    \item We add $s_1,\dots s_{3\enn}$ special agents and for each $s_l$, $l\in [3\enn -1]$ we add 2 private agents $t_l^1,t_l^2$, while for $s_{3\enn }$, we only have one private friend $t_{3\enn }$.
    \item Finally, we add a greedy agent $g$ with private agents $g^1$ and $g^2$.
\end{compactitem}
Next, we describe the friendship arcs: 
$\{ (s_l,s_{l+1}) , (s_l,t_l^1) , (s_l,t_l^2),$ $ (s_l,a_l) \mid l\in [3\enn-1]\} \cup \{ (s_{3\enn} , t_{3\enn}) ,(s_{3\enn},a_{3\enn})\} \bigcup \{ (a_i,b_i)\mid i\in [3\enn] \} \bigcup \{ (a_i,c_j)\mid i,j:i\in C_j\} \cup \{ (g,s_1),(g,g^1),(g^1,g^2)\} $.

The enemy arcs are:
$\{ (s_l,t_{l+1}^1)$, $(s_l,t_{l+1}^2)$, $(t_l^1,s_{l+1})$, $(t_l^2,s_{l+1})$,\\$(t_l^1,a_l), $ $ (t_l^2,a_l) , (s_l,b_l) \mid l\in [3\enn -1] \} \cup \{ (s_{3\enn},b_{3\enn}),(t_{3\enn},a_{3\enn})\}$ $ \bigcup \{ (b_i,c_i)\mid i\in [3\enn]\}$  $\bigcup \{ (c_j,c_z)\mid j,z: j<z, C_j\cap C_z\ne \emptyset \} \cup \{ (g,g^2),$ $(g^1,s_1),(g,t_1^1),(g,t_1^2)\}$.

The construction is illustrated in~\cref{fig:sverif-fdk}.
The initial partition is 
\begin{align*}
\Pi \coloneqq \bigcup_{l\in [3\enn-1]}\{\{ s_l,t_l^1,t_l^2\}\} \cup \{\ s_{3\enn},t_{3\enn}\}\} \bigcup_{i\in [3\enn]}\{\{ a_i,b_i\}\}\\
 \bigcup_{j\in [\emm]} \{\{ c_j\}\}  \cup \{\{ g,g^1,g^2\}\}
\end{align*}
 Clearly, $\Pi$ is individually rational.

As before, one can verify that the degree of each agent is at most 12 and each initial coalition has size at most 3. To show that the graph is a DAG, we give a topological order of the vertices: $g,g^1,g^2,s_1,t_1^1,t_1^2,s_2,t_2^1,t_2^2,\dots,s_{3\enn},t_{3\enn},a_1\dots,a_{3\enn},b_1, $ $\dots, b_{3\enn},$ $c_1,\dots,c_{3\emm}$.

\begin{observation}
\label{obs:sverif-fdk}
It holds in $\Pi$ that:
\begin{compactenum}[(i)]
  \item \label{obs:sverif-fdk,i} for each $i\in [3\enn ]$ $a_i$ has 1 friend and 0 enemies, $b_i$ has 0 friends and 0 enemies,
  \item \label{obs:sverif-fdk,ii} for each $j\in [\emm]$ $c_j$ has 0 friends and 0 enemies,
  \item \label{obs:sverif-fdk,iii} for $l\in [3\enn -1]$ $s_l$ has 2 friends and 0 enemies and $t_l^1,t_l^2$ have 0 friends and 0 enemies, while $s_{3\enn }$ has 1 friend and 0 enemies, $t_{3\enn }$ has no friends or enemies and
  \item \label{obs:sverif-fdk,v} $g$ has 1 friend and 1 enemy, $g^1$ has 1 friend and 0 enemies, $g^2$ has no friends or enemies.
\end{compactenum}

\end{observation}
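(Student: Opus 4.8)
The statement is a pure bookkeeping fact about the \FEN-\sverif{} construction, so the plan is simply to verify, block by block of the partition~$\Pi$, the number of outgoing friendship and enemy arcs whose head lies inside the same block. Recall that for an agent~$i$ the friends (resp.\ enemies) she has in a coalition~$S$ are $|N^+_{\goodG}(i)\cap S|$ (resp.\ $|N^+_{\badG}(i)\cap S|$); in particular, only \emph{out}-arcs count, so being the target of an arc never contributes, and only arcs whose head also sits in $\Pi(i)$ matter. I would state this convention explicitly at the outset, since it controls the entire count, and then walk through the five kinds of blocks of~$\Pi$.

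For item~(i), the block $\{a_i,b_i\}$: agent~$a_i$ heads the friendship arcs $(a_i,b_i)$ and all $(a_i,c_j)$ with $i\in C_j$, of which only $b_i$ lies in the block (one friend), and $a_i$ heads no enemy arc at all (zero enemies); the private~$b_i$ heads only the enemy arc $(b_i,c_i)$, whose head $c_i$ sits in a different singleton block, giving zero of each. For item~(ii), each $c_j$ is a singleton, so the counts are trivially zero. For item~(iii), the block $\{s_l,t_l^1,t_l^2\}$ with $l\le 3\enn-1$: agent~$s_l$ heads $(s_l,s_{l+1}),(s_l,t_l^1),(s_l,t_l^2),(s_l,a_l)$, exactly $t_l^1,t_l^2$ being in the block (two friends), and the enemy arcs $(s_l,t_{l+1}^1),(s_l,t_{l+1}^2),(s_l,b_l)$, all leaving the block (zero enemies); the privates $t_l^1,t_l^2$ head only the enemy arcs to $s_{l+1}$ and $a_l$, both outside the block, hence zero of each. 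The final special block $\{s_{3\enn},t_{3\enn}\}$ is handled the same way: $s_{3\enn}$ keeps one friend ($t_{3\enn}$) while its single enemy arc $(s_{3\enn},b_{3\enn})$ leaves the block, and $t_{3\enn}$'s only out-arc $(t_{3\enn},a_{3\enn})$ also leaves it. For the greedy item, the block $\{g,g^1,g^2\}$: agent~$g$ heads friendship arcs $(g,s_1),(g,g^1)$, of which only $g^1$ is in the block (one friend), and enemy arcs $(g,g^2),(g,t_1^1),(g,t_1^2)$, of which only $g^2$ is in the block (one enemy); agent~$g^1$ heads $(g^1,g^2)$ (one friend in the block) and the enemy arc $(g^1,s_1)$ (outside), while $g^2$ heads no arc at all, matching the stated counts.

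The only place where an error could hide is the directionality together with the block-membership bookkeeping: several agents (the privates $b_i$, $t_l^1$, $t_l^2$, $t_{3\enn}$, and $g^2$) are heads of many arcs yet own very few out-arcs, and the enemy arcs emanating from the $s_l$ and from the $t$-agents are deliberately aimed at the \emph{neighbouring} special block or at element agents, never inside the current block. Thus I expect the main, and rather modest, obstacle to be double-checking, for each of the four enemy-arc families incident to a special or private agent, that its head indeed lies outside the current block of~$\Pi$; once this is confirmed every count coincides with the claim, which completes the verification.
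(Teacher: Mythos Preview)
Your proposal is correct and is exactly the kind of direct block-by-block verification that the paper relies on; the paper states this as an observation without proof, so your careful check of out-arcs versus block membership is precisely what is needed and all the counts you compute are accurate.
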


\begin{clm}
\label{claim:sverif-fdk-1}
If there is an exact cover, then there is a weakly blocking coalition to $\Pi$.
\end{clm}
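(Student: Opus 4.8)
The plan is to mirror the construction from the proof of \cref{claim:core:fas+deg+k-1}, with the greedy agent~$g$ now playing the role of the sole strictly improving agent. Let $\mathcal{K}=\{C_{j_1},\dots,C_{j_{\enn}}\}$ be an exact cover of $\mathcal{X}$; since every set has exactly three elements, $\mathcal{K}$ consists of exactly $\enn$ pairwise disjoint sets and covers each element exactly once. I would then take
\[
  P \coloneqq \{g\}\cup\{s_i\mid i\in[3\enn]\}\cup\{a_i\mid i\in[3\enn]\}\cup\{c_j\mid C_j\in\mathcal{K}\},
\]
and show that $P$ weakly blocks~$\Pi$.

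The main step is a routine agent-by-agent check that everyone in $P$ weakly prefers $P$ to her coalition in~$\Pi$, reading off the initial endowments from \cref{obs:sverif-fdk}. For $g$: in $\Pi$ she has one friend ($g^1$) and one enemy ($g^2$), whereas in $P$ she keeps one friend (namely $s_1$) but none of her enemies $g^2,t_1^1,t_1^2$ is present, so she \emph{strictly} improves; this is the single strictly improving agent that makes $P$ weakly blocking. For $s_l$ with $l<3\enn$: her friends $s_{l+1}$ and $a_l$ are in $P$ while her private $t$-friends and her enemies (the private $t$- and $b$-agents) are not, so she again has two friends and no enemy and is indifferent; likewise $s_{3\enn}$ keeps the friend~$a_{3\enn}$ and avoids her enemy~$b_{3\enn}$. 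For an element agent~$a_i$: since $\mathcal{K}$ covers $i$ exactly once, exactly one $c_j\in P$ satisfies $i\in C_j$, so $a_i$ has one friend in $P$, and as $a_i$ has no enemies at all she is indifferent between $P$ and $\{a_i,b_i\}$. Finally, a selected set agent~$c_j$ has no friends anywhere, and her only potential enemies are set agents $c_z$ with $C_j\cap C_z\neq\emptyset$; as the members of $\mathcal{K}$ are pairwise disjoint, no such $c_z$ lies in $P$, so $c_j$ has no enemy in $P$ and is again indifferent. Hence every agent of $P$ weakly prefers $P$, and $g$ strictly prefers it, so $P$ is weakly blocking~$\Pi$.

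I do not expect a real obstacle here: the hypothesis is used only in the two places where it must be, namely the disjointness of $\mathcal{K}$ (so the chosen set agents are not mutual enemies and each element agent sees exactly one set friend) and the exclusion of all private agents ($t_l^1,t_l^2,b_i,g^1,g^2$) from $P$ (so no agent of $P$ inherits an extra enemy), both immediate from the construction; everything else is bookkeeping. The substantive direction of this reduction is the converse — that a weakly blocking coalition forces an exact cover — but that is the content of a separate claim.
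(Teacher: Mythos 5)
Your proposal is correct and follows essentially the same route as the paper: the same coalition $P=\{g\}\cup\{s_i,a_i\mid i\in[3\enn]\}\cup\{c_j\mid C_j\in\mathcal{K}\}$, the same agent-by-agent count of friends and enemies against \cref{obs:sverif-fdk}, and the same observation that $g$ is the unique strictly improving agent while all others are indifferent. No gaps.
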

\begin{proof}[Proof of \cref{claim:sverif-fdk-1}]
\renewcommand{\qedsymbol}{$\diamond$}
Let $C_{j_1},\dots,C_{j_{\enn}}$ be the exact cover. Take the following coalition $P=\{ s_i,a_i,c_j,g\mid i\in [3\enn ],j\in \{ j_1,\dots,j_{\enn}\} \}$.

We claim that $P$ is blocking. Each $a_i$ agent for $i\in [3\enn ]$ has 1 friend and 0 enemies in $P$. Therefore they weakly improve by \cref{obs:sverif-fdk}(\ref{obs:sverif-fdk,i}).

For each $l\in [3\enn -1]$, $s_l$ has 2 friends and 0 enemies, while $s_{3\enn }$ has 1 friend and 0 enemies in $P$.  Therefore they weakly improve by \cref{obs:sverif-fdk}(\ref{obs:sverif-fdk,iii}). 

For each $i\in [\enn]$, $c_{j_i}$ has 0 friends and 0 enemies in $P$ (as they formed an exact cover, so none of them intersected). By \cref{obs:sverif-fdk}(\ref{obs:sverif-fdk,ii}), they also weakly improve.

Finally, $g$ has 1 friend and 0 enemies in $P$, so by \cref{obs:sverif-fdk}(\ref{obs:sverif-fdk,v}) she strictly improves.

This concludes that $P$ is a weakly blocking coalition. 
\end{proof}

\begin{clm}
\label{claim:sverif-fdk-3}
In any weakly blocking coalition $P$, only the greedy agent $g$ can strictly improve. 
\end{clm}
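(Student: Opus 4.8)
The plan is to show that in any weakly blocking coalition $P$ every agent $v \neq g$ is \emph{indifferent} between $P$ and her initial coalition $\Pi(v)$; since, by definition of weak blocking, at least one agent of $P$ must strictly prefer $P$, that agent has to be $g$, which is exactly the statement. The proof is a case analysis over the agent types, and throughout it uses that $\Pi$ is individually rational together with the friend/enemy counts recorded in \cref{obs:sverif-fdk}, plus the fact that every agent in this construction has only a handful of potential friends.

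First I would handle the private agents $b_i$, $t^1_l$, $t^2_l$, $t_{3\enn}$, $g^1$, $g^2$. Each of them already has all of her friends and none of her enemies in $\Pi$ (see \cref{obs:sverif-fdk}), so in no coalition can she gain a friend or shed an enemy; hence she never strictly improves, and if she lies in $P$ she must stay indifferent, i.e.\ she keeps all her friends and acquires no new enemy. This step also yields the bookkeeping facts used later: if $t^1_l\in P$ or $t^2_l\in P$ then $a_l \notin P$ and $s_{l+1}\notin P$ (they are enemies of the $t$-agents); if $t_{3\enn}\in P$ then $a_{3\enn}\notin P$; and whenever some $b_i\in P$, no set agent $c_j$ with $i\in C_j$ is in $P$ (by the enemy arcs incident with $b_i$), and symmetrically if such a $c_j\in P$ then $b_i\notin P$.

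Next come the set agents $c_j$: in $\Pi$ they are isolated and, having no outgoing friendship arc, they have zero friends in every coalition; hence $c_j\in P$ forces $c_j$ to be enemy-free in $P$, so she is indifferent, and moreover the set agents appearing in $P$ must correspond to pairwise disjoint sets (otherwise the one with smaller index has an enemy in $P$). For a special agent $s_l$ with $l<3\enn$, recall she has the two friends $t^1_l,t^2_l$ and no enemy in $\Pi$, while her only potential friends are $s_{l+1},t^1_l,t^2_l,a_l$; if $t^1_l\in P$ or $t^2_l\in P$, the previous paragraph removes $s_{l+1}$ and $a_l$ from $P$, and otherwise only $s_{l+1}$ and $a_l$ can be friends of $s_l$ in $P$ --- either way $s_l$ has at most two friends in $P$, so she does not strictly improve. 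The agent $s_{3\enn}$ has the single friend $t_{3\enn}$ in $\Pi$; since $t_{3\enn}\in P$ would exclude $a_{3\enn}$, she can keep at most one of her two potential friends $t_{3\enn},a_{3\enn}$ in $P$, so again she does not strictly improve.

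Finally, an element agent $a_i$ has no outgoing enemy arc, so she never has an enemy; in $\Pi$ she has the unique friend $b_i$, so to strictly improve she would need two friends in $P$. Her potential friends are $b_i$ and the $c_j$ with $i\in C_j$; the set agents in $P$ being pairwise disjoint, at most one such $c_j$ lies in $P$, and by the first step $b_i$ and that $c_j$ cannot both be in $P$, so $a_i$ has at most one friend in $P$ and does not strictly improve. Collecting the cases, no agent of $P$ other than $g$ strictly improves, which is the claim. I expect the main work to be the last two steps: one must verify carefully that the enemy arcs emanating from the $t$-agents and from the $b_i$'s are precisely what keeps the friend count of $s_l$ (resp.\ $a_i$) from exceeding the value it already enjoys in $\Pi$; once these ``capping'' facts are established the rest is routine.
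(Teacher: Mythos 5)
Your proof is correct and follows essentially the same route as the paper's: private agents cannot strictly improve and tolerate no enemies (which excludes $a_l,s_{l+1}$ when a $t$-agent joins and excludes $b_i$ alongside a covering set agent), set agents have no friends and hence must be enemy-free, which caps every $s_l$ at two friends, $s_{3\enn}$ at one, and every $a_i$ at one, so none of them strictly improves. The only cosmetic difference is that you establish pairwise disjointness of the set agents in $P$ up front and use it in the $a_i$ case, whereas the paper derives the same contradiction locally via two intersecting set agents; the substance is identical.
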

\begin{proof}[Proof of \cref{claim:sverif-fdk-3}]
\renewcommand{\qedsymbol}{$\diamond$}
Let $P$ be a weakly blocking coalition. 
Clearly, private agents cannot strictly improve as they receive all their friends and no enemies in  $\Pi$.

Suppose $s_l$ strictly improves for some $l\in [3\enn -1]$. Then, by \cref{obs:sverif-fdk}(\ref{obs:sverif-fdk,iii}), she must obtain 3 friends. Therefore, either $a_l$ or $s_{l+1}$ is in $P$. But both of them are an enemy to $t_l^1$ and $t_l^2$. As they have no friends, this implies that $t_l^1,t_l^2\notin P$. Therefore, $s_l$ can have at most 2 friends in $P$, contradiction. For $s_{3\enn }$ to strictly improve, she must get both of her friends by \cref{obs:sverif-fdk}(\ref{obs:sverif-fdk,iii}), so $t_{3\enn },a_{3\enn }\in P$. But $t_{3\enn }$ has no friends and considers $a_{3\enn }$ an enemy, contradiction.

The agents $c_j$, $j\in [\emm]$ cannot strictly improve, as they have no friends.

Suppose $a_i$ strictly improves for some $i\in [3]$. Then, by \cref{obs:sverif-fdk}(\ref{obs:sverif-fdk,i}), she must receive at least 2 friends, so there is some $c_j\in P$, such that $i\in C_j$ holds. As $c_j$ is an enemy to private agent $b_i$, and she has no friends, it follows that $b_i\notin P$. Hence, there is at least 2 set agents $c_j,c_z\in P$ with $j<z$, such that $a_i\in C_j\cap C_z$. Therefore $c_j$ has an enemy in $P$. As she has no friends, this is a contradiction.
\end{proof}

\begin{clm}
If there is a weakly blocking coalition $P$, then there is an exact cover.
\label{claim:sverif-fdk-4}
\end{clm}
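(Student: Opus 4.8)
The argument will bootstrap from \cref{claim:sverif-fdk-3}. Since $P$ weakly blocks $\Pi$, at least one agent in $P$ strictly improves, and by \cref{claim:sverif-fdk-3} this agent must be the greedy agent $g$; in particular $g\in P$ and $g$ strictly prefers $P$ to $\Pi(g)=\{g,g^1,g^2\}$, where by \cref{obs:sverif-fdk}\eqref{obs:sverif-fdk,v} she currently has exactly one friend ($g^1$) and one enemy ($g^2$). The first step will pin down $g$'s situation inside $P$. Her only friends are $s_1$ and $g^1$, so she could gain a second friend only if $\{s_1,g^1\}\subseteq P$; but then $g^1\in P$ would sit together with her enemy $s_1$ while having only the single friend $g^2$, so $g^1$ could not weakly improve -- hence this case is impossible. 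Therefore $g$ must obtain exactly one friend and no enemy in $P$, which forces $s_1\in P$ (were $g^1$ chosen instead, $g^1$ would lose her only friend $g^2$, again contradicting weak improvement) as well as $\{g^1,g^2,t_1^1,t_1^2\}\cap P=\emptyset$.

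The second step will propagate this along the spine $s_1,\dots,s_{3\enn}$ by induction. Assume $s_l\in P$ for some $l\le 3\enn-1$; by the first step (if $l=1$) or by the inductive hypothesis (if $l\ge 2$) we know $t_l^1,t_l^2\notin P$, so among her four friends $s_{l+1},t_l^1,t_l^2,a_l$ agent $s_l$ can keep at most two inside $P$. Since she has two friends and no enemy in $\Pi$ (\cref{obs:sverif-fdk}\eqref{obs:sverif-fdk,iii}), weak improvement forces $\{s_{l+1},a_l\}\subseteq P$ together with no enemy for her in $P$, i.e.\ $\{t_{l+1}^1,t_{l+1}^2,b_l\}\cap P=\emptyset$. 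Hence $\{s_1,\dots,s_{3\enn}\}\cup\{a_1,\dots,a_{3\enn-1}\}\subseteq P$ and $b_1,\dots,b_{3\enn-1}\notin P$; a direct inspection of $s_{3\enn}$ (one friend and no enemy in $\Pi$, sole enemy $b_{3\enn}$) yields $b_{3\enn}\notin P$ as well, so no private agent $b_i$ lies in $P$.

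The third step will extract the cover. For each $i\in[3\enn-1]$, the agent $a_i\in P$ has no enemies anywhere in the instance and, by \cref{obs:sverif-fdk}\eqref{obs:sverif-fdk,i}, has exactly one friend $b_i$ in $\Pi$; since $b_i\notin P$, weak improvement of $a_i$ forces some set agent $c_j\in P$ with $i\in C_j$. Putting $\mathcal{K}\coloneqq\{C_j\mid c_j\in P\}$, this gives $[3\enn-1]\subseteq\bigcup\mathcal{K}$. Furthermore every $c_j\in P$ has no friend anywhere in the instance and no enemy in $\Pi$ (\cref{obs:sverif-fdk}\eqref{obs:sverif-fdk,ii}), so weak improvement forces $c_j$ to have no enemy inside $P$; since whenever $C_j\cap C_z\ne\emptyset$ the lower-indexed of $c_j,c_z$ regards the other as an enemy, no two members of $\mathcal{K}$ can intersect. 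Thus $\mathcal{K}$ is a family of pairwise disjoint $3$-element subsets of $[3\enn]$ with $[3\enn-1]\subseteq\bigcup\mathcal{K}\subseteq[3\enn]$, so $3|\mathcal{K}|=|\bigcup\mathcal{K}|\in\{3\enn-1,3\enn\}$; divisibility by $3$ forces $|\bigcup\mathcal{K}|=3\enn$ and $|\mathcal{K}|=\enn$, i.e.\ $\mathcal{K}$ is an exact cover. Combined with \cref{claim:sverif-fdk-1} and \cref{claim:sverif-fdk-3}, this settles the correctness of the \sverif\ reduction, and because the topological order listed in the construction shows that the union of friendship and enemy arcs is acyclic, \FEN-\sverif\ is \conp-complete even when the union graph is a DAG.

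I expect the main obstacle to be the $g$-gadget case analysis in the first step: ruling out that $g$ picks up a second friend is precisely where the ``greedy'' gadget is designed to bite, and it has to be combined with the ensuing constraints on $g^1$ and on $t_1^1,t_1^2$. A further point worth stating explicitly is that one never needs $a_{3\enn}\in P$ -- covering only $[3\enn-1]$ already suffices, as disjointness together with divisibility by $3$ upgrades it to a full exact cover.
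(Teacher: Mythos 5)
Your proof is correct and takes essentially the same route as the paper's: use \cref{claim:sverif-fdk-3} to force $g\in P$ with a strict improvement, deduce $s_1\in P$ and that $g$ has no enemies (so $g^1,g^2,t_1^1,t_1^2\notin P$), propagate $s_{l+1},a_l\in P$ and $t^1_{l+1},t^2_{l+1},b_l\notin P$ along the chain of special agents, and then extract the cover from the set agents in $P$. The only (harmless) deviation is the endgame: the paper argues that each $a_i$ — including $a_{3\enn}$ — has exactly one set-agent friend in $P$, whereas you obtain pairwise disjointness of $\mathcal{K}$ from the set agents' no-enemy requirement and upgrade coverage of $[3\enn-1]$ to an exact cover via the divisibility-by-$3$ count, which legitimately avoids needing $a_{3\enn}\in P$.
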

\begin{proof}[Proof of \cref{claim:sverif-fdk-4}]
\renewcommand{\qedsymbol}{$\diamond$}
By \cref{claim:sverif-fdk-3}, we have that the greedy agent $g$ must be in $P$, as no one else can strictly improve. Also, $g$ has to strictly improve. By \cref{obs:sverif-fdk}(\ref{obs:sverif-fdk,v}), either $g$ has 2 friends or 1 friend and no enemies in $P$. If $s_1\notin P$, then the only friend $g$ can obtain is $g^1$. But then, for $g^1$ to weakly improve, she must get her  only friend $g^2$, so $g$ also has an enemy in $P$ and cannot strictly improve, a contradiction. Hence, $s_1\in P$. As $s_1$ is an enemy to $g^1$ and she has only one friend, by \cref{obs:sverif-fdk}(\ref{obs:sverif-fdk,v}) she cannot weakly improve, so $g^1\notin P$. Therefore, by \cref{obs:sverif-fdk}(\ref{obs:sverif-fdk,v}) and \cref{claim:sverif-fdk-3}, $g$ cannot have enemies in $P$. As $t_1^,t_1^2$ are enemies to $g$, they are not in $P$.

By \cref{obs:sverif-fdk}(\ref{obs:sverif-fdk,iii}), $s_1$ must have both of her other friends in $P$, so $a_1\in P$ and $s_2\in P$. 
Also, for $s_1$ to weakly improve, she must not obtain any enemies, hence $t_2^1,t_2^2,b_1\notin P$.
Again, by \cref{obs:sverif-fdk}(\ref{obs:sverif-fdk,iii}), $s_2$ must get at least two friends without $t_2^1,t_2^2$, so she also cannot get any enemies. Therefore, $s_3,a_2\in P$ and $t_3^1,t_3^2,b_2\notin P$. Iterating this, we obtain that $\{ s_1,s_2,\dots, s_{3\enn },a_1,a_2,\dots,a_{3\enn }\} \subset P$ and $b_1,\dots,b_{3\enn }\notin P$. 

By \cref{claim:sverif-fdk-3} we know that each $a_i$ can only weakly improve in $P$. Hence, by \cref{obs:sverif-fdk}(\ref{obs:sverif-fdk,i}), $a_i$ must have exactly 1 friend and no enemies in $P$ for each $i\in [3\enn ]$. As $b_i\notin P$, we have that there is exactly one set agent $c_j\in P$, such that $i\in C_j$. Therefore, the sets corresponding to the set agents in $P$ must form an exact cover. 
\end{proof}

\noindent The correctness follows from Claims~\ref{claim:sverif-fdk-1} \& \ref{claim:sverif-fdk-4}.
\end{proof}
}

Unlike the Nash and individual stability, symmetric preferences do not help in reducing the complexity.
\newcommand{\fencoresymm}{
  \FEN-\verif (resp.\ \FEN-\sverif) is \conp-com\-plete even when the preferences are symmetric, $\maxdeg\!=\!7$ (resp.\ $\maxdeg\!=\!26$), and $\maxcoal\!=\!3$ (resp.\ $\maxcoal\!=\!4$). 
}
\begin{theorem}[\appendixsymb]\label{thm:core_verify_neutals}
  \fencoresymm %
\end{theorem}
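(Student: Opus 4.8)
\textbf{Proof plan for \cref{thm:core_verify_neutals}.}
The plan is to adapt the reduction of \cref{thm:verif-fas+deg+k}: we again reduce from \xctg\ (for \sverif) and from \pxct\ (for \verif), but now the friendship and enemy relations must be symmetric, so we can no longer rely on the directed cycle of special agents $s_1,\dots,s_{3\enn}$ to ``propagate'' membership in a blocking coalition in one direction. Instead, the key structural trick is to replace each directed propagation step by a short symmetric gadget whose unique locally stable configurations still force an ``all-or-nothing'' behaviour, and to pay for this with a larger (but still constant) maximum degree and a slightly larger constant maximum coalition size. Concretely, for \verif\ I would keep the element agents $a_i$, the set agents $c_j$, and a chain of special agents $s_1,\dots,s_{3\enn}$, but where in the asymmetric proof $s_l$ had a single out-arc to $s_{l+1}$, here $s_l$ and $s_{l+1}$ are mutual friends, and each $s_l$ additionally has a small bundle of private ``padding'' agents (sized so that $\maxcoal=3$) and is neutral to all element/set agents except the one it ``owns''. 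The enemy relations between intersecting set agents $c_j, c_z$ (when $C_j\cap C_z\neq\emptyset$) are made symmetric, and each $c_j$ gets private enemy/friend agents mirroring the $d_j^1,d_j^2$ role so that $c_j$ in a blocking coalition is forced to pick up exactly one element friend without any intersecting set agent. The initial partition $\Pi$ consists of the natural small coalitions; since everything is symmetric, $\Pi$ being \emph{not} core stable is equivalent to the existence of an exact cover, by the same three-step argument as in \cref{thm:verif-fas+deg+k} (forward: an exact cover yields a strictly/weakly blocking coalition consisting of all $s_i$, all $a_i$, and the $c_j$ of the cover plus padding; backward: any minimal blocking coalition is connected in $\goodG$, must contain some $s_l$, hence by the all-or-nothing gadgets contains all $s_i$ and all $a_i$, and each $a_i$ must grab exactly one set friend, so the chosen sets form an exact cover).

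For the \sverif\ case I would reuse the ``greedy agent'' idea from the proof of \cref{thm:verif-fas+deg+k}: since symmetric \FEN\ need not force the blocking coalition on its own, we add one designated agent $g$ (with private agents and the appropriate symmetric enemy/friend structure to $s_1$) who is the only agent able to strictly improve, and whose improvement forces $s_1$, and hence the whole chain, into the blocking coalition. This is where the larger constants come from: making the forcing work symmetrically for the strict-core version requires each special agent to carry more private enemies (so that it is forced to obtain \emph{more} friends in any blocking coalition), which pushes $\maxdeg$ up to $26$ and $\maxcoal$ up to $4$. I would present the construction with an accompanying figure in the appendix (analogous to \cref{fig:core-fas+deg+k,fig:sverif-fdk}), an observation listing the friends/enemies each agent has in $\Pi$, and then three claims: (i) an exact cover gives a (weakly) blocking coalition; (ii) in any (weakly) blocking coalition only $g$ can strictly improve and private agents never join; (iii) a (weakly) blocking coalition yields an exact cover. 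Membership in \conp\ is immediate since checking whether a given coalition blocks $\Pi$ is polynomial-time (as already noted in \cref{sec:defi}).

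The main obstacle I anticipate is the symmetric ``all-or-nothing'' gadget for the special-agent chain: in the asymmetric reduction a single out-arc $(s_l,s_{l+1})$ cheaply forces $s_{l+1}\in P$ whenever $s_l\in P$, but symmetric friendship alone does not break the propagation direction, so one must instead balance the number of private friends and private enemies of each $s_l$ so that $s_l$ entering a blocking coalition strictly (or weakly) needs \emph{more} friends than it has in $\Pi$, and the only way to get them is to pull in both neighbours $s_{l-1}$ and $s_{l+1}$ together with its owned element agent $a_l$, while simultaneously the private agents themselves can never profitably join (they already have all friends and no enemies in $\Pi$). Getting these counting constraints to be simultaneously satisfiable with a uniform constant gadget, and verifying that no small \emph{local} blocking coalition sneaks in (which is exactly why the dummy/padding agents $\Dummys$ were needed in \cref{thm:verif-symmetric-eight}), is the delicate part; the degree and coalition-size bounds $7/3$ and $26/4$ are precisely the prices one pays to make it go through, and I would verify them by the same kind of case inspection (an \cref{obs:core-fas+dag+k}-style observation) used in the earlier proofs.
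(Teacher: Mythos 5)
Your plan has a genuine gap in the backward direction, and it sits exactly at the point you flag as ``delicate'' but do not resolve. In the asymmetric proof of \cref{thm:verif-fas+deg+k}, the step ``any inclusion-wise minimal blocking coalition must contain $s_1$'' works because every agent in a blocking coalition needs an out-neighbour, so the induced friendship subgraph contains a directed cycle, and the friendship graph is acyclic except for the single feedback arc through $s_1$. Once you make all friendships symmetric, every adjacent pair is already a $2$-cycle, so this argument evaporates: nothing forces your special chain $s_1,\dots,s_{3\enn}$ (or any particular agent) into a blocking coalition, and you have given no replacement mechanism. Likewise, your claim that with mutual friendships $s_l\in P$ forces $s_{l-1},s_{l+1},a_l\in P$ is exactly the counting gadget you admit you have not constructed, and your assertion that the constants $\maxdeg=7,\maxcoal=3$ (resp.\ $26,4$) ``are precisely the prices one pays'' is not derived from your construction -- with a special-agent chain plus padding bundles plus a greedy agent there is no reason the same constants come out. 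Finally, for \sverif\ your greedy-agent approach rests on the claim that only $g$ can strictly improve; in the symmetric setting this needs its own proof and is not obviously achievable while keeping $\maxcoal=4$.

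For comparison, the paper does not adapt the special-agent chain at all. For \verif\ it arranges the element agents themselves in a friendship cycle $a_1-a_2-\dots-a_{3\enn}-a_1$, gives each $a_i$ two private friends (so $a_i$ has exactly two friends and no enemies in $\Pi$ and hence needs at least three friends to improve), and splits each set gadget into a path $c_j-c_j^0-c_j^1$ with private agents, where $c_j^0,c_j^1$ carry the three element attachments and $c_j$ has a private enemy plus symmetric enemy edges to intersecting sets. The propagation is then purely by counting: at most one of $a_i$'s extra friends can come from set gadgets (two would drag in two intersecting $c_j,c_z$, contradicting that $c_j$ tolerates no enemies), so both cycle neighbours must join, and the whole element cycle plus exactly one covering set gadget per element is forced -- no artificial chain, no connectivity argument, and the exactness comes from the intersecting-set enemies rather than from a size budget. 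For \sverif\ the paper adds one more private agent per gadget and extra enemy edges between private agents and the non-private friends of their owners (so private agents never help a weakly blocking coalition), and the strict improvement is supplied by the $c_j$ agents shedding their private enemy; there is no greedy agent. If you want to salvage your route, you must (i) invent a symmetric forcing gadget with verified friend/enemy counts, (ii) give a new argument for why some anchor agent must lie in every blocking coalition, and (iii) recompute the degree and coalition-size bounds from your gadget -- none of which is present in the proposal.
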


\appendixproofwithstatement{thm:core_verify_neutals}{\fencoresymm}{
\begin{proof}
  Let $I=(\mathcal{X}, \mathcal{C})$ denote an instance of \xct\ with $\mathcal{X}=[3\enn]$ and $\mathcal{C}=\{C_1,\ldots,C_{\emm}\}$. We first provide a construction for \FEN-\verif and prove the correctness. Then, we modify our gadget by adding some extra agents to construct an instance of \FEN-\sverif and prove its correctness.
  
     \begin{figure}[t!]
     \centering
       \begin{tikzpicture}[scale=1,every node/.style={scale=0.9}, shorten <= 2pt, shorten >= 2pt]
         
      \foreach \x / \y / \n / \nn / \typ / \p / \dx / \dy in {
      1/4/d11/d_1^1/pn/{above}/-1/-4,
      1.5/4/d12/d_1^2/pn/{above}/-1/-4,
      3.5/4/di1/d_i^1/pn/{above left}/-1/-4,
      4/4/di2/d_i^2/pn/{above}/-1/-4,
      5/4/di11/d_{i+1}^1/pn/{above}/-1/-4,
      5.5/4/di12/d_{i+1}^2/pn/{above right}/-1/-4,
      8/4/d3n1/d_{3\enn}^1/pn/{above}/-1/-4,  
      8.5/4/d3n2/d_{3\enn}^2/pn/{above}/-1/-4,  
      1/3/a1/a_1/pn/{above left}/-1/-4,
      4/3/ai/a_i/pn/{left}/-1/-2,
      5/3/ai1/a_{i+1}/pn/{right}/-1/-2,
      8/3/a3n/a_{3\enn}/pn/{left}/-1/-4,
      1.2/1/cj/c_{j} /pn/{below left}/-1/0,
      2.5/1/cj0/c_{j}^0 /pn/{below left}/-1/-3,
      3.7/1/cj1/c_{j}^1 /pn/{below left}/-1/-2,      
      7/1/ck/c_{z}/pn/{below left}/-1/-4,
      7.5/1/ck0/c_{z}^0/pn/{below left}/-1/-4,
      8/1/ck1/c_{z}^1/pn/{below left}/0/-4,
      1/0/ej1/e_{j}^1 /pn/{below}/-1/-4,
      1.5/0/ej2/e_{j}^2 /pn/{below}/-1/-3.5,
      2.5/0/ej01/e_{j}^{0,1} /pn/{below}/-1/-3,
      3/0/ej02/e_{j}^{0,2} /pn/{below}/-1/-2.5,
      4/0/ej11/e_{j}^{1,1} /pn/{below}/-1/-2,
      4.5/0/ej12/e_{j}^{1,2} /pn/{below}/-1/-1.5,
      -1/1/pa1c2/\;/pnn/{left}/0/0,
      -.5/1/pa1c3/\;/pnn/{left}/0/0,
      4/1.5/paic1/\;/pnn/{left}/0/0,
      5/1.5/pai1c1/\;/pnn/{left}/0/0,
      6/1/pai1c2/\;/pnn/{left}/0/0,
      6.5/1/pai1c3/\;/pnn/{left}/0/0,
      8.5/1/pa3nc1/\;/pnn/{left}/0/0,
      9/1/pa3nc2/\;/pnn/{left}/0/0,
      9.5/1/pa3nc3/\;/pnn/{left}/0/0,
      6.5/3/pcka1/\;/pnn/{left}/0/0,
      7/3/pcka2/\;/pnn/{left}/0/0} {
        \node[\typ] at (\x*0.8, \y) (\n) {};
        \node[\p = \dx pt and \dy pt of \n] {$\nn$};
      }      

      \begin{pgfonlayer}{bg}
           \foreach \s / \t / \aa / \type in {
           d11/a1/0/fc,
           d12/a1/0/fc,
           di1/ai/0/fc,
           di2/ai/0/fc,
           di11/ai1/0/fc,
           di12/ai1/0/fc,
           d3n1/a3n/0/fc,
           d3n2/a3n/0/fc,
           a1/cj0/0/fc,
           ai/cj1/0/fc,
           ai/ck0/0/fc,
           ai1/cj1/0/fc,
           a1/pa1c2/0/hiddenfc,
           a1/pa1c3/0/hiddenfc,
           ai/paic1/0/hiddenfc,
           ai1/pai1c1/0/hiddenfc,
           ai1/pai1c2/0/hiddenfc,
           a3n/pa3nc1/0/hiddenfc,
           a3n/pa3nc2/0/hiddenfc,
           a3n/pa3nc3/0/hiddenfc,
           ck1/pcka1/0/hiddenfc,
           ck1/pcka2/0/hiddenfc,
           ej1/cj/0/fc,
           ej1/ej2/0/fc,
           ej01/cj0/0/fc,
           ej02/cj0/0/fc,
           ej11/cj1/0/fc,
           ej12/cj1/0/fc,
           cj/cj0/0/fc,
           cj0/cj1/0/fc,
           ck/ck0/0/fc,
           ck0/ck1/0/fc,
           ej2/cj/0/ec} {
             \draw[-, \type] (\s) edge[bend right = \aa] (\t);
           }
           \foreach \s / \t / \aa / \type in {
           cj/ck/0/ec}{
             \draw[-, \type] (\s) edge[bend left=20] (\t);
           }

         \end{pgfonlayer}
       \end{tikzpicture}\caption{Illustration for the proof of \cref{thm:core_verify_neutals} where $C_j = \{a_1, a_i, a_{i+1}\}$ and $C_j \cap C_z = a_i$.}\label{fig:core_verify_neutals}
     \end{figure}

\noindent \textbf{Construction for \FEN-\verif.}  For each element $i \in \mathcal{X}$ we create an element agent $a_i$ and two private agents $d^1_i, d^2_i$. For each set $C_j \in \mathcal{C}$ we create three set agents $c_j, c^0_j, c^1_j$ and private agents $e^1_j, e^2_j, e^{0,1}_j, e^{0,2}_j, e^{1,1}_j, e^{1,2}_j$. This completes the construction of the agents. In total, we have $V \coloneqq \{a_i, d^1_i, d^2_i \mid i \in \mathcal{X}\} \cup \{c_j, c^0_j, c^1_j, e^1_j, e^2_j, e^{0,1}_j, e^{0,2}_j,$ $e^{1,1}_j, e^{1,2}_j \mid C_j \in \mathcal{C}\}$.
  
  Next, we describe the friendship edges. For every element $i \in \mathcal{X}$ we construct edges $\{a_i, a_{i + 1}\}$ where we take $i + 1$ modulo $3\enn$. Additionally, we create the internal friendship edges $\{a_i, d^l_i\}$ for every $l \in [2]$. For every set $C_j \in \mathcal{C}$ we construct edges $\{c_j, c_j^0\}, \{c^0_j, c^1_j\}$. We also create internal friendship edges $\{c_j, e^1_j\}, \{e^1_j, e^2_j\}$ and $\{c^{z}_j, e^{z,l}_j\}$ for every  $l \in [2], z \in \{0,1\}$. Finally, to connect the set and element agents, we construct the edges $\{c^0_j, a_{j^1}\}, \{c^1_j, a_{j^2}\}, \{c^1_j, a_{j^3}\}$, where $C_j = \{j^1, j^2, j^3\}$ such that $j^1 < j^2 < j^3$.
  
  We construct the following enemy edges: For every two sets $C_j, C_z \in \mathcal{C}$ such that $C_j \cap C_z \neq \emptyset$, $\{c_j, c_z\}$ are enemies. Additionally, we construct a private enemy edge $\{c_j, e^2_j\}$ for every $C_j \in \mathcal{C}$.
  
  Our initial coalition structure is defines as follows
\begin{multline*}
\Pi \coloneqq \{ \{a_i, d^1_i, d^2_i \} \mid i \in \mathcal{X} \} \cup \{\{ c_j, e^1_j, e^2_j\}, \{c^z_j, e^{z,1}_j, e^{z,2}_j\} \mid \\
 C_j \in \mathcal{C}, z \in \{0, 1\}\}.
\end{multline*}
 
 It is clear that each initial coalition is of size 3. 
 
 Next we show that both the number of friends and the number of enemies any agent has in $V$ are bounded by 7. Each agent $a_i, i \in \mathcal{X}$ has 2 friends from $a_{i - 1}, a_{i +1}$, 2 friends from her  private agents and at most 3 friends from the sets it is included in, in total at most 7 friends. It has no enemies. Each agent $c_j, j \in C_j$ has 1 friend from $c^0_j$ and 1 from her  private agents, in total 2 friends. It has 1 enemy from her  private agents and at most 6 enemies from $\{c_{j'} \mid C_{j'} \in \mathcal{C}\}$, because each $i \in C_j$ is in at most 2 other sets. In total it has thus at most 7 enemies. Each agent $c^0_j, C_j = \{j^1, j^2, j^3\}\in \mathcal{C}$ has 3 public friends $c_j, c^1_j, a_{j^1}$ and 2 private friends, in total 5 friends. Similarly $c^1_j$ has 3 public friends $c^0_j, a_{j^2}, a_{j^3}$ and 2 private friends. Neither of these agents has any enemies. It is easy to see that all the private agents have at most 2 friends and at most 1 enemy. We can see that any agent has at most 7 non-neutral agents.

 \begin{observation}\label{obs:sym_fen_init_coal_degree1}
 It holds that in $\Pi$
    \begin{compactenum}[(i)]
  \item \label{obs:sym_fen_init_coal_degree1,1i} For every $i \in \mathcal{X}$, $a_i$ has 2 friends and 0 enemies,
  \item \label{obs:sym_fen_init_coal_degree1,1ii}for every $i \in \mathcal{X}$, $d^j_i, j \in [2]$ has 1 friend and 0 enemies,
  \item \label{obs:sym_fen_init_coal_degree1,1iii} for every $ C_j \in \mathcal{C}, l \in [2]$ such that $i \in C_j$, $c_j$ has 1 friend and 1 enemy, and for every $ l \in [2], c^l_j$ has 2 friends and 0 enemies,
  \item \label{obs:sym_fen_init_coal_degree1,1iv}  for every $C_j \in \mathcal{C}$, $e^1_j$ has 2 friends and 0 enemies, $e^2_j$ has 1 friend and 0 enemies, and for every $z \in \{0, 1\}, l \in [2], e^{z,l}_j$ has 1 friend and 0 enemies.
  \end{compactenum}
 \end{observation}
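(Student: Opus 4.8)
The statement is a purely mechanical audit of the starting partition~$\Pi$: for each agent it records how many of that agent's friends and enemies lie inside her own coalition $\Pi(\cdot)$. The plan is therefore to build, once and for all, the complete friend- and enemy-neighbourhood of each of the four agent types directly from the edge sets of the construction, and then to intersect each neighbourhood with the agent's coalition. First I would fix the coalitions explicitly: $\Pi$ places every $a_i$ together with its privates $d^1_i,d^2_i$, every $c_j$ together with $e^1_j,e^2_j$, and every $c^z_j$ ($z\in\{0,1\}$) together with $e^{z,1}_j,e^{z,2}_j$. Since the preferences are symmetric, each undirected friendship (resp.\ enemy) edge contributes one to the in-coalition friend (resp.\ enemy) tally of both of its endpoints, so no directionality bookkeeping is needed.

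The organising observation that makes the audit short is that every ``long-range'' edge of the construction crosses a coalition boundary and may be discarded. Concretely, the element-cycle edges $\{a_i,a_{i+1}\}$, the set--element links $\{c^0_j,a_{j^1}\}$, $\{c^1_j,a_{j^2}\}$, $\{c^1_j,a_{j^3}\}$, and the set-agent chain $\{c_j,c^0_j\}$, $\{c^0_j,c^1_j\}$ all join agents sitting in two different coalitions of $\Pi$, and on the enemy side so do the set-conflict edges $\{c_j,c_z\}$ (as $j\neq z$). Hence the only edges surviving the intersection are the internal friendship edges $\{a_i,d^l_i\}$, $\{c_j,e^1_j\}$, $\{e^1_j,e^2_j\}$, $\{c^z_j,e^{z,l}_j\}$ together with the single internal enemy edge $\{c_j,e^2_j\}$. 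I would state and justify this crossing/internal split first, as it immediately reduces each of the four parts to reading off a length-three coalition.

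With the split in hand the four cases are one line each. For part~(i), the surviving edges at $a_i$ are exactly $\{a_i,d^1_i\}$ and $\{a_i,d^2_i\}$, both internal, and $a_i$ is incident to no enemy edge, giving two friends and no enemies. For part~(ii), $d^l_i$ lies on the single edge $\{a_i,d^l_i\}$, so one friend and no enemy. For part~(iii), inside $S_j=\{c_j,e^1_j,e^2_j\}$ agent $c_j$ keeps the friend $e^1_j$ and the enemy $e^2_j$ (its further friend $c^0_j$ and its conflict-enemies $c_z$ all leave the coalition), while each $c^z_j$ keeps both privates $e^{z,1}_j,e^{z,2}_j$ as friends and meets no enemy edge. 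Part~(iv) follows identically, with $e^1_j$ keeping $c_j$ and $e^2_j$ as friends, and each $e^{z,l}_j$ keeping only $c^z_j$.

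There is essentially no obstacle beyond the crossing/internal check itself; the one point that genuinely needs care is the private enemy edge $\{c_j,e^2_j\}$, which is internal to $S_j$. It supplies the single enemy credited to $c_j$ in part~(iii), but by symmetry it also sits inside the coalition of $e^2_j$, so a careful audit records $e^2_j$ as having one friend and \emph{one} enemy (its $c_j$) rather than zero; I would therefore read the count for $e^2_j$ in part~(iv) as ``$1$ friend and $1$ enemy'', which is the reading consistent both with the max-degree discussion (private agents carry at most one enemy) and with part~(iii). As a final sanity check I would confirm that the same enumeration never yields more than seven friends or seven enemies at any agent, recovering the $\maxdeg=7$ bound used for the theorem.
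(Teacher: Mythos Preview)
The paper provides no separate proof for this observation; it is stated as self-evident from the construction. Your internal/crossing split is exactly the right way to read it off, and your case-by-case audit is correct.

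You are also right to flag the count for $e^2_j$. The enemy edge $\{c_j,e^2_j\}$ is symmetric and both endpoints sit in the same coalition $\{c_j,e^1_j,e^2_j\}$, so $e^2_j$ has $1$ friend and $1$ enemy in $\Pi$, not $0$ enemies as part~(iv) states. This is a typo in the observation rather than a real error in the paper: in the very next claim the authors themselves write ``$e^2_j$ has all of her friends and $1$ enemy,'' and the subsequent arguments use that correct count. So your corrected reading of part~(iv) is the one actually employed in the proof.
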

 
 It remains to show that $I$ admits an exact cover if an only if $\Pi$ is not core stable.
 
 \begin{clm}\label{cla:sym_fen_init_coal_degree1}
If $I$ admits an exact cover, then $\Pi$ is not core stable.
 \end{clm}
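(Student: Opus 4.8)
\textbf{Plan for \cref{cla:sym_fen_init_coal_degree1}.}
The plan is to mirror the forward direction of the earlier reductions (e.g.\ \cref{claim:planar-deg4-forward} and \cref{claim:core:fas+deg+k-1}): take an exact cover $\mathcal{K}$ and exhibit an explicit coalition $P$ that strictly blocks $\Pi$. The natural candidate is the coalition consisting of \emph{all} element agents together with the set-agents of the cover and their two satellites, i.e.\
\[
  P \coloneqq \{a_i \mid i\in \mathcal{X}\} \;\cup\; \{c_j, c_j^0, c_j^1 \mid C_j\in\mathcal{K}\}.
\]
Since $\mathcal{K}$ is an exact cover, $|P| = 3\enn + 3\enn = 6\enn$ (one $a_i$ per element, three set agents per chosen set, and $|\mathcal{K}|=\enn$); the crucial combinatorial point is that each element $i$ lies in exactly one $C_j\in\mathcal{K}$, so the $c_j$'s in $P$ are pairwise non-intersecting and hence pairwise non-enemies.

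The steps, in order: (1) Verify that every element agent $a_i$ strictly improves. In $\Pi$, $a_i$ has $2$ friends and $0$ enemies (\cref{obs:sym_fen_init_coal_degree1}\eqref{obs:sym_fen_init_coal_degree1,1i}); in $P$ it keeps its two cycle-friends $a_{i-1},a_{i+1}$ \emph{and} gains the set-agent friend(s) $c_j^0$ or $c_j^1$ coming from the unique cover set $C_j\ni i$ — at least one new friend — while its private-agent friends are dropped but they were never enemies, so the friend count strictly increases and $a_i$ strictly prefers $P$. (2) Verify that each $c_j$ with $C_j\in\mathcal{K}$ strictly improves: in $\Pi$ it has $1$ friend and $1$ enemy (its private agent $e_j^2$), whereas in $P$ it has the friend $c_j^0$, and — because $\mathcal{K}$ is an exact cover — \emph{no} enemy $c_z$ is present, so it has $1$ friend and $0$ enemies, which is a strict improvement by the ``fewer enemies at equal friends'' tie-breaking rule. (3) Verify that $c_j^0$ and $c_j^1$ strictly improve: by \cref{obs:sym_fen_init_coal_degree1}\eqref{obs:sym_fen_init_coal_degree1,1iii} each has $2$ friends and $0$ enemies in $\Pi$; in $P$, $c_j^0$ has friends $c_j$, $c_j^1$, and $a_{j^1}$ (three friends, no enemies), and $c_j^1$ has friends $c_j^0$, $a_{j^2}$, $a_{j^3}$ (three friends, no enemies), strictly more friends in both cases. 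Conclude that $P$ is strictly blocking, so $\Pi$ is not core stable.

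I do not expect a genuine obstacle here — this direction is essentially a bookkeeping check analogous to the earlier claims — but the one place to be careful is confirming that \emph{no} enemy edge lands inside $P$: the only enemy edges are the private ones $\{c_j,e_j^2\}$ (and those private agents are excluded from $P$) and the conflict edges $\{c_j,c_z\}$ for $C_j\cap C_z\neq\emptyset$, which are absent precisely because $\mathcal{K}$ is an exact cover. Once that is nailed down, every agent in $P$ is seen to strictly prefer $P$ to its $\Pi$-coalition, completing the proof of the claim.
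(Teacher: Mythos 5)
Your proposal is correct and follows essentially the same route as the paper: the identical blocking coalition $P$ consisting of all element agents plus the triples $c_j, c_j^0, c_j^1$ for $C_j$ in the cover, with the same agent-by-agent verification (element agents gain a set-gadget friend, each $c_j$ keeps one friend but loses its enemy thanks to exactness of the cover, and $c_j^0, c_j^1$ go from two to three friends). The careful check that no enemy edge falls inside $P$ is exactly the point the paper relies on as well, so there is nothing to add.
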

 
 \begin{proof}[Proof of \cref{cla:sym_fen_init_coal_degree1}]
 Let $\mathcal{K}$ be an exact cover of $I$. We claim that the following coalition \[P \coloneqq \{a_i \mid i \in \mathcal{X}\} \cup \{ c_j, c^0_j, c^1_j \mid C_j \in \mathcal{K}, i \in C_j\} \] is blocking $\Pi$.
 \begin{compactenum}
  \item For every $i \in \mathcal{X}$, $a_i$ has at least 3 friends, $a_{i - 1}, a_{i + 1}$ and $c^l_j$ for some $l \in \{0,1\}, C_j \in \mathcal{C}$ such that $C_j$ covers $i$ in~$\mathcal{K}$.
  \item For every $C_j \in \mathcal{K}$, $c_j$ has $1$ friend $c^0_j$. The enemies of $c_z$ are the $c_l$ corresponding to sets such that $C_z \cap C_l \neq \emptyset$. Because $\mathcal{K}$ is an exact cover, no such $c_l$ can be in $P$ if $c_z \in P$ and $c_z$ has $0$ enemies.
  \item For every $C_j = \{j^1, j^2, j^3\} \in \mathcal{K}$, $c^0_j$ has 3 friends $c_j, c^1_j, a_{j^1}$ and $c^1_j$ has 3 friends $c^0_j, a_{j^2}, a_{j^3}$.
  \end{compactenum}
 From \cref{obs:sym_fen_init_coal_degree1} we can see that every agent in $P$ strictly improves. Therefore $P$ is a blocking coalition.
 \end{proof}

  \begin{clm}\label{cla:sym_fen_init_coal_degree2}
Every blocking coalition $P$ of $\Pi$ satisfies
    \begin{compactenum}[(i)]
  \item \label{cla:sym_fen_init_coal_degree2,1i} For every $i \in \mathcal{X}$, if $a_i \in P$, then $a_{i-1}, a_{i+1} \in P$, where subscript~$i + 1$ (resp.\ $i -1$) denotes $1$ (resp.\ $3\enn$) if $i=3\enn$ (resp.\ $i=1$).
  \item \label{cla:sym_fen_init_coal_degree2,1ii} For every $i \in \mathcal{X}$, if $a_i \in P$, then there is $C_j \in \mathcal{C}$ such that $i \in C_j$ and $c_j \in P$.
  \item \label{cla:sym_fen_init_coal_degree2,1iii} For every $\{C_j, C_z\} \subset \mathcal{C}$ such that $C_j \cap C_z \neq \emptyset$, if $c_j \in P$, then $c_z \notin P$.
  \item \label{cla:sym_fen_init_coal_degree2,1iv} For every $C_j \in \mathcal{C}$, if $c_j \in P$, then $c^0_j \in P$.
  \item \label{cla:sym_fen_init_coal_degree2,1v} For every $C_j = \{j^1, j^2, j^3\} \in \mathcal{C}$, if $c^0_j \in P$, then $c_j, c^1_j, a_{j^1} \in P$.
  \item \label{cla:sym_fen_init_coal_degree2,1vi} For every $C_j = \{j^1, j^2, j^3\} \in \mathcal{C}$, if $c^1_j \in P$, then $c^0_j, a_{j^2}, a_{j^3} \in P$.
  \end{compactenum}
 \end{clm}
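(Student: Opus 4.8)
The plan is to derive all six structural properties in one cascade of local arguments, each step driven by \cref{obs:sym_fen_init_coal_degree1} and invoking the steps before it, so the order of treatment matters. The starting observation is that $\Pi$ is individually rational: by \cref{obs:sym_fen_init_coal_degree1} every agent has at least one friend in her initial coalition, so nobody prefers to be alone. Hence in any strictly blocking coalition $P$ every member strictly prefers $P$ to $\Pi$; in particular no agent has fewer friends in $P$ than in $\Pi$, and every agent whose initial coalition contains no enemy — by \cref{obs:sym_fen_init_coal_degree1} these are $a_i$, $c^0_j$, $c^1_j$, $e^1_j$, each $e^{z,l}_j$, and each $d^l_i$ — must acquire \emph{strictly} more friends in $P$ than in $\Pi$.

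First I would rule out all private agents. By \cref{obs:sym_fen_init_coal_degree1} each of $d^1_i,d^2_i$, $e^{0,1}_j,e^{0,2}_j,e^{1,1}_j,e^{1,2}_j$, and $e^1_j$ already sits in $\Pi$ with all of her friends and no enemy, so she cannot strictly improve and lies outside $P$. The agent $e^2_j$ has the single friend $e^1_j$, who is excluded, so $e^2_j$ would have no friend in $P$ and is excluded too. Thus $P$ contains only agents of the forms $a_i$, $c_j$, $c^0_j$, $c^1_j$, and all friend/enemy counts in $P$ are now counted over these agents only.

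Next I would handle the set agents, obtaining parts (iv), (iii), (v), (vi) in that order. If $c_j\in P$, its only non-private friend is $c^0_j$, so it gets at most one friend in $P$ — no more than in $\Pi$ — which forces $c^0_j\in P$ (part (iv)) and forces $c_j$ to have zero enemies in $P$; since the only non-private enemies of $c_j$ are the $c_z$ with $C_j\cap C_z\neq\emptyset$, none of these is in $P$, which is part (iii). If $c^0_j\in P$ (resp.\ $c^1_j\in P$), then by \cref{obs:sym_fen_init_coal_degree1} she has two friends and no enemy in $\Pi$, hence needs three friends in $P$; as her two private friends are excluded, $P$ must contain all of $c_j,c^1_j,a_{j^1}$ (resp.\ $c^0_j,a_{j^2},a_{j^3}$), which is part (v) (resp.\ part (vi)).

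Finally I would treat the element agents, which is the delicate step. If $a_i\in P$ then by \cref{obs:sym_fen_init_coal_degree1} she needs three friends in $P$, and her only possible friends there are $a_{i-1},a_{i+1}$ and the agents $c^z_j$ with $C_j\ni i$. The crux is that at most one such $c^z_j$ lies in $P$: if $c^{z_1}_{j_1},c^{z_2}_{j_2}\in P$ for distinct sets $C_{j_1},C_{j_2}$ both containing $i$, then applying part (vi) and then part (v) (or part (v) directly when $z_t=0$) gives $c_{j_1},c_{j_2}\in P$, contradicting part (iii) since $i\in C_{j_1}\cap C_{j_2}$. Hence $a_i$ has at most one set-agent friend in $P$, so reaching three friends forces both $a_{i-1}$ and $a_{i+1}$ into $P$ (part (i)) together with one $c^z_j$, $C_j\ni i$; and, as just shown, such a $c^z_j\in P$ forces $c_j\in P$, which is part (ii). The main obstacle is not any single computation but this ordering discipline: the element-agent step only closes because the private agents were eliminated first and parts (iii), (v), (vi) for the set agents are already in hand, so that the ``at most one set-agent friend'' observation — which is exactly where the enemy edges between set agents of intersecting sets are used — becomes available.
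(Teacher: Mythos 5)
Your proof is correct and follows essentially the same route as the paper's: eliminate the private agents first (handling $e^2_j$ via her only friend $e^1_j$), then use friend-counting with the private agents removed to force (iii)--(vi), and finally combine these with the enemy edges between intersecting set agents to pin down the element agents. The only cosmetic difference is that you derive (i) and (ii) jointly via an ``at most one set-agent friend of $a_i$ lies in $P$'' observation, whereas the paper proves (ii) first and then (i) by direct contradiction -- the underlying argument is the same.
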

 
 \begin{proof}[Proof of \cref{cla:sym_fen_init_coal_degree2}]
 \renewcommand{\qedsymbol}{$\diamond$}
 Assume that $\Pi$ admits a strictly blocking coalition $P$.
 
 Note that none of the private agents are in any blocking coalition. For every $i \in X, C_j \in \mathcal{C}, z \in \{0, 1\}, l \in [2], d^l_i, e^1_j, e^{z,l}_j$ obtain all of their friends in $\Pi$ and they have no enemies, so they cannot improve. For every $C_j \in \mathcal{C}, e^2_j$ has all of her  friends and 1 enemy, so it could improve by joining a coalition that does not contain $c_j$ with her  only friend $e^1_j$. However, as shown earlier, $e^1_j$ cannot join a blocking coalition.
 
 To show (\ref{cla:sym_fen_init_coal_degree2,1iii}), note that since $c_j$ has only 1 non-private friend, to improve it must have no enemies in her  blocking coalition. The non-private enemies of $c_j$ are precisely the agents $c_z$ such that $C_j \cap C_z \neq \emptyset$.
 
 To show (\ref{cla:sym_fen_init_coal_degree2,1iv}), note that since $c_j$ must obtain at least 1 friend to join any blocking coalition and the private friend will not join, the only friend $c_j$ can obtain is $c^0_j$.
 
 To show (\ref{cla:sym_fen_init_coal_degree2,1v}-\ref{cla:sym_fen_init_coal_degree2,1vi}) note that $c^0_j, c^1_j$ must obtain 3 friends in any blocking coalition. Since the private agents do not join a blocking coalition, they must obtain all their other friends.

Statement (\ref{cla:sym_fen_init_coal_degree2,1ii}) follows from (\ref{cla:sym_fen_init_coal_degree2,1iv}-\ref{cla:sym_fen_init_coal_degree2,1vi}): Since $a_i$ must have at least 3 friends in any blocking coalition, it must have at least one friend $c^z_j$ in $P$, where $z \in \{0, 1\}, i \in C_j$. By Statements (\ref{cla:sym_fen_init_coal_degree2,1v}-\ref{cla:sym_fen_init_coal_degree2,1vi}), $c_j$ must also be in $P$.

We show (\ref{cla:sym_fen_init_coal_degree2,1i}) through a contradiction. Assume that $a_i \in P$ but $a_{i-1} \notin P$ or $a_{i+1} \notin P$. As $a_i$ must obtain 3 friends in $P$, there must be $C_j, C_l \in \mathcal{C}$ such that $c^z_j, c^{z'}_l\ \in P$ and $i \in C_j \cap C_l, z, z' \in \{0,1\}$. By (\ref{cla:sym_fen_init_coal_degree2,1v}-\ref{cla:sym_fen_init_coal_degree2,1vi}), we obtain that $c_j, c_l \in P$,
a contradiction to (\ref{cla:sym_fen_init_coal_degree2,1iii}) since $C_l \cap C_j \neq \emptyset$.
 \end{proof}
 
 \begin{clm}\label{cla:sym_fen_init_coal_degree3}
   If $\Pi$ is not core stable, then $I$ admits an exact cover.
 \end{clm}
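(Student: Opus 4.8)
The plan is to read off an exact cover directly from a strictly blocking coalition, using the structural restrictions already collected in \cref{cla:sym_fen_init_coal_degree2}. So suppose $\Pi$ is not core stable and fix a strictly blocking coalition $P$. The proof of \cref{cla:sym_fen_init_coal_degree2} already records that no private agent (any of the $d^l_i$, $e^1_j$, $e^2_j$, $e^{z,l}_j$) can belong to a blocking coalition, since each of them sits together with all of her friends and at most her sole private enemy in $\Pi$, and in the only delicate case, $e^2_j$, escaping that enemy would drag her friendless friend $e^1_j$ into $P$ as well. Hence $P$ consists entirely of element agents and set agents, and $P \neq \emptyset$ because a blocking coalition is by definition nonempty. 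Note that no appeal to inclusion-minimality of $P$ is needed, as the six items of \cref{cla:sym_fen_init_coal_degree2} hold for every blocking coalition.

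First I would show that $P$ contains every element agent. Since $P$ is a nonempty set of element and set agents, it contains some $a_i$, some $c_j$, or some $c^0_j$ or $c^1_j$. In the latter three cases, \cref{cla:sym_fen_init_coal_degree2}(\ref{cla:sym_fen_init_coal_degree2,1iv})--(\ref{cla:sym_fen_init_coal_degree2,1vi}) force the whole triple $c_j, c^0_j, c^1_j$ together with all three element agents of $C_j$ into $P$; so in every case $a_i \in P$ for some $i \in \mathcal{X}$. Walking around the friendship cycle $a_1, a_2, \ldots, a_{3\enn}$ and applying \cref{cla:sym_fen_init_coal_degree2}(\ref{cla:sym_fen_init_coal_degree2,1i}) repeatedly then yields $\{a_i \mid i \in \mathcal{X}\} \subseteq P$.

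Then I would set $\mathcal{K} \coloneqq \{C_j \in \mathcal{C} \mid c_j \in P\}$ and verify it is an exact cover of $\mathcal{X}$. It is a cover: for each $i \in \mathcal{X}$ we have $a_i \in P$, so \cref{cla:sym_fen_init_coal_degree2}(\ref{cla:sym_fen_init_coal_degree2,1ii}) supplies a set $C_j$ with $i \in C_j$ and $c_j \in P$, i.e.\ $C_j \in \mathcal{K}$. It is a packing: by \cref{cla:sym_fen_init_coal_degree2}(\ref{cla:sym_fen_init_coal_degree2,1iii}) no two members of $\mathcal{K}$ intersect. A subfamily of $3$-element sets that covers the $3\enn$-element set $\mathcal{X}$ while remaining pairwise disjoint must consist of exactly $\enn$ sets and cover each element exactly once, so $\mathcal{K}$ is an exact cover and the claim follows. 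Together with \cref{cla:sym_fen_init_coal_degree1} this settles the \FEN-\verif\ case of the theorem. The remaining work — the \FEN-\sverif\ variant — is analogous: one augments each private gadget so that a \emph{weakly} blocking coalition is again forced to contain a full element cycle and a packing–cover of set agents, at the cost of a slightly larger constant degree and maximum coalition size. The one genuinely new point there, which I expect to be the main obstacle, is re-establishing the analogue of \cref{cla:sym_fen_init_coal_degree2} for weak blocking, where an agent who keeps the same number of friends but strictly lowers her number of enemies must also be ruled out from any blocking coalition — and the extra private agents are exactly what is designed to prevent this.
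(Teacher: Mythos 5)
Your proposal is correct and follows essentially the same route as the paper: it leans on the same structural facts from \cref{cla:sym_fen_init_coal_degree2} (plus the observation, established in that claim's proof, that no private agent joins a blocking coalition), defines $\mathcal{K}=\{C_j\mid c_j\in P\}$, and gets covering from items (i)--(ii) and disjointness from (iii). The only difference is presentational — you argue forward (nonempty $P$ forces an element agent, hence all element agents, hence a disjoint cover) where the paper argues by contrapositive to show a non-covering $\mathcal{K}$ would force $P=\emptyset$ — which is not a substantive departure.
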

 
 \begin{proof}[Proof of \cref{cla:sym_fen_init_coal_degree3}]
 \renewcommand{\qedsymbol}{$\diamond$}
 Assume that $\Pi$ admits a strictly blocking coalition $P$. We claim that $\mathcal{K} \coloneqq \{C_j  \mid C_j \in \mathcal{C}, c_j \in P\}$ is an exact cover of $I$.
 
 Suppose, towards a contradiction, that $\mathcal{K}$ does not cover $\mathcal{X}$. Then there is an element $i \in \mathcal{X}$ such that no set $C_j \in \mathcal{K}$ satisfies $i \in C_j$ and thus none of the corresponding agents $c_j$ satisfies $c_j \in P$. By contra-positive of \cref{cla:sym_fen_init_coal_degree2}(\ref{cla:sym_fen_init_coal_degree2,1i}) we obtain that $a_i \notin P$.  By contra-positive of \cref{cla:sym_fen_init_coal_degree2}(\ref{cla:sym_fen_init_coal_degree2,1ii}) we get that $a_{i + 1} \notin P$. By iteratively applying this argument, we obtain that $a_{i'} \notin P$ for every $i' \in \mathcal{X}$. By contra-positives of \cref{cla:sym_fen_init_coal_degree2}(\ref{cla:sym_fen_init_coal_degree2,1iv}-\ref{cla:sym_fen_init_coal_degree2,1vi}) we get that $c_j, c^0_j, c^1_j \notin P$ for every $C_j \in \mathcal{C}$. Therefore $P = \emptyset$, a contradiction.
 
 Suppose, towards a contradiction, that $\mathcal{K}$ covers $\mathcal{X}$ but is not an exact cover. Then there is $i \in \mathcal{X}$ such that $c_j, c_z \in P$ where $i \in C_j \cap C_z, C_z, C_j \in \mathcal{C}$. Therefore $C_j \cap C_z \neq \emptyset$ but $c_j, c_z \in P$, a contradiction to \cref{cla:sym_fen_init_coal_degree2}(\ref{cla:sym_fen_init_coal_degree2,1iii}).
 Summarizing, $\mathcal{K}$ is an exact cover, as desired.
 \end{proof}
\noindent The correctness follows from Claims~\ref{cla:sym_fen_init_coal_degree1} \& \ref{cla:sym_fen_init_coal_degree3}.
 
\smallskip
\noindent\textbf{Construction for \FEN-\sverif.}  Now we modify the constructed instance to show the hardness for \FEN-\sverif. In addition to the agents created for \FEN-\verif, for each $i \in \mathcal{X}$, we create an agent $d_i^3$ and for each set $C_j \in \mathcal{C}$, we create private agents $e_j^{0,3}, e_j^{1,3}$. In total, we have $V \coloneqq \{a_i, d^1_i, d^2_i,d^3_i \mid i \in \mathcal{X}\} \cup \{c_j, c^0_j, c^1_j, e^1_j, e^2_j, e^{0,l}_j, e^{1,l}_j \mid C_j \in \mathcal{C},l\in [3]\}$.
 
 We add the following additional friendship edges: $\{a_i,d_i^3\}, \{c_j^z,$ $e_j^{z,3}\}$ for each $i \in \mathcal{X}$, and the following additional enemy edges: $\{c_j^0, c_z\},$ $\{c_j, c_z^0\}$ for each two $C_j, C_z \in \mathcal{C}$ such that $C_j \cap C_z \neq \emptyset$. Then, we construct a private enemy edge $(c_j, e^2_j)$ for every $C_j \in \mathcal{C}$. For each $d_i^l$, $l\in [3]$ and $j$ such that $a_i\in C_j$ we add enemy edges $( d_i^l,c_j^0), (d_i^l,c_j^1)$ and $(e_j^{z,l},a_i)$ for $z\in \{ 0,1\}, l\in [3]$. Finally we construct the additional enemy edges $(e_j^{0,l},c_j),(e_j^{1,l}c_j^0),(e_j^l,c_j^0)$ for $j\in [\emm ] , l\in [3]$.
  
  Our initial coalition structure is defines as follows
  
$
\Pi \coloneqq \{ \{a_i, d^1_i, d^2_i,d_i^3 \} \mid i \in \mathcal{X} \} \cup \{\{ c_j, e^1_j, e^2_j\}, \{c^z_j, e^{z,1}_j, e^{z,2}_j, e^{z,3}_j\} \mid C_j \in \mathcal{C}, z \in \{0, 1\}\}.
$
 
 It is clear that each initial coalition is of size at most 4. 
 
 Next we show that both the number of friends and the number of enemies any agent has in $V$ are bounded by 26. Each agent $a_i, i \in \mathcal{X}$ has 2 friends from $a_{i - 1}, a_{i +1}$, 3 friends from her  private agents and at most 3 friends from the sets it is included in, in total at most 8 friends. It has at most $3\cdot 6=18$ enemies from the $\{ e_j^{z,l}\}$ agents corresponding to the at most 3 sets containing $a_i$. Each agent $c_j, j \in C_j$ has 1 friend from $c^0_j$ and 1 from her  private agents, in total 2 friends. It has 1 enemy from her  private agents and at most 12 enemies from $\{c_{j'},c_{j'}^0 \mid C_{j'} \in \mathcal{C}\}$, because each $i \in C_j$ is in at most 2 other sets. In total it has thus at most 13 enemies. Each agent $c^0_j, C_j = \{j^1, j^2, j^3\}\in \mathcal{C}$ has 3 public friends $c_j, c^1_j, a_{j^1}$ and 3 private friends, in total 6 friends. It has also 6+9=15 enemies $\{ c_j^l,c_j^{1,l} \mid l\in [3]\}$ and $\{ d_i^l \mid l\in [3]\}$ for each $i\in C_j$. Similarly $c^1_j$ has 3 public friends $c^0_j, a_{j^2}, a_{j^3}$ and 3 private friends. Also they have 3 enemies $\{ c_j^{0,l} \mid l\in [3]\}$. It is easy to see that all the private agents have at most 2 friends and at most 3 enemies. We can see that any agent has at most 26 non-neutral agents.

 \begin{observation}\label{obs:sym_fen_strict_obs}
  In~$\Pi$ the following holds.
    \begin{compactenum}[(i)]
  \item \label{obs:sym_fen_strict_obs,1i} For every $i \in \mathcal{X}$, $a_i$ has 3 friends and 0 enemies,
  \item \label{obs:sym_fen_strict_obs,1ii}for every $i \in \mathcal{X}$, $d^l_i, l \in [2]$ has 1 friend and 0 enemies,
  \item \label{obs:sym_fen_strict_obs,1iii} for every $ C_j \in \mathcal{C}$ such that $i \in C_j$, $c_j$ has 1 friend and 1 enemy, and for every $ z \in \{ 0,1\} $ , $c^z_j$ has 3 friends and 0 enemies,
  \item \label{obs:sym_fen_strict_obs,1iv}  for every $C_j \in \mathcal{C}$, $e^1_j$ has 2 friends and 0 enemies, $e^2_j$ has 1 friend and 1 enemy, and for every $z \in \{0, 1\}, l \in [3], e^{z,l}_j$ has 1 friend and 0 enemies.
  \end{compactenum}
 \end{observation}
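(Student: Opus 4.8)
The statement is a routine verification about the initial partition~$\Pi$ of the \FEN-\sverif instance, so the plan is a straightforward case analysis over the four agent types, using only the definition of $\Pi$ together with the friendship and enemy edge sets listed just above. Since the preferences are symmetric, $\goodG$ and $\badG$ are undirected, and for any agent~$v$ the number of friends (resp.\ enemies) she has in $\Pi$ equals the number of friendship-edges (resp.\ enemy-edges) from $v$ to the other members of her own coalition $\Pi(v)$. Thus for each agent it suffices to intersect her neighbourhood with her own coalition.

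First I would recall the three coalition shapes in $\Pi$: the element coalitions $\{a_i,d_i^1,d_i^2,d_i^3\}$, the core set-coalitions $\{c_j,e_j^1,e_j^2\}$, and the auxiliary set-coalitions $\{c_j^z,e_j^{z,1},e_j^{z,2},e_j^{z,3}\}$ for $z\in\{0,1\}$. For part~(i), the only friendship-edges incident to $a_i$ that land inside $\{a_i,d_i^1,d_i^2,d_i^3\}$ are the three private edges $\{a_i,d_i^l\}$ ($l\in[3]$); the cycle edges $\{a_i,a_{i\pm1}\}$ and the set-connection edges $\{c_j^z,a_i\}$ leave the coalition. The only enemy-edges at $a_i$ are the edges $\{e_j^{z,l},a_i\}$, whose other endpoints lie in auxiliary set-coalitions, so $a_i$ has $3$ friends and $0$ enemies. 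Part~(ii) is symmetric: each $d_i^l$ has its unique friend $a_i$ inside the coalition and its enemies $c_j^0,c_j^1$ outside, giving $1$ friend and $0$ enemies.

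For parts~(iii) and~(iv) I would process the two set-coalition shapes in turn. In $\{c_j,e_j^1,e_j^2\}$ the agent $c_j$ keeps only the friend $e_j^1$ (the friend $c_j^0$ is elsewhere) and only the enemy $e_j^2$ (the enemies $c_z,c_z^0,e_j^{0,l}$ are elsewhere), so $1$ friend and $1$ enemy; $e_j^1$ keeps both friends $c_j,e_j^2$ and has no enemies; and $e_j^2$ keeps the friend $e_j^1$ and the enemy $c_j$, giving $1$ friend and $1$ enemy. In $\{c_j^z,e_j^{z,1},e_j^{z,2},e_j^{z,3}\}$ the agent $c_j^z$ keeps exactly the three private friends $e_j^{z,l}$ (its remaining friends, namely the chain neighbours $c_j,c_j^{1-z}$ and the connected element agents $a_{j^1},a_{j^2},a_{j^3}$, are outside) and none of its enemies ($c_z$, the agents $d_i^l$, and $e_j^{1-z,l}$ all lie in other coalitions), giving $3$ friends and $0$ enemies; each $e_j^{z,l}$ keeps its single friend $c_j^z$ while its only enemy $a_i$ is outside.

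The single point requiring care---and the closest thing to an obstacle---is that the \sverif construction adds several enemy-edges on top of the \verif gadget, namely $\{c_j^0,c_z\},\{c_j,c_z^0\}$, the edges $(d_i^l,c_j^0),(d_i^l,c_j^1)$, the edges $(e_j^{z,l},a_i)$, and $(e_j^{0,l},c_j),(e_j^{1,l},c_j^0)$. I would verify explicitly that every one of these new enemy-edges joins two agents lying in \emph{distinct} coalitions of $\Pi$, so none of them raises any within-coalition enemy count; this is precisely why the enemy tallies remain at the same small values. Once this cross-coalition check is recorded, all four parts follow by direct inspection.
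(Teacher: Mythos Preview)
Your proposal is correct and is precisely the kind of direct case-by-case verification intended: the paper states this observation without proof, treating it as an immediate consequence of the construction, and your approach of intersecting each agent's friendship and enemy neighbourhoods with her own coalition in $\Pi$ is exactly the routine check required. The one minor omission is the enemy edge $(e_j^l,c_j^0)$ for $l\in[2]$ from your list of added edges, but since $c_j^0$ lies outside $\{c_j,e_j^1,e_j^2\}$ this does not affect any within-coalition count.
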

 
 It remains to show that $I$ admits an exact cover if an only if $\Pi$ is not strict core stable.
 
 \begin{clm}\label{cla:sym_fen_strict-1}
If $I$ admits an exact cover, then $\Pi$ is not strict core stable.
 \end{clm}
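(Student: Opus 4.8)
The plan is to re-use, essentially verbatim, the blocking coalition constructed for the \verif\ case in the proof of \cref{cla:sym_fen_init_coal_degree1}, and to check that the extra private agents $d^3_i$, $e^{0,3}_j$, $e^{1,3}_j$ together with the additional enemy edges of the \sverif\ construction turn it from a strictly blocking coalition into a \emph{weakly} blocking one. Concretely, given an exact cover $\mathcal{K}$ of $I$, I would take
\[
  P \;\coloneqq\; \{\,a_i \mid i\in \mathcal{X}\,\}\;\cup\;\{\,c_j, c^0_j, c^1_j \mid C_j\in \mathcal{K}\,\},
\]
and argue that $P$ weakly blocks $\Pi$.

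The verification splits into three routine parts, all relying on \cref{obs:sym_fen_strict_obs}. \textbf{Set agents.} Fix $C_j=\{j^1,j^2,j^3\}\in \mathcal{K}$ with $j^1<j^2<j^3$. In $P$, agent $c^0_j$ keeps its friends $c_j$, $c^1_j$, $a_{j^1}$ and picks up no enemy (its non-private enemies are the $c_z$ and $c^0_z$ with $C_j\cap C_z\neq\emptyset$, and since $\mathcal{K}$ is exact none of these belongs to $P$), so $c^0_j$ is indifferent between $P$ and $\Pi(c^0_j)$; symmetrically $c^1_j$ (with friends $c^0_j,a_{j^2},a_{j^3}$ in $P$ and no enemy in $P$) is indifferent. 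Agent $c_j$ keeps its only non-private friend $c^0_j$ and, again by exactness of $\mathcal{K}$, has no enemy in $P$; since in $\Pi$ it had one friend and one enemy, $c_j$ \emph{strictly} prefers $P$. \textbf{Element agents.} For $i\in\mathcal{X}$, agent $a_i$ has in $P$ the two friends $a_{i-1},a_{i+1}$ (indices modulo $3\enn$) together with exactly one set-agent friend, namely $c^0_j$ if $i=j^1$ and $c^1_j$ if $i\in\{j^2,j^3\}$ for the unique $C_j\in\mathcal{K}$ containing $i$; thus $a_i$ has three friends and, since its only enemies are the $e^{z,l}_j$ with $i\in C_j$, none of which lies in $P$, no enemies in $P$. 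Hence $a_i$ is indifferent between $P$ and $\Pi(a_i)$. \textbf{Private agents} are simply not in $P$. Altogether every agent of $P$ weakly prefers $P$ to its $\Pi$-coalition, and $c_j$ (for any $C_j\in\mathcal{K}$, which is nonempty) strictly improves, so $P$ is weakly blocking and $\Pi$ is not strictly core stable.

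The one point that needs care — and the reason the argument only yields failure of \emph{strict} core stability rather than of core stability — is the enemy bookkeeping. Each enemy edge of the \sverif\ instance that is not already present in the \verif\ instance (those joining $c_j$ to $c^0_z$, the $d^l_i$ to $c^0_j$ and $c^1_j$, and the new private agents to the set agents) must be shown to have an endpoint outside $P$; this holds because $P$ contains no private agent and, by exactness of $\mathcal{K}$, no two intersecting sets are selected. The flip side is that the element agents now each own the extra private friend $d^3_i$, so in $P$ they only match — rather than beat — their initial utility, which is exactly why $P$ is merely weakly blocking.
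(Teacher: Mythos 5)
Your proposal is correct and is essentially the paper's own argument: you use the same coalition $P=\{a_i\mid i\in\mathcal{X}\}\cup\{c_j,c^0_j,c^1_j\mid C_j\in\mathcal{K}\}$, check via \cref{obs:sym_fen_strict_obs} that every member weakly improves (the element agents and $c^0_j,c^1_j$ are indifferent, since $P$ contains no private agents and exactness of $\mathcal{K}$ excludes intersecting set gadgets), and note that each $c_j$ with $C_j\in\mathcal{K}$ strictly improves by losing its enemy, so $P$ weakly blocks $\Pi$. No gaps; the enemy bookkeeping you highlight matches the paper's reasoning.
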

 
 \begin{proof}[Proof of \cref{cla:sym_fen_strict-1}]
 \renewcommand{\qedsymbol}{$\diamond$}
 Let $\mathcal{K}$ be an exact cover of $I$. We claim that the following coalition \[P \coloneqq \{a_i \mid i \in \mathcal{X}\} \cup \{ c_j, c^0_j, c^1_j \mid C_j \in \mathcal{K}, i \in C_j\} \] is weakly blocking $\Pi$.
 \begin{compactenum}[(i)]
  \item For every $i \in \mathcal{X}$, $a_i$ has at least 3 friends, $a_{i - 1}, a_{i + 1}$ and $c^l_j$ for some $l \in \{0,1\}, C_j \in \mathcal{C}$ such that $C_j$ covers $i$ in $\mathcal{K}$ and no enemies, as there is no agent of the form $e_j^{z,l}$ inside $P$.
  \item For every $C_j \in \mathcal{K}$, $c_j$ has 1 friend $c^0_j$. The enemies of a $c_j$ agent are the $c_z,c_z^0$ agents corresponding to sets such that $C_j \cap C_z \neq \emptyset$ and the $e_j^2,e_j^{0,l}$ agents for $l\in [3]$. Because $\mathcal{K}$ is an exact cover, no such $c_z$ or $c_z^0$ can be in $P$ if $c_j \in P$ so $c_j$ has 0 enemies.
  \item For every $C_j = \{j^1, j^2, j^3\} \in \mathcal{K}$, $c^0_j$ has 3 friends $c_j, c^1_j, a_{j^1}$ and $c^1_j$ has 3 friends $c^0_j, a_{j^2}, a_{j^3}$. As there are no $e_j^{z,l}$, $e_j^l$ or $d_i^l$ agents inside $P$ for any $j,z,l$ and the sets inside $\mathcal{K}$ form an exact cover, they have no enemies.
  \end{compactenum}
 From \cref{obs:sym_fen_strict_obs} we can see that every agent in $P$ weakly improves and the $c_j$ set agents strictly improve, as they have one less enemy. Therefore $P$ is a blocking coalition.
 \end{proof}

  \begin{clm}\label{cla:sym_fen_strict-2}
Every inclusionwise minimal weakly blocking coalition $P$ of $\Pi$ satisfies
    \begin{compactenum}[(i)]
  \item \label{cla:sym_fen_strict,1iv} For every $C_j \in \mathcal{C}$, if $c_j \in P$, then $c^0_j \in P$.
  \item \label{cla:sym_fen_strict,1v} For every $C_j = \{j^1, j^2, j^3\} \in \mathcal{C}$, if $c^0_j \in P$, then $c_j, c^1_j, a_{j^1} \in P$.
  \item \label{cla:sym_fen_strict,1vi} For every $C_j = \{j^1, j^2, j^3\} \in \mathcal{C}$, if $c^1_j \in P$, then $c^0_j, a_{j^2}, a_{j^3} \in P$.
  \item \label{cla:sym_fen_strict,1iii} For every $\{C_j, C_z\} \subset \mathcal{C}$ such that $C_j \cap C_z \neq \emptyset$, if $c_j \in P$, then $c_z \notin P$.
  \item \label{cla:sym_fen_strict,1ii} For every $i \in \mathcal{X}$, if $a_i \in P$, then there is $C_j \in \mathcal{C}$ such that $i \in C_j$ and $c_j \in P$.
  \item \label{cla:sym_fen_strict,1i} For every $i \in \mathcal{X}$, if $a_i \in P$, then $a_{i-1}, a_{i+1} \in P$, where subscript~$i + 1$ (resp.\ $i -1$) denotes $1$ (resp.\ $3\enn$) if $i=3\enn$ (resp.\ $i=1$).
  \end{compactenum}
 \end{clm}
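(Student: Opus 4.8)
\renewcommand{\qedsymbol}{$\diamond$}
The plan is to follow the strategy of the proof of \cref{cla:sym_fen_init_coal_degree2}, adapted to the strict core and to the enlarged gadget. Throughout, let $P$ be an inclusionwise minimal weakly blocking coalition of $\Pi$. First I would record two easy facts. Since the preferences are symmetric, $\goodG[P]$ is connected: otherwise the connected component of $\goodG[P]$ containing some agent that strictly improves under $P$ would itself be a (smaller) weakly blocking coalition, contradicting minimality. Second, by \cref{obs:sym_fen_strict_obs} every agent has at least one friend in $\Pi$, hence every agent of $P$ must have at least one friend in $P$.

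The core of the argument is to show that $P$ contains no private agent; this is where the proof genuinely departs from the non-strict case, because under weak blocking a private agent may be indifferent and could a priori ``tag along''. I would proceed by cases on the type of private agent, exploiting the newly added enemy edges. If $e^{z,l}_j\in P$, then its unique friend $c^z_j$ must lie in $P$ and $e^{z,l}_j$ must have no enemy in $P$; via the edges $\{e^{0,l}_j,c_j\}$ and $\{e^{1,l}_j,c^0_j\}$ and the enemy edges from $e^{z,l}_j$ to all $a_i$ with $i\in C_j$, this forces $c_j$, $c^{1-z}_j$ and every $a_i$ with $i\in C_j$ out of $P$, so the component of $c^z_j$ in $\goodG[P]$ collapses to exactly $\Pi(c^z_j)$; by connectivity $P=\Pi(c^z_j)$, in which nobody strictly improves --- a contradiction. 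The cases $e^1_j\in P$ and $e^2_j\in P$ are analogous and collapse $P$ to $\Pi(c_j)$. For a $d$-agent $d^l_i\in P$, its friend $a_i$ must lie in $P$ and, by the edges $\{d^l_i,c^0_j\},\{d^l_i,c^1_j\}$ for all $C_j\ni i$, agent $a_i$ has no set-agent friend in $P$; hence the friends of $a_i$ in $P$ come only from $\{a_{i-1},a_{i+1},d^1_i,d^2_i,d^3_i\}$, and combining this with a deletion/minimality argument (an indifferent agent that no one needs in order to reach its required number of friends can be removed from $P$) and the fact that each element agent needs at least three friends in $P$, one shows that such a $P$ cannot contain a strictly improving agent. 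I expect the $d$-agent case to be the main obstacle, since, unlike the $e$-agents, it interacts with the element cycle instead of collapsing $P$ to a single initial coalition.

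Once no private agent lies in $P$, the six statements follow much as in \cref{cla:sym_fen_init_coal_degree2}, with ``strictly improves'' replaced by ``weakly improves''. Since by \cref{obs:sym_fen_strict_obs} agent $c_j$ has one friend and one enemy in $\Pi$ and its only non-private friend is $c^0_j$, weakly improving forces $c^0_j\in P$, proving statement~(i). Agents $c^0_j$ and $c^1_j$ have three friends and no enemies in $\Pi$, so to weakly improve they must keep at least three friends and then also have no enemies; since their private friends are excluded, all three of their remaining friends must lie in $P$, giving $\{c_j,c^1_j,a_{j^1}\}\subseteq P$ and $\{c^0_j,a_{j^2},a_{j^3}\}\subseteq P$, i.e.\ statements~(ii) and~(iii), and in particular $c^0_j$ has no enemy in $P$. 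This ``no enemy'' condition on $c^0_j$ together with the edges $\{c^0_j,c_z\}$ for $C_j\cap C_z\neq\emptyset$ and the contrapositive of statement~(ii) yields statement~(iv). For statement~(v), an element agent $a_i\in P$ needs at least three friends in $P$ but, besides set agents, has only the two cycle-neighbours $a_{i-1},a_{i+1}$ available, so some $c^z_j$ with $i\in C_j$ lies in $P$, and statements~(ii)--(iii) promote this to $c_j\in P$. Finally, for statement~(vi), if $a_{i-1}\notin P$ or $a_{i+1}\notin P$ then $a_i$ would need at least two set-agent friends $c^z_j,c^{z'}_{l}$ with $i\in C_j\cap C_{l}$ and $j\neq l$; statements~(ii)--(iii) then force $c_j,c_{l}\in P$, contradicting statement~(iv) because $C_j\cap C_{l}\neq\emptyset$. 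Assembling the cases completes the proof of the claim.
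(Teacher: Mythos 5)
Your handling of the $e$-type private agents and of statements (i)--(vi) is sound and essentially the paper's argument (the paper removes indifferent private agents from $P$ by minimality instead of your connectivity-collapse, and derives item (iv) from $c_j$'s enemy count rather than from $c^0_j$'s, but these are interchangeable). The genuine gap is exactly the step you defer: excluding the $d$-agents. With only the facts you use --- $d^l_i\in P$ forces $a_i\in P$ and excludes $c^0_j,c^1_j$ for all $C_j\ni i$, so $a_i$'s friends in $P$ lie in $\{a_{i-1},a_{i+1},d^1_i,d^2_i,d^3_i\}$ --- no deletion/minimality argument can finish the case, because the $d$-agents are not expendable: consider $P=\{a_i,a_{i+1},d^1_i,d^2_i,d^3_i,d^1_{i+1},d^2_{i+1}\}$. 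If the only enemies of $d^l_i$ are the set agents $c^0_j,c^1_j$ with $i\in C_j$ (the enemy edges your case analysis relies on), then every $d$-agent in $P$ has its unique friend and no enemy, $a_{i+1}$ has three friends and no enemy, and $a_i$ has four friends, so $P$ weakly blocks $\Pi$; moreover no agent can be dropped (each $d$-agent is needed for its owner's friend count), so $P$ is inclusionwise minimal yet contains an element agent with no set agent --- contradicting statements (v) and (vi) themselves. Hence ``one shows that such a $P$ cannot contain a strictly improving agent'' is not merely unproven, it is false on the edge set you are working with.

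What closes this case in the paper's proof is a stronger property that it invokes for \emph{all} private agents: each private agent is an enemy of every non-private friend of its owner; in particular $d^l_i$ is an enemy of $a_{i-1}$ and $a_{i+1}$, not only of the set agents (note this relation is not among the enemy edges explicitly listed for $d^l_i$ in the construction, which is presumably why your sketch stalls there). With it, $d^l_i\in P$ forces $a_i$ to have no non-private friend in $P$ at all, the component of $a_i$ collapses into $\Pi(a_i)$ exactly as in your $e$-agent cases, nobody strictly improves, and the contradiction follows; afterwards your derivation of (i)--(vi) goes through and coincides with the proof of \cref{cla:sym_fen_init_coal_degree2} adapted as you describe. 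So to repair your proposal you must state and use this enemy relation between $d^l_i$ and $a_{i\pm1}$ rather than the deletion heuristic.
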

 
 \begin{proof}[Proof of \cref{cla:sym_fen_strict-2}]
 \renewcommand{\qedsymbol}{$\diamond$}
 Assume $\Pi$ admits a weakly blocking coalition and let $P$ be an inclusionwise minimal one. 
 
 First, we show that none of the private agents are in $P$.
 For every $(i,C_j,z,l) \in \mathcal{X}\times \mathcal{C}\times \{0, 1\}\times
 [3]$, agents~$d^l_i$, $e^1_j$, $e^{z,l}_j$ obtain all of their friends in $\Pi$ and no enemies, so they cannot strictly improve. 
 Since each of them are enemies with the non-private friends of their corresponding $a_i,c_j,c_j^0$ or $c_j^1$ agent, it follows that if that agent would get a friend other than her  private friends, then none of her private friends can be included, as they would receive enemies. So, in any weakly blocking coalition, such agents can only be included, if they are together with everyone from their original coalition, and none of them and their corresponding agent receives another friend. This also implies that they are weakly worse in $P$ than in $\Pi$. Therefore, as they are not a friend to anyone else in $P$ because of the symmetric friendships, dropping these agents from $P$ would still give us a weakly blocking coalition. This would contradict the minimality of~$P$.
 
 The case for $e_j^2$ is slightly different.
 For every $C_j \in \mathcal{C}, e^2_j$ has all of her friends and 1 enemy, so in order to participate in $P$, her only friend $e_j^1$ also must be in $P$, and therefore so does $c_j$. But again, $e_j^2$ is enemies with all non private friends of $c_j$, so no other friend of $c_j$ can be in $P$, meaning all of them have the same number of friends and at least the same number of enemies as in $\Pi$. Therefore we could drop $e_j^1,e_j^2$ and $c_j$ from $P$ and still obtain a weakly blocking coalition, contradiction.

 To show (\ref{cla:sym_fen_strict,1iv}), note that since $c_j$ must obtain at least 1 friend to join any blocking coalition and the private friend $e_j^1$ will not join, the only friend $c_j$ can obtain is $c^0_j$.
 
  To show (\ref{cla:sym_fen_strict,1v}-\ref{cla:sym_fen_strict,1vi}) note that $c^0_j, c^1_j$ must obtain 3 friends in any weakly blocking coalition. Since the private agents do not join a blocking coalition, they must obtain all their other friends.
 
 To show (\ref{cla:sym_fen_strict,1iii}), note that since $c_j$ has only 1 non-private friend, to weakly improve it must have at most one enemy in her blocking coalition. The non-private enemies of $c_j$ are precisely the agents $c_z$ and $c_z^0$ such that $C_j \cap C_z \neq \emptyset$. Also, by (\ref{cla:sym_fen_strict,1iv}) and (\ref{cla:sym_fen_strict,1v}), if either of $c_z,c_z^0$ are in $P$, then both of them are, so $c_j$ would get at least two enemies, contradiction.

Statement (\ref{cla:sym_fen_strict,1ii}) follows from (\ref{cla:sym_fen_strict,1iv}-\ref{cla:sym_fen_strict,1vi}): Since $a_i$ must have at least 3 friends in any blocking coalition, it must have at least one friend $c^z_j$ in $P$, where $z \in \{0, 1\}, i \in C_j$. By Statements (\ref{cla:sym_fen_strict,1v}-\ref{cla:sym_fen_strict,1vi}), $c_j$ must also be in $P$.

We show (\ref{cla:sym_fen_strict,1i}) through a contradiction. Assume that $a_i \in P$ but $a_{i-1} \notin P$ or $a_{i+1} \notin P$. As $a_i$ must obtain 3 friends in $P$, there must be $C_j, C_l \in \mathcal{C}$ such that $c^z_j, c^{z'}_l\ \in P$ and $i \in C_j \cap C_l, z, z' \in \{0,1\}$. By (\ref{cla:sym_fen_strict,1v}-\ref{cla:sym_fen_strict,1vi}), we obtain that $c_j, c_l \in P$. Because $C_l \cap C_j \neq \emptyset$, this contradicts~(\ref{cla:sym_fen_strict,1iii}).
 \end{proof}
 
 \begin{clm}\label{cla:sym_fen_strict-3}
   If $\Pi$ is not strictly core stable, then $I$ admits an exact cover.
 \end{clm}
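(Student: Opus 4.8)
The plan is to mirror the argument used for \cref{cla:sym_fen_init_coal_degree3} in the \verif{} case, now relying on the structural description of inclusion-wise minimal weakly blocking coalitions provided by \cref{cla:sym_fen_strict-2}. Assume $\Pi$ is not strictly core stable, so there is a weakly blocking coalition; fix an inclusion-wise minimal one~$P$. Recall from the proof of \cref{cla:sym_fen_strict-2} that $P$ contains no private agent (no $d_i^l$, $e_j^1$, $e_j^2$, or $e_j^{z,l}$). Define $\mathcal{K} \coloneqq \{C_j \mid c_j \in P\}$; the goal is to show $\mathcal{K}$ is an exact cover of $\mathcal{X}$.

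First I would show that $\mathcal{K}$ covers $\mathcal{X}$. Suppose not: some element~$i$ lies in no member of $\mathcal{K}$, i.e., $c_j \notin P$ for every $C_j$ with $i \in C_j$. By the contrapositive of \cref{cla:sym_fen_strict-2}(\ref{cla:sym_fen_strict,1ii}) this gives $a_i \notin P$. Now propagate around the element-cycle $a_1 - a_2 - \cdots - a_{3\enn} - a_1$: by \cref{cla:sym_fen_strict-2}(\ref{cla:sym_fen_strict,1i}), if $a_{i+1} \in P$ then $a_i \in P$, a contradiction, so $a_{i+1} \notin P$; iterating along the cycle yields $a_{i'} \notin P$ for all $i' \in \mathcal{X}$. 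Then the contrapositives of \cref{cla:sym_fen_strict-2}(\ref{cla:sym_fen_strict,1v}) and (\ref{cla:sym_fen_strict,1vi}) force $c^0_j, c^1_j \notin P$ for every $C_j \in \mathcal{C}$ (since any such agent would drag an element agent into $P$), and then the contrapositive of \cref{cla:sym_fen_strict-2}(\ref{cla:sym_fen_strict,1iv}) forces $c_j \notin P$ as well. Combined with the absence of private agents, this gives $P = \emptyset$, contradicting that $P$ is (weakly) blocking.

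Second, I would upgrade ``cover'' to ``exact cover''. Suppose $\mathcal{K}$ covers $\mathcal{X}$ but is not exact; then some element~$i$ belongs to two distinct sets $C_j, C_z \in \mathcal{K}$, so $c_j, c_z \in P$ while $i \in C_j \cap C_z \neq \emptyset$, directly contradicting \cref{cla:sym_fen_strict-2}(\ref{cla:sym_fen_strict,1iii}). Hence $\mathcal{K}$ is an exact cover, and since the converse direction is \cref{cla:sym_fen_strict-1}, the correctness of the \FEN-\sverif{} construction follows.

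I do not expect a serious obstacle here: essentially all of the work has already been pushed into \cref{cla:sym_fen_strict-2}, in particular the non-participation of private agents and the ``all-or-nothing'' behaviour of each set- and element-gadget. The only point that needs a little care is the cyclic propagation in the first step --- one must apply \cref{cla:sym_fen_strict-2}(\ref{cla:sym_fen_strict,1i}) in the contrapositive form ``$a_{i+1} \in P \Rightarrow a_i \in P$'' rather than the direct form, and use that the element agents induce a single cycle so that $a_i \notin P$ indeed forces the whole cycle out of $P$.
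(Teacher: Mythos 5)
Your proposal is correct and follows essentially the same route as the paper: define $\mathcal{K}$ from the set agents in an (inclusion-wise minimal) weakly blocking coalition, rule out non-coverage by propagating $a_{i'}\notin P$ around the element cycle via the contrapositives of \cref{cla:sym_fen_strict-2} and then emptying the set gadgets, and rule out double coverage via \cref{cla:sym_fen_strict-2}(\ref{cla:sym_fen_strict,1iii}). Your explicit appeal to minimality and to the non-participation of private agents (needed to conclude $P=\emptyset$) is in fact slightly more careful than the paper's write-up, which uses these facts implicitly.
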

 
 \begin{proof}[Proof of \cref{cla:sym_fen_strict-3}]
 \renewcommand{\qedsymbol}{$\diamond$}
 Assume $\Pi$ admits a blocking coalition $P$. We claim that $\mathcal{K} \coloneqq \{C_j  \mid C_j \in \mathcal{C}, c_j \in P\}$ is an exact cover of $I$.
 
 Suppose, towards a contradiction, that $\mathcal{K}$ does not cover $\mathcal{X}$. Then there is an element $i \in \mathcal{X}$ such that no set $C_j \in \mathcal{K}$ satisfies $i \in C_j$ and thus none of the corresponding agents $c_j$ satisfies $c_j \in P$. By contrapositive of \cref{cla:sym_fen_strict-2}(\ref{cla:sym_fen_strict,1i}) we obtain that $a_i \notin P$.  By contrapositive of \cref{cla:sym_fen_strict-2}(\ref{cla:sym_fen_strict,1ii}) we get that $a_{i + 1} \notin P$. By iteratively applying this argument, we obtain that $a_{i'} \notin P$ for every $i' \in \mathcal{X}$. By contrapositives of \cref{cla:sym_fen_strict-2}(\ref{cla:sym_fen_strict,1iv}-\ref{cla:sym_fen_strict,1vi}) we get that $c_j, c^0_j, c^1_j \notin P$ for every $C_j \in \mathcal{C}$. Therefore $P = \emptyset$, a contradiction.
 
 Suppose, towards a contradiction, that $\mathcal{K}$ covers $\mathcal{X}$ but is not an exact cover. Then there is $i \in \mathcal{X}$ such that $c_j, c_z \in P$ where $i \in C_j \cap C_z, C_z, C_j \in \mathcal{C}$. Therefore $C_j \cap C_z \neq \emptyset$ but $c_j, c_z \in P$, a contradiction to \cref{cla:sym_fen_strict-2}(\ref{cla:sym_fen_strict,1iii}).
 Summarizing,~$\mathcal{K}$ is an exact cover, as desired.
 \end{proof}
 \noindent The correctness follows from Claims~\ref{cla:sym_fen_strict-1} \& \ref{cla:sym_fen_strict-3}.
\end{proof}
}

The following two theorems complement \cref{thm:ns_is_acyclic} regarding~$\fas$ and show that determining Nash (resp.\ individually) stable partitions remain hard even if both $\fas$ and $\maxdeg$  are bounded. 
\newcommand{\fennashfasone}{%
  \FEN-\NS remains \np-complete even if $\maxdeg=9$ and $\fas=1$, and both the friendship and the enemy graphs are respectively acyclic.
}
\begin{theorem}[\appendixsymb]
\label{thm:ns-deg_fas}
\fennashfasone. %
\end{theorem}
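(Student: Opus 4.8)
The plan is to reduce from \pxct\ (which is \np-complete) using a gadget in the same spirit as \cref{thm:fas_number1,thm:sverif_cont_fas+delta}, but engineered so that the friendship graph and the enemy graph are \emph{each} acyclic while their union has exactly one feedback arc. The overall architecture mirrors the earlier Nash-stability reduction (\cref{thm:fe-ns-hard,thm:fe-nash-fas-delta-nph}): for each element~$i\in[3\enn]$ we introduce an element-agent~$a_i$, for each set~$C_j\in\mathcal{C}$ a set-agent~$c_j$, and a long chain of \emph{special} agents~$s_1,\dots$ linked in a path whose single ``wrap-around'' friendship arc is the lone feedback arc of the union. The special agents act as a \emph{counter}: in any Nash stable partition the whole chain must end up in one coalition, and that coalition's size forces exactly~$\enn$ set-agents to join it, which pins down an exact cover. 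To keep the enemy graph acyclic we make all enemy relations point ``forward'' along a fixed vertex ordering (e.g.\ set-agents are enemies of later set-agents that share an element, as in \cref{thm:verif-fas+deg+k}), and we use small \emph{enforcer} gadgets (analogues of the $b_1,\dots,b_5$ and $t^0_j,\dots$ gadgets in \cref{thm:fe-ns-hard,thm:fe-nash-fas-delta-nph}) so that in any Nash stable partition certain sink agents are singletons, certain pairs are forced together, and an agent who has a sink friend and too few non-sink friends cannot be satisfied. The degree bound~$\maxdeg=9$ comes from keeping each set-agent adjacent to its three element-agents, a constant number of enforcers, and one or two chain neighbours; one must check the arithmetic carefully but it is routine.

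First I would fix the vertex ordering witnessing acyclicity of $\goodG$ (with the one back-arc removed) and of $\badG$, and describe all friendship/enemy arcs relative to it, so that the bounds $\fas=1$ and ``$\goodG,\badG$ each acyclic'' are immediate by construction. Second I would exhibit, from an exact cover~$\mathcal{K}$, an explicit Nash stable partition: the grand ``counter'' coalition containing all special agents, all element-agents, the set-agents of $\mathcal{K}$, plus a parallel ``overflow'' coalition holding the set-agents not in $\mathcal{K}$ together with a calibrated number of dummy agents so that the two coalitions have equal size (this equal-size trick, as in \cref{thm:fe-nash-fas-delta-nph}, is what prevents any set-agent from envying the other coalition); the remaining enforcer/sink agents go into their forced singleton or pair coalitions. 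Verifying Nash stability is a finite case check over agent types, exactly as in the proof of \cref{thm:fe-ns-hard}. Third, for the converse, I would prove a sequence of structural claims about an arbitrary Nash stable partition~$\Pi$ (in the style of \cref{clm:fe-nash-delta-fas}): the sink agents are singletons; each enforcer chain forces its ``output'' agent into a coalition with a bounded enemy count; propagating along the special-agent chain forces $\{s_1,\dots\}$ into a single coalition together with all element-agents; each element-agent needs a set-agent friend, so at least~$\enn$ set-agents join; and the size bound imposed by the enforcers (an agent who would otherwise acquire too many enemies must instead gain another friend, which is impossible beyond a threshold) forces at most~$\enn$ set-agents, hence exactly~$\enn$, yielding an exact cover.

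The main obstacle is the \emph{simultaneous} acyclicity of $\goodG$ and $\badG$ together with $\fas(\goodG\cup\badG)=1$: the earlier reductions freely used a feedback arc in $\goodG$ (\cref{thm:fas_number1}) or enemy arcs inside cycles, and here every cycle in the union must pass through the one designated back-arc, which severely constrains how the ``counter must collapse into one coalition'' argument can be driven. Making the enforcers do their job — forcing the chain together and capping the coalition size — using only forward-pointing enemy arcs and a single friendship back-arc is the delicate part; I expect to need a dedicated enforcer whose sink friend and single non-sink friend create a forced pair (the $v,w$ mechanism from the $\fas\le 2$ algorithm in \cref{thm:fe-nash-fas-delta-nph}), reused once to ``spend'' the feedback arc productively. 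The $\maxdeg=9$ bound then falls out of counting neighbours per agent type, and correctness follows from the two directions above together with the structural claims, completing the proof of \cref{thm:ns-deg_fas}.
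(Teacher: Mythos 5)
There is a genuine gap, and it lies exactly at the point you yourself flag as the main obstacle. You propose to spend the single feedback arc as a friendship ``wrap-around'' arc closing the chain of special agents, but then $\goodG$ contains a directed cycle, contradicting the requirement that the friendship graph itself be acyclic (the theorem demands $\goodG$ acyclic, $\badG$ acyclic, and $\fas(\goodG\cup\badG)=1$ simultaneously). The paper's construction resolves this differently: the special agents $s_1,\dots,s_{6\enn-1}$ form a friendship \emph{path}, not a cycle, the whole friendship graph is a DAG, and the unique feedback arc of the union is a single \emph{enemy} arc $(s_{6\enn-1},a_{3\enn})$ from the last special agent to the last element agent. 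That one enemy arc is what does the work: since $s_1$ has only one friend, the whole special path collapses into one coalition, each element agent $a_i$ has \emph{two} friendship taps into that path (hence needs two friends elsewhere), and the enemy arc prevents the element chain from simply joining the special coalition, forcing $a_1,\dots,a_{3\enn}$ together with exactly one set-agent friend each.

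The second problem is that your correctness mechanism is imported wholesale from the \FE\ reductions (\cref{thm:fe-ns-hard,thm:fe-nash-fas-delta-nph}): size counters, equal-size coalitions padded with dummies, and enforcer gadgets forcing sink agents into singletons. None of this transfers to \FEN, where non-friends are neutral rather than enemies: coalition size is irrelevant to an agent's preferences, an agent with no friends is \emph{not} forced to be a singleton, and ``the counter coalition's size caps the number of set agents'' has no force unless you add a large number of enemy arcs, which would threaten both $\maxdeg=9$ and the acyclicity of $\badG$. The paper enforces exactness of the cover without any size argument, purely via enemy arcs between intersecting set agents oriented by index (keeping $\badG$ a DAG), together with the observation that in any Nash stable partition each element agent can have at most one set-agent friend. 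So while your high-level plan (reduce from an exact-cover problem, chain of specials, set/element agents, index-oriented enemy arcs among set agents) overlaps with the paper, the two load-bearing ideas — where the single feedback arc lives and what mechanism forces the cover to be exact — are missing or wrong in the proposal as written.
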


\appendixproofwithstatement{thm:ns-deg_fas}{\fennashfasone}
{\begin{proof}
    We give a reduction from \pxct.
    Let $I=(\mathcal{X}, \mathcal{C})$ be an instance of \xct, where $\mathcal{C}=C_1,\dots,C_{\emm}$ and $\mathcal{X}=[3\enn]$.  We construct an instance $I'$ of \NS as follows.
\begin{compactitem}[--]
    \item For each element $i\in [3\enn]$ we have an element agent $a_i$.
    \item For each set $C_j$, $j\in [\emm ]$ we have a set agent $c_j$.
    \item We add $s_1,\dots s_{6\enn-1}$ special agents.
\end{compactitem}

First, we define the friendship arcs.
\begin{align*}
  E(\goodG)  = &  \{ (s_i,s_{i+1}) \mid i\in [6\enn-2] \} \cup \{ (a_1,s_1) \} \cup\\
               & \{ (a_i,s_{2i-2}),(a_i,s_{2i-1})\mid i\in [3\enn],i\ne 1 \} \cup \\
               & \{ (a_i,a_{i-1}) \mid i\in [3\enn],i\ne 1 \} \cup\\
  & \{ (a_i,c_j) \mid i \in [3\hat{n}], j \in [\hat{m}] \text{ with } i\in C_j\}.
\end{align*}
The enemy arcs are $(s_{6\enn -1},a_{3\enn})$ and $(c_j,c_k)$ for all $j,k\in [\emm]$ such that $j<k$ and $C_j\cap C_k\ne \emptyset$.
All other relationships are neutral.

\noindent One can verify that the maximum degree is $9$:
\begin{compactitem}[--]
  \item Each element agent has at most two friends from the special agents, at most one from the element agents and at most three from the set agents. It is also friends to at most one agent and enemy to at most one agent, so their degree is bounded by $8$.
  \item Each special agent has at most one friend and at most one enemy, and is a friend to at most one element agent and one special agent.
  Thus, the degree of each special agent is bounded by four.
  \item Each set agent is a friend to at most three element agent and is in an asymmetric enemy relation with at most six other set agents (as $C_j$ intersects at most 6 other sets). Thus the degree of each set agent is bounded by $9$.
\end{compactitem}
One can also verify that the enemy graph is acyclic with e.g., the following topological order~$(s_{6\enn -1},a_{3\enn}, c_1,\dots ,c_{\emm},a_1$, $\dots, a_{3\enn -1},$ $s_1,\dots,s_{6\enn -2})$. 
Now if we remove the enemy arc~$(s_{6\enn -1},a_{3\enn})$, then the union of the friendship and the remaining enemy graph (containing all friendship arcs) is acyclic: $a_{3\enn},a_{3\enn -1},\dots, a_1,s_1,\dots,s_{6\enn -1},$ $c_1,\dots,c_{\emm}$ is a topological order of the vertices.

It remains to show the correctness.
\begin{clm}
If $I$ admits an exact cover, then $I'$ admits a Nash stable partition.
\label{claim:ns1}
\end{clm}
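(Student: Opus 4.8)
The plan is to exhibit an explicit Nash stable partition built from the exact cover. Let $\mathcal{K}\subseteq\mathcal{C}$ be an exact cover of $\mathcal{X}$. First I would put every element agent together with the set agent that covers it, and let the special agents ``chain up'' with the element agents: concretely, form one large coalition $A\coloneqq\{a_i\mid i\in[3\enn]\}\cup\{s_1,\dots,s_{6\enn-1}\}\cup\{c_j\mid C_j\in\mathcal{K}\}$, and put every remaining set agent $c_j$ with $C_j\notin\mathcal{K}$ into a singleton coalition $\{c_j\}$. So $\Pi=\{A\}\cup\{\{c_j\}\mid C_j\notin\mathcal{K}\}$. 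The intuition is that inside $A$ each $a_i$ keeps its friends $a_{i-1}$, its two special-agent friends $s_{2i-2},s_{2i-1}$ (or $s_1$ for $a_1$), and exactly one set-agent friend (the one covering $i$); each special agent has its unique friend $s_{i+1}$ present; and each $c_j\in\mathcal{K}$ has no enemy inside $A$ because $\mathcal{K}$ is an exact cover, so no two of its members intersect.

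The key steps, in order, would be: (1) verify individual rationality — every agent in $A$ has at least one friend in $A$ (element agents via special agents or element agents, special agents via the next special agent, set agents in $\mathcal{K}$ via the element agents they cover), and the singleton $c_j$'s have no friends at all so being alone is optimal for them; (2) verify that no agent envies another coalition. For a singleton $c_j$ ($C_j\notin\mathcal{K}$) there is nothing to envy since it has no friends anywhere. For a set agent $c_j\in\mathcal{K}$, every other coalition is either a singleton $\{c_{j'}\}$ (no friends for $c_j$ there) or $A$ itself, which it is already in, so no envy. For a special agent $s_i$, its only friend is $s_{i+1}$ (or, for $s_{6\enn-1}$, it has no friend — so it never envies), which lies in $A$; joining any $\{c_{j'}\}$ gives zero friends, so no envy. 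The only genuinely interesting case is (3): an element agent $a_i$ must not envy some singleton $\{c_j\}$ with $C_j\notin\mathcal{K}$ and $i\in C_j$; joining it would give $a_i$ exactly one friend ($c_j$), but inside $A$ the agent $a_i$ already has at least one friend (in fact strictly more — the special agents and $a_{i-1}$), so $a_i$ weakly prefers $A$ and does not envy, using that preferences are friend-oriented (count friends first). I would also need to observe that $a_i$ cannot envy $A$ (already there) and there is no other coalition.

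The main obstacle — really the only delicate point — is pinning down exactly how many friends each $a_i$ has in $A$ versus in a deviation, and making sure the friend count in $A$ is at least as large as in any alternative, so that the friend-oriented comparison already rules out envy without ever needing to compare enemy counts. In particular one must check that $a_i$ has strictly more (or at least equally many) friends in $A$ than the single friend $c_j$ it would gain by moving to $\{c_j\}$; since $a_i$ has its special-agent friends and possibly $a_{i-1}$ in $A$, this holds, but the boundary case $i=1$ (only one special friend $s_1$ and no $a_0$) needs a separate sanity check to confirm $a_1$ still has at least one friend in $A$ and that equality of friend counts with a deviation is impossible or harmless. Once these friend-count inequalities are nailed down, individual rationality and the absence of envy follow immediately, establishing that $\Pi$ is Nash stable, which proves Claim~\ref{claim:ns1}.
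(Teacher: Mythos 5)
There is a genuine flaw: the partition you propose is not Nash stable, because it is not even individually rational. In the constructed instance the friendship arcs among special agents are $(s_i,s_{i+1})$ only for $i\le 6\enn-2$, so $s_{6\enn-1}$ has \emph{no} friends at all, and there is an enemy arc $(s_{6\enn-1},a_{3\enn})$. In your grand coalition $A=\{a_i\}\cup\{s_1,\dots,s_{6\enn-1}\}\cup\{c_j\mid C_j\in\mathcal{K}\}$ the agent $s_{6\enn-1}$ therefore has zero friends and one enemy ($a_{3\enn}$), so she strictly prefers the singleton $\{s_{6\enn-1}\}$; hence $A$ violates individual rationality and the partition cannot be Nash stable. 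You noticed that $s_{6\enn-1}$ has no friend but only drew the conclusion ``so it never envies,'' overlooking that having an enemy and no friend already breaks individual rationality. This is not a cosmetic slip: the enemy arc $(s_{6\enn-1},a_{3\enn})$ is precisely the device the reduction uses (in the converse direction, \cref{claim:ns2}) to force the element agents \emph{out} of the special-agent coalition, so any correct forward construction must keep $a_{3\enn}$ away from $s_{6\enn-1}$. A secondary inaccuracy: the relations are directed, so a set agent $c_j$ has no friends at all (the arcs $(a_i,c_j)$ only make $c_j$ a friend \emph{of} $a_i$); your individual-rationality argument ``set agents in $\mathcal{K}$ via the element agents they cover'' is therefore wrong, although harmless since such $c_j$ has no enemies in the cover coalition.

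The paper's proof avoids the problem by splitting your $A$ into two coalitions: all special agents $\{s_1,\dots,s_{6\enn-1}\}$ form one coalition, and all element agents together with the $\enn$ set agents of the exact cover form another (non-cover set agents are singletons). Then $s_{6\enn-1}$ has no friends and no enemies in her coalition, every other special agent sits with her unique friend $s_{i+1}$ and no enemies, and each element agent $a_i$ has the same number of friends in her own coalition (namely $a_{i-1}$ plus exactly one covering set agent, or just one set agent for $a_1$) as she would have by joining the special-agent coalition (her two special friends $s_{2i-2},s_{2i-1}$, or just $s_1$ for $a_1$), so she does not envy it. If you adopt that partition, the rest of your friend-counting argument goes through.
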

\begin{proof}[Proof of claim~\ref{claim:ns1}]
\renewcommand{\qedsymbol}{$\diamond$}
Assume that $I$ admits an exact cover~$\mathcal{K}$.
Let the partition $\Pi$ be the following: $\{\{ s_i \mid i\in [6\enn-1]\}\} \cup \{ \{a_i,c_{j},\mid i\in [3\enn],C_j\in \mathcal{K}\}\} \cup \bigcup_{C_j\notin \mathcal{K}
  }\{\{ c_j\}\}$.
  Observe that every agent in a non-singleton coalition has zero enemies in her coalition.
  Hence, no agent prefers to be alone, implying that $\Pi$ is individual rational.
  It remains to show that no agent envy any other coalition.
  
  The set agents~$c_j$ have no friends in the graph and they have no enemies in $\Pi$, as the sets in the cover did not intersect.
  Therefore they do not envies any other the coalitions.

  The element agents~$a_i$ all have two friends, except $a_1$ has one friend and no enemies in $\Pi$.
  Therefore, they could only envy a coalition containing at least three friends for them (or two for $a_1$). They are only friends with the set agents and two special agents. The remaining set agents are alone, so they cannot envy such a coalition. No $a_i$ envies the coalition $\{ s_1,\dots, s_{6\enn -1}\}$ either, as $a_1$ has only one friend, and the others have only two friends inside.

  Finally, the special agents $s_1,\dots , s_{6\enn -1}$ have all their friends and no enemies in $\Pi$, therefore they also do not envy any coalition.
  This concludes that $\Pi$ is Nash stable. 
\end{proof}
The following observation follows from the fact that the set agents have no friend, and among two intersecting set agents, one considers the other an enemy.
\begin{observation}
\label{obs:ns}
Each $a_i$ agent can have only one friend from the set agents in any Nash stable solution.
\end{observation}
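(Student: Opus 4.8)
The plan is to derive the observation by contradiction from individual rationality, which every Nash stable partition satisfies by definition. Suppose $\Pi$ is a Nash stable partition of the constructed instance $I'$, and suppose towards a contradiction that some element agent $a_i$ has two distinct set-agent friends $c_j$ and $c_k$, with $j<k$, both lying in her coalition $\Pi(a_i)$. First I would unpack the construction: the only friendship arcs incident to set agents are those of the form $(a_i,c_j)$ with $i\in C_j$, so $a_i$ being a friend to both $c_j$ and $c_k$ forces $i\in C_j$ and $i\in C_k$, hence $C_j\cap C_k\ni i$ is nonempty. By the definition of $\badG$, this means the enemy arc $(c_j,c_k)$ is present, i.e., $c_j$ regards $c_k$ as an enemy.

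Next I would use the fact, read off directly from $E(\goodG)$, that set agents have no friends at all: no friendship arc leaves any $c_\ell$. Consequently $c_j$ has exactly $0$ friends in every conceivable coalition, and in particular in $\Pi(c_j)=\Pi(a_i)$, which additionally contains the enemy $c_k$. Comparing $\Pi(c_j)$ with the singleton coalition $\{c_j\}$: both give $c_j$ zero friends, but $\{c_j\}$ gives zero enemies whereas $\Pi(c_j)$ gives at least one. By the friend-oriented preferences, $c_j$ therefore strictly prefers to be alone, contradicting individual rationality of $\Pi$. Hence no element agent can have two set-agent friends in her coalition, which is the claim.

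The step I expect to require the most care is the bookkeeping around the asymmetry of the enemy arcs: the arc $(c_j,c_k)$ is oriented from the smaller-indexed set to the larger-indexed one, so one must make sure to invoke the endpoint that actually considers the other an enemy (namely $c_j$ with $j<k$), and to confirm from the construction that $c_j$ is genuinely friendless. Both of these are immediate once the definitions of $E(\goodG)$ and $E(\badG)$ are spelled out, so the whole argument collapses to essentially a one-line consequence of individual rationality; no nontrivial obstacle is anticipated.
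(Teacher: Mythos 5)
Your proof is correct and follows exactly the paper's (one-line) justification: set agents have no friends, two set agents whose sets intersect are in an enemy relation (the lower-indexed one considering the higher-indexed one an enemy), so such a set agent would have zero friends and at least one enemy in the coalition and would prefer to be alone, contradicting individual rationality. Nothing is missing; the attention you give to the orientation of the enemy arc is the only subtlety, and you handle it as the construction requires.
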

\begin{clm}
\label{claim:ns2}
If $I'$ admits a Nash stable coalition structure, then $I$ admits an exact cover.
\end{clm}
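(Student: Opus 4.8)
The plan is to show that any Nash stable partition~$\Pi$ of~$I'$ forces such a rigid structure that an exact cover can be read off directly. First I would argue that all special agents lie in one common coalition~$S$: for each $i\le 6\enn-2$ the agent~$s_i$ has $s_{i+1}$ as her only friend and has no enemy, so if $s_i\notin\Pi(s_{i+1})$ then $s_i$ would strictly prefer $\Pi(s_{i+1})\cup\{s_i\}$ (one friend, no enemy) to $\Pi(s_i)$ (no friend), contradicting Nash stability; iterating along the friendship path $s_1\to\cdots\to s_{6\enn-1}$ yields $S\coloneqq\Pi(s_1)=\cdots=\Pi(s_{6\enn-1})$. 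Next, since $s_{6\enn-1}$ has no friends and her unique enemy is $a_{3\enn}$, individual rationality of $s_{6\enn-1}$ (implied by Nash stability) forces $a_{3\enn}\notin S$.

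The heart of the argument is a downward induction on $i$ from $3\enn$ to $2$ establishing that $a_i\notin S$ implies $a_{i-1}\notin S$, $a_{i-1}\in\Pi(a_i)$, and exactly one set agent~$c_j$ with $i\in C_j$ lies in~$\Pi(a_i)$. The key observation is that the two special friends $s_{2i-2},s_{2i-1}$ of~$a_i$ lie in~$S$, so if $a_i\notin S$ her friends inside $\Pi(a_i)$ can only come from $\{a_{i-1}\}\cup\{c_j : i\in C_j\}$, and by \cref{obs:ns} at most one such $c_j$ is present, giving $a_i$ at most two friends in her coalition. Comparing with $S\cup\{a_i\}$, where $a_i$ would gain the two friends $s_{2i-2},s_{2i-1}$ (plus $a_{i-1}$ if $a_{i-1}\in S$) and has no enemies — element agents have no enemies anywhere — the fact that $a_i$ does not envy~$S$ rules out $a_{i-1}\in S$ and simultaneously pins $\Pi(a_i)$ down to contain exactly $a_{i-1}$ and exactly one set agent covering~$i$. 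Since $a_{i-1}\in\Pi(a_i)$ for every $i\ge 2$, all element agents sit in a single coalition $A\neq S$; a final, analogous not-envy-$S$ argument for~$a_1$ (whose only special friend is $s_1\in S$) shows that exactly one set agent covering the element~$1$ also belongs to~$A$.

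Finally I would put $\mathcal{K}\coloneqq\{C_j : c_j\in A\}$. By the preceding step $\mathcal{K}$ covers every element, and it is an \emph{exact} cover: if there were sets $C_j,C_z$ with $C_j\cap C_z\neq\emptyset$ and, say, $j<z$ with both $c_j,c_z\in A$, then the enemy arc $(c_j,c_z)$ would give $c_j$ — who has no friends at all — an enemy in her coalition while having zero friends, so $c_j$ would strictly prefer being alone, contradicting individual rationality. I expect the main obstacle to be the inductive step: one must carefully split into the cases $a_{i-1}\in S$ and $a_{i-1}\notin S$ and use both that element agents are nobody's enemy and that \cref{obs:ns} caps their set-agent friends at one, so that ``$a_i$ does not envy~$S$'' determines the exact composition of $\Pi(a_i)$ rather than merely bounding its size; getting the boundary cases $i=3\enn$ (base) and $i=1$ right, and keeping track of which friendship arcs are directed which way, is where the bookkeeping is delicate.
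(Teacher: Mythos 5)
Your proposal is correct and follows essentially the same route as the paper's proof: all special agents must share one coalition, individual rationality of $s_{6\enn-1}$ keeps $a_{3\enn}$ out of it, and counting friends against that coalition together with \cref{obs:ns} forces all element agents into a single common coalition containing exactly one covering set agent per element, whose pairwise disjointness (again via individual rationality of the lower-indexed set agent) yields the exact cover. The only organizational difference is that the paper first proves no element agent lies with the special agents via an upward chain $a_i\in P_1\Rightarrow a_{i+1}\in P_1$ ending in the $s_{6\enn-1}$ contradiction, whereas you use that same contradiction as the base of a downward induction that simultaneously tracks membership and coalition structure -- the same ingredients in a slightly different order.
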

\begin{proof}[Proof of Claim~\ref{claim:ns2}]
\renewcommand{\qedsymbol}{$\diamond$}
Let $\Pi$ be any Nash stable coalition structure.
We will show that all element agents must be together and with exactly $\enn$ set agents, whose sets corresponds to an exact cover. 

Since $\Pi$ is Nash stable, $s_1$ does not envy any coalition. This is only possible, if $s_1$ is together with her only friend $s_2$. With the same argument, we can conclude that $s_1,\dots s_{6\enn -1}$ are all together. Let $P_1$ denote the coalition containing them.

We claim that no element agent is contained in $P_1$.
Towards a contradiction, suppose, there is some~$a_i$ in $P_1$.
If $i<3\enn$, then $a_{i+1}$ will have at least three friends in $P_1$.
It has three remaining set agents as friends. By \cref{obs:ns}, it can get at most one in $\Pi$. Therefore, if $a_{i+1}$ is not in $P_1$, then she envies it.
Hence we get that $a_{i+1}\in P_1$, and with the same argument that $a_{3\enn}\in P_1$. But $s_{6\enn -1}\in P_1$ has no friends and $a_{3\enn}$ is an enemy to her. Hence, she would rather be alone, a contradiction. 

Therefore, no element agents can be in $P_1$.
However, they have two friends in $P_1$ (except $a_1$ has only 1) and no enemies in the whole graph, so in order for them not to envy $P_1$, they must get at least two friends in $\Pi$ (except $a_1$ gets one).
By \cref{obs:ns}, this is only possible if $a_{3\enn},a_{3\enn-1},\dots ,a_{1}$ are all together in a coalition~$P_2$ with $P_2\neq P_1$, where each of them obtains at least one more friends from the set agents.

As no two set agents can be inside $P_2$, whose corresponding sets intersect, but each $a_i$ gets a set agent friend, this implies that there are exactly $\enn$ set agents in $P_2$. Thus, the corresponding sets must form an exact over. 
\end{proof}
\noindent The correctness follows from \cref{claim:ns1}--\ref{claim:ns2}.
\end{proof}
}

\newcommand{\clmfenisconstruction}{%
  The constructed instance satisfies that $\maxdeg=18$, the friendship and enemy graphs are respectively acyclic, and $\fas=1$.
}
\newcommand{\fenisfasdelta}{
  \FEN-\IndS remains \np-complete even if $\maxdeg=18$ and $\fas=1$ such that the friendship graph has one feedback arc and the enemy graph is acyclic.
}
\newcommand{\clmisforward}{
\begin{compactenum}[(i)]
  \item \label{claim:FEN-is-1,i}
  For each~$i\in [3\enn]$, agents~$u_i,u_i^1,u_i^2$ must be together in~$\Pi$. 
\item \label{claim:FEN-is-1,ii} For each $i\in [3\enn]$, agent~$a_i$ must have at least two friends in $\Pi$ and at most one of them can be a set agent~$c_j$.
\item \label{claim:FEN-is-2,i} There is no $i\in [3\enn]$ such that $a_i$ and $U_i$ are together in~$\Pi$.
\item \label{claim:FEN-is-2,ii} There is no $i\in [3\enn]$ s.t.\ $a_i$ and $U_{(i\bmod 3\enn)+1}$ are together~$\Pi$. %
\end{compactenum}
}
\begin{theorem} %
\label{thm:FEN-is-fas+delta}
\fenisfasdelta
\end{theorem}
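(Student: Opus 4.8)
The plan is to give a polynomial-time many-one reduction from \pxct\ to \FEN-\IndS, reusing the skeleton of the reduction for \cref{thm:ns-deg_fas} but replacing the special-agent chain by dedicated \emph{$U$-gadgets} that behave correctly under individual stability, where a coalition may veto a newcomer. Membership in NP is immediate (guess a partition and check individual rationality and the absence of blocking tuples in polynomial time), so only hardness needs work. Let $I=(\mathcal{X},\mathcal{C})$ be a \pxct\ instance with $\mathcal{X}=[3\enn]$ and $\mathcal{C}=\{C_1,\dots,C_{\emm}\}$. For each element $i\in[3\enn]$ I would create an element agent $a_i$ and a three-agent gadget $U_i=\{u_i,u_i^1,u_i^2\}$; for each set $C_j$ a set agent $c_j$. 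The friendship arcs would chain the element agents via arcs $(a_i,a_{i-1})$, link each $a_i$ to its gadget neighbours via $(a_i,u_i)$ and $(a_i,u_{(i\bmod 3\enn)+1})$ and to its covering set agents via $(a_i,c_j)$ for $i\in C_j$, wire each gadget internally as a short friendship path $u_i^2\to u_i^1\to u_i$, and close the element chain with one feedback arc (plus a tiny enforcer path) so that the friendship graph becomes acyclic after deleting a single arc. The enemy arcs would be the usual set--set conflicts $(c_j,c_k)$ for $j<k$ with $C_j\cap C_k\neq\emptyset$, together with gadget enemy arcs from the interior vertices $u_i^1,u_i^2$ to the element agents $a_{i-1},a_i$ (and, if needed, to neighbouring $u$-vertices), all oriented consistently with a fixed topological order so that the enemy graph stays acyclic.

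First I would verify that the construction meets the stated bounds: each element agent is incident to at most two element agents, two $u$-vertices, at most three set agents containing it, and the enemy arcs attached to those sets and to the gadgets, and a count shows the maximum is $18$, while the gadget and set agents have smaller degree. For the structural parameters, deleting the one closing friendship arc leaves a DAG (element chain, then gadgets, then set agents, in topological order), so $\fas=1$ with the single feedback arc lying in the friendship graph, and the enemy arcs, being directed by increasing set index and along the fixed vertex order, form an acyclic digraph.

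Second, for the ``if'' direction, given an exact cover $\mathcal{K}$ I would output the partition that keeps every $U_i$ intact, forms the coalition $\{a_i:i\in[3\enn]\}\cup\{c_j:C_j\in\mathcal{K}\}$, leaves every $c_j$ with $C_j\notin\mathcal{K}$ alone, and puts the enforcer agents in their forced coalitions. This partition is individually rational since every non-singleton agent has a friend and no enemy in its coalition (exactness of $\mathcal{K}$ removes all set--set conflicts), and it admits no blocking tuple: each $a_i$ already has two friends and no enemy, each set agent either has no outside friend or would gain an enemy by moving, and each gadget agent sits with all its friends.

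Third, for the ``only if'' direction, I would prove, in order, the structural facts that (i)~the agents $u_i,u_i^1,u_i^2$ always form one coalition, (ii)~each $a_i$ keeps at least two friends of which at most one is a set agent, and (iii)--(iv)~no $a_i$ is grouped with $U_i$ or with $U_{(i\bmod 3\enn)+1}$, and then conclude. Fact~(i) holds because $u_i^1$ and $u_i^2$ each have a single friend and, being in danger of preferring to be alone, must sit with it. Fact~(ii) holds because $a_i$ has at least two available friends and no enemies, so individual rationality forces two of them to be kept, while the set--set enemy arcs rule out two set-agent friends. Facts~(iii)--(iv) hold because if $a_i$ sat inside such a gadget, an interior agent would acquire an enemy and too few friends and would thus prefer its own three-vertex coalition, contradicting individual rationality; equivalently, that interior agent vetoes $a_i$ when one tries to build such a coalition. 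Combining (ii)--(iv), the only non-set friend $a_i$ can keep is the adjacent element agent, so all $3\enn$ element agents lie in one coalition, and by (ii) each picks up exactly one set agent, with these set agents pairwise disjoint, so exactly $\enn$ sets are chosen and they form an exact cover. The hard part will be engineering the $U$-gadget so that facts~(iii)--(iv) hold in \emph{both} directions: since, unlike for Nash stability, an agent cannot be pushed out of a coalition, the gadget must make the interior agents veto any intruding $a$-agent \emph{and} make every individually stable configuration avoid placing an $a$-agent inside the gadget to begin with, all while keeping the maximum degree at $18$, the enemy graph acyclic, and the friendship graph one arc away from acyclic.
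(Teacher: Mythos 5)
Your high-level architecture matches the paper's proof almost exactly: element agents $a_i$, three-agent gadgets $U_i=\{u_i,u_i^1,u_i^2\}$, set agents $c_j$ with pairwise enemy arcs for intersecting sets, the same four structural facts (gadgets stay intact; each $a_i$ needs two friends with at most one set agent; $a_i$ is never grouped with $U_i$ or $U_{i+1}$), and the same forward direction. However, the part you explicitly defer --- ``engineering the $U$-gadget so that facts (iii)--(iv) hold in both directions'' --- is precisely where the entire difficulty of the theorem lies, and the partial wiring you do propose would not support your own claims. In the paper's construction $a_i$ is a friend of the four \emph{followers} $u_i^1,u_i^2,u_{i+1}^1,u_{i+1}^2$ (and an enemy of the leader $u_{i+1}$), and crucially nobody regards $a_i$ or any follower as an enemy. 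This is the engine of fact (ii): since by fact (i) the coalition containing $u_i$ contains two friends of $a_i$, and since every coalition accepts $a_i$, an $a_i$ with fewer than two friends would envy that coalition and be admitted, violating individual stability. Your wiring makes $a_i$ a friend of the \emph{leaders} $u_i,u_{i+1}$ only (so the benchmark coalition contains just one friend of $a_i$), and on top of that you add enemy arcs from the interior agents $u_i^1,u_i^2$ toward $a_{i-1},a_i$; with such arcs the gadget coalition \emph{vetoes} $a_i$, so the envy argument that forces $a_i$ to collect two friends collapses, and with it the conclusion that all element agents end up in one coalition each holding exactly one set-agent friend.

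You are also missing the mechanisms the paper actually uses for facts (iii)--(iv): enemy arcs from the leaders $u_i,u_{i+1},u_{i+2}$ to every set agent $c_j$ with $i\in C_j$ (so an $a$-agent sitting inside a gadget can never also hold a set-agent friend without breaking individual rationality of the leader), the leader--leader enemy chain $(u_i,u_{i+1})$ closed by $(u_1,u_{3\enn})$, and the assumption (w.l.o.g.) that $\enn$ is odd, which is what makes the alternating argument around the odd cycle of gadgets terminate in a contradiction in the proof of \cref{claim:FEN-is-1}. None of these appear in your sketch, and your substitute idea (follower-to-element enemy arcs, an extra ``enforcer path'' to close the friendship cycle) is not shown to be consistent with the degree bound $\maxdeg=18$, with $\fas=1$, with acyclicity of the enemy graph, and simultaneously with the acceptance properties needed for facts (i)--(ii). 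So while the skeleton and the reduction target are right, the proof as proposed has a genuine gap at its core: the gadget is not specified in a way that makes the structural claims provable, and the paper's proof shows that getting this wiring right (who is whose friend, who vetoes whom, and the parity of the gadget cycle) is the substance of the argument rather than a detail.
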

\looseness=-1
  \begin{proof} %
    We reduce from \pxct; we will not use the planarity property though.
    Let $I=([3\enn], \mathcal{C})$ be an instance of \pxct, where $\mathcal{C}=\{C_1,\dots,C_{\emm}\}$.
    Without loss of generality, assume that $\enn$ is odd. %
    For each element~$i\in [3\enn]$, create an element agent~$a_i$, a leader agent~$u_i$, and two follower agents $u_i^1,u_i^2$; define $U_i=\{u_i,u_i^1,u_i^2\}$.
    For each set $C_j\in \mathcal{C}$, create a set agent $c_j$.
     The friendship (resp.\ enemy) graph contains the following arcs, where $i+1$ is taken as $(i\bmod 3\enn)+1$: %
  $E(\goodG)=\{ (u_i^z,u_i) \mid (i,z)\in [3\enn] \times [2]\} \cup \{ (a_i,a_{i+1})\mid i\in [3\enn] \}$ $\cup$ $\{ (a_i,c_j) \mid i\in [3\enn]$ and $j\in [\emm]$ with $i\in C_j\} \bigcup \{ (a_i,u_i^1),(a_i,u_i^2),(a_i,u_{i+1}^1),(a_i,u_{i+1}^2)\}$.
  $E(\badG)=\{ (u_i,u_{i+1})\mid i\in [3\enn -1]\} \cup \{ (u_1,u_{3\enn})\} \bigcup \{ (a_i,u_{i+1})\mid i\in [3\enn ] \} \cup \{ (c_j,c_t)\mid j,t\in [\emm ]\colon j<t \wedge C_j\cap C_t\ne \emptyset \} \cup \{ (u_i,c_j), (u_{i+1}, c_j)$, $(u_{i+2}, c_j)\mid (i,j)\in [3\enn]  \times [\emm ]\colon i \in C_j \}$. %
The construction is illustrated in \cref{fig:FEN-is-fas+delta}, and satisfies: 
\begin{claim}[\appendixsymb]\label{claim:fen-is-fas=1}
  \clmfenisconstruction
\end{claim}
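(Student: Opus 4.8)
The plan is to verify each of the three claimed properties of the construction directly from the definition of the friendship and enemy arcs, treating them one at a time.

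First I would bound the maximum degree $\maxdeg$. The relevant agents are the element agents, the leader agents, the follower agents, and the set agents, so I would go through each type and count in- and out-neighbors in $\goodG+\badG$. A follower agent $u_i^z$ has exactly one friendship arc (to $u_i$) and at most two incoming friendship arcs from $\{a_{i-1},a_i\}$, plus no enemy arcs, so its degree is small. A leader agent $u_i$ has two incoming friendship arcs from $u_i^1,u_i^2$; on the enemy side it has the arcs $(u_i,u_{i+1})$ and $(u_{i-1},u_i)$ (two arcs), the arc $(a_{i-1},u_i)$, and arcs $(u_i,c_j)$ for the sets $C_j$ with $i$ or its two cyclic predecessors lying in $C_j$ — since each element lies in at most three sets and $u_i$ is ``charged'' by three consecutive indices, this gives at most $9$ enemy arcs to set agents, so $u_i$'s total degree is at most $2+2+1+9=14\le 18$. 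An element agent $a_i$ has outgoing friendship arcs to $a_{i+1}$, to $u_i^1,u_i^2,u_{i+1}^1,u_{i+1}^2$, and to the at most three set agents containing $i$ (at most $8$ out-arcs), one incoming friendship arc from $a_{i-1}$, plus the single enemy arc $(a_i,u_{i+1})$, so degree at most $10\le 18$. A set agent $c_j$ has at most three incoming friendship arcs from element agents, at most $6$ enemy arcs to/from other set agents $c_t$ with $C_j\cap C_t\neq\emptyset$, and at most $9$ incoming enemy arcs from leader agents, so degree at most $18$. I would then state $\maxdeg=18$ (the bound being tight at the set agents).

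Next I would exhibit topological orders witnessing acyclicity. For the enemy graph, the enemy arcs among leaders form a path $u_1\to u_2\to\dots\to u_{3\enn}$ (with the extra arc $(u_1,u_{3\enn})$, which respects that order), the arcs $(a_i,u_{i+1})$ go from an $a$-agent to a $u$-agent, the arcs $(u_i,c_j)$ go from $u$-agents to $c$-agents, and the arcs $(c_j,c_t)$ with $j<t$ respect index order on the $c$-agents; so the order $(a_1,\dots,a_{3\enn},u_1,\dots,u_{3\enn},c_1,\dots,c_{\emm})$ topologically sorts $\badG$. For the friendship graph, the only potential cycle is among the element agents, since $(a_i,a_{i+1})$ with $i+1=(i\bmod 3\enn)+1$ forms the directed cycle $a_1\to a_2\to\dots\to a_{3\enn}\to a_1$; removing the single arc $(a_{3\enn},a_1)$ breaks it, and then $(a_1,a_2,\dots,a_{3\enn},u_1^1,u_1^2,\dots,u_{3\enn}^1,u_{3\enn}^2,c_1,\dots,c_{\emm})$ is a topological order, because every remaining friendship arc either goes along increasing $a$-index, or from an $a$-agent to a follower agent, or from a follower agent to its leader, or from an $a$-agent to a set agent. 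Hence $\fas=1$ for the friendship graph and $\fas=0$ for the enemy graph; in particular the feedback arc set number of the union is $1$.

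The main obstacle here is not conceptual but bookkeeping: the degree count for the leader and set agents is the delicate part, since one must correctly account for the ``$\mid (i,j)\in[3\enn]\times[\emm]\colon i\in C_j$'' clause which attaches $u_i,u_{i+1},u_{i+2}$ to $c_j$ — so a given $u_i$ receives enemy arcs from $c_j$ not only when $i\in C_j$ but also when $i-1\in C_j$ or $i-2\in C_j$, and one must verify this still totals at most $9$ using the \pxct{} guarantee that each element lies in at most three sets. I would double-check that the constant $18$ is actually respected by the leader agents and not just the set agents, and then conclude the proof of \cref{claim:fen-is-fas=1}.
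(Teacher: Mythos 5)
Your degree accounting is essentially the paper's own proof (at most $10$ for element agents, $3$ for followers, $2+2+1+9=14$ for leaders, $3+6+9=18$ for set agents, using the \pxct{} guarantee that every element lies in at most three sets), and your enemy-graph order is fine. The genuine gap is in the last step: in the \FEN{} model $\fas$ is defined on the union $\goodG+\badG$, and you conclude ``in particular the feedback arc set number of the union is $1$'' from the two separate facts that $\badG$ is acyclic and that $\goodG$ minus $(a_{3\enn},a_1)$ is acyclic. That inference is invalid in general: the union of two DAGs need not be a DAG (take $x\to y$ in one and $y\to x$ in the other), and here the two graphs genuinely interleave on the same vertex classes — the enemy arcs $(a_i,u_{i+1})$ and $(u_i,c_j)$ must be made compatible with the friendship arcs $(a_i,u_i^z)$ and $(u_i^z,u_i)$ — so compatibility has to be exhibited, not assumed. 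What the claim needs, and what the paper does, is a single topological order of $\goodG+\badG$ with the one arc $(a_{3\enn},a_1)$ removed, e.g.\ $a_1,\dots,a_{3\enn}$, then all follower agents $u_i^z$, then the leaders $u_1,\dots,u_{3\enn}$ in index order, then $c_1,\dots,c_{\emm}$; one then checks every friendship and enemy arc respects this order (leader enemy arcs go to higher-indexed leaders or to set agents, all arcs into leaders come from $a$'s or followers, etc.).

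A second, minor slip: your proposed topological order for the friendship graph omits the leader agents $u_i$ entirely, even though the arcs $(u_i^z,u_i)$ belong to $\goodG$. This is harmless — leaders have no outgoing friendship arcs, so they can be appended at the end — but it should be stated. Also note that, strictly speaking, both you and the paper only establish $\maxdeg\le 18$; the phrasing ``$\maxdeg=18$'' is the usual shorthand for the upper bound and is not the issue. Once you replace the two separate orders by one common order of the union minus $(a_{3\enn},a_1)$, your proof coincides with the paper's.
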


     \begin{figure}[t!]
     \centering
     \def\xdif{1.6}
     \def\xdiff{1.9}
     \def\ydif{1.5}
       \begin{tikzpicture}[scale=0.8,every node/.style={scale=0.8}, >=stealth', shorten <= 2pt, shorten >= 2pt]

         \begin{scope}[shift={(-3.2,-1.6*\ydif)}]
		\foreach \x / \y / \n / \nn / \typ / \p / \dx / \dy in {
      2.5/1.3/a/a_i/pn/{above left}/-1/-4,
      1/.5/u1/u^1_i/pn/{above left}/-1/-4,
      2/.5/u2/u^2_i/pn/{above left}/-1/-4,
      3/.5/u3/u^1_{i + 1}/pn/{above right}/-1/-4,
      4/.5/u4/u^2_{i + 1}/pn/{above right}/-1/-4,
      1.5/0/u/u_i/pn/{left}/-1/1,
      3.5/0/up/u_{i + 1}/pn/{right}/-1/1}{
        \node[\typ] at (\xdif * \x,\xdif * \y*0.8) (\n) {};
        \node[\p = \dx pt and \dy pt of \n] {$\nn$};
      }
      \foreach \n / \x / \group in {0/0/{(i -2)}, 1/1.8/{(i - 1)}, 2/3.3/{i}, 3/4.8/{(i + 1)}} {
        \node[] at (\xdif * \x + 0.3, -1) (h\n) {};
        \foreach \i / \dir / \lab in {1/{below right}/3,2/{below}/2,3/{below left}/1} {
          \node[pn, below left= 0 and 0.5*(-1 + \i) of h\n] (u\n\i) {};
          \node[\dir = 0 pt and -7 pt of u\n\i] {$c_{\group_\lab}$};
        }		
      }      
      
       \begin{pgfonlayer}{bg}
           \foreach \s / \t / \aa / \type in {
           a/u1/0/fc,
           a/u2/0/fc,
           a/u3/0/fc,
           a/u4/0/fc,
           u1/u/0/fc,
           u2/u/0/fc,
           u3/up/0/fc,
           u4/up/0/fc,
           u/up/0/ec,
           a/up/40/ec} {
             \draw[->, \type] (\s) edge[bend right = \aa] (\t);
           }
           \foreach \i in {1,2,3} {
           		\foreach \j in {0,1,2} {
					 \draw[->, ec] (u) edge[] (u\j\i);        		
           		}
           		\foreach \j in {1,2,3} {
					 \draw[->, ec] (up) edge[] (u\j\i);        		
           		}
           }
      \end{pgfonlayer}
	\end{scope}       

      \foreach \s / \i / \d in {{i+3}/0/{right},{i+2}/1/{right},{i+1}/2/{right},{i}/3/{left}} { %
		\node[sn] at (-1.5 * \xdif * \i + \xdif*1.5,0) (u\i) {};
		\node[\d = 0pt and -1pt of u\i] {$U_{\s}$};
	  }      
	  \foreach \s / \i in {{i+2}/0,{i+1}/1,{i}/2} { %
		\node[pn] at (-1.5 * \xdif *\i + \xdif*.75, \ydif*0.8) (a\i) {};
		\node[above left = -1 pt and -4 pt of a\i] {$a_{\s}$};
	  }
          \foreach \i in {1, 2, 3} {
		\node[pn] at (-3.75*\xdif + 0.75*\xdif * \i,-1*\ydif*0.7) (c\i) {};
		\node[below = 1pt and -2pt of c\i] {$c_{i_\i}$};
		\begin{pgfonlayer}{bg}
		\draw[->, fc] (a2) edge[] (c\i);
		\foreach \j in {1,2,3} {
			\draw[->, ec] (u\j) edge[] (c\i);

		}
		\end{pgfonlayer}
          }	
	  \begin{pgfonlayer}{bg}
	  \foreach \i / \j  in {0/1, 1/2, 2/3} {
	  	\draw[->, fc] (a\i) edge[bend left=15] node[grouplabel,pos=.45] {\small 2f} (u\i);
	  	\draw[->, fc] (a\i) edge[] node[grouplabel,midway] {\small 2f} (u\j);
                \draw[->, ec] (a\i) edge[bend right=10] node[grouplabel,pos=.65] {\small 1e} (u\i);
	  }
	  \foreach \i / \j in {0/1, 1/2} {
	  	\draw[->, fc] (a\j) edge[] (a\i);
	  }
	  \foreach \i / \j / \ang in {1/2/30, 2/3/30, 1/3/40} {
            \draw[->, ec] (c\i) edge[bend right=\ang] (c\j);
	  }
	  \end{pgfonlayer}  
	 
        \end{tikzpicture}\caption{Illustration for the proof of \cref{thm:FEN-is-fas+delta}. Top: An overview of the reduction, where solid blue (resp.\ dashed red) arcs indicate friends (resp.\ enemies).
          For each~$z \in [3\enn]$, let $C_{z_1}, C_{z_2}, C_{z_3}$ denote the sets containing it such that $z_1 < z_2 < z_3$.
          Bottom: Description of the arcs related to $U_i = \{u_i, u^1_i, u^2_i\}$ and $U_{i+1} = \{u_{i + 1}, u^1_{i + 1}, u^2_{i + 1}\}$.}\label{fig:FEN-is-fas+delta}
     \end{figure}
    
\appendixproofwithstatement{claim:fen-is-fas=1}{\clmfenisconstruction}
{\begin{proof}[Proof of claim~\ref{claim:fen-is-fas=1}]
\renewcommand{\qedsymbol}{$\diamond$}
The agents with the highest degree are the set agents~$c_j$.
They can have three incoming friendship arcs from the element agents~$a_i$,
at most six (incoming or outgoing) enemy arc of the type $(c_j,c_t)$ (as each $a_i$ is in at most three sets) and at most nine incoming enemy arcs of the type $(u_i,c_j)$ (if $C_j=\{ j_1,j_2,j_3\}$, then $u_{j_1+z},u_{j_2+z},u_{j_3+z}$ for $z\in \{0,1,2\}$ consider $c_j$ an enemy), thus altogether at most 18 incoming and outgoing arcs.
The follower agents~$u_i^z$, $(i,z)\in [3\enn ] \times [2]$ agents have only two incoming and one outgoing friendship arcs and no adjacent enemy arcs.
The leader agents~$u_i$, $i\in [3\enn ]$, have two incoming friendship arcs (from the two followers) and at most $2+1+9=12$ adjacent enemy arcs:
two of the form $(u_i,u_{i+1})$, one of the form $(a_{i-1},u_i)$, and at most $3\cdot 3 =9$ of the form $(u_i,c_{j})$, with  $C_j\cap \{ i-2,i-1,i\} \ne \emptyset$ as each element is in at most three sets. Thus their degree is at most 14.
The element agents~$a_i$ have at most nine adjacent friendship arcs (four outgoing to the follower agents, three outgoing to the set agents, and one incoming and one outgoing arc from the previous and to the next element agents, respectively). Also, they have no incoming enemy arc and only one outgoing enemy arc. Hence, their degree is at most 10.

To see that only the friendship graph has one feedback arc while the enemy graph is acyclic, remove the arc $(a_{3n},a_1)$.
Then, the order $a_1,\dots,a_{3n},u_1^1,\dots, u_{3n}^1, $ $u_1^2,\dots,u_{3n}^2,$ $u_1,\dots,u_{3n},c_1,\dots,c_{3n}$ is a topological order of the vertices in the remaining graph.
\end{proof}
}

\noindent It remains to show that $I$ admits an exact cover if and only if the constructed instance has an individually stable partition.
For the ``if'' part, let $\Pi$ be an individually stable partition.
We first observe: 
\begin{claim}[\appendixsymb]
\label{claim:FEN-is-1} \clmisforward
\end{claim}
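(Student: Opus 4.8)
\textbf{Proof plan for Claim~\ref{claim:FEN-is-1}.}

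The plan is to establish the four statements in order, each building on the previous, using the gadget structure and individual rationality of $\Pi$. For statement~(\ref{claim:FEN-is-1,i}), I would argue that each follower agent $u_i^z$ has exactly one friend, namely $u_i$, and no enemies anywhere in the graph; hence in any individually rational partition $u_i^z$ cannot be alone (she would rather join $u_i$) and, since $u_i$ is fine with accepting her (no enemy relation from $u_i$ to $u_i^z$), individual stability forces $u_i^1$ and $u_i^2$ into the same coalition as $u_i$. Actually one must be slightly careful: individual rationality alone only says $u_i^z$ is not alone; to pin her to $u_i$'s coalition I would use that $u_i$ is her \emph{only} friend, so the only individually rational coalitions for $u_i^z$ are those containing $u_i$, and then deal with the blocking-tuple condition to conclude $u_i^1, u_i^2 \in \Pi(u_i)$. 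This gives $U_i \subseteq \Pi(u_i)$.

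For statement~(\ref{claim:FEN-is-1,ii}), I would first observe from $E(\goodG)$ that the only friends of $a_i$ are: $a_{i+1}$, the set agents $c_j$ with $i \in C_j$, and the four follower agents $u_i^1, u_i^2, u_{i+1}^1, u_{i+1}^2$. Next, by the enemy relations $(c_j, c_t)$ for intersecting sets, and since element $i$ lies in at least two (in fact two or three) sets, any two set-agent friends of $a_i$ are mutual enemies; moreover — and this is the key structural point — by statement~(\ref{claim:FEN-is-1,i}) a follower $u_i^z$ sits in $\Pi(u_i)$, and $u_i$ is an enemy of $a_{i-1}$ via $(a_{i-1}, u_i)$, so placing $a_i$ together with $u_i^z$ drags in enemies; combined with part~(\ref{claim:FEN-is-1,i}) this lets me bound how many set-agent friends $a_i$ can simultaneously enjoy. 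The cleanest route: $a_i$ is individually rational, so $|\Pi(a_i)| = 1$ or $a_i$ has $\ge 1$ friend with no net loss; but $a_i$ has no enemies \emph{incident into} her except none at all (checking $E(\badG)$: $a_i$ only appears as the tail of $(a_i, u_{i+1})$ and $(a_i, c_t)$-type arcs — wait, $a_i$ does appear in $(a_i, u_{i+1})$), so $a_i$ has exactly one enemy, $u_{i+1}$. Hence $a_i$ could always improve by joining a coalition with one friend and not $u_{i+1}$, so $\Pi(a_i)$ must contain a friend. To get ``at least two friends'' I would invoke envy-freeness: show that $a_i$ would envy some small coalition unless she already has two friends; and ``at most one set agent'' follows since two set-agent friends of $a_i$ are enemies, so $a_i$ gains no extra friend-count benefit and the intersecting-set enemy edge makes at least one of those two set agents refuse (blocking-tuple condition fails).

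For statements~(\ref{claim:FEN-is-2,i}) and~(\ref{claim:FEN-is-2,ii}), I would derive a contradiction from the enemy arcs $(a_i, u_{i+1})$ together with $(u_i, u_{i+1})$ and the $(u_\ell, c_j)$-type arcs. If $a_i$ and $U_i$ were in the same coalition, then by~(\ref{claim:FEN-is-1,i}) that coalition contains $u_i$; but $a_{i-1}$ is an enemy of $u_i$ \ldots that doesn't immediately help. Instead: $a_i$'s coalition containing $U_{i}$ means it contains $u_i$; meanwhile $a_i$ must (by~(\ref{claim:FEN-is-1,ii})) have two friends, and if one set agent $c_j$ with $i \in C_j$ is present, then since $i \in C_j$ we have the enemy arc $(u_i, c_j)$ — wait, the arcs are $(u_i, c_j), (u_{i+1}, c_j), (u_{i+2}, c_j)$ for $i \in C_j$, reindexed; I would check that $u_i$ and $c_j$ with $i \in C_j$ are indeed mutual enemies, giving a conflict inside $a_i$'s coalition. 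The symmetric argument with $U_{i+1}$ uses the arc $(a_i, u_{i+1})$ directly: $a_i$ is an enemy of $u_{i+1}$, so $a_i$ and $u_{i+1}$ (hence $U_{i+1}$, by~(\ref{claim:FEN-is-1,i})) cannot be together in an individually rational partition \emph{unless} $a_i$ still has no net loss — but then I combine with the friend-count requirement from~(\ref{claim:FEN-is-1,ii}) to force a contradiction.

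\textbf{Main obstacle.} The delicate part is statement~(\ref{claim:FEN-is-1,ii}): ``at least two friends.'' Individual rationality only gives ``at least one friend or alone,'' and being alone is excluded because $a_i$ has an enemy; so $a_i$ has $\ge 1$ friend. Upgrading to $\ge 2$ requires the no-envy / blocking-tuple machinery of individual stability: I must exhibit, for the case $a_i$ has exactly one friend, a coalition $C$ with $a_i \notin C$ such that $a_i$ envies $C$ and every member of $C$ weakly accepts $a_i$ — the natural candidate being $\{a_{i+1}\}$ or $U_{i+1}$ minus conflicts, or a two-element friend set — and verifying the ``every member weakly prefers'' condition is where the enemy-arc bookkeeping gets fiddly. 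I would handle it by choosing $C$ to consist of follower agents (who have no enemies and hence always accept), which should make the acceptance side automatic and leave only the envy inequality to check. The rest is routine arc-chasing in the fixed, explicitly given gadget.
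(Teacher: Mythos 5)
Your handling of statement~(i) matches the paper's, and the ``at least two friends'' half of (ii) is salvageable, but note that the envying coalition is handed to you by (i): $a_i$ envies $\Pi(u_i)\supseteq U_i$, which is guaranteed to contain the two friends $u_i^1,u_i^2$, and since no agent regards $a_i$ as an enemy, every member of that coalition accepts her. Your candidate list ($\{a_{i+1}\}$, ``$U_{i+1}$ minus conflicts'', ``a two-element friend set'') contains sets that need not be coalitions of $\Pi$ at all, so it cannot be used as is. Also, the ``at most one set agent'' part is not a refusal/blocking-tuple issue: if two set-agent friends $c_j,c_t$ (with $j<t$ and $C_j\cap C_t\neq\emptyset$) both lie in $\Pi(a_i)$, then $c_j$ has no friends anywhere and the enemy $c_t$ in her coalition, so individual rationality of $c_j$ is already violated. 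These points are repairable.

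The genuine gap is in statements~(iii) and (iv). For (iii) you look for a conflict inside $\Pi(a_i)$ itself via the enemy arcs $(u_i,c_j)$, but nothing forces a set agent into that coalition: if $a_i$ sits with $U_i$, her two required friends are already $u_i^1,u_i^2$, and the coalition $\{a_i\}\cup U_i$ contains no enemy pair at all, so no local contradiction exists. The paper's proof is a global propagation argument, starting precisely from the direction you dismissed: $a_{i-1}$ would have three friends ($a_i,u_i^1,u_i^2$) in $P_i=\Pi(a_i)$, so to avoid a blocking tuple she must join $P_i$ (the only way to get three friends elsewhere is a set agent together with $U_{i-1}$, which destroys $u_{i-1}$'s individual rationality); then $a_{i-2}$ is forced to sit with $U_{i-2}$, and iterating this step in increments of two around the element cycle, using that $3\enn$ is odd (this is why the reduction assumes $\enn$ odd), one eventually forces $a_{i-1}$ together with $U_{i-1}$, putting the enemy pair $u_{i-1},u_i$ of friendless leaders into $P_i$ --- the contradiction. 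Your sketch never uses the parity assumption, so it cannot reach this. Likewise for (iv): $a_i$ sitting with her enemy $u_{i+1}$ violates nothing by itself, since she can have the two friends $u_{i+1}^1,u_{i+1}^2$ there and thus satisfy (ii); the actual argument compares her coalition with $\Pi(u_i)$ (two friends, no enemies, automatic acceptance) to force a third friend, necessarily $a_{i+1}$, into her coalition, which places $a_{i+1}$ together with $U_{i+1}$ and contradicts (iii). As stated, your derivations of (iii) and (iv) would fail.
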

\appendixproofwithstatement{claim:FEN-is-1}{\clmisforward}{
\begin{proof}[Proof of \cref{claim:FEN-is-1}]\renewcommand{\qedsymbol}{$\diamond$}
We start with Statement \eqref{claim:FEN-is-1,i}.
Suppose there is an $i$, such that it does not hold. Then, there is a $z\in [2]$, such that $u_i^z$ is not with $u_i$. As $u_i^z$ is not an enemy to anyone, and her only friend is $u_i$, it would deviate to the coalition containing $u_i$, which would accept it, contradicting individual stability.

Now we prove Statement~\eqref{claim:FEN-is-1,ii}. Suppose $a_i$ has only $1$ friend in $\Pi$. But, by Statement~\eqref{claim:FEN-is-1,i} it has at least $2$ friends in the coalition $P_i$ containing $u_i$. As $a_i$ is not an enemy for any agent, it holds that $a_i$ prefers $P_i$ to $\Pi(a_i)$ and $P_i$ would accept $a_i$, contradiction. Now assume $a_i$ has two set agents as friends in her coalition: $c_j$ and $c_t$ with $j<t$. Then, $c_j$ has no friends, but has an enemy, which contradicts that $\Pi$ is individually rational.

For Statement (\ref{claim:FEN-is-2,i}), suppose that $a_i$ and $U_i$ are in one coalition, call it $P_i$. This means that agent $a_{i-1}$ has at least 3 friends in $P_i$.

Suppose that $a_{i-1}\notin P_i$. As $a_{i-1}$ is not an enemy to anyone, it must hold that she does not envy $P_i$. Hence, $a_{i-1}$ must have at least three friends in her coalition. As she can only have one set agent friend by \cref{claim:FEN-is-1}(\ref{claim:FEN-is-1,ii}), this can only happen if she is together with a set agent~$c_j$ and with $U_{i-1}$. However, this would imply that $u_{i-1}$ is together with a set agent $c_j$ such that $i-1\in C_j$, so $u_{i-1}$ has an enemy, namely $c_j$, which contradicts that $\Pi$ is individually rational. Hence, $a_{i-1}\in P_i$. 

We claim that $a_{i-2}$ is together with $U_{i-2}$.
Suppose the contrary. We distinguish two cases and obtain a contradiction in both of them. First, if she is also not together with $U_{i-1}$, then to have 2 friends by \cref{claim:FEN-is-1}(\ref{claim:FEN-is-1,ii}) , it must hold that $a_{i-2}$ is together with a set agent $c_j$ such that $i-2\in C_j$ and also $a_{i-1}$. As $a_{i-1}\in P_i$, we get that $\{a_{i-2},c_j\} \subset P_i$. However, since $u_i\in P_i$ and $C_j$ contains $i-2$ we get that $u_i$ has an enemy in $P_i$, namely $c_j$, contradiction.
Second, if $a_{i-2}$ is together with $U_{i-1}$, then she can have no other friends: she cannot have her friends in $U_{i-2}$ by our assumption, she cannot have any set agent friends, as they are an enemy to $u_{i-1}$, and she also cannot have $a_{i-1}$, because that would imply that $\{ u_{i-1},a_{i-1},u_i\} \subset P_i$, so $u_{i-1}$ (or $u_i$, if $i=1$) has an enemy, namely $u_i$ (or $u_{3n}$, if $i=1$).
However, since $a_{i-2}$ has an enemy in $U_{i-1}$, she prefers the coalition containing $U_{i-2}$.
Since $a_{i-2}$ is not an enemy to anyone, this would contradict the individual stability of $\Pi$. So finally, we obtain that $a_{i-2}$ must be together with $U_{i-2}$.

By iterating the argument above, we obtain that $a_{i-4}$ and $U_{i-4}$ are together in $\Pi$, and so on.
Since $3\enn$ is odd, we also get that $a_{i-1}$ and $U_{i-1}$ must be together. Hence, $u_{i-1}\in P_i$, but one of $u_{i-1}$ and $u_i$ is an enemy to the other one in $P_i$, contradiction. This concludes the proof of Statement~(\ref{claim:FEN-is-2,i}).

For Statement~\eqref{claim:FEN-is-2,ii}, suppose that $a_i$ and $U_{i+1}$ are together. This implies that $a_i$ can have none of her set agents friends, as they are enemies to $u_{i+1}$. She also cannot have her 2 friends from $U_i$, because that would imply that $u_i$ or $u_{i+1}$ has an enemy in $\Pi$. Since $a_i$ has at least two friends and no enemies in the coalition containing $U_i$ (because her only enemy is already together with her in $U_{i+1}$) it must hold that $a_i$ has at least three friends, so $a_{i+1}$ is together with $a_i$. Hence, $a_{i+1}$ is together with $U_{i+1}$. But this contradicts (\ref{claim:FEN-is-2,i}).
\end{proof}
}
By \cref{claim:FEN-is-1}\eqref{claim:FEN-is-2,i}--\eqref{claim:FEN-is-2,ii}, for each $i\in [3\enn]$, agent~$a_i$ does not have any friends from $U_i$ or $U_{i+1}$. By \cref{claim:FEN-is-1}(\ref{claim:FEN-is-1,ii}) $a_i$ must have at least two friends and at most one of them can be a set agent.
Therefore, we obtain that in $\Pi$, each agent $a_i$, $i\in [3\enn]$ must have exactly one set agent friend and $a_{i+1}$ as friends. This implies that $a_1,\dots,a_{3n}$ are all together and each of them has exactly one set agent friend in the coalition. Hence, the sets corresponding to those set agents must form an exact cover.

For the ``only if'' part, let $\mathcal{K}$ be an exact cover.
Then, define a coalition~$P=\{c_j\mid C_j\in \mathcal{K}\}\cup \{a_i\mid i\in [3\enn]\}$.
We claim that $\Pi =\bigcup_{i\in [3\enn]}\{ U_i\} \cup \{P\} \cup \bigcup_{C_j\in \mathcal{C}\setminus \mathcal{K}}\{\{c_j\}\}$, consisting of $3\enn+\emm+1$ coalitions, is Nash stable, and hence individually stable. 
Since $c_j$ has no friends in $G$ for all~$j\in [\emm ]$, and they have no enemies in $\Pi$, they do not envy any coalition.
Similarly, $u_i$ has no friends in $G$ for any $i\in [3\enn ]$ and she has no enemies in $\Pi$.
For all $i\in [3\enn ]$, $z\in [2]$, agent~$u_i^z$ has all of her friends and no enemies in $\Pi$, so she does not envy any coalition.
Finally, for each $i\in [3\enn ]$, $a_i$ has two friends and no enemies in $\Pi$, and among the other six friends of $a_i$, two of them are alone (the two other set agents~$c_j$ with $i\in C_j$), two of them are in $U_i$, and two of them are in $U_{i+1}$. Therefore, $a_i$ envies none of the coalitions.
\end{proof}

\section{Conclusion and Future Work}\label{sec:conclude}%
We resolved many complexity questions from the literature under the \FE\ as well as the \FEN\ model,
and significantly extended previous work for these two models. 
As an immediate open question, we do not know the complexity of \FE-\verif\ (resp.\ \FE-\sverif) $\maxdeg=6$ for the symmetric case.
For the case of $\maxdeg = 3$ with not necessarily symmetric preferences,
we can show that \FE-\verif\ remains \conp-hard, even if $\fas=1$, between the submission and the publication of the paper.
We conjecture that 
Last but not least, it would be interesting to know how the refined complexity for the variant with enemy aversion~\cite{dimitrov2006simple} behaves. %

\clearpage

\begin{acks}
This work, and Jiehua Chen, Sofia Simola, and Sanjukta Roy have been funded by the Vienna Science and Technology Fund (WWTF) [10.47379/ VRG18012]. Gergely Cs{\'a}ji acknowledges support from  the COST Action CA16228 on European Network for Game Theory, the Momentum Program of the Hungarian Academy of Sciences -- grant number LP2021-1/2021 and the  Hungarian National Research, Development and Innovation Office -- NKFIH, grant number  K143858.
\end{acks}

\bibliographystyle{ACM-Reference-Format} 

\balance 
\bibliography{cit}

\iflong
\clearpage
\begin{table}[t!]
  \centering
  \Large \textbf{\appendixtitle}
\end{table}
\bigskip

\appendix

\appendixtext

\fi

\clearpage

\end{document}
